\g@addto@macro\bfseries{\boldmath}
\g@addto@macro\mdseries{\unboldmath}
\g@addto@macro\normalfont{\unboldmath}
\g@addto@macro\rmfamily{\unboldmath}
\g@addto@macro\upshape{\unboldmath}
\renewcommand*{\multicitedelim}{\addcomma\space}
\newcommand{\myhref}[1]{%
  \iffieldundef{doi}
    {\iffieldundef{url}
       {#1}
       {\href{\strfield{url}}{#1}}}
    {\href{http://dx.doi.org/\strfield{doi}}{#1}}%
}
    \newlength{\temp@x}%
    \newlength{\temp@y}%
    \newlength{\temp@w}%
    \newlength{\temp@h}%
    \def\my@coords#1#2#3#4{%
      \setlength{\temp@x}{#1}%
      \setlength{\temp@y}{#2}%
      \setlength{\temp@w}{#3}%
      \setlength{\temp@h}{#4}%
      \adjustlengths{}%
      \my@pdfliteral{\strip@pt\temp@x\space\strip@pt\temp@y\space\strip@pt\temp@w\space\strip@pt\temp@h\space re}}%
      \def\my@pdfliteral#1{\pdfliteral page{#1}}
      \def\adjustlengths{}%
      \def\my@pdfliteral #1{}
      \def\adjustlengths{\setlength{\temp@h}{-\temp@h}\addtolength{\temp@y}{1in}\addtolength{\temp@x}{-1in}}%
    \def\Hy@colorlink#1{%
      \begingroup
        \ifHy@ocgcolorlinks
          \def\Hy@ocgcolor{#1}%
          \my@pdfliteral{q}%
          \my@pdfliteral{7 Tr}
        \else
          \HyColor@UseColor#1%
        \fi
    }%
    \def\Hy@endcolorlink{%
      \ifHy@ocgcolorlinks%
        \my@pdfliteral{/OC/OCPrint BDC}%
        \my@coords{0pt}{0pt}{\pdfpagewidth}{\pdfpageheight}%
        \my@pdfliteral{F}
        %
        \my@pdfliteral{EMC/OC/OCView BDC}%
        \begingroup%
          \expandafter\HyColor@UseColor\Hy@ocgcolor%
          \my@coords{0pt}{0pt}{\pdfpagewidth}{\pdfpageheight}%
          \my@pdfliteral{F}
        \endgroup%
        \my@pdfliteral{EMC}%
        \my@pdfliteral{0 Tr}
        \my@pdfliteral{Q}%
      \fi
      \endgroup
    }%
\colorlet{DarkRed}{red!50!black}
\colorlet{DarkGreen}{green!50!black}
\colorlet{DarkBlue}{blue!50!black}
\declaretheorem[numberwithin=section]{theorem}
\declaretheorem[numberlike=theorem]{lemma}
\declaretheorem[numberlike=theorem]{proposition}
\declaretheorem[numberlike=theorem]{corollary}
\declaretheorem[numberlike=theorem]{definition}
\newcommand{\dist}{\ensuremath{d}}
\newcommand{\distest}{\delta}
\newcommand{\scc}{\textsf{SCC}\xspace}
\newcommand{\sssp}{\textsf{SSSP}\xspace}
\newcommand{\sssssp}{\stsp}
\newcommand{\stsp}{\textsf{stSP}\xspace}
\newcommand{\ssr}{\textsf{SSR}\xspace}
\newcommand{\ssssr}{\str}
\newcommand{\str}{\textsf{stR}\xspace}
\newcommand{\cg}{\ensuremath{{\mathcal C}}}
\newcommand{\cP}{\mathcal{P}}
\newcommand{\cQ}{Q}
\newcommand{\jset}{\ensuremath{B}}
\newcommand{\jdist}{\ensuremath{\dist_{\jset}}}
\newcommand{\jsize}{\ensuremath{b}}
\newcommand{\pudist}{\ensuremath{\dist_{G [\cQ (s_i, t_i)]}}}
\def\danupon#1{\marginpar{$\leftarrow$\fbox{D}}\footnote{$\Rightarrow$~{\sf #1 --Danupon}}}
\def\sebastian#1{\marginpar{$\leftarrow$\fbox{S}}\footnote{$\Rightarrow$~{\sf #1 --Sebastian}}}
\def\monika#1{\marginpar{$\leftarrow$\fbox{M}}\footnote{$\Rightarrow$~{\sf #1 --Monika}}}
\def\danupon#1{}
\def\sebastian#1{}
\def\monika#1{}
\title{Sublinear-Time Decremental Algorithms for Single-Source Reachability and Shortest Paths on Directed Graphs\thanks{Preliminary versions of this paper were presented at the 46th ACM Symposium on Theory of Computing (STOC 2014)~\cite{HenzingerKNSTOC14} and the 42nd International Colloquium on Automata, Languages, and Programming (ICALP 2015)~\cite{HenzingerKNICALP15}.}}
\date{}
\author{
Monika Henzinger\thanks{University of Vienna, Faculty of Computer Science, Austria. Supported by the Austrian Science Fund (FWF): P23499-N23. The research leading to these results has received funding from the European Research Council under the European Union's Seventh Framework Programme (FP/2007-2013) / ERC Grant Agreement no. 340506 and from the European Union's Seventh Framework Programme (FP7/2007-2013) under grant agreement no.~317532.}
\and Sebastian Krinninger\thanks{Work done while at University of Vienna, Faculty of Computer Science, Austria, and while supported by M.~Henzinger's aforementioned grants and additionally the University of Vienna (IK \mbox{I049-N}). Current affiliation: Max-Planck-Institute for Informatics, Germany.}
\and Danupon Nanongkai\thanks{KTH Royal Institute of Technology, Sweden.  Work partially done while at Nanyang Technological University, Singapore, ICERM, Brown University, USA, and University of Vienna, Austria, and while supported in part by the following research grants: Nanyang Technological University grant M58110000, Singapore Ministry of Education (MOE) Academic Research Fund (AcRF) Tier 2 grant MOE2010-T2-2-082, and Singapore MOE AcRF Tier 1 grant MOE2012-T1-001-094.}
}
\begin{document}
\maketitle
\begin{abstract}
We consider dynamic algorithms for maintaining Single-Source Reachability (\ssr) and approximate Single-Source Shortest Paths (\sssp) on $n$-node $m$-edge {\em directed} graphs under edge deletions ({\em decremental algorithms}).
The previous fastest algorithm for \ssr and \sssp goes back three decades to Even and Shiloach \citem[JACM 1981]{EvenS81}; it has $O(1)$ query time and $O(mn)$ total update time (i.e., linear amortized update time if all edges are deleted).
This algorithm serves as a building block for several other dynamic algorithms. The question whether its total update time can be improved is a major, long standing, open problem.

In this paper, we answer this question affirmatively. We obtain a randomized algorithm with a total update time of $ O(\min (m^{7/6} n^{2/3 + o(1)}, m^{3/4} n^{5/4 + o(1)}) ) = O (m n^{9/10 + o(1)}) $ for \ssr and $(1+\epsilon)$-approximate \sssp if the edge weights are integers from $ 1 $ to $ W \leq 2^{\log^c{n}} $ and $ \epsilon \geq 1 / \log^c{n} $ for some constant $ c $.
We also extend our algorithm to achieve roughly the same running time for Strongly Connected Components (\scc), improving the algorithm of Roditty and Zwick \citem[FOCS 2002]{RodittyZ08}.
Our algorithm is most efficient for sparse and dense graphs.
When $ m = \Theta(n) $ its running time is $ O (n^{1 + 5/6 + o(1)}) $ and when $ m = \Theta(n^2) $ its running time is $ O (n^{2 + 3/4 + o(1)}) $.
For \ssr we also obtain an algorithm that is faster for dense graphs and has a total update time of $ O ( m^{2/3} n^{4/3 + o(1)} + m^{3/7} n^{12/7 + o(1)}) $ which is $ O (n^{2 + 2/3}) $ when $ m = \Theta(n^2) $.
All our algorithms have constant query time in the worst case and are correct with high probability against an oblivious adversary.

\end{abstract}
\newpage

\tableofcontents
\newpage

\section{Introduction}\label{sec:stoc2014:intro}

Dynamic graph algorithms are data structures that maintain a property of a dynamically changing graph, supporting both update and query operations on the graph. In {\em undirected} graphs fundamental properties such as the connected, 2-edge connected, and 2-vertex connected components as well as a minimum spanning forest can be maintained {\em very quickly}, i.e., in polylogarithmic time per operation \cite{HenzingerK99,HolmLT01,Thorup00,KapronKM13}, where an operation is either an edge insertion, an edge deletion, or a query. 
Some of these properties, such as connectivity, can even be maintained in polylogarithmic {\em worst-case} time.
More general problems, such as maintaining distances, also admit sublinear amortized time per operation as long as only edge deletions are allowed~\cite{HenzingerKNSODA14,HenzingerKNFOCS14}.

These problems when considered on {\em directed} graphs, however, become much harder. Consider, for example, a counterpart of the connectivity problem where we want to know whether there is a directed path from a node $u$ to a node $v$, i.e., whether $u$ can {\em reach} $v$. In fact, consider a very special case where we want to maintain whether a {\em fixed} node $s$ can reach any node $v$ under {\em edge deletions only}. This problem is called {\em single-source reachability} (\ssr) in the {\em decremental} setting.
It is one of the simplest, oldest, yet most useful dynamic graph problems.
It was used as a subroutine for solving other dynamic graph problems such as transitive closure~\cite{HenzingerK95,King99,BaswanaHS07,RodittyZ08} and strongly connected components (\scc)~\cite{RodittyZ08}.
Furthermore it is a special case of decremental (approximate) single-source shortest paths (\sssp)~\cite{EvenS81,BernsteinR11,HenzingerKNFOCS14}, which in turn was used as a subroutine for solving dynamic all-pairs shortest paths~\cite{King99,BaswanaHS07,BaswanaHS03,RodittyZ08,RodittyZ11,Bernstein13,HenzingerKNFOCS13,AbrahamCT14}.
%
%
Yet, no algorithm with sublinear update time was known for this problem.
This was also the case for decremental {\em $s$-$t$ reachability} (\str), where we want to maintain whether the node~$s$ can reach the node~$t$, and {$s$-$t$ shortest path} (\stsp), where we want to maintain the distance from $s$ to $t$.

\paragraph*{Related Work.}
To the best of our knowledge, the first non-trivial algorithm was published in 1971 as part of Dinitz's celebrated maximum flow algorithm; this algorithm takes $O(mn)$ total update time to solve decremental \stsp and \str (see \cite{Dinitz06} for details). In other words, it takes linear time ($O(n)$ time) per update if we delete all $m$ edges;  here, $n$ and $m$ are the number of nodes and edges, respectively.
Independently from Dinitz, Even and Shiloach \cite{EvenS81} developed an algorithm with the same total update time for \ssr and \sssp\footnote{It was actually published for undirected graphs and it was observed by Henzinger and King~\cite{HenzingerK95} that it can be easily adapted to work for directed graphs.}. This algorithm remains the fastest prior to our work.
For directed \emph{acyclic} graphs, Italiano~\cite{Italiano88} gave a decremental algorithm with a total update time of $ O (m) $.
For the \emph{incremental} version of the problem, where we only allow insertions of edges, a total update time of $ O (m) $ is sufficient in general directed graphs~\cite{Italiano86}.
For the \emph{fully dynamic} version of the problem, where both insertions and deletions of edges are allowed, Sankowski obtained an algorithm with a worst-case running time of $ O (n^{1.575}) $ \emph{per update}, resulting in a total update time of $ O (\Delta n^{1.575}) $, where $ \Delta $ is the number of updates.\footnote{Sankowski's worst-case update time for the fully dynamic single-source single-sink reachability (\ssssr) problem is $ O (n^{1.495}) $.}

King~\cite{King99} showed how to extend the algorithm of Even and Shiloach to {\em weighted} graphs, giving the first decremental single-source shortest path algorithm with total update time $O(mnW)$, where $W$ is the maximum edge weight (and all edge weights are positive integers)\footnote{The total update time is actually $O(md)$, where $d$ is the maximum distance, which could be $\Theta(nW)$.}. Using weight rounding~\cite{RaghavanT87,Cohen98,Zwick02,Bernstein09,Madry10,Bernstein13} this can be turned into a $(1+\epsilon)$-approximate single-source shortest paths algorithm with total update time $\tilde O(mn\log W)$, where the $\tilde O(\cdot)$ notation hides factors polylogarithmic in $n$.\danupon{Should we mention \cite{RodittyZ11} which shows that exact SSSP cannot be done better than $O(mn)$ (with some assumptions)? Also Patrascu's lower bound?}
The situation is similar for decremental strongly connected components: The fastest decremental \scc algorithms take total update time $O(mn)$ (\cite{RodittyZ08,Lacki13,Roditty13}). Thus many researchers in the field have asked whether the $O(mn)$ total update time for the decremental setting can be improved upon for these problems while keeping the query time constant or polylogarithmic~\cite{King08,Lacki13,RodittyZ08}.\danupon{Should we also cite \cite{RodittyZ11}? They stated that ``An interesting issue to explore is whether faster partially dynamic SSSP algorithms may be obtained if approximate answers are allowed.''}

\paragraph*{Our Results.}
We improve the previous $O(mn)$-time algorithms for decremental \ssr, approximate \sssp, and \scc in directed graphs.
We also give algorithms for \str and approximate \stsp.
We summarize our results in the following theorem.
In \Cref{fig:comparison_running_time} we compare the running times of our new algorithms with the previous solution by Even and Shiloach for different densities of the graph and \Cref{table:results summary} summarizes our results.

\begin{theorem}
There exist decremental algorithms for reachability and shortest path problems in directed graphs with the following total update times:
\begin{itemize}
\item \ssr and \scc:
$ \tilde O (\min( m^{7/6} n^{2/3}, m^{3/4} n^{5/4 + o(1)}, m^{2/3} n^{4/3 + o(1)} + m^{3/7} n^{12/7 + o(1)})) $

$ = O (m n^{9/10 + o(1)}) $.

\item \str:
$ \tilde O (\min (m^{5/4} n^{1/2}, m^{2/3} n^{4/3 + o(1)}) = O (m n^{6/7 + o(1)}) $.

\item $(1+\epsilon)$-approximate \sssp:
$ O (\min (m^{7/6} n^{2/3 + o(1)}, m^{3/4} n^{5/4 + o(1)}) ) = O (m n^{9/10 + o(1)}) $.

\item $(1+\epsilon)$-approximate \stsp:
$ O (\min (m^{5/4} n^{1/2 + o(1)}, m^{2/3} n^{4/3 + o(1)}) = O (m n^{6/7 + o(1)}) $.
\end{itemize}
The algorithms are correct with high probability against an oblivious adversary and have constant query time.
The total update time of the \scc algorithms is only in expectation.
Our algorithms can maintain $ (1+\epsilon) $-approximate shortest paths in graphs with positive integer edge weights from $ 1 $ to $ W $ if $ W \leq 2^{\log^c{n}} $ and $ \epsilon \geq 1 / \log^c{n} $ for any constant $ c $.\danupon{NEXT REVISION: Mention the dependency on $\epsilon$ or replace $1+\epsilon$ by $1+o(1)$.}
\end{theorem}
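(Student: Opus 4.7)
The plan is to give a family of algorithms parameterised by a depth threshold~$d$ and to combine them differently for each problem and density regime stated in the theorem.

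\textbf{Bounded-depth Even--Shiloach plus hubs.} For \ssr and $(1+\epsilon)$-approximate \sssp, first maintain a truncated Even--Shiloach (ES) tree from~$s$ of depth~$d$, costing $O(md)$ total time and handling every vertex that is currently within distance~$d$ of~$s$. To reach farther vertices, sample a set $\jset$ of $\jsize = \tilde{O}(n/d)$ hubs uniformly at random; against an oblivious adversary, every $s$-$v$ shortest path of length $>d$ hits some $b\in\jset$ with high probability. It therefore suffices to maintain (i) an out-ES tree from each hub $b\in\jset$ and (ii) a distance labelling on the induced ``hub graph'' over~$\jset$. The hub graph is handled recursively --- either by another sampling layer or, in sparse regimes, by a single additional ES data structure --- and balancing~$d$, $\jsize$, and the number of recursion levels yields $\tilde{O}(m^{7/6} n^{2/3})$ in one regime and $\tilde{O}(m^{3/4} n^{5/4+o(1)})$ in another; the third, denser, bound comes from a further variant where the hub structure is itself replaced by a heavy-path / path-union decomposition.

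\textbf{Path unions to amortise hub-tree work.} A naive implementation of step~(i) pays $\Theta(mn/d)$ just for the collection of hub ES trees, which is too much. Instead, for each $b\in\jset$ we maintain an (approximate) \emph{path-union} subgraph $\pugraph$ guaranteed to contain a short $b$-$s$ path, and we grow the hub ES tree inside $\pugraph$ rather than inside~$G$. A charging argument bounds the average number of path unions an edge can belong to, so the total work summed over all hubs is sublinear in $\jsize\cdot m$. The design of a decremental algorithm that rebuilds $\pugraph$ only rarely as edges inside it vanish is the main new technical ingredient, and the $m^{o(1)}$ and $n^{o(1)}$ slack in the exponents arises from bounding the number of such rebuilds.

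\textbf{Single-target and \scc variants.} For \str and approximate \stsp only the endpoints $(s,t)$ matter, so a single path union $\cQ$ around $s,t$ replaces the collection $\{\pugraph\}_{b\in\jset}$; this removes a factor involving~$n/d$ and gives $\tilde{O}(m^{5/4} n^{1/2})$ and $\tilde{O}(m^{2/3} n^{4/3+o(1)})$. For \scc we run the \ssr machinery in $G$ and in its reverse from a pool of sampled pivots, reading off each current strongly connected component as the intersection of a pivot's in- and out-reachability sets. Because SCCs can split arbitrarily as edges are deleted, each split forces us to resample and reinitialise some pivots; a standard Las Vegas analysis on the sequence of splits then gives the expected total time matching the \ssr bound, which is why the \scc bounds hold only in expectation.

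\textbf{Main obstacle.} The central difficulty is a joint amortised analysis: we must simultaneously bound the sum of sizes of all path-union subgraphs ever built, the number of times each must be rebuilt under deletions, and the cost of propagating changes through the recursive hub structure, all while retaining correctness (i.e.\ the existence of a short $b$-$s$ path inside the current~$\pugraph$) throughout the entire update sequence. This combined analysis, rather than any single data-structural primitive, is what absorbs the subpolynomial factors in the exponents of the theorem.
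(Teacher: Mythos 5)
The proposal gets the broad intuition of the paper right---sampled nodes, path unions, a recursive hierarchy, a reduction for single-source, and the Roditty--Zwick style SCC scheme---but the logic of the central technical step is inverted, and two structural claims are backwards.

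The paper's key lemma (the dichotomy in \Cref{thm:using hubs}) says: for a pair of centers $(x,y)$, \emph{either} $x$ is linked to $y$ through a sampled hub, \emph{or} the path union $\cP(x,y,h,G)$ induces a small subgraph ($O(\min(m/\jsize, n^2/\jsize^2))$ edges). Path unions are therefore only ever materialised for pairs \emph{not} covered by a hub, which is precisely when they are guaranteed small. Your sketch instead proposes to build, for \emph{every} hub $b\in\jset$, an approximate path-union subgraph $\pugraph$ and grow $b$'s ES tree inside it. Without the dichotomy there is no bound on $|\pugraph|$: the path union around an arbitrary hub can be almost all of $G$, and the charging argument you invoke (few path unions per edge) holds in the paper only for \emph{sibling} center pairs lying on a common shortest path (\Cref{lem:number_of_children_containing_node}), not for arbitrary hubs. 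As written, this step would not yield $o(mn)$. The separate roles of hubs (a detection device for small path unions) and centers (the skeleton of the hierarchy, forming the center graph on which $s$-$t$ connectivity is read off) are what make the balancing go through, and the proposal blurs them into one set.

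The direction of the single-source reduction is also reversed. The paper's base algorithm is \ssssr (a single source--sink pair, processed through the $k$-level hub/center hierarchy in \Cref{sec:sparse}); \ssr is then obtained via \Cref{thm: s-t to single source} by sampling $p$ sinks and adding Bernstein-style shortcut edges, trading the extra factor $mn/p$ against $p^{1/2}m^{5/4}n^{1/2}$ to get $\tilde O(m^{7/6}n^{2/3})$. Your sketch presents \str as the degenerate special case of an already-built \ssr routine, which cannot account for the \str bound $\tilde O(m^{5/4}n^{1/2})$ being \emph{better} than the \ssr bound. Finally, the $n^{o(1)}$ slack does not come from bounding the number of path-union rebuilds; it comes from the branching factor $q^k$ (with $q=8$ for reachability) accumulated over the $k = O(\sqrt{\log n})$ or $k = \lceil\log\log m\rceil$ layers, which is absorbed into $n^{o(1)}$ by the parameter choice. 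These are not cosmetic differences: the dichotomy lemma and the hierarchical path-union bound are what the paper's sublinear running time rests on, and they are missing from the proposal.
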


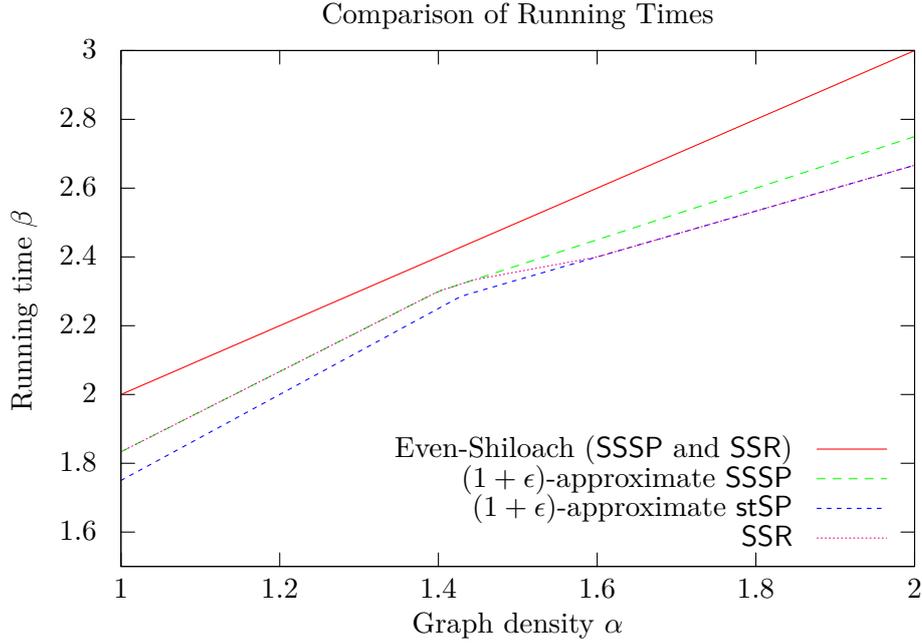
\begin{figure}[htbp]
\centering
\begin{tikzpicture}[gnuplot]
\path (0.000,0.000) rectangle (12.500,8.750);
\gpcolor{color=gp lt color border}
\gpsetlinetype{gp lt border}
\gpsetlinewidth{1.00}
\draw[gp path] (1.504,1.441)--(1.684,1.441);
\draw[gp path] (11.947,1.441)--(11.767,1.441);
\node[gp node right] at (1.320,1.441) { 1.6};
\draw[gp path] (1.504,2.353)--(1.684,2.353);
\draw[gp path] (11.947,2.353)--(11.767,2.353);
\node[gp node right] at (1.320,2.353) { 1.8};
\draw[gp path] (1.504,3.265)--(1.684,3.265);
\draw[gp path] (11.947,3.265)--(11.767,3.265);
\node[gp node right] at (1.320,3.265) { 2};
\draw[gp path] (1.504,4.177)--(1.684,4.177);
\draw[gp path] (11.947,4.177)--(11.767,4.177);
\node[gp node right] at (1.320,4.177) { 2.2};
\draw[gp path] (1.504,5.089)--(1.684,5.089);
\draw[gp path] (11.947,5.089)--(11.767,5.089);
\node[gp node right] at (1.320,5.089) { 2.4};
\draw[gp path] (1.504,6.001)--(1.684,6.001);
\draw[gp path] (11.947,6.001)--(11.767,6.001);
\node[gp node right] at (1.320,6.001) { 2.6};
\draw[gp path] (1.504,6.913)--(1.684,6.913);
\draw[gp path] (11.947,6.913)--(11.767,6.913);
\node[gp node right] at (1.320,6.913) { 2.8};
\draw[gp path] (1.504,7.825)--(1.684,7.825);
\draw[gp path] (11.947,7.825)--(11.767,7.825);
\node[gp node right] at (1.320,7.825) { 3};
\draw[gp path] (1.504,0.985)--(1.504,1.165);
\draw[gp path] (1.504,7.825)--(1.504,7.645);
\node[gp node center] at (1.504,0.677) { 1};
\draw[gp path] (3.593,0.985)--(3.593,1.165);
\draw[gp path] (3.593,7.825)--(3.593,7.645);
\node[gp node center] at (3.593,0.677) { 1.2};
\draw[gp path] (5.681,0.985)--(5.681,1.165);
\draw[gp path] (5.681,7.825)--(5.681,7.645);
\node[gp node center] at (5.681,0.677) { 1.4};
\draw[gp path] (7.770,0.985)--(7.770,1.165);
\draw[gp path] (7.770,7.825)--(7.770,7.645);
\node[gp node center] at (7.770,0.677) { 1.6};
\draw[gp path] (9.858,0.985)--(9.858,1.165);
\draw[gp path] (9.858,7.825)--(9.858,7.645);
\node[gp node center] at (9.858,0.677) { 1.8};
\draw[gp path] (11.947,0.985)--(11.947,1.165);
\draw[gp path] (11.947,7.825)--(11.947,7.645);
\node[gp node center] at (11.947,0.677) { 2};
\draw[gp path] (1.504,7.825)--(1.504,0.985)--(11.947,0.985)--(11.947,7.825)--cycle;
\node[gp node center,rotate=-270] at (0.246,4.405) {Running time $\beta$};
\node[gp node center] at (6.725,0.215) {Graph density $\alpha$};
\node[gp node center] at (6.725,8.287) {Comparison of Running Times};
\node[gp node right] at (10.479,2.541) {Even-Shiloach (\sssp and \ssr)};
\gpcolor{color=gp lt color 0}
\gpsetlinetype{gp lt plot 0}
\draw[gp path] (10.663,2.541)--(11.579,2.541);
\draw[gp path] (1.504,3.265)--(1.609,3.311)--(1.715,3.357)--(1.820,3.403)--(1.926,3.449)%
  --(2.031,3.495)--(2.137,3.541)--(2.242,3.587)--(2.348,3.633)--(2.453,3.680)--(2.559,3.726)%
  --(2.664,3.772)--(2.770,3.818)--(2.875,3.864)--(2.981,3.910)--(3.086,3.956)--(3.192,4.002)%
  --(3.297,4.048)--(3.403,4.094)--(3.508,4.140)--(3.614,4.186)--(3.719,4.232)--(3.825,4.278)%
  --(3.930,4.324)--(4.036,4.370)--(4.141,4.417)--(4.247,4.463)--(4.352,4.509)--(4.458,4.555)%
  --(4.563,4.601)--(4.669,4.647)--(4.774,4.693)--(4.880,4.739)--(4.985,4.785)--(5.090,4.831)%
  --(5.196,4.877)--(5.301,4.923)--(5.407,4.969)--(5.512,5.015)--(5.618,5.061)--(5.723,5.107)%
  --(5.829,5.153)--(5.934,5.200)--(6.040,5.246)--(6.145,5.292)--(6.251,5.338)--(6.356,5.384)%
  --(6.462,5.430)--(6.567,5.476)--(6.673,5.522)--(6.778,5.568)--(6.884,5.614)--(6.989,5.660)%
  --(7.095,5.706)--(7.200,5.752)--(7.306,5.798)--(7.411,5.844)--(7.517,5.890)--(7.622,5.937)%
  --(7.728,5.983)--(7.833,6.029)--(7.939,6.075)--(8.044,6.121)--(8.150,6.167)--(8.255,6.213)%
  --(8.361,6.259)--(8.466,6.305)--(8.571,6.351)--(8.677,6.397)--(8.782,6.443)--(8.888,6.489)%
  --(8.993,6.535)--(9.099,6.581)--(9.204,6.627)--(9.310,6.673)--(9.415,6.720)--(9.521,6.766)%
  --(9.626,6.812)--(9.732,6.858)--(9.837,6.904)--(9.943,6.950)--(10.048,6.996)--(10.154,7.042)%
  --(10.259,7.088)--(10.365,7.134)--(10.470,7.180)--(10.576,7.226)--(10.681,7.272)--(10.787,7.318)%
  --(10.892,7.364)--(10.998,7.410)--(11.103,7.457)--(11.209,7.503)--(11.314,7.549)--(11.420,7.595)%
  --(11.525,7.641)--(11.631,7.687)--(11.736,7.733)--(11.842,7.779)--(11.947,7.825);
\gpcolor{color=gp lt color border}
\node[gp node right] at (10.479,2.148) {$(1+\epsilon)$-approximate \sssp};
\gpcolor{color=gp lt color 1}
\gpsetlinetype{gp lt plot 1}
\draw[gp path] (10.663,2.148)--(11.579,2.148);
\draw[gp path] (1.504,2.505)--(1.609,2.559)--(1.715,2.612)--(1.820,2.666)--(1.926,2.720)%
  --(2.031,2.774)--(2.137,2.827)--(2.242,2.881)--(2.348,2.935)--(2.453,2.989)--(2.559,3.042)%
  --(2.664,3.096)--(2.770,3.150)--(2.875,3.204)--(2.981,3.257)--(3.086,3.311)--(3.192,3.365)%
  --(3.297,3.419)--(3.403,3.472)--(3.508,3.526)--(3.614,3.580)--(3.719,3.633)--(3.825,3.687)%
  --(3.930,3.741)--(4.036,3.795)--(4.141,3.848)--(4.247,3.902)--(4.352,3.956)--(4.458,4.010)%
  --(4.563,4.063)--(4.669,4.117)--(4.774,4.171)--(4.880,4.225)--(4.985,4.278)--(5.090,4.332)%
  --(5.196,4.386)--(5.301,4.440)--(5.407,4.493)--(5.512,4.547)--(5.618,4.601)--(5.723,4.647)%
  --(5.829,4.681)--(5.934,4.716)--(6.040,4.750)--(6.145,4.785)--(6.251,4.820)--(6.356,4.854)%
  --(6.462,4.889)--(6.567,4.923)--(6.673,4.958)--(6.778,4.992)--(6.884,5.027)--(6.989,5.061)%
  --(7.095,5.096)--(7.200,5.130)--(7.306,5.165)--(7.411,5.200)--(7.517,5.234)--(7.622,5.269)%
  --(7.728,5.303)--(7.833,5.338)--(7.939,5.372)--(8.044,5.407)--(8.150,5.441)--(8.255,5.476)%
  --(8.361,5.510)--(8.466,5.545)--(8.571,5.580)--(8.677,5.614)--(8.782,5.649)--(8.888,5.683)%
  --(8.993,5.718)--(9.099,5.752)--(9.204,5.787)--(9.310,5.821)--(9.415,5.856)--(9.521,5.890)%
  --(9.626,5.925)--(9.732,5.960)--(9.837,5.994)--(9.943,6.029)--(10.048,6.063)--(10.154,6.098)%
  --(10.259,6.132)--(10.365,6.167)--(10.470,6.201)--(10.576,6.236)--(10.681,6.270)--(10.787,6.305)%
  --(10.892,6.340)--(10.998,6.374)--(11.103,6.409)--(11.209,6.443)--(11.314,6.478)--(11.420,6.512)%
  --(11.525,6.547)--(11.631,6.581)--(11.736,6.616)--(11.842,6.650)--(11.947,6.685);
\gpcolor{color=gp lt color border}
\node[gp node right] at (10.479,1.755) {$(1+\epsilon)$-approximate \stsp};
\gpcolor{color=gp lt color 2}
\gpsetlinetype{gp lt plot 2}
\draw[gp path] (10.663,1.755)--(11.579,1.755);
\draw[gp path] (1.504,2.125)--(1.609,2.183)--(1.715,2.240)--(1.820,2.298)--(1.926,2.355)%
  --(2.031,2.413)--(2.137,2.470)--(2.242,2.528)--(2.348,2.586)--(2.453,2.643)--(2.559,2.701)%
  --(2.664,2.758)--(2.770,2.816)--(2.875,2.873)--(2.981,2.931)--(3.086,2.989)--(3.192,3.046)%
  --(3.297,3.104)--(3.403,3.161)--(3.508,3.219)--(3.614,3.277)--(3.719,3.334)--(3.825,3.392)%
  --(3.930,3.449)--(4.036,3.507)--(4.141,3.564)--(4.247,3.622)--(4.352,3.680)--(4.458,3.737)%
  --(4.563,3.795)--(4.669,3.852)--(4.774,3.910)--(4.880,3.967)--(4.985,4.025)--(5.090,4.083)%
  --(5.196,4.140)--(5.301,4.198)--(5.407,4.255)--(5.512,4.313)--(5.618,4.370)--(5.723,4.428)%
  --(5.829,4.486)--(5.934,4.543)--(6.040,4.585)--(6.145,4.616)--(6.251,4.647)--(6.356,4.678)%
  --(6.462,4.708)--(6.567,4.739)--(6.673,4.770)--(6.778,4.800)--(6.884,4.831)--(6.989,4.862)%
  --(7.095,4.892)--(7.200,4.923)--(7.306,4.954)--(7.411,4.985)--(7.517,5.015)--(7.622,5.046)%
  --(7.728,5.077)--(7.833,5.107)--(7.939,5.138)--(8.044,5.169)--(8.150,5.200)--(8.255,5.230)%
  --(8.361,5.261)--(8.466,5.292)--(8.571,5.322)--(8.677,5.353)--(8.782,5.384)--(8.888,5.414)%
  --(8.993,5.445)--(9.099,5.476)--(9.204,5.507)--(9.310,5.537)--(9.415,5.568)--(9.521,5.599)%
  --(9.626,5.629)--(9.732,5.660)--(9.837,5.691)--(9.943,5.722)--(10.048,5.752)--(10.154,5.783)%
  --(10.259,5.814)--(10.365,5.844)--(10.470,5.875)--(10.576,5.906)--(10.681,5.937)--(10.787,5.967)%
  --(10.892,5.998)--(10.998,6.029)--(11.103,6.059)--(11.209,6.090)--(11.314,6.121)--(11.420,6.151)%
  --(11.525,6.182)--(11.631,6.213)--(11.736,6.244)--(11.842,6.274)--(11.947,6.305);
\gpcolor{color=gp lt color border}
\node[gp node right] at (10.479,1.362) {\ssr};
\gpcolor{color=gp lt color 3}
\gpsetlinetype{gp lt plot 3}
\draw[gp path] (10.663,1.362)--(11.579,1.362);
\draw[gp path] (1.504,2.505)--(1.609,2.559)--(1.715,2.612)--(1.820,2.666)--(1.926,2.720)%
  --(2.031,2.774)--(2.137,2.827)--(2.242,2.881)--(2.348,2.935)--(2.453,2.989)--(2.559,3.042)%
  --(2.664,3.096)--(2.770,3.150)--(2.875,3.204)--(2.981,3.257)--(3.086,3.311)--(3.192,3.365)%
  --(3.297,3.419)--(3.403,3.472)--(3.508,3.526)--(3.614,3.580)--(3.719,3.633)--(3.825,3.687)%
  --(3.930,3.741)--(4.036,3.795)--(4.141,3.848)--(4.247,3.902)--(4.352,3.956)--(4.458,4.010)%
  --(4.563,4.063)--(4.669,4.117)--(4.774,4.171)--(4.880,4.225)--(4.985,4.278)--(5.090,4.332)%
  --(5.196,4.386)--(5.301,4.440)--(5.407,4.493)--(5.512,4.547)--(5.618,4.601)--(5.723,4.647)%
  --(5.829,4.681)--(5.934,4.716)--(6.040,4.750)--(6.145,4.785)--(6.251,4.805)--(6.356,4.824)%
  --(6.462,4.844)--(6.567,4.864)--(6.673,4.884)--(6.778,4.903)--(6.884,4.923)--(6.989,4.943)%
  --(7.095,4.963)--(7.200,4.982)--(7.306,5.002)--(7.411,5.022)--(7.517,5.042)--(7.622,5.061)%
  --(7.728,5.081)--(7.833,5.107)--(7.939,5.138)--(8.044,5.169)--(8.150,5.200)--(8.255,5.230)%
  --(8.361,5.261)--(8.466,5.292)--(8.571,5.322)--(8.677,5.353)--(8.782,5.384)--(8.888,5.414)%
  --(8.993,5.445)--(9.099,5.476)--(9.204,5.507)--(9.310,5.537)--(9.415,5.568)--(9.521,5.599)%
  --(9.626,5.629)--(9.732,5.660)--(9.837,5.691)--(9.943,5.722)--(10.048,5.752)--(10.154,5.783)%
  --(10.259,5.814)--(10.365,5.844)--(10.470,5.875)--(10.576,5.906)--(10.681,5.937)--(10.787,5.967)%
  --(10.892,5.998)--(10.998,6.029)--(11.103,6.059)--(11.209,6.090)--(11.314,6.121)--(11.420,6.151)%
  --(11.525,6.182)--(11.631,6.213)--(11.736,6.244)--(11.842,6.274)--(11.947,6.305);
\gpcolor{color=gp lt color border}
\gpsetlinetype{gp lt border}
\draw[gp path] (1.504,7.825)--(1.504,0.985)--(11.947,0.985)--(11.947,7.825)--cycle;
\gpdefrectangularnode{gp plot 1}{\pgfpoint{1.504cm}{0.985cm}}{\pgfpoint{11.947cm}{7.825cm}}
\end{tikzpicture}
\caption{Running times of our decremental reachability and approximate shortest paths algorithms as a function of the density of the initial graph in comparison to the Even-Shiloach algorithm with total update time $ O (m n) $. A point $ (\alpha, \beta) $ in this diagram means that for a graph with $ m = \Theta (n^\alpha) $ the algorithm has a running time of $ O (n^{\beta + o(1)}) $.}\label{fig:comparison_running_time}
\end{figure}

\begin{table}
\centering
\begin{tabular}{ |l|l|l|l| }
\hline
Problems & Running Time & $ o(mn) $ when & Fastest when \\ 
\hline
\ssr & $ \tilde O (m^{7/6} n^{2/3}) $ & $ m = o(n^2) $ & $ n \leq m \leq n^{7/5} $ \\
 & $ O (m^{3/4} n^{5/4 + o(1)}) $ & $ m = \omega(n) $ & $ n^{7/5} \leq m \leq n^{13/9} $ \\
 & $ O (m^{2/3} n^{4/3 + o(1)} + m^{3/7} n^{12/7 + o(1)}) $ & $ m = \omega(n^{5/4}) $ & $ n^{13/9} \leq m \leq n^2 $ \\
\hline
\sssssp & $ \tilde O (m^{5/4} n^{1/2}) $ & $ m = o(n^2) $ & $ m \leq n^{10/7} $ \\
 & $ O (m^{2/3} n^{4/3 + o(1)}) $ & $ m = \omega(n) $ & $ m \geq n^{10/7} $ \\
\hline
$ (1+\epsilon) $-approximate \sssp & $ O (m^{7/6} n^{2/3 + o(1)}) $ & $ m = o(n^2) $ & $ m \leq n^{7/5} $ \\
 & $ O (m^{3/4} n^{5/4 + o(1)}) $ & $ m = \omega(n) $ & $ m \geq n^{7/5} $ \\
\hline
$ (1+\epsilon) $-approximate \sssssp & $ O (m^{5/4} n^{1/2 + o(1)}) $ & $ m = o(n^2) $ & $ m \leq n^{10/7} $ \\
 & $ O (m^{2/3} n^{4/3 + o(1)}) $ & $ m = \omega(n) $ & $ m \geq n^{10/7} $ \\
\hline
\end{tabular}
\caption{Summary of our algorithms. The running time in the second column is the total update time.
For $ (1+\epsilon) $-approximate \sssp and $ (1+\epsilon) $-approximate \sssssp we assume that, for some constant $ c $, $ \epsilon \geq 1 / \log^c{n} $ and the largest edge weight is at most $ 2^{\log^c{n}} $. The third column shows roughly where these algorithms give an improvement over the previous $O(mn)$-time algorithms. The fourth column indicates the density of the graph for which the algorithm gives the fastest running time.} \label{table:results summary}
\end{table}

\paragraph*{Discussion.}
Our main result are dynamic algorithms with constant query time and sublinear update time for single source reachability and $ (1 + \epsilon) $-approximate single source shortest paths in directed graphs undergoing edge deletions.
There is some evidence that it is hard to generalize our results in the following ways.
\begin{itemize}
\item \emph{(All pairs vs.~single source)}
The naive algorithm for computing all-pairs reachability (also called transitive closure) in a directed graph takes time $ O (m n) $ even in the {\em static setting}.
To date no combinatorial algorithm (not relying on fast matrix multiplication) is known that gives a polynomial improvement over this running time.
Thus, unless there is a major breakthrough for static transitive closure, we cannot hope for a combinatorial algorithm for decremental all-pairs reachability with a total update time of $ o (m n) $ and small query time.
Further, it was recently shown \cite{HenzingerKNS15} that no algorithm (not even a non-combinatorial one) can solve decremental all-pairs reachability faster, based on an assumption called Online Matrix-Vector Multiplication ({\sf OMv}).

\item \emph{(Worst-case update time)} The total update time of our single-source reachability algorithms is $ o (m n) $ which gives an amortized update time of $ o (n) $ over a sequence of $ \Omega (m) $ deletions.
It recently has been shown by Abboud and Vassilevska Williams~\cite{AbboudW14} that a combinatorial algorithm with \emph{worst-case} update time and query time of $ o (n^2) $ per deletion implies a faster combinatorial algorithm for Boolean matrix multiplication and, as has been shown by Vassilevska Williams and Williams~\cite{WilliamsW10}, for other problems as well.
Furthermore, for the more general problem of maintaining the number of reachable nodes from a source under deletions (which our algorithms can do) a worst-case update and query time of both $ o (m) $ falsifies the strong exponential time hypothesis.
It might therefore not be possible to deamortize our algorithms.

\item \emph{(Approximate vs. exact)} Our \sssp algorithms only provide approximate solutions.
It has been observed by Roditty and Zwick~\cite{RodittyZ11} that any exact combinatorial \sssp algorithm handling edge deletions that has a total update time of $ o(m n) $ and small query time implies a faster combinatorial for Boolean matrix multiplication.
After the preliminary version~\cite{HenzingerKNSTOC14} of this work appeared, Henzinger et al.~\cite{HenzingerKNS15} showed that $ O(mn) $ is essentially the best possible total update time for maintaining exact distances under the assumption that there is no ``truly subcubic'' algorithm for a problem called online Boolean matrix-vector multiplication.
Both of these hardness results apply even for unweighted undirected graphs.
Thus, approximation might be necessary to break the $ O (m n) $ barrier.
\end{itemize}

\paragraph*{Organization.}
We first introduce the notation and basic concepts shared by all our algorithms in \Cref{sec:stoc2014:prelim}.
To provide some intuition for our approach we sketch two basic $s$-$t$ reachability algorithms in \Cref{sec: overview s-t reachability}.
We give the reductions for extending our results to \sssp (and \ssr) and \scc in \Cref{sec:sssp,sec:strongly connected component}, respectively.
In \Cref{sec:sparse} we give a hierarchical $s$-$t$ reachability algorithm, which in turn gives algorithms for \ssr and \scc.
In \Cref{sec:approx_shortest_paths} we extend this idea to $ (1 + \epsilon) $-approximate \stsp and \sssp.
Finally, in \Cref{sec:SSR_dense} we give an \ssr algorithm with improved total update time for dense graphs.
We conclude the paper by discussing open problems.

\section{Preliminaries}\label{sec:stoc2014:prelim}

\subsection{Notation}

We are given a directed graph $ G $ that might be weighted or unweighted.
The graph undergoes a sequence of \emph{updates}, which might be edge deletions or, for weighted graphs, edge weight increases.
This is called the \emph{decremental setting}.
We denote by $ \Delta (G) $ the number of updates in $ G $, i.e., the number of edge deletions and edge weight increases.
We say that a node $ v $ is \emph{reachable} from $ u $ (or $ u $ can reach $ v $) if there is a path from $ u $ to $ v $ in $ G $.
The \emph{distance} $ \dist_G (x, y) $ of a node $ x $ to a node $ y $ in $ G $ is the weight of the shortest path, i.e., the minimum-weight path, from $ x $ to $ y $ in $ G $.
If there is no path from $ x $ to $ y $ in~$ G $ we set $ \dist_G (x, y) = \infty $.

Let $ G = (V, E) $ be a weighted directed graph, where $ V $ is the set of nodes of $ G $ and $ E $ is the set of edges of $ G $.
We denote by $ n $ the number of nodes of $ G $ and by $ m $ the number of edges of $ G $ \emph{before the first update}.
We denote the \emph{weight} of an edge $ (u, v) $ in~$ G $ by $ w_G (u, v) $.
In the weighted case we consider positive integer edge weights and denote the maximum allowed edge weight by $ W $.
We assume that $ W \leq 2^{\log^c{n}} $ for some constant $ c $.
In unweighted graphs we think of every edge weight as equal to $ 1 $.
Note that for unweighted graphs we have $ \Delta (G) \leq m $ and for weighted graphs we have $ \Delta (G) \leq m W $.

For every path $ \pi = \langle v_0, v_2, \ldots, v_k \rangle $ we denote its weight in $ G $ by $ w (\pi, G) = \sum_{0 \leq i \leq k-1} w_G (v_i, v_{i+1}) $ and its size in terms of number of edges (also called \emph{hops}) by $ |\pi| = k $.
We say that the \emph{length} of a path is its weight, but, to avoid ambiguity, we reserve this notion for unweighted graphs where the length of a path is equal to its number of edges. 
For every integer $ h \geq 1 $ and all nodes $ x $ and $ y $ in a directed graph~$ G $, the \emph{$ h $-hops distance} $ \dist_G^h (x, y) $ is the minimum weight of all paths from $ x $ to $ y $ in~$ G $ consisting of at most $ h $ edges.
Note that $ \dist_G (u, v) = \dist_G^n (u, v) $.

For every graph $ G = (V, E) $ and every subset of nodes $ U \subseteq V $, we define $ E [U] = E \cap U^2 $ and denote by $ G[U] = (U, E[U]) $ the \emph{subgraph of $ G $ induced by $ U $}.
We set the weight of every edge in~$ G [U] $ equal to its weight in~$ G $.
For sets of nodes $ U \subseteq V $ and $ U' \subseteq V $ we define $ E [U, U'] = E \cap (U \times U') $, i.e., $ E [U, U'] $ is the set of edges $ (u, v) \in E $ such that $ u \in U $ and $ v \in U' $.

Our algorithms allow problem-specific update and query operations.
The algorithms we design will have constant query time and we will compare them by their \emph{total update time} over all deletions and, in case of weighted graphs, edge weight increases.
They are randomized and correct with high probability against an oblivious adversary~\cite{Ben-DavidBKTW94} who chooses its sequence of updates and queries before the algorithm starts.\footnote{In particular, the oblivious adversary does not know the random choices of our algorithm an cannot infer them by choosing its updates and queries adaptively.}
We say that an event happens {\em with high probability (whp)} if it happens with probability at least $ 1 - 1/n^c $, for some constant $ c $.
We assume that arithmetic operations on integers can be performed in constant time.
Note that the total update time of decremental algorithms for weighted graphs will be $ \Omega (\Delta (G)) $, the number of edge deletions and edge weight increases in $ G $.
This is unavoidable as the algorithm has to read every update in the graph.
In unweighted graphs the dependence on $ \Delta (G) $ does not have to be stated explicitly because there we have $ \Delta (G) \leq m $ and the total update time is $ \Omega (m) $ for reading the initial graph anyway.
We use $ \tilde O $-notation to hide logarithmic factors, i.e., we write $ \tilde O (t(m, n, W)) $ as an abbreviation for $ O (t(m, n, W) \log^c{n}) $ when $ c $ is a constant.
Similarly, we use the notation $ \hat{O} (t (m, n)) $ as an abbreviation for $ O (t (m, n) \cdot n^{o(1)}) $.
Whenever one of our algorithms returns an $\alpha$-approximate result, the approximation factor will be of the form $ \alpha = 1 + \epsilon $ where $ 0 < \epsilon \leq 1 $.
In this paper we assume that $ \epsilon \geq 1 / \log^c{n} $ for some constant~$ c $.

\subsection{Problem Descriptions}

Our goal is to design efficient dynamic algorithms for the following problems.

\begin{definition}[\ssssr]
A \emph{decremental single-source single-sink reachability (\ssssr) algorithm} for a directed graph $ G $ undergoing edge deletions, a source node $ s $, and a sink node $ t $, maintains the information whether $ s $ can reach $ t $.
It supports the following operations:
\begin{itemize}
\item \Delete{$u$, $v$}: Delete the edge $ (u, v) $ from $ G $.
\item \Query{}: Return `yes' if $ s $ can reach $ t $ and `no' otherwise.
\end{itemize}
\end{definition}

\begin{definition}[\ssr]
A \emph{decremental single-source reachability (\ssr) algorithm} for a directed graph $ G $ undergoing edge deletions and a source node $ s $, maintains the set of nodes reachable from $ s $ in $ G $.
It supports the following operations:
\begin{itemize}
\item \Delete{$u$, $v$}: Delete the edge $ (u, v) $ from $ G $.
\item \Query{$v$}: Return `yes' if $ s $ can reach $ v $ and `no' otherwise.
\end{itemize}
\end{definition}

\begin{definition}[\scc]\label{def:strongly connected component}
A \emph{decremental strongly connected components (\scc) algorithm} for a directed graph $ G $ undergoing edge deletions maintains a set of IDs of the strongly connected components of $ G $ and, for every node $ v $, the ID of the strongly connected component that contains $ v $.
It supports the following operations:
\begin{itemize}
\item \Delete{$u$, $v$}: Delete the edge $ (u, v) $ from $ G $.
\item \Query{$v$}: Return the ID of the strongly connected component that contains $ v $.
\end{itemize}
\end{definition}

\begin{definition}[$ \alpha $-approximate \sssssp]
A \emph{decremental $ \alpha $-approximate single-source single-sink shortest path (\sssssp) algorithm} for a weighted directed graph $ G $ undergoing edge deletions and edge weight increases, a source node $ s $, and a sink node $ t $, maintains a distance estimate $ \distest (s, t) $ such that $ \dist_G (s, t) \leq \distest (s, t) \leq \alpha \dist_G (s, t) $.
It supports the following operations:
\begin{itemize}
\item \Delete{$ u $, $ v $}: Delete the edge $ (u, v) $ from $ G $.
\item \Increase($ u $, $ v $, $ x $): Increase the weight of the edge $ (u, v) $ to $ x $.
\item \Query{}: Return the $ \alpha $-approximate distance estimate $ \distest (s, t) $.
\end{itemize}
\end{definition}

\begin{definition}[$ \alpha $-approximate \sssp]
A \emph{decremental $ \alpha $-approximate single-source shortest paths (\sssp) algorithm} for a weighted directed graph $ G $ undergoing edge deletions and edge weight increases and a source node $ s $, maintains, for every node $ v $, a distance estimate $ \distest (s, v) $ such that $ \dist_G (s, v) \leq \distest (s, v) \leq \alpha \dist_G (s, v) $.
It supports the following operations:
\begin{itemize}
\item \Delete{$ u $, $ v $}: Delete the edge $ (u, v) $ from $ G $.
\item \Increase($ u $, $ v $, $ x $): Increase the weight of the edge $ (u, v) $ to $ x $.
\item \Query{$ v $}: Return the $ \alpha $-approximate distance estimate $ \distest (s, v) $.
\end{itemize}
\end{definition}

\subsection{Definitions and Basic Properties}\label{sec:definitions and properties}

In the following we introduce the basic concepts shared by all our algorithms.
The first new concept used by our algorithms is the path union of a pair of nodes.

\begin{definition}[Path Union]\label{def:path_union}
For every directed graph $ G $, every $ D \geq 1 $, and all pairs of nodes $ x $ and $ y $ of $ G $, the path union $ \cP (x, y, D, G) \subseteq V $ is the set containing all nodes that lie on some path $ \pi $ from $ x $ to $ y $ in $ G $ of weight $ w_G (\pi) \leq D $.
\end{definition}

Note that if the shortest path from $ x $ to $ y $ in $ G $ has weight at most $ D $, then the subgraph of $ G $ induced by the path union $ \cP (x, y, D, G) $ contains this shortest path.
Thus, instead of finding this shortest path in $ G $ directly we can also find it in this potentially smaller subgraph of $ G $.
In our algorithms we will be able to bound the sizes of the path union subgraphs, which makes them very useful.
Observe that for a fixed value of $ D $, the path union can be computed in nearly linear time:

\begin{lemma}[Linear-Time Path Union]\label{pro:path_union_characterization}\label{lem:path_union_computation}
For every directed graph $ G $, every $ D \geq 1 $ and all pairs of nodes $ x $ and $ y $ of $ G $, we have $ \cP (x, y, D, G) = \{ v \in V \mid \dist_G (x, v) + \dist_G (v, y) \leq D \} $.
We can compute this set in time $ \tilde O (m) $ in weighted graphs and $ O (m) $ in unweighted graphs, respectively.
\end{lemma}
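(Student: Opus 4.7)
The plan is to prove the set-theoretic characterization first and then read off the algorithm from it. The characterization is established by showing both inclusions. For the inclusion $\cP(x, y, D, G) \subseteq \{v \in V \mid \dist_G(x,v) + \dist_G(v,y) \leq D\}$, I will pick any $v$ lying on a witnessing path $\pi$ from $x$ to $y$ of weight at most $D$ and split $\pi$ at the first occurrence of $v$ into subpaths $\pi_1$ from $x$ to $v$ and $\pi_2$ from $v$ to $y$. By the definition of the distance function, $\dist_G(x, v) \leq w(\pi_1, G)$ and $\dist_G(v, y) \leq w(\pi_2, G)$, so summing gives $\dist_G(x,v) + \dist_G(v,y) \leq w(\pi_1, G) + w(\pi_2, G) = w(\pi, G) \leq D$.

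For the reverse inclusion I will take a shortest path $\pi_1$ from $x$ to $v$ and a shortest path $\pi_2$ from $v$ to $y$ and concatenate them into a single sequence $\pi$ from $x$ to $y$ passing through $v$, whose weight is $\dist_G(x,v) + \dist_G(v,y) \leq D$. This exhibits $v$ as a member of $\cP(x,y,D,G)$. The one subtle point here is the convention on what a ``path'' is: in the paper's notation, a path is simply a sequence $\langle v_0, v_1, \ldots, v_k \rangle$ of vertices connected by edges, not required to be simple, so the concatenation is a valid witness. I will flag this convention explicitly, since otherwise one would need an extra argument that a walk through $v$ of weight at most $D$ can be shortened to a simple path through $v$ of weight at most $D$.

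The algorithmic part then falls out directly. I compute $\dist_G(x, \cdot)$ by running Dijkstra's algorithm from $x$ in $G$, which takes $\tilde O(m)$ time in the weighted case (and can be replaced by BFS in $O(m)$ time when $G$ is unweighted). Symmetrically, I obtain $\dist_G(\cdot, y)$ by running the same procedure from $y$ on the edge-reversed graph, which can be constructed in $O(m)$ preprocessing time. With both arrays in hand, I scan every $v \in V$ in $O(n)$ additional time and include it in the output iff $\dist_G(x,v) + \dist_G(v,y) \leq D$, with the convention that $\infty$ summands exceed $D$. Summing the costs yields the stated bounds of $\tilde O(m)$ in the weighted case and $O(m)$ in the unweighted case. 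I do not anticipate a real obstacle beyond the path/walk convention flagged above; everything else is a direct application of textbook shortest-path computations.
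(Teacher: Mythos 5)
Your proof is correct and follows the same approach as the paper's: both directions via splitting/concatenating paths at $v$, followed by two Dijkstra (or BFS) runs from $x$ and from $y$ in the reversed graph plus a linear scan. Your explicit flag about the path-versus-walk convention is a fair observation — the paper's definition of a path as a vertex sequence indeed permits repeated vertices, which is what makes the concatenation argument go through without a shortcutting step.
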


\begin{proof}
Clearly, if $ \dist_G (x, v) + \dist_G (v, y) \leq D $, then the concatenation of the shortest path from $ x $ to $ v $ and the shortest path from $ v $ to $ y $ is a path from $ x $ to $ y $ of weight at most $ D $ and thus $ v \in \cP (x, y, D, G) $.
Conversely, if $ v \in \cP (x, y, D, G) $, then there is a shortest path $ \pi $ from $ x $ to $ y $ containing $ v $ of weight at most $ D $.
Let $ \pi_1 $ and $ \pi_2 $ be the subpaths of $ \pi $ from $ x $ to $ v $ and from $ v $ to $ y $, respectively.
Then $ \dist_G (x, v) + \dist_G (v, y) \leq w_G (\pi_1) + w_G (\pi_2) = w_G (\pi) \leq D $.

Using Dijkstra's algorithm we compute $ \dist_G (x, v) $ and $ \dist_G (v, y) $ for every node $ {v \in V} $ in time $ \tilde O (m) $.
Afterwards, we iterate over all nodes and check for every node $ v $ whether $ \dist_G (x, v) + \dist_G (v, y) \leq D $ in total time $ O (n) $.
In unweighted graphs we can compute $ \dist_G (x, v) $ and $ \dist_G (v, y) $ for every node $ v \in V $ in time $ O (m) $ by performing breadth-first search (BFS).
\end{proof}

We further observe in the following that path unions are monotone in the decremental setting, i.e., no nodes will ever be added to any path union for a fixed value of $ D $ while the graph undergoes deletions and weight increases
This means that once we have computed the path union we can also use it for future versions of the graph, as long as we do not want to consider larger weights of the paths.

\begin{lemma}\label{pro:path_union_property_under_deletions}
Let $ G $ be a directed graph and let $ G' $ be the result of deleting some edge from $ G $ or increasing the weight of some edge from $ G $.
Then for every $ D \geq 1 $ and all pairs of nodes $ x $ and $ y $ of $ G $, $ \cP (x, y, h, G') \subseteq \cP (x, y, h, G) $.
\end{lemma}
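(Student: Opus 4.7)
The plan is to prove the inclusion directly by unpacking the definition of path union. Let $v \in \cP(x, y, D, G')$; by \Cref{def:path_union} there exists a path $\pi$ from $x$ to $y$ in $G'$ that passes through $v$ and satisfies $w(\pi, G') \leq D$. The goal is to exhibit this same path as a witness for membership of $v$ in $\cP(x, y, D, G)$.

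First I would split into the two cases for the update. If $G'$ is obtained from $G$ by deleting an edge, then every edge of $G'$ is still an edge of $G$ with the same weight, so $\pi$ is a valid path in $G$ and $w(\pi, G) = w(\pi, G') \leq D$. If $G'$ is obtained by increasing the weight of a single edge, then $G$ and $G'$ have the same edge set while $w_{G}(e) \leq w_{G'}(e)$ for every edge $e$; hence $\pi$ is again a path in $G$ and its weight satisfies $w(\pi, G) \leq w(\pi, G') \leq D$. In either case, $\pi$ is a path from $x$ to $y$ in $G$ through $v$ of weight at most $D$, so $v \in \cP(x, y, D, G)$ by \Cref{def:path_union}.

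As an alternative, one could invoke the characterization in \Cref{pro:path_union_characterization} and note that an edge deletion or weight increase can only increase distances, so $\dist_{G'}(x, v) + \dist_{G'}(v, y) \leq D$ implies $\dist_{G}(x, v) + \dist_{G}(v, y) \leq D$; but the direct path-based argument is cleaner and avoids invoking another lemma.

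There is no real obstacle here beyond bookkeeping: the only care required is to verify, in both update types, that every edge of $\pi$ still exists in $G$ and that its $G$-weight does not exceed its $G'$-weight. Both facts are immediate from the definitions of \Delete and \Increase, so the proof reduces to two lines per case.
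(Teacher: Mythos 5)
Your proof is correct and takes essentially the same approach as the paper's: take the witness path $\pi$ for $v \in \cP(x,y,D,G')$ and show it remains a witness in $G$. In fact you are a bit more careful than the paper's own text, which writes $w_G(\pi) = w_{G'}(\pi)$ — an equality that only holds in the deletion case; in the weight-increase case one has the inequality $w_G(\pi) \leq w_{G'}(\pi)$, which you correctly state.
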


\begin{proof}
Let $ v \in \cP (x, y, h, G') $, i.e., there is a path $ \pi $ from $ x $ to $ y $ in $ G' $ of weight at most $ D $.
All edges of $ \pi $ are also contained in $ G $ and thus there is a path from $ x $ to $ y $ in $ G' $ of weight $ w_G (\pi) = w_{G'} (\pi) \leq D $.
Thus, every node of $ \pi $ is contained in $ \cP (x, y, h, G) $ which implies that all edges of $ \pi $ are contained in $ G [\cP (x, y, h, G)] $.
Therefore $ w_{G [\cP (x, y, h, G)]} (\pi) = w_G (\pi) \leq D $ and thus every node on $ \pi $, and in particular $ v $, is contained in $ \cP (x, y, h, G) $.
\end{proof}

The last property of path unions we will use repeatedly is that we can update them by ``computing the path union of the path union''.
This rebuilding of path unions will be useful later to restrict the overlap of path unions of different pairs of nodes.

\begin{lemma}\label{lem:path_union_in_supergraph}
For every directed graph $ G $, every $ D \geq 1 $, every pair of nodes $ x $ and $ y $, and every set of nodes $ \cQ $ such that $ \cP (x, y, D, G) \subseteq \cQ \subseteq V $ we have $ \cP (x, y, D, G [\cQ]) = \cP (x, y, D, G) $.
\end{lemma}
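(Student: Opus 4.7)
The plan is to prove the equality by establishing both inclusions separately, with the forward inclusion being essentially trivial and the reverse inclusion using the hypothesis $\cP(x, y, D, G) \subseteq \cQ$ to certify that the witnessing paths in $G$ survive in $G[\cQ]$.

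For the inclusion $\cP(x, y, D, G[\cQ]) \subseteq \cP(x, y, D, G)$, I would simply observe that $G[\cQ]$ is a subgraph of $G$ and that edge weights are preserved under the induced-subgraph construction. Hence any path $\pi$ from $x$ to $y$ in $G[\cQ]$ witnessing that some node $v$ lies in $\cP(x, y, D, G[\cQ])$ is also a path in $G$ of the same weight, and thus witnesses $v \in \cP(x, y, D, G)$.

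For the reverse inclusion $\cP(x, y, D, G) \subseteq \cP(x, y, D, G[\cQ])$, I would fix a node $v \in \cP(x, y, D, G)$ and pick a path $\pi$ from $x$ to $y$ in $G$ of weight at most $D$ that passes through $v$ (such a path exists by \Cref{def:path_union}). The key step is to notice that every node $u$ appearing on $\pi$ itself lies on a path of weight at most $w_G(\pi) \leq D$ from $x$ to $y$ in $G$ — namely $\pi$ — so by definition every such $u$ belongs to $\cP(x, y, D, G)$, and therefore to $\cQ$ by hypothesis. Consequently every edge of $\pi$ has both endpoints in $\cQ$, so all edges of $\pi$ lie in $E[\cQ]$, meaning $\pi$ is a valid path in $G[\cQ]$ with weight $w_{G[\cQ]}(\pi) = w_G(\pi) \leq D$. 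This witnesses $v \in \cP(x, y, D, G[\cQ])$.

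There isn't really a hard step here — the lemma is essentially a sanity check saying that restricting the ambient graph to a set that already contains the path union does not shrink the path union. The only subtle point worth articulating carefully is the fact that \emph{every} node on $\pi$ (not merely $v$) automatically belongs to $\cP(x, y, D, G)$; this is what allows $\pi$ to be entirely realized inside $G[\cQ]$. After that observation, the proof is a one-line verification of weights in the induced subgraph.
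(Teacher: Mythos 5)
Your proof is correct and follows the same route as the paper's: for the easy inclusion, use that $G[\cQ]$ is a subgraph of $G$ with preserved weights, and for the reverse inclusion, observe that every node on a witnessing path of weight at most $D$ is itself in $\cP(x, y, D, G) \subseteq \cQ$, so the entire path survives in $G[\cQ]$. You spell out the "every node on $\pi$ belongs to the path union" observation a bit more explicitly than the paper does, but there is no substantive difference.
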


\begin{proof}
Let $ v \in \cP (x, y, D, G [\cQ]) $, which means that $ v $ lies on a path $ \pi $ from $ x $ to $ y $ of weight at most $ D $.
As $ G [\cQ] $ is a subgraph of $ G $, this path is also contained in $ G $ (with the same weight) and thus $ v \in \cP (x, y, D, G) $.

Now let $ v \in \cP (x, y, D, G) $, which means that $ v $ lies on a path $ \pi $ from $ x $ to $ y $ of weight at most $ D $.
By the assumption $ \cP (x, y, D, G) \subseteq \cQ $ every node $ v' $ of $ \pi $ is contained in $ \cQ $ and thus $ \pi $ is contained in $ G [\cQ] $ (and has the same weight as in $ G $).
As $ v $ lies on a path from $ x $ to $ y $ of weight at most $ D $ in $ G $, we have $ v \in \cP (x, y, D, G [\cQ]) $.
\end{proof}

We now introduce two concepts that are used frequently, also in other dynamic algorithms.
The first such concept is the following algorithmic primitive for maintaining shortest paths trees up to bounded depth.

\begin{theorem}[Even-Shiloach tree~\cite{EvenS81,HenzingerK95,King99}]\label{lem:ES-tree}
There is a decremental algorithm, called \emph{Even-Shiloach tree} (short: ES-tree), that, given a weighted directed graph $ G $  undergoing edge deletions and edge weight increases with positive integer edge weights, a source node $ s $, and a parameter $ D \geq 1 $, maintains a shortest paths tree from $ s $ and the corresponding distances up to depth $ D $ with total update time $ O (m D + \Delta (G)) $, i.e., the algorithm maintains $ \dist_G (s, v) $ and the parent of $ v $ in the shortest paths tree for every node $ v $ such that $ \dist_G (s, v) \leq D $.
By reversing the edges of $ G $ it can also maintain the distance from $ v $ to $ s $ for every node $ v $ in the same time.
\end{theorem}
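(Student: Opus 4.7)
The plan is to maintain at each node $v$ a level $\ell(v)$ that will always equal $\dist_G(s,v)$ when this distance is at most $D$ and will be set to $\infty$ otherwise. Alongside $\ell(v)$, I would store a parent pointer $p(v)$ certifying the level, namely an in-edge $(p(v),v)$ satisfying $\ell(p(v))+w_G(p(v),v)=\ell(v)$. The tree formed by these pointers is the shortest-paths tree to be maintained. Initial values are obtained in $O(mD+n)$ time via a truncated Dijkstra from $s$ (BFS in the unweighted case). The whole approach relies on the decremental nature of the updates: $\dist_G(s,v)$ is monotonically nondecreasing, so every $\ell(v)$ will only ever be raised throughout the update sequence.

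To handle an update to an edge $(u,v)$, I would first update the stored edge list and weight, then check whether $v$'s parent edge is still a certificate, i.e., whether $\ell(p(v))+w_G(p(v),v)=\ell(v)$ still holds. If it does, nothing further is needed. Otherwise $v$ is pushed onto a min-heap $Q$ keyed by its current $\ell$-value. The processing loop repeatedly extracts the minimum-key node $v$, rescans the incoming edges to compute $\ell^\ast(v):=\min_{(u',v)\in E}(\ell(u')+w_G(u',v))$, capped at $\infty$ once it exceeds $D$, and compares with the stored $\ell(v)$. If $\ell^\ast(v)=\ell(v)$ I merely refresh $p(v)$ to a minimizing in-edge; otherwise I raise $\ell(v)$ to $\ell^\ast(v)$, update $p(v)$, and enqueue every out-neighbor of $v$ so its own certificate can be rechecked. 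Using min-heap order (so that $v$ is examined only when no smaller-level in-neighbor still waits in $Q$) ensures that each node is popped at most once per genuine level increase.

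Correctness follows from two invariants. First, $\ell(v)\ge\dist_G(s,v)$ always, preserved because $\ell(v)$ is only raised to a minimum over valid in-edges. Second, at quiescence $\ell(v)=\min_{(u,v)\in E}(\ell(u)+w_G(u,v))$ whenever $\ell(v)\le D$, restored because any $v$ with a violated certificate is enqueued and every strict level raise propagates to the out-neighbors. Combining the two yields $\ell(v)=\dist_G(s,v)$ (capped at $D$) by a straightforward induction along any shortest path from $s$. For the running time, each $\ell(v)$ is a nonnegative integer that only grows and is capped at $D+1$, hence it changes at most $D+1$ times per node. Each such change pays for a full rescan of $v$'s incoming edges and the enqueuing of $v$'s out-neighbors, contributing $O(\deg(v)\cdot D)$ per node and $O(mD)$ overall; $O(1)$ per update for the initial validity check and heap insertion yields the additive $O(\Delta(G))$ term.

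The step I would be most careful with is charging the \emph{spurious} rescans in which the parent edge is invalidated but $\ell(v)$ does not actually change, because another incoming edge already attains the same minimum. A naive implementation would pay $\Theta(\deg_{\mathrm{in}}(v))$ per such event and could not be absorbed into the per-level-increase budget, since $\Delta(G)$ may far exceed $D$. The standard fix is to maintain, for each node $v$, a cursor into its in-edge list that advances monotonically between consecutive strict increases of $\ell(v)$ and is reset only when such an increase happens; the total scanning work for $v$ between two such increases is then bounded by $\deg_{\mathrm{in}}(v)$ and cleanly folds into the $O(mD)$ accounting. Finally, the variant that maintains $\dist_G(v,s)$ is obtained by running exactly the same data structure on the edge-reversed graph with $s$ as source; reversal preserves edge weights and the number of updates, so the bound carries over verbatim.
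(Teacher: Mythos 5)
The paper does not prove this statement: it imports the Even-Shiloach tree as a black box from \cite{EvenS81,HenzingerK95,King99}, so there is no in-paper proof against which to compare your attempt. On its own merits, your proposal does capture the standard decremental ES-tree/King mechanism --- monotone levels, parent certificates, and the cursor that advances monotonically through each in-edge list so that all scanning work is chargeable to level increases, giving $O(\deg_{\mathrm{in}}(v)\cdot D)$ per node.

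There is, however, a genuine gap in the correctness argument: your first invariant has the wrong direction. Immediately after a deletion or weight increase, $\dist_G(s,v)$ can strictly increase while $\ell(v)$ is still its stale value, so at that moment $\ell(v) < \dist_G(s,v)$ and the claim ``$\ell(v)\ge\dist_G(s,v)$ always'' is false. The invariant that \emph{is} preserved throughout processing --- and the one your justification (``$\ell(v)$ is only raised to a minimum over valid in-edges'') actually supports --- is $\ell(v) \le \dist_G(s,v)$: if every in-neighbor underestimates its distance then so does the computed minimum, and a graph update only raises the right-hand side. Once the sign is corrected, your ``straightforward induction along a shortest path'' from invariant~(2) also yields $\ell(v)\le\dist_G(s,v)$, so the two facts no longer combine to give equality; the missing half is the lower bound at quiescence, which must come from invariant~(2) alone. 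Concretely: suppose toward a contradiction that some $v$ with $\ell(v)\le D$ has $\ell(v)<\dist_G(s,v)$ at quiescence, and take the one with smallest $\ell(v)$; its minimizing in-neighbor $u$ has $\ell(u)=\ell(v)-w_G(u,v)<\ell(v)$ (weights are positive), so by minimality $\ell(u)\ge\dist_G(s,u)$, forcing $\ell(v)\ge\dist_G(s,u)+w_G(u,v)\ge\dist_G(s,v)$, a contradiction. A smaller issue is the queue: you charge $O(1)$ per heap insertion, but a comparison-based min-heap costs $\Theta(\log n)$ and would inflate the stated $O(mD+\Delta(G))$ bound; since keys are integers in $\{0,\dots,D+1\}$ and only increase, an integer bucket queue with a monotone bucket pointer is what gives the constant-time operations you need.
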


Note that if we want to use the ES-tree to maintain a full shortest paths tree (i.e., containing all shortest paths from the source), then we have to set $ D $ equal to the maximum (finite) distance from the source.
If the graph is unweighted, then it is sufficient to set $ D = n-1 $ and we can thus maintain a full shortest paths tree with total update time $ O (m n) $.
If the graph is weighted and $ W $ is the maximum edge weight, then the maximum distance might be as large as $ (n-1) W $ and the algorithm then takes time $ O (m n W) $, which is less efficient than recomputation from scratch even if $ W = n $.
Similarly, if we want to maintain a shortest paths tree up to $ h \leq n-1 $ hops (containing all shortest paths with at most $ h $ edges), the algorithm above takes time $ O (m h) $ in unweighted graphs and $ O (m h W) $ in weighted graphs.
Using a scaling technique~\cite{Bernstein09,Madry10,Bernstein13}, the running time in weighted graphs can be reduced if we allow approximation:
we can maintain a $ (1 + \epsilon) $-approximate shortest paths tree containing all shortest paths up to $ h $ hops in total time $ \tilde O (m n \log{W} / \epsilon) $.

The second concept used repeatedly in our algorithms is sampling nodes or edges at random.
All our randomized algorithms will be correct whp against an oblivious adversary who fixes its sequence of updates and queries before the algorithm is initialized, revealing its choices to the algorithm one after the other.
It is well-known, and exploited by many other algorithms for dynamic shortest paths and reachability, that by sampling a set of nodes with a sufficiently large probability we can guarantee that certain sets of nodes contain at least one of the sampled nodes.
To the best of our knowledge, the first use of this technique in graph algorithms goes back to Ullman and Yannakakis~\cite{UllmanY91}.

\begin{lemma}\label{lem:center_on_shortest_path}\label{lem:hitting_set_argument}
Let $ a \geq 1 $, let $ T $ be a set of size $ t $ and let $ S_1, S_2, \ldots, S_k $ be subsets of $ T $ of size at least $ q $.
Let~$ U $ be a subset of $ T $ that was obtained by choosing each element of $ T $ independently with probability $ p = \min (x / q, 1) $ where $ x = a \ln{(k t)} + 1 $.
Then, with high probability (whp), i.e., probability at least $ 1 - 1/t^a $, the following two properties hold:
\begin{enumerate}
\item For every $ 1 \leq i \leq k $, the set $ S_i $ contains a node of $ U $, i.e., $ S_i \cap U \neq \emptyset $.
\item $ |U| \leq 3 x t / q = O (a t \ln{(k t)} / q) $.
\end{enumerate}
\end{lemma}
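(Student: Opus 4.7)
The statement is a textbook hitting-set-by-sampling argument; the plan is to handle the two conclusions separately and union-bound the failure probabilities. Before doing anything probabilistic, I would dispose of the degenerate case $p = 1$, i.e.\ $x \geq q$: then $U = T$ deterministically, every $S_i$ is nonempty inside $U$, and $|U| = t \leq xt/q \leq 3xt/q$, so both conclusions hold with probability~$1$. For the remainder assume $p = x/q < 1$.

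For property~1, fix an index $i$ and use independence of the sampling coins: the probability that $U$ misses $S_i$ is $(1-p)^{|S_i|} \leq (1-p)^q \leq e^{-pq} = e^{-x}$. Plugging in $x = a\ln(kt) + 1$ gives $e^{-x} = e^{-1}(kt)^{-a}$, and a union bound over the $k$ sets yields a total miss probability of at most $k \cdot e^{-1}(kt)^{-a} = e^{-1}k^{1-a}t^{-a} \leq e^{-1} t^{-a}$, where the final inequality uses $a \geq 1$ and $k \geq 1$.

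For property~2, observe that $|U|$ is a sum of $t$ independent Bernoulli$(p)$ variables with mean $\mu = pt = xt/q$. Apply a multiplicative Chernoff bound (in the standard form $\Pr[|U| \geq (1+\delta)\mu] \leq \exp(-\delta^2 \mu/(2+\delta))$) with $\delta = 2$ to get $\Pr[|U| \geq 3\mu] \leq e^{-\mu} = e^{-xt/q}$. Since $q \leq t$ (as $S_i \subseteq T$ forces $q \leq t$) we have $xt/q \geq x$, so this probability is at most $e^{-x} = e^{-1}(kt)^{-a} \leq e^{-1} t^{-a}$.

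Adding the two failure probabilities gives at most $2e^{-1}t^{-a} < t^{-a}$, establishing that both properties hold simultaneously with probability at least $1 - 1/t^a$. No step is genuinely hard; the only mild care is in (i) separating the $p = 1$ case, and (ii) choosing a Chernoff bound whose exponent swallows the extra $\ln(kt)$ factor so that the deviation bound matches the union-bound failure term. With the form above this is automatic because the constant $\delta = 2$ makes $\delta^2/(2+\delta) = 1$, exactly cancelling the denominator in the exponent.
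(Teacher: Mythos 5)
Your proof is correct and follows essentially the same route as the paper's: bound $\Pr[S_i \cap U = \emptyset]$ by $(1-p)^q \le e^{-x}$, union-bound over $i$, apply a multiplicative Chernoff bound with $\delta=2$ so the exponent collapses to $-\mu$, and union-bound the two failure events. The only difference is that you explicitly dispatch the degenerate case $p=1$ before manipulating $(1-p)^q$, a small extra bit of care the paper omits.
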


\begin{proof}
For every $ 1 \leq i \leq k $ let $ E_i $ be the event that $ S_i \cap U = \emptyset $, i.e., that $ S_i $ contains no node of~$ U $.
Furthermore, let $ E $ be the event that there is a set $ S_i $ that contains no node of $ U $.
Note that $ E = \bigcup_{1 \leq i \leq k} E_i $.

We first bound the probability of the event $ E_{i} $ for $ 1 \leq i \leq k $.
The size of $ S_i \cap U $ is determined by a Bernoulli trial with success probability $ p $.
The probability that $ | S_i \cap U | = 0 $ is therefore given by
\begin{equation*}
\Pr (E_i) = \left( 1 - p \right)^{|S_i|} \leq \left( 1 - p \right)^{q} = \left( 1 - \frac{a \ln{(k t)} + 1}{q} \right)^q \leq \frac{1}{e^{a \ln{(k t)} + 1}} = \frac{1}{e k^a t^a} \leq \frac{1}{2 k^a t^a} \, .
\end{equation*}
Here we use the well-known inequality $ (1 - 1/y)^y \leq 1 / e $ that holds for every $ y > 0 $.
Now we simply apply the union bound twice and get
\begin{equation*}
\Pr (E) = \Pr \left( \bigcup_{1 \leq i \leq k} E_i \right) \leq \sum_{1 \leq i \leq k} \Pr (E_i) \leq \sum_{1 \leq i \leq k} \frac{1}{2 k^a t^a} = k \frac{1}{2 k^a t^a} \leq \frac{1}{2 t^a} \, .
\end{equation*}

To bound the size of $ U $, let, for every $ 1 \leq i \leq t $, $ X_i $ be the random variable that is $ 1 $ if the $i$-th element is sampled and $ 0 $ otherwise, i.e., $ X $ is $ 1 $ with probability $ p $ and $ 0 $ with probability $ 1 - p $.
The total size of $ U $ is given by the random variable $ X = \sum_{1 \leq i \leq t} X_i $.
Let $ \mu $ be the expectation of $ X $.
By a standard Chernoff bound we have
\begin{equation*}
\Pr [X > (1 + \delta) \mu] \leq \frac{1}{e^{\frac{\delta^2}{2 + \delta} \mu}} \, .
\end{equation*}
With $ \delta = 2 $ we get that the probability that $ U $ has size more than $ 3 \mu = 3 x t / q $ is at most
\begin{equation*}
\frac{1}{e^{\mu}} = \frac{1}{e^{x t / q}} \leq \frac{1}{e^{x}} = \frac{1}{e^{a \ln{kt} + 1}} \leq \frac{1}{2 k^a t^a} \leq \frac{1}{2 t^a} \, .
\end{equation*}

By a union bound, the probability that one of property~1 ($ U $ is a hitting set) or property~2 ($ U $ is small enough) fails is at most $ 1 / t^a $ and thus the claim follows.
\end{proof}

We now sketch one example of how we intend to use \Cref{lem:hitting_set_argument} for dynamic graphs.
Consider an unweighted graph $ G $ undergoing edge deletions.
Suppose that we want the following condition to hold for every pair of nodes $ x $ and $ y $  with probability at least $ 1 - 1/n $ in \emph{all} versions of $ G $ (i.e., initially and after each deletion): if $ \dist_G (x, y) \geq q $, then there is a shortest path from $ x $ to $ y $ that contains a node in $ U $.
We apply \Cref{lem:hitting_set_argument} as follows.
The set $ T $ is the set of nodes of $ G $ and has size $ n $.
The sets $ S_1, S_2, \ldots, S_k $ are obtained as follows: for every version of $ G $ (i.e., after each deletion) and every pair of nodes $ x $ and $ y $ such that $ \dist_G (x, y) \geq q $, we define a set $ S_i $ that contains the nodes on the first shortest path from $ x $ to $ y $ (for an arbitrary, but fixed order on the paths).
As the graph undergoes edge deletions, there are at most $ m \leq n^2 $ versions of the graph.
Furthermore, there are $ n^2 $ pairs of nodes.
Therefore we have $ k \leq n^4 $ such sets.
Thus, for the property above to hold with probability $ 1 - 1/n $, we simply have to sample each node with probability $ (\ln{k t}) / q = (\ln{n^5}) / q = (5 \ln{n}) / q $ by \Cref{lem:hitting_set_argument}.

Finally, we explain how to decrease the number of edge weight increases to be considered.
Remember that in general the number of edge deletions and edge weight increases can only be bounded by $ \Delta (G) = O (m W) $.
However, by the reduction below, we may assume without loss of generality that $ \Delta (G') = O (m \log_{1+\epsilon} W) $ as we are only interested in $ (1 + \epsilon) $-approximate shortest paths.
This for example helps us in reducing the sampling probability for applying \Cref{lem:hitting_set_argument} as the number of versions of $ G $ is reduced significantly.

\begin{lemma}\label{lem:few_updates}
Given a weighted directed graph $ G $ undergoing edge deletions and edge weight increases, we can, in constant time per update in $ G $, maintain a weighted directed graph $ G' $ undergoing edge deletions and edge weight increases such that
\begin{itemize}
\item the number of updates in $ G' $ is at most $ \lceil \log_{1+\epsilon} W \rceil $ per edge (which implies $ \Delta (G') = O (m \log_{1+\epsilon} W) $) and
\item $ \dist_G (x, y) \leq \dist_{G'} (x, y) \leq (1+\epsilon) \dist_G (x, y) $ for all nodes $ x $ and $ y $.
\end{itemize}
\end{lemma}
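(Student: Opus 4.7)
The plan is to define $G'$ as a ``rounded-up'' copy of $G$ where every edge weight is replaced by the next power of $(1+\epsilon)$. Concretely, for every edge $(u,v)$ present in $G$, set
\begin{equation*}
  w_{G'}(u,v) \;=\; (1+\epsilon)^{\lceil \log_{1+\epsilon} w_G(u,v) \rceil}.
\end{equation*}
This rule immediately gives $w_G(u,v) \le w_{G'}(u,v) \le (1+\epsilon)\, w_G(u,v)$ for every edge. Summing over the edges of any path $\pi$ from $x$ to $y$ yields $w_G(\pi) \le w_{G'}(\pi) \le (1+\epsilon)\, w_G(\pi)$, and taking minima over all paths gives the required bound $\dist_G(x,y) \le \dist_{G'}(x,y) \le (1+\epsilon)\,\dist_G(x,y)$.

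For the dynamic maintenance, I would process each update in $G$ as follows. If the update deletes $(u,v)$, delete $(u,v)$ from $G'$. If the update increases the weight of $(u,v)$ to some new value $x$, recompute $(1+\epsilon)^{\lceil \log_{1+\epsilon} x \rceil}$ and perform an \textsc{Increase} in $G'$ only if this rounded value is strictly larger than the current weight of $(u,v)$ in $G'$; otherwise do nothing. Each of these operations is clearly constant time under the standard assumption that arithmetic (including the ceiling of a base-$(1+\epsilon)$ logarithm) takes constant time on the machine word.

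It remains to bound the number of updates in $G'$ per edge. Since $G$ has integer weights between $1$ and $W$, the possible rounded values are $(1+\epsilon)^0, (1+\epsilon)^1, \ldots, (1+\epsilon)^{\lceil \log_{1+\epsilon} W \rceil}$, giving at most $\lceil \log_{1+\epsilon} W \rceil + 1$ distinct values. Because edge weights in $G$ are only ever increased, the rounded weight of a fixed edge in $G'$ is non-decreasing, so it can change at most $\lceil \log_{1+\epsilon} W \rceil$ times. Together with at most one deletion of that edge, this gives $O(\log_{1+\epsilon} W)$ updates per edge, and summed over all $m$ edges one obtains $\Delta(G') = O(m \log_{1+\epsilon} W)$ as claimed. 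There is no real obstacle here; the only subtlety is to ensure that weight-increase events in $G$ are not propagated verbatim to $G'$ but filtered through the rounding, which is what yields the reduction in $\Delta(G')$.
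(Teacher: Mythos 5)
Your proof proposal takes essentially the same approach as the paper: define $G'$ by rounding each edge weight of $G$ up to (roughly) the next power of $(1+\epsilon)$, observe per-edge that $w_G(u,v)\le w_{G'}(u,v)\le (1+\epsilon)w_G(u,v)$ so the same holds for path weights and hence for distances, translate each update of $G$ into at most one update of $G'$ by filtering weight increases through the rounding, and bound the number of weight changes per edge by the number of distinct rounded values together with monotonicity. The only (cosmetic) difference is that the paper additionally takes a floor, setting $w'(u,v)=\lfloor (1+\epsilon)^{i+1}\rfloor$, so that $G'$ retains integer edge weights as assumed throughout the paper; your version leaves $w_{G'}$ as an exact power of $(1+\epsilon)$, which is fine for the distance-approximation argument but worth rounding to an integer if $G'$ is to be fed to the paper's integer-weight data structures.
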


\begin{proof}
We define $ G' $ to be the graph where each edge weight of $ G $ is rounded up to the nearest power of $ (1 + \epsilon) $.
For every edge $ (u, v) $, we have $ (1 + \epsilon)^i \leq w (u, v) < (1 + \epsilon)^{i+1} $ for some $ i $.
By setting $ w'(u, v) = \lfloor (1 + \epsilon)^{i+1} \rfloor $ we get $ w (u, v) \leq w'(u, v) \leq (1 + \epsilon) w(u, v) $.
This means that the weight of every path $ \pi $ in $ G $ is $ (1 + \epsilon) $-approximated by the weight of~$ \pi $ in~$ G' $.
As this is also true for a shortest path, we get $ \dist_{G} (x, y) \leq \dist_{G'} (x, y) \leq (1 + \epsilon) \dist_G (x, y) $ for all nodes $ x $ and~$ y $.
With every edge weight increase of an edge $ (u, v) $ in $ G $ we can in constant time compute the weight $ w'(u, v) $ and check whether it has increased.
The number of different edge weights in $ G $ is $ \lceil \log_{1+\epsilon} W \rceil $ and therefore the total number of edge deletions and edge weight increases in $ G' $ is $ O (m \log_{1+\epsilon} W) $.
Thus, $ G' $ is the desired graph.
\end{proof}

\section{Algorithm Overview for $s$-$t$ Reachability}\label{sec: overview s-t reachability}

In this section, we illustrate our main ideas by giving simple algorithms for the $s$-$t$ reachability problem (\ssssr).
At the heart of all our algorithms is a new way of maintaining reachability or distances between some {\em nearby} pairs of nodes using small path unions (defined in \Cref{sec:definitions and properties}). 

\subsection{Decremental Bounded-Hop Multi-Pair Reachability} \label{sec:bounded-hop multi-pair}

We first give an algorithm for solving the following ``restricted'' reachability problem:
We are given $k$ pairs of sources and sinks $(s_1, t_1)$, $\ldots$, $(s_k, t_k)$ and a parameter $h$. We want to maintain, for each $1 \leq i \leq k$, whether $\dist_G(s_i, t_i)\leq h$.

In this algorithm we use the following set $ B \subseteq V $ of \emph{hubs} of size $ \tilde O (b) $ (for some parameter~$ b $).
Let $ U \subseteq V $ be a set of nodes obtained by sampling each node independently with probability $ a b \ln{n} / n $ and let $ F \subseteq E $ be a set of edges obtained by sampling each edge independently with probability $ a b \ln{n} / m $ from the initial graph (for a large enough constant~$ a $).
We let $ B $ be the set containing $ U $ and the endpoints of every edge in $ F $.
For every pair of nodes $ u $ and $ v $ we define the \emph{hub-distance from $ u $ to $ v $} as $\dist_{\jset}(u, v) =\min_{x\in \jset} \dist_G(u, x)+\dist_G(x, v)$.

The hubs are useful in combination with the path unions in the following way.
Intuitively, instead of directly maintaining whether $\dist_G(s_i, t_i)\leq h$, we can maintain whether $\dist_{G [\cP (s_i, t_i, h, G)]} (s_i, t_i)\leq h$ since $G [\cP(s_i, t_i, h, G)]$ (the subgraph of $ G $ induced by $ \cP(s_i, t_i, h, G) $) contains all $s_i$-$t_i$ paths of length at most $h$ in $G$. This could be helpful when $\cP(s_i, t_i, h, G)$ is much smaller than~$V$. Our first key idea is the observation that all $(s_i, t_i)$ pairs for which $\cP(s_i, t_i, h, G)$ is large can use their paths through a small number of hubs to check whether $\dist_{G [\cP(s_i, t_i, h, G)]} (s_i, t_i)\leq h$ (thus the name ``hub'').
In particular, consider the following algorithm. 
We maintain the distance from each $s_i$ to each $t_i$ in two ways. 
\begin{enumerate}
\item Maintain $\dist_\jset(s_i, t_i)$, for each $1\leq i\leq k$, as long as $\dist_\jset(s_i, t_i)\leq h$. 
\item Once $\dist_\jset(s_i, t_i)>h$, construct $\cQ (s_i, t_i) = \cP(s_i, t_i, h, G)$ and maintain $\pudist(s_i, t_i)$ up to value $h$.
\end{enumerate}
Our algorithm outputs $\dist_G(s_i, t_i)\leq h$ if and only if either $\dist_\jset(s_i, t_i)\leq h$ or $\pudist(s_i, t_i)\leq h$. The correctness is obvious: either $\dist_\jset(s_i, t_i)\leq h$ which already implies that $\dist_G(s_i, t_i)\leq h$ or otherwise we maintain  $\pudist(s_i, t_i)$ which captures all $h$-hop $s_i$-$t_i$ paths. 
The more important point is the efficiency of maintaining both distances. Our analysis mainly uses the following lemma.

\begin{lemma}[Either hub-distance or path-union graph is small]\label{thm:using hubs} With high probability, for each $i$, either $\dist_\jset(s_i, t_i)\leq h$, or $G [\cP(s_i, t_i, h, G)]$ has at most $\min (m/\jsize, n^2/\jsize^2) $ edges.  
\end{lemma}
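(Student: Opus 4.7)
The plan is to prove the contrapositive: if $\jdist(s_i,t_i) > h$, then with high probability $G[\cP(s_i,t_i,h,G)]$ has at most $\min(m/\jsize,\, n^2/\jsize^2)$ edges. The key structural observation is that $\jdist(s_i,t_i) > h$ means that for every $x \in \jset$ we have $\dist_G(s_i,x) + \dist_G(x,t_i) > h$, so by \Cref{pro:path_union_characterization} no hub lies on any $(s_i,t_i)$-path of weight at most $h$; equivalently, $\jset \cap \cP(s_i,t_i,h,G) = \emptyset$. The rest of the proof shows that, because $\jset$ contains both a random node sample and the endpoints of a random edge sample, being disjoint from $\jset$ forces $\cP(s_i,t_i,h,G)$ to be small in both senses claimed.

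For the first bound I would apply \Cref{lem:hitting_set_argument} to the node sample $U$. Set $T = V$, so $t = n$. Let the collection $\{S_j\}$ consist of all path unions $\cP(s_i,t_i,h,G')$, ranging over every $1 \leq i \leq k$ and every version $G'$ of the graph produced by a prefix of the deletion sequence, that have size at least $n/\jsize$. Since the adversary is oblivious and the decremental update sequence has length at most $m \leq n^2$, the collection has size $k \cdot m = \mathrm{poly}(n)$, so taking $q = n/\jsize$ in \Cref{lem:hitting_set_argument} requires a sampling rate of the form $x/q = a\jsize \ln n / n$ for a sufficiently large constant $a$, which matches the rate used to build $U$. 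Thus whp every such $S_j$ contains a node of $U \subseteq \jset$, and so $\jset \cap \cP(s_i,t_i,h,G) = \emptyset$ forces $|\cP(s_i,t_i,h,G)| < n/\jsize$, bounding the number of edges in the induced subgraph by $(n/\jsize)^2 = n^2/\jsize^2$.

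For the second bound I would apply the same lemma to the edge sample $F$, now with $T = E$ and $t = m$. Let the collection $\{S_j\}$ consist of the edge sets $E(G'[\cP(s_i,t_i,h,G')])$ of size at least $m/\jsize$, again across all pairs $i$ and all decremental versions $G'$. Setting $q = m/\jsize$, the rate $x/q$ required by \Cref{lem:hitting_set_argument} is $a\jsize\ln n / m$, matching the rate used to build $F$. Hence whp every such set contains some edge $(u,v) \in F$; but by construction both endpoints of every edge in $F$ lie in $\jset$, so $u,v \in \jset \cap \cP(s_i,t_i,h,G)$, contradicting the premise $\jset \cap \cP(s_i,t_i,h,G) = \emptyset$. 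Therefore whp $|E(G[\cP(s_i,t_i,h,G)])| < m/\jsize$.

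A single union bound over the two failure events of the two applications of \Cref{lem:hitting_set_argument} gives both $|\cP(s_i,t_i,h,G)| < n/\jsize$ and $|E(G[\cP(s_i,t_i,h,G)])| < m/\jsize$ simultaneously, yielding the required $\min(m/\jsize,\, n^2/\jsize^2)$ edge bound whenever $\jdist(s_i,t_i) > h$. I expect the only subtle point to be the bookkeeping: one must apply the hitting-set lemma once, at initialization, to a family of sets indexed by all pairs and \emph{all} future decremental versions, which is precisely where the oblivious-adversary assumption is used, and one must check that the constant $a$ in the two sampling rates is chosen so that $a\ln(kt) + 1$ absorbs the polynomial size of that family in both the node and edge applications.
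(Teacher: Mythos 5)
Your proof is correct and takes essentially the same approach as the paper's proof sketch: apply \Cref{lem:hitting_set_argument} once to the node sample and once to the edge sample, observe via \Cref{pro:path_union_characterization} that any hub inside $\cP(s_i,t_i,h,G)$ witnesses $\jdist(s_i,t_i)\leq h$, and conclude. You spell out the bookkeeping (indexing the family of target sets by all pairs and all decremental versions, and using obliviousness to fix that family at initialization) more explicitly than the paper's sketch does, but the underlying argument is the same.
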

\begin{proof}[Proof Sketch]
By the random sampling of the hubs, if $G [\cP(s_i, t_i, h, G)]$ has more than $n/\jsize$ nodes, then one of these nodes, say $x$, will have been sampled by the algorithm and thus contained in $\jset$ whp (\Cref{lem:hitting_set_argument}). By definition, $x$ lies on some $s_i$-$t_i$ path of length at most~$h$. Thus, $\dist_\jset(s_i, t_i)\leq \dist_G(s_i, x)+\dist_G(x, t_i) \leq h$.
Similarly, if $G [\cP(s_i, t_i, h, G)]$ has more than $m/\jsize$ edges, then one of these edges, say $(x, y)$, will have been sampled by the algorithm whp. This means that $x$ will be contained in $\jset$, and thus $\dist_\jset(s_i, t_i)\leq \dist_G(s_i, x)+\dist_G(x, t_i) \leq h$.
As the number of edges of $G [\cP(s_i, t_i, h, G)]$ can be at most $ |\cP(s_i, t_i, h, G)|^2 $, the stated bound follows.
\end{proof}

\Cref{thm:using hubs} guarantees that when we construct $\cQ(s_i, t_i)$, its induced subgraph is much smaller than $G$ whp, and so it is beneficial to maintain the distance in $G [\cQ(s_i, t_i)]$ instead of $G$.
For each $ 1 \leq i \leq k $, we can maintain $\pudist(s_i, t_i)$ by running an ES-tree rooted at $ s_i $ up to distance $ h $ in $ G [\cQ(s_i, t_i)] $, which takes time $\tilde O(| E [\cQ(s_i, t_i)] | \cdot h) $ (where $E [\cQ(s_i, t_i)] $ denotes the set of edges of $G [\cQ(s_i, t_i)]$).
As $ | E [\cQ(s_i, t_i)] | \leq \min (m/\jsize, n^2/\jsize^2) $, this takes time $ O ( k h \min (m/\jsize, n^2/\jsize^2)) $.
By \Cref{lem:path_union_computation} we can construct each $\cQ(s_i,t _i)$ in $ O(m)$ time for each $ 1 \leq i \leq k $, resulting in a total cost of $ O (k m) $ for computing the path unions.
Finally, the time needed for maintaining the hub distances can be analyzed as follows.

\begin{lemma}[Maintaining all $\dist_\jset(s_i, t_i)$]\label{thm:time for hub distance}
We can maintain whether $\dist_\jset(s_i, t_i)\leq h$, for all $ 1 \leq i \leq k $, in total time $\tilde O(\jsize m h+k\jsize h)$.
\end{lemma}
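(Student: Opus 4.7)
The plan is to, for each hub $x \in B$, maintain two Even-Shiloach trees of depth $h$ (one rooted at $x$ in $G$ and one rooted at $x$ in the reverse of $G$) and, for each pair $i$, maintain a counter $c_i$ equal to the number of hubs $x$ for which $\dist_G(s_i, x) + \dist_G(x, t_i) \leq h$. Whether $\dist_\jset(s_i, t_i) \leq h$ is then equivalent to $c_i \geq 1$, which answers the required query in constant time.

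First I would invoke \Cref{lem:ES-tree} for each $x \in B$ to maintain $\dist_G(x, v)$ and $\dist_G(v, x)$ up to depth $h$ for every node $v$; each of the two trees at $x$ costs $O(mh + \Delta(G))$, and summing over $|B| = \tilde O(\jsize)$ hubs (and absorbing $\Delta(G)$ into $mh$ using \Cref{lem:few_updates} in the weighted case) gives $\tilde O(\jsize mh)$ overall. In parallel I would store, for each pair-hub combination $(i, x)$, the truncated value $D_{i,x} = \min(h+1,\, \dist_G(s_i, x) + \dist_G(x, t_i))$ and the indicator $f_{i,x} = \mathbf{1}[D_{i,x} \leq h]$, from which the counter $c_i = \sum_{x \in B} f_{i,x}$ is maintained incrementally; initialisation of these quantities via the initial ES-tree distances takes $\tilde O(k\jsize)$ time.

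To turn ES-tree events into counter updates I would precompute, for each hub $x$ and each node $v$, the list of pair indices $i$ with $t_i = v$ (for the forward tree at $x$) and with $s_i = v$ (for the reverse tree), and maintain these lists restricted to pairs with $f_{i,x} = 1$. Whenever the forward ES-tree at $x$ raises $\dist_G(x, v)$, I scan the corresponding active list at $(x, v)$, recompute $D_{i,x}$ for every pair $i$ in it, and if the new value exceeds $h$, decrement $c_i$ and splice $i$ out of the active lists at $(x, s_i)$ and $(x, t_i)$. The reverse tree is handled symmetrically for changes of $\dist_G(v, x)$.

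The running-time analysis hinges on the monotonicity of $D_{i,x}$: since edges are only removed and weights only increase, both $\dist_G(s_i, x)$ and $\dist_G(x, t_i)$ are non-decreasing, so $D_{i,x}$ can change at most $O(h)$ times before reaching $h+1$ and becoming inert. Each such change is processed in $O(1)$ time through the active lists, yielding $O(h)$ auxiliary work per pair-hub combination and thus $\tilde O(k\jsize h)$ work outside the ES-trees. Combining with the $\tilde O(\jsize mh)$ ES-tree cost gives the claimed $\tilde O(\jsize m h + k \jsize h)$ bound. The main subtlety is routing ES-tree level-change notifications to the right pair counters without blow-up, since many pairs may share an endpoint $v$; the per-$(x,v)$ active lists ensure each notification touches only still-active pair-hub combinations, so the work is charged exactly once per $D_{i,x}$ change and the bound is tight.
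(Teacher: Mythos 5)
Your proof is correct and uses the same architecture as the paper's proof sketch: a pair of depth-$h$ ES-trees at each of the $\tilde O(\jsize)$ hubs for a cost of $\tilde O(\jsize m h)$, together with per-pair bookkeeping that is charged to changes of $\dist_G(s_i,v)$ or $\dist_G(v,t_i)$. Where the paper's sketch says, somewhat loosely, that each such change triggers a recomputation of $\dist_\jset(s_i,t_i)=\min_{v\in\jset}(\dist_G(s_i,v)+\dist_G(v,t_i))$ ``incurring $\tilde O(\jsize)$ time'' (which, read literally, would give $\tilde O(k\jsize^2 h)$, not the claimed $\tilde O(k\jsize h)$), your counter $c_i$ plus per-$(x,v)$ active lists spells out the $O(1)$-per-event charging that the sketch elides, and it exploits the fact that the lemma only asks whether $\dist_\jset(s_i,t_i)\leq h$, so no minimum ever needs to be recomputed. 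In short: same idea, but your bookkeeping is the careful version that actually yields the stated bound.
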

\begin{proof}[Proof Sketch] For each hub $v\in \jset$, we maintain the values of $\dist_G(s_i, v)$ and $\dist_G(v, t_i)$ up to~$h$ using ES-trees up to distance $h$ rooted at each hub. This takes $\tilde O(\jsize m h)$ total update time. 
Every time $\dist_G(s_i, v)$ or $\dist_G(v, t_i)$ changes, for some hub $v\in \jset$ and some $ 1 \leq i \leq k $, we update the value of $ \dist_G (s_i, t_i) = \min_{v \in B} (\dist_G(s_i, v)+\dist_G(v, t_i))$, incurring $ \tilde O(b)$ time. Since the value of each of $\dist_G(s_i, v)$ and $\dist_G(v, t_i)$ can change at most $h$ times, we need $\tilde O(k\jsize h)$ time in total. 
\end{proof}

It follows that we can maintain, for all $1\leq i\leq k$, whether $\dist_G(s_i, t_i)\leq h$ with a total update time of
\begin{equation}
\tilde O \Big(
\underbrace{\jsize m h + k \jsize h}_{\substack{\text{\scriptsize maintain $\dist_\jset (s_i, t_i)$}\\\text{\scriptsize (\Cref{thm:time for hub distance})}}}
+
\underbrace{k m}_{\text{\scriptsize construct $ \cQ (s_i, t_i) $}}
+
~\underbrace{k h \cdot \min ( m / \jsize, n^2 / \jsize^2 )}_{\text{maintain $ \pudist (s_i, t_i) $}}
 \Big) \, . \label{eq:bounded_hop_multi_pair_running_time}
\end{equation}
Note that, previously, the fastest way of maintaining, for all $ 1 \leq i \leq k $, whether $\dist_G(s_i, t_i)\leq h$ was to maintain an ES-tree separately for each pair, which takes $ O(mhk)$ time.

\subsection{Decremental $s$-$t$ Reachability in Dense Graphs}\label{sec: overview: dense graph}

In the following we are given a source node $ s $ and a sink node $ t $ and want to maintain whether $ s $ can reach $ t $.
In our algorithm we use a set of \emph{centers} $ C \subseteq V $ of size $ \tilde O (c) $ (for some parameter $ c $) obtained by sampling each node independently with probability $ a c \ln{n} / n $ (for some large enough constant $ a $).
Using these centers, we define the {\em center graph}, denoted by $\cg$, as follows. The nodes of $\cg$ are the centers in $C$ and for every pair of centers $u$ and $v$ in $C$, there is a directed edge $(u, v)$ in $\cg$ if and only if $\dist_G (u, v)\leq n/c$.

\begin{lemma}[$\cg$ preserves $s$-$t$ reachability]\label{thm:property of center graph} Whp, $s$ can reach $t$ in $G$ if and only if $s$ can reach $t$ in $\cg$. 
\end{lemma}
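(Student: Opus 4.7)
The plan is to prove the two directions separately, with the forward (``if'') direction being purely combinatorial and the backward (``only if'') direction relying on the hitting set argument of \Cref{lem:hitting_set_argument}. Throughout I will assume, as the definition of $\cg$ only makes sense otherwise, that the source $s$ and sink $t$ are added to the center set $C$ so that they appear as nodes of $\cg$.

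For the forward direction, suppose $s$ can reach $t$ in $\cg$ via a path $(s = c_0, c_1, \ldots, c_r = t)$. By the definition of $\cg$, each edge $(c_{i}, c_{i+1})$ of this path certifies that $\dist_G(c_i, c_{i+1}) \leq n/c$ in the current version of $G$, so there is a path from $c_i$ to $c_{i+1}$ in $G$. Concatenating these paths yields a walk in $G$ from $s$ to $t$, hence $s$ reaches $t$ in $G$.

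For the backward direction, I will build a path in $\cg$ by greedily ``hopping'' along a shortest $s$-$t$ path in $G$ through centers. Concretely, fix any shortest $s$-$t$ path $\pi = \langle s = u_0, u_1, \ldots, u_\ell = t\rangle$ in the current version of $G$. Starting from $v_0 = s$, given that $v_j = u_{i_j}$ has been defined, if $\ell - i_j \leq n/c$ we set $v_{j+1} = t$ and stop; otherwise consider the $n/c$ nodes $u_{i_j+1}, \ldots, u_{i_j + n/c}$ following $v_j$ on $\pi$, and let $v_{j+1}$ be any center among them. In either case $\dist_G(v_j, v_{j+1}) \leq n/c$ (since the relevant subpath of $\pi$ is itself a shortest path), so $(v_j, v_{j+1})$ is an edge in $\cg$. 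Each step advances at least $n/c$ nodes along $\pi$ (except possibly the last), so the process terminates in at most $\lceil c \ell / n \rceil + 1 = O(c)$ steps and produces a path $s = v_0, v_1, \ldots, v_r = t$ in $\cg$.

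The only step that can fail is finding a center among the $n/c$ successors of $v_j$ on $\pi$. To guarantee this whp across all versions of $G$ and all valid choices of $\pi$, I apply \Cref{lem:hitting_set_argument} with $T = V$, $t = n$, and the family of sets $\{S_i\}$ consisting of every set of the form ``$n/c$ consecutive nodes on a canonical shortest path between some pair $(u, v)$ in some version of $G$''. Since $G$ undergoes at most $m \leq n^2$ edge deletions, there are at most $n^2$ versions of $G$; for each version there are at most $n^2$ endpoint pairs and at most $n$ choices of starting position along the canonical shortest path, so $k = n^{O(1)}$. With $q = n/c$, the sampling probability $p = \Theta(c \ln n / n)$ used in the definition of $C$ satisfies the hypothesis of \Cref{lem:hitting_set_argument} and ensures every such $S_i$ meets $C$ whp. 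The main subtlety, and the only thing to get right carefully, is this enumeration over all versions of $G$: since $C$ is sampled once at the beginning, we must fix in advance a polynomial-size family of sets that captures every possible ``local neighborhood on a shortest path'' that could arise during the decremental sequence, which is possible because the sequence of edge deletions is oblivious and determines a polynomial number of graph versions.
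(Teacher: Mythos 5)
Your proof is correct and follows essentially the same route as the paper's proof sketch: for the forward direction, concatenate the $G$-paths witnessing each center-graph edge, and for the backward direction, apply the hitting-set lemma (\Cref{lem:hitting_set_argument}) to extract a chain of centers along an $s$-$t$ path whose consecutive $G$-distances are at most $n/c$. You merely fill in details the paper leaves implicit, namely the explicit greedy-hop construction and the enumeration of shortest-path windows across all graph versions; note also that the parenthetical appeal to subpath optimality is unnecessary, since any subpath of $\pi$ with at most $n/c$ edges already witnesses $\dist_G(v_j,v_{j+1})\leq n/c$, which is why the paper can take $\pi$ to be an arbitrary $s$-$t$ path rather than a shortest one.
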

\begin{proof}[Proof Sketch] Since we can convert any path in $\cg$ to a path in $G$, the ``if'' part is clear. To prove the ``only if'' part, let $\pi$ be an $s$-$t$ path in $G$.
By the random sampling of centers, there is a set of centers $c_1, c_2, \ldots, c_k $ on $ \pi $ such that $c_1=s$, $c_k=t$, and $\dist_G(c_i, c_{i+1})\leq n/c$ for all $i$ whp (\Cref{lem:hitting_set_argument}). The last property implies that the edge $(c_i, c_{i+1})$ is contained in $\cg$ for all $i$.  Thus $s$ can reach $t$ in $\cg$. 
\end{proof}

\Cref{thm:property of center graph} implies that to maintain $s$-$t$ reachability in $\cg$ it is sufficient to maintain the edges of $\cg$ and to maintain $s$-$t$ reachability in $\cg$. To maintain the edges of $\cg$, we simply have to maintain the distances between all pairs of centers up to $n/c$. This can be done using the bounded-hop multi-pair reachability algorithm from before with $k=c^2$ and $h=n/c$, where we make each pair of centers a source-sink pair.
By plugging in these values in \Cref{eq:bounded_hop_multi_pair_running_time}, we obtain a total update time of $ \tilde O (b m n / c + b c n + c^2 m + c n^3 / b^2) $ for this bounded-hop multi-pair reachability.
To maintain $s$-$t$ reachability in $\cg$, note that $\cg$ is a dynamic graph undergoing only deletions.
Thus, we can simply maintain an ES-tree rooted at $ s $ in $\cg$.
As $ \cg $ has $ \tilde O (c) $ nodes and $ \tilde O (c^2) $ edges this takes time $ \tilde O(c^3) $.
It follows that the total update time of this $s$-$t$ reachability algorithm is
\begin{equation}
\tilde O \Big(~
\underbrace{\jsize m n/c+\jsize n c}_{\substack{\text{\scriptsize maintain $\dist_\jset (\cdot, \cdot)$}\\\text{\scriptsize (\Cref{thm:time for hub distance})}}}
~+\underbrace{c^2 m}_{\text{\scriptsize construct $ \cQ (\cdot, \cdot) $}}
+\underbrace{c n^3/\jsize^2}_{\text{maintain $ \pudist (\cdot, \cdot) $}}
+\underbrace{c^3}_{\text{\scriptsize maintain $ \dist_{\cg} (s, t) $}} 
~\Big) \, . \label{eq:time for dense algo}
\end{equation}
By setting\footnote{Note that we have to make sure that $1\leq \jsize\leq n$ and $1\leq c\leq n$. It is easy to check that this is the case using the fact that $m\leq n^2$.}  
$\jsize = n^{8/7}/m^{3/7}$ and $c=(\jsize n)^{1/3}=n^{5/7}/m^{1/7}$, we get a total update time of\footnote{
%
%
\textit{Detailed calculation:} First note that $c\leq n^{5/7} \leq m$.  So, the term $c^3$ is dominated by the term $mc^2$. 
Observe that $\jsize= (n^3/mc)^{1/2}$, thus $n^3c/\jsize^2=mc^2$. So, the fourth term is the same as the third term ($mc^2$). 
Using $c=(\jsize n)^{1/3}$, we have that the first and third terms are the same. 
Now, the third term is $mc^2=m(n^{5/7}/m^{1/7})^2=m^{1-2/7}n^{10/7}=m^{5/7}n^{10/7}$. For the second term, using $\jsize n = c^3$, we have $\jsize n c = c^4 = n^{20/7}/m^{4/7}$.
}
%
%
$ \tilde O \left( m^{5/7} n^{10/7}+n^{20/7}/m^{4/7} \right) $.
This is $ o (m n) $ if $m=\omega(n^{3/2})$.

\subsection{Decremental $s$-$t$ reachability in Sparse Graphs}\label{sec: overview: sparse graph}

Maintaining $s$-$t$ reachability in sparse graphs, especially when $m=\Theta(n)$, needs a slightly different approach. 
Carefully examining the running time of the previous algorithm in \Cref{eq:time for dense algo} reveals that we {\em cannot} maintain $\dist_{G [\cQ (u, v)]} (u, v)$ for {\em all} pairs of centers $ u $ and $ v $ at all times: this would cost $\tilde O (\min (c m n / \jsize, c n^3 / \jsize^2) ) $---the third term of \Cref{eq:bounded_hop_multi_pair_running_time}---but we always need $c=\omega(\jsize)$ to keep the first term of \Cref{eq:bounded_hop_multi_pair_running_time} to $\jsize m n/c=o(mn) $ and, since $ b \leq n $, $ c m n / \jsize = \omega (m n) $ and $ c n^3 / \jsize^2 = \omega (n^2) = \omega (m n) $.
The new strategy is to maintain $\dist_{G [\cQ (u, v)]} (u, v)$ only for {\em some} pairs of centers at each time step.

\paragraph*{Algorithm.}
As before, we sample $ \tilde O (\jsize)$ hubs $ \tilde O (c)$ centers, and maintain $\jdist(\cdot, \cdot)$ between all centers which takes $\tilde O(m\jsize n/c+\jsize n c)$ time (\Cref{thm:time for hub distance}). 
The algorithm runs in phases. At the beginning of each phase $i$, the algorithm does the following. Compute a BFS-tree $T$ on the outgoing edges of every node rooted at the source~$s$ in~$G$. If $T$ does not contain $t$, then we know that $s$ cannot reach $t$ anymore and there is nothing to do. Otherwise, let $L=(c_1, c_2, \ldots, c_k)$ be the list of centers on the shortest path from $s$ to $t$ in $T$ ordered increasingly by their distances to $s$. For simplicity, we let $c_0=s$ and $c_{k+1}=t$. Note that we can assume that 
\begin{equation}
\forall i,\ \dist_G(c_{i}, c_{i+1})\leq n/c \text{~~and~} \dist_G(c_{i},c_{i+2})>n/c \, .\label{eq:sparse algo - disjointness guarantee}
\end{equation}
The first inequality holds because the centers are obtained from random sampling (\Cref{lem:hitting_set_argument}) and the second one holds because, otherwise, we can remove $c_{i+1}$ from the list $L$ without breaking the first inequality. 
Observe that $L$ would induce an $s$-$t$ path in the ``center graph''. Our intention in this phase is to maintain whether $s$ can still reach $t$ using this path; in other words, whether $\dist_G c_{i}, c_{i+1})\leq n/c$ for all~$i$. (We start a new phase if this is not the case.)
We do this using the framework of \Cref{sec:bounded-hop multi-pair}: For each pair $(c_{i}, c_{i+1})$, we know that $\dist_G (c_{i}, c_{i+1})\leq n/c$ when $\dist_\jset(c_{i}, c_{i+1})\leq n/c$. Once $\dist_\jset(c_{i}, c_{i+1})> n/c$, we construct $\cQ(c_{i}, c_{i+1}) = \cP(c_{i}, c_{i+1}, n/c, G)$.
After each deletion of an edge $(u, v)$ in this phase, we find every index $ i $ such that $ u $ and $ v $ are contained in $\cQ(c_{i}, c_{i+1})$ and, using the static BFS algorithm, check whether $\dist_{G [\cQ(c_{i}, c_{i+1})]} (c_i, c_{i+1})\leq n/c$.
We start a new phase when $\dist_G (c_i, c_{i+1})$ for some pair of centers $(c_i, c_{i+1})$ changes from $\dist_G (c_i, c_{i+1})\leq n/c$ to $\dist_G (c_i, c_{i+1})>n/c$.

\paragraph*{Running Time Analysis.}
First, let us bound the number of phases.
As for each pair $ (c_i, c_{i+1}) $ the distance $\dist_G (c_i, c_{i+1})$ can only become larger than $ n/c $ at most once, there are at most $ \tilde O (c^2) $ phases. 
At the beginning of each phase, we have to construct a BFS-tree in $G$, taking $O(m)$ time and contributing $ \tilde O(c^2 m)$ to the total time over all phases. During the phase, we have to construct $\cQ(c_i, c_{i+1})$ for at most $c$ pairs of center, taking $\tilde O(c m)$ time (by \Cref{lem:path_union_computation}) and contributing $\tilde O(c^3 m)$ total time.
Moreover, after deleting an edge $(u, v)$, we have to update $\dist_{G [\cQ (c_i, c_{i+1})]} (c_i, c_{i+1})$ for every $\cQ(c_i, c_{i+1})$ containing both $ u $ and $ v $, by running a BFS algorithm. This takes $O(| E [\cQ(c_i, c_{i+1}] |)$ time for each $\cQ(c_i, c_{i+1})$, which is $\tilde O(m/\jsize)$ whp, by \Cref{thm:using hubs}.
The following lemma implies that every node will be contained in only a constant number of such sets $\cQ(c_i, c_{i+1})$; 
so, we need $\tilde O(m^2/\jsize)$ time to update $\dist_{G [\cQ (c_i, c_{i+1})]}(c_i, c_{i+1})$ over all $m$ deletions.  

\begin{lemma}\label{thm: disjoint path unions}
For any $i$ and $j\geq i+3$, $ \cQ(c_i, c_{i+1}) $ and $ \cQ(c_{j}, c_{j+1}) $ are disjoint. 
\end{lemma}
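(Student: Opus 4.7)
My plan is to argue by contradiction: suppose some $v$ lies in both $\cQ(c_i, c_{i+1})$ and $\cQ(c_j, c_{j+1})$ for $j \geq i+3$. By \Cref{pro:path_union_property_under_deletions}, path unions only shrink as edges are deleted, so it suffices to work with $G$ as it appears at the start of the current phase, when both the BFS tree $T$ and the guarantees of~\eqref{eq:sparse algo - disjointness guarantee} are valid; proving disjointness there forces disjointness of the actual (later) path unions as well.

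The core idea is to derive contradictory bounds on $\dist_G(s, c_{j+1}) - \dist_G(s, c_i)$. For the \emph{upper bound}, \Cref{pro:path_union_characterization} applied to $v \in \cQ(c_i, c_{i+1})$ gives $\dist_G(c_i, v) \leq n/c$, and applied to $v \in \cQ(c_j, c_{j+1})$ gives $\dist_G(v, c_{j+1}) \leq n/c$. Two applications of the triangle inequality along $s \to c_i \to v \to c_{j+1}$ then yield $\dist_G(s, c_{j+1}) \leq \dist_G(s, c_i) + 2n/c$.

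For the \emph{lower bound}, I would invoke the BFS out-tree $T$: since $c_i, c_{i+2}, c_{i+4}, c_{j+1}$ all lie on the shortest $s$-$t$ path in $T$ in this order (using $j+1 \geq i+4$ and that $L$ is sorted by distance from $s$), the tree path from $s$ to $c_{j+1}$ passes through $c_i, c_{i+2}, c_{i+4}$. Since $T$ is a shortest-path out-tree, $\dist_G(s, c_{j+1}) = \dist_T(s, c_{j+1}) = \dist_G(s, c_i) + \dist_T(c_i, c_{j+1})$, and the latter tree distance is at least $\dist_T(c_i, c_{i+2}) + \dist_T(c_{i+2}, c_{i+4})$. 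Since $T \subseteq G$, each $\dist_T$ dominates the corresponding $\dist_G$, and by the second inequality of~\eqref{eq:sparse algo - disjointness guarantee} both $\dist_G(c_i, c_{i+2})$ and $\dist_G(c_{i+2}, c_{i+4})$ strictly exceed $n/c$. Hence $\dist_G(s, c_{j+1}) > \dist_G(s, c_i) + 2n/c$, contradicting the upper bound.

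The one delicate point I foresee is cleanly converting the assumption~\eqref{eq:sparse algo - disjointness guarantee}, which is a statement about \emph{graph} distances, into a lower bound on differences of graph distances \emph{from} $s$. The enabling observation is that on a shortest-path out-tree from $s$, tree distances between ancestor/descendant pairs on a root-to-leaf path telescope exactly into differences of graph distances from $s$, while tree distances dominate graph distances because $T$ is a subgraph of $G$. Once this interplay is stated, no case analysis or calculation beyond the chained triangle inequality above is needed.
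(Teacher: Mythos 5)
Your proof is correct and takes essentially the same approach as the paper's: a strict lower bound of $2n/c$ on the $c_i$-to-$c_{j+1}$ distance, obtained from \eqref{eq:sparse algo - disjointness guarantee} and the fact that the centers lie in order on the shortest $s$-$t$ path at the start of the phase, played against the $\le 2n/c$ upper bound forced by a shared node $v$. Routing the contradiction through $\dist(s,\cdot)$ and the BFS tree~$T$, and reducing the whole claim to the start-of-phase graph via path-union monotonicity (rather than transferring the lower bound forward via distance monotonicity, as the paper does), are only cosmetic rephrasings of the same argument.
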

\begin{proof}
First, we claim that $\dist_G(c_{i}, c_{j+1}) > 2n/c$. To see this, let $G'$ be the version of the graph at the beginning of the current phase. We know that 
$\dist_{G'} (c_{i}, c_{i+4}) = \dist_{G'} (c_{i}, c_{i+2}) +\dist_{G'} (c_{i+2}, c_{i+4}) > 2n/c,$
where the equality holds because at the beginning of the current phase (i.e., in $G'$), every $c_i$ lies on the shortest $s$-$t$ path, and the inequality holds because of \Cref{eq:sparse algo - disjointness guarantee}. 
Since $j+1\geq i+4$ and both $c_{i+4}$ and $c_{j+1}$ lie on the shortest $s$-$t$ path in $G'$,  $\dist_{G'} (c_{i}, c_{j+1})\geq  \dist_{G'} (c_{i}, c_{i+4}) > 2n/c$. The claim follows since the distance between two nodes never decreases after edge deletions.

Now, suppose for the sake of contradiction that there is some node $ v $ that is contained in both $ \cQ(c_i, c_{i+1}) = \cP(c_i, c_{i+1}, n/c, G) $ and $ \cQ(c_{j}, c_{j+1}) = \cP(c_{j}, c_{j+1}, n/c, G) $. This means that $ v $ lies in some $c_i$-$c_{i+1}$ path and some $c_{j}$-$c_{j+1}$ path, each of length at most $n/c$. This means that $\dist_G(c_i, v)\leq n/c$ and $\dist_G(v, c_{j+1})\leq n/c$ and therefore we get $\dist_G(c_i, c_{j+1}) \leq \dist_G(c_i, v)+\dist_G(v, c_{j+1})\leq 2n/c$. This contradicts the lower bound above. 
\end{proof}

\noindent Thus, the total update time is
\begin{equation*}
\tilde O \Big(~
\underbrace{m\jsize n/c + \jsize n c}_{\substack{\text{\scriptsize maintain $\dist_\jset (\cdot, \cdot)$}\\\text{\scriptsize (\Cref{thm:time for hub distance})}}}
~+\underbrace{mc^2}_{\substack{\text{\scriptsize construct BFS}\\\text{\scriptsize tree in every phase}}}
+\underbrace{mc^3}_{\substack{\text{\scriptsize construct $\cQ(c_i, c_{i+1})$}\\\text{\scriptsize in every phase}}}
+\underbrace{m^2/\jsize}_{\substack{\text{\scriptsize update $\dist_{G [\cQ (c_i, c_{i+1})]} (c_i, c_{i+1})$}}}
~\Big) \, .
\end{equation*}
By setting $c=(mn)^{1/7}$ and  $\jsize = c^4/n = m^{4/7}/n^{3/7}$, we get a running time of $ \tilde O \left( m^{10/7}n^{3/7} \right) $.\footnote{
\textit{Detailed calculation:} First note that the third term ($mc^2$) is dominated by the fourth term ($mc^3$). Using $\jsize=c^4n$, the first term is the same as the fourth term ($mc^3$). Now, the fourth term is $mc^3=m(mn)^{3/7}=m^{10/7}n^{3/7}$. 
For the second term, using $\jsize n = c^4$, we have $\jsize n c = c^5 = (mn)^{5/7}$ which is at most $m^{10/7}n^{3/7}$ (using $n\leq m$). 
For the last term, $m^2/\jsize =  m^2/(m^{4/7}/n^{3/7}) = m^{(14-4)/7}n^{3/7}$.
} 
This is $ o (mn)$ if $m=o(n^{4/3})$. 

\section{General Reductions}

\subsection{From \sssp to \stsp}\label{sec:sssp}

In the following we show a reduction from decremental approximate \sssp to decremental approximate \stsp.
The naive way of doing this would be to use $ n $ instances of the approximate \stsp path algorithm, one for every node.
We can use much fewer instances by randomly sampling the nodes for which we maintain approximate\stsp and by using Bernstein's shortcut edges technique~\cite{Bernstein13}.

\begin{theorem}\label{thm: s-t to single source}
Assume we already have the following decremental algorithm that, given a weighted directed graph $ G $ undergoing edge deletions and edge weight increases, a source node $ s $, and a set of sinks $ T $ of size $ k $, maintains, for every sink $ t \in T $, a distance estimate $ \delta (s, t) $ such that $ \dist_G (s, t) \leq \delta (s, t) \leq \alpha \dist_G (s, t)) $ for some $ \alpha \geq 1 $ with constant query time and a total update time of $ T (k, m, n) $.
Then, for any $ k \leq n $, there exists a decremental $ (1 + \epsilon) \alpha $-approximate \sssp algorithm with constant query time and total update time $ O (T( O(k \log{(n \Delta)}), m, n) + m n \Delta / k $, where $ \Delta = \lceil \log_{1 + \epsilon} (nW) \rceil $, that is correct with high probability against an oblivious adversary.
\end{theorem}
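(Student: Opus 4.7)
I would reduce decremental \sssp to the given multi-sink \stsp primitive using random sampling together with Bernstein-style \emph{shortcut edges}. Set the hop parameter $h = n/k$. Sample a set $U \subseteq V$ by including each vertex independently with probability $p = \Theta(\log(n\Delta)/h)$, and apply \Cref{lem:hitting_set_argument} with ground set $V$ and, as the witness sets, the last $h$ vertices along the shortest $s$-to-$v$ path in each version of $G$ (using \Cref{lem:few_updates} to bound the number of versions by $O(m\Delta)$, giving $O(nm\Delta)$ witness sets of size $h$). This yields, whp, both $|U| = O(np) = O(k \log(n\Delta))$ and that every witness set is hit by $U$. I would then invoke the given primitive on source $s$ and sink set $U$ to maintain $\alpha$-approximate estimates $\delta(s, u)$ for each $u \in U$, at total cost $T(O(k \log(n\Delta)), m, n)$.

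\textbf{Auxiliary graph and approximate ES-tree.} Define an auxiliary graph $H$ as $G$ augmented, for each $u \in U$, with a ``shortcut'' edge $(s, u)$ whose weight is kept equal to the current value of $\delta(s, u)$. Because the primitive's estimates can only increase in the decremental setting, each shortcut edge undergoes only weight increases, and by the rounding idea of \Cref{lem:few_updates} applied to the estimates, each one accumulates $O(\Delta)$ weight-change events. On $H$, maintain a $(1+\epsilon)$-approximate ES-tree rooted at $s$ limited to $h$ hops, using the scaling-based variant sketched in the preliminaries; its total update time is $O(mh\Delta) = O(mn\Delta/k)$, absorbing the $O(|U|\Delta)$ shortcut updates. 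Queries are answered in $O(1)$ time by returning the ES-tree label at $v$.

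\textbf{Correctness.} The lower bound $\delta(s,v) \geq \dist_G(s,v)$ is immediate, because any shortcut $(s,u)$ of weight $\delta(s,u)$ expands to a genuine $G$-walk of at least that length. For the upper bound, fix a version of $G$ and a vertex $v$. If the shortest $s$-$v$ path in $G$ has at most $h$ hops, it lies in $H$ with the same weight and is approximated within a $(1+\epsilon)$ factor by the ES-tree. Otherwise, whp some $u \in U$ lies among the last $h$ vertices of this shortest path; the concatenation of the shortcut $(s,u)$ (of weight at most $\alpha\dist_G(s,u)$) with the $G$-subpath from $u$ to $v$ uses at most $h$ hops in $H$ and has total weight at most $\alpha \dist_G(s,u) + \dist_G(u,v) \leq \alpha \dist_G(s,v)$. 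The $(1+\epsilon)$-approximate $h$-hop ES-tree therefore returns an estimate at most $(1+\epsilon)\alpha \dist_G(s,v)$, as required.

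\textbf{Main obstacle.} The delicate point is the dynamic interaction between the two data structures: each time the primitive raises some $\delta(s,u)$, the corresponding shortcut edge in $H$ must be updated, which triggers further ES-tree work. The crux is to argue — by rounding the shortcut weights to powers of $(1+\epsilon)$ up to $nW$ as in \Cref{lem:few_updates} — that each shortcut contributes only $O(\Delta)$ weight-increase events to $H$, so the extra ES-tree work stays within the $O(mn\Delta/k)$ budget. A secondary subtlety is that the hitting-set argument uses the sampling-before-updates viewpoint and so relies on the oblivious-adversary assumption, which matches the ``whp against an oblivious adversary'' conclusion of the theorem. Combining the two contributions gives the claimed bound $O(T(O(k\log(n\Delta)), m, n) + mn\Delta/k)$.
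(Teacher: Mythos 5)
Your proposal is correct and essentially identical to the paper's: the same sampling of $O(k\log(n\Delta))$ sinks, the same shortcut-edge construction on $G$ with weights rounded to powers of $(1+\epsilon)$, and the same $h$-hop scaling ES-tree (Bernstein's algorithm) on the augmented graph with $h = n/k$. The one small thing to tidy up is the final approximation factor: the roundings of the shortcut weights (and of the original edge weights via \Cref{lem:few_updates}) each cost an extra $(1+\epsilon)$, so the raw estimate is $(1+\epsilon)^{O(1)}\alpha$ rather than $(1+\epsilon)\alpha$, and one must conclude by rescaling $\epsilon$ by a constant (the paper uses $\epsilon' = \epsilon/3$) to land on the stated bound.
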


\begin{proof}
At the initialization we randomly sample each node of $ G $ with probability $ a k \ln{(n \Delta)} / n $ (for a large enough constant $ a $).
We call the sampled nodes sinks.
We use the decremental algorithm for a set of sinks from the assumption to maintain a distance estimate $ \distest (s, t) $ for every sink $ t $ such that $ \dist_G (s, t) \leq \distest (s, t) \leq \alpha \dist_G (s, t) $.
Additionally, we maintain a graph $ G' $ which consists of the same nodes as $ G $ and the following edges:
(1) For every edge $ (u, v) $ of $ G $ we add an edge $ (u, v) $ of weight $ w_{G'} (u, v) = (1 + \epsilon)^{\lceil \log_{1+\epsilon} w_G (u ,v) \rceil} $.
(2) For every sink $ t $ we add an edge $ (s, t) $ of weight $ w_{G'} (s, t) = (1 + \epsilon)^{\lceil \log_{1+\epsilon} \distest (s, t) \rceil} $ (these edges are called \emph{shortcut edges}).
We maintain $ G' $ by updating the weights of these edges every time there is an edge weight increase or deletions in $ G $ or the distance estimate $ \distest (s, t) $ of some sink $ t $ changes its value.
On $ G' $ we use Bernstein's decremental $ (1 + \epsilon) $-approximate \sssp algorithm~\cite{Bernstein13} with source $ s $ and hop count $ h = n / k $.
This algorithm maintains a distance estimate $ \distest' (s, v) $ for every node $ v $ such that $ \dist_{G'} (s, v) \leq \distest' (s, v) \leq (1 + \epsilon) \dist_{G'}^h (s, v) $.

First, observe that $ \dist_G (s, v) \leq \dist_{G'} (s, v) \leq (1 + \epsilon) \dist_G (s, v) $ as the new shortcut edges never under-estimate the true distances in $ G $.
We now claim that $ \dist_{G'}^h (s, v) \leq (1 + \epsilon) \alpha \dist_G (s, v) $.
If $ \dist_G (s, v) = \infty $, then the claim is trivially true.
Otherwise let $ \pi $ be the shortest path from $ s $ to $ v $ in $ G $.
All edges of this path are also contained in $ G' $.
Thus, if $ \pi $ has at most $ h $ edges, then $ \dist_{G'}^h (s, v) = \dist_G (s, v) $.
If $ \pi $ has more than $ h $ edges, then we know that the set of nodes consisting of the last $ h $ nodes of $ \pi $ contains a sink $ t $ whp by \Cref{lem:hitting_set_argument} as the sinks are obtained by sampling from the nodes with probability $ a k \ln{(n \Delta)} / n = a \ln{(n \Delta)} / h $.
Thus, the graph $ G' $ contains a shortcut edge $ (s, t) $ of weight $ w_{G'} (s, t) \leq (1 + \epsilon) \distest (s, t) \leq (1 + \epsilon) \alpha \dist_G (s, t) $.
Now let $ \pi' $ be the path from $ s $ to $ v $ that starts with this edge $ (s, t) $ and then follows the path $ \pi $ from $ t $ to $ v $.
Clearly, the path $ \pi' $ has at most $ h $ edges and is contained in $ G' $.
As the weight of $ \pi' $ is at most $ w_{G'} (s, t) + (1 + \epsilon) \dist_G (t, v) $ we get
\begin{align*}
\dist_{G'}^h (s, v) \leq w_{G'} (s, v) + (1 + \epsilon) \dist_G (t, v) \leq (1 + \epsilon) \alpha \dist_G (s, t) + \dist_G (t, v) \leq (1 + \epsilon) \alpha \dist_G (s, v) \, .
\end{align*}
Putting everything together, we get that the distance estimate $ \distest' (s, v) $ fulfills
\begin{equation*}
\dist_G (s, v) \leq \dist_{G'} (s, v) \leq \distest' (s, v) \leq (1 + \epsilon) \dist_{G'}^h (s, v) \leq (1 + \epsilon)^2 \alpha \dist_G (s, v) \leq (1 + 3 \epsilon) \alpha \dist_{G'}^h (s, v) \, .
\end{equation*}
By running the whole algorithm with $ \epsilon' = \epsilon / 3 $ we obtain a $ (1 + \epsilon) \alpha $-approximation instead of a $ (1 + 3 \epsilon) \alpha $-approximation.

Finally, we argue about the running time.
Our running time has two parts.
(1) Bernstein's decremental $ (1 + \epsilon) $-approximate \sssp algorithm~\cite{Bernstein13} has constant query time and a total update time of $ O (m h \Delta) $ (here we also use the fact that $ w_{G'} (u, t) $ increases at most $ \Delta $ times for every sink $ t $).
Thus, our decremental algorithm also has constant query time and by our choice of $ h = n/k $, our total update time contains the term $ m n \Delta / k) $.
(2) Furthermore, we have $ O (k \log{(n \Delta)}) $ sinks whp by \Cref{lem:hitting_set_argument} (and can make the algorithm fail with small probability if not).
Thus, the total update time of the decremental algorithm for a set of sinks from the assumption is $ T (O (k \log{(n \Delta)}), m, n) $.
\end{proof}

\subsection{From \scc to \ssr}\label{sec:strongly connected component}

In the following we reduce decremental strongly connected components to decremental single-source reachability.
Our reduction is almost identical to the one of Roditty and Zwick~\cite{RodittyZ08}, but in order to work in our setting we have to generalize their running time analysis at the cost of losing a factor of $ \log n $.
They show that an $ O(m n) $ algorithm for single-source reachability implies an $ O(m n) $ algorithm for strongly connected components.
We show that in fact $ o(m n) $ time for single-source reachability implies $ o(m n) $ time for strongly connected components.
In the following we will often just write ``component'' instead of ``strongly connected component''.

In contrast to the rest of this paper, we will here impose the following technical condition on the decremental single-source reachability algorithm:
when we update the algorithm after the deletion of an edge, the update procedure will return all nodes that were reachable before the deletion, but are not reachable anymore after this deletion.
Note that all the reachability algorithms we present in this paper fulfill this condition.

\danupon{This sounds a bit sloppy. Maybe we can just say a bit more like this: This is because our single-source reachability algorithms are basically run an ES-tree on the original graph with some edge added, as in \Cref{thm: s-t to single source}. And ES-tree fulfills this requirement.}

\paragraph*{Algorithm.}
The algorithm works as follows.
For every component we, uniformly at random, choose among its nodes one \emph{representative}.
In an array, we store for every node a pointer to the representative of its component.
Queries that ask for the component of a node $ v $ are answered in constant time by returning (the ID of) the representative of $ v $'s component.
Using the decremental \ssr algorithm, we maintain, for every representative~$ w $ of a component $ C $, the sets $ I (w) $ and $ O (w) $ containing all nodes that reach $ w $ and that can be reached by $ w $, respectively.
Note that, for every node $ v $, we have $ v \in C $ if and only if $ v \in I (w) $ and $ v \in O (w) $.
After the deletion of an edge $ (u, v) $ such that $ u $ and $ v $ are contained in the same component $ C $ we check whether $ C $ decomposes.
This is the case only if, after the deletion, $ u \notin I (w) $ or $ v \notin O (w) $ (which can be checked with the \ssr algorithm of $ w $).

We now explain the behavior of the algorithm when a component $ C $ decomposes into the new components $ C_1, \ldots, C_k $.
The algorithm chooses a new random representative $ w_i $ for every component $ C_i $ and starts maintaining the sets $ I (w_j) $ and $ O (w_j) $ using two new decremental \ssr algorithms.
There is one notable exception: If the representative $ w $ of $ C $ is still contained in one of the components $ C_j $, then for this component we do \emph{not} choose a new representative.
Instead, $ C_j $ reuses $ w $ and its \ssr algorithms without any re-initialization.
The key to the efficiency of the algorithm is that a large component $ C_i $ has a high probability of inheriting the representative from $ C $.

Before choosing the new representatives we actually have to determine the new components $ C_1, \ldots, C_k $.
We slightly deviate from the original algorithm of Roditty and Zwick to make this step more efficient.
If $ w \in C_j $, then it is not necessary to explicitly compute $ C_j $ as all nodes in $ C_j $ keep their representative $ w $.
We only have to explicitly compute $ C_1, \ldots, C_{j-1}, C_{j+1}, \ldots C_k $.
This can be done as follows:
Let $ A $ denote the set of nodes that were contained in $ I (w) $ before the deletion of $ (u, v) $ and are not contained in $ I (w) $ anymore after this deletion.
Similarly, let $ B $ denote the set of nodes that were contained in $ O (w) $ before the deletion and are not contained in $ O (w) $ anymore afterwards.
The nodes in $ A \cup B $ are exactly those nodes of $ C $ that are not contained in~$ C_j $.
Let $ G' $ denote the subgraph of $ G $ induced by $ A \cup B $.
Then the components of $ G' $ are exactly the desired components $ C_1, \ldots, C_{j-1}, C_{j+1}, \ldots C_k $.
Note that the sets $ A $ and $ B $ are returned by the update-procedure of the \ssr algorithms of~$ w $, which allows us to compute $ A \cup B $.
The graph $ G' $ can be constructed by iterating over all outgoing edges of $ A \cup B $ and the components of $ G' $ can be found using a static \scc algorithm.

\paragraph*{Analysis.}
The correctness of the algorithm explained above is immediate.
For the running time we will argue that, up to a $\log{n}$-factor, it is the same as the running time of the \ssr algorithm.
When the running time of \ssr is of the form $ \tilde O (m^\alpha n^\beta) $, our argument works when $ \alpha \geq 1 $ or $ \beta \geq 1 $.
To understand the basic idea (for $ \beta \geq 1 $), consider the case that the graph decomposes into only two components $ C_1 $ and $ C_2 $ (with $ n_1, n_2 \leq n $ and $ m_1, m_2 \leq m $ being the corresponding number of nodes and edges).
We know that one of the two components still contains the representative $ w $.
For this component we do not have to spawn a new decremental \ssr algorithm.
This is an advantage for large components as they have a high probability of containing the representative.
The probability of $ w $ being contained in $ C_1 $ is $ n_1/n $ and it is $ n_2/n $ for being contained in $ C_2 $.
Thus, the expected cost of the decomposition is $ O (m_2^\alpha n_2^\beta n_1 / n + m_1^\alpha n_1^\beta  n_2 / n) $.
We charge this cost to the smaller component, say $ C_1 $.
As $ C_1 $ has $ n_1 $ nodes, the average cost we charge to every node in~$ C_1 $ is $ O(m_2^\alpha n_2^\beta / n + m_1^\alpha n_1^{\beta-1} n_2 / n) $.
This amounts to an average cost of $ O(m^\alpha n^{\beta-1}) $ per node.
As we will charge each node only when the size of its component has halved, the total update time is $ O(m^\alpha n^\beta \log{n})$.

\danupon{Be careful with edges outside both components.}

\begin{theorem}[From \ssr to \scc]\label{thm:strongly connected component} 
If there is a decremental \ssr algorithm with constant query time and a total update time of $ O (n t_1 (m, n) + m t_2 (m, n)) $ such that $ t_1 (m, n) \geq 1 $, $ t_2 (m, n) \geq 1 $, and $ t_1 (m, n) $ and $ t_2 (m, n) $ are non-decreasing\footnote{The technical assumption that $ t_ 1(m, n) $ and $ t_2 (m, n) $ are non-decreasing in $ m $ and $ n $ is natural as usually the running time of an algorithm does not improve with increasing problem size.} in $ m $ and $ n $, then there exists a decremental \scc algorithm with constant query time and an expected total update time of $ O ((n t_1 (m, n) + m t_2 (m, n)) \log{n}) $.
\end{theorem}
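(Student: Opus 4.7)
The plan is to verify correctness (essentially immediate) and then bound the expected running time via a charging argument based on a logarithmic potential.

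Correctness follows from two observations. First, two nodes lie in the same current SCC precisely when they share a representative, so each query is answered correctly in $O(1)$ by a pointer lookup. Second, the deletion of an edge $(u,v)$ inside a component $C$ with representative $w$ actually splits $C$ if and only if $u$ loses its ability to reach $w$ or $v$ loses its ability to be reached from $w$; both conditions are detected in $O(1)$ using the sets $I(w)$ and $O(w)$ maintained by the two \ssr instances rooted at $w$. Moreover, the pieces of $C$ other than the one still containing $w$ are exactly the SCCs of $G[A \cup B]$, where $A$ and $B$ are the nodes just expelled from $I(w)$ and $O(w)$, since a node remains in $w$'s SCC iff it still both reaches $w$ and is reached from $w$.

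For the running time I would reduce, via monotonicity of $t_1$ and $t_2$, the total cost of all \ssr instances ever spawned to $O(\sum_v N_v \cdot (t_1(m,n) + \deg_G(v)\, t_2(m,n)))$, where $N_v$ counts the number of fresh \ssr instantiations in which $v$ participates. Indeed, a fresh instantiation on a piece $C_j$ with $n_j$ nodes and $m_j \leq \sum_{v \in C_j} \deg_G(v)$ edges costs at most $n_j\, t_1(m,n) + m_j\, t_2(m,n) \leq \sum_{v \in C_j}(t_1(m,n) + \deg_G(v)\, t_2(m,n))$. The auxiliary costs --- constant-time queries, $O(1)$ per deletion to check for decomposition, and the static SCC computation on $G[A \cup B]$ --- are dominated by the fresh \ssr instantiations, since the latter already touch all of $A \cup B$ and its incident edges. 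Hence it suffices to prove $\mathbb{E}[N_v] = O(\log n)$ for every node $v$.

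The key step is a potential argument with $\phi(v) = \log_2 n_v$, where $n_v$ is the size of $v$'s current SCC, so $\phi(v)$ starts at $\log_2 n$ and remains nonnegative. When $v$'s component $C$ of size $n_C$ decomposes and $v$ lands in a piece $C_j$ of size $p\, n_C$, the potential drops by $\log_2(1/p)$, while a fresh \ssr for $v$ is triggered exactly when the current representative $w$ of $C$ fails to end up in $C_j$. The principal obstacle is the invariant that the representative of each current SCC is uniformly distributed among its members throughout the execution; I would prove this by induction on decomposition events, using that for the piece inheriting $w$ the conditional distribution of $w$ given $w \in C_j$ is uniform in $C_j$ by Bayes, while for every other piece a fresh uniform representative is drawn independently. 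Given this invariant, the event that $v$ is freshly instantiated at this decomposition has probability exactly $1-p$, and the elementary inequality $1-p \leq \ln(1/p) \leq \log_2(1/p)$ bounds this expected contribution to $N_v$ by the drop in $\phi(v)$. Telescoping over all decompositions involving $v$ yields $\mathbb{E}[N_v] \leq \log_2 n$, and hence a total expected update time of $O((n\, t_1(m,n) + m\, t_2(m,n)) \log n)$ as claimed.
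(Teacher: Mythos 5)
Your proof is correct, and it takes a genuinely different (and arguably cleaner) route than the paper's. The paper bounds the two parts of the expected decomposition cost separately: for the $n\,t_1$ part it charges $O(t_1)$ to each node of the \emph{smaller} piece at every split, observing that a node can be so charged only when its component at least halves (hence $O(\log n)$ times); for the $m\,t_2$ part it fractionally charges $(n_0 - n_i)/n_0$ to every initial out-edge of a node in piece $C_i$ and bounds the resulting telescoping sum by a harmonic series. Your proposal unifies both via a single potential $\phi(v) = \log_2 n_v$: you charge the cost of each fresh \ssr instance on a piece $C_j$ to the members of $C_j$ themselves, observe that the probability a member needs a fresh instance is exactly $1-p$ with $p$ the relative size of $C_j$, and use $1-p \le \ln(1/p) \le \log_2(1/p)$ so that the expected per-node contribution is dominated by the potential drop, telescoping to $\log_2 n$. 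This avoids the paper's trick of charging one piece's cost to a \emph{different}, smaller piece, and it handles the $t_1$ and $t_2$ terms identically. You also correctly isolate and state the invariant that every current SCC's representative remains uniformly distributed over its members (which, together with the oblivious-adversary assumption making the decomposition tree deterministic, justifies the probability $1-p$); the paper uses this silently when it asserts $\Pr(w \in C_i) = n_i/n_0$. Both arguments give the same $O(\log n)$ factor; yours buys a simpler, unified bookkeeping at the cost of a slightly more abstract potential-function framing.
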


\begin{proof}
Let us first analyze the costs related to the decomposition of a component.
Assume that the component $ C_0 $ decomposes into $ C_1, \ldots, C_k $.
Let $ n_i $ and $ m_i $ denote the number of nodes and edges in component $ i $, respectively.
Let $ m_i' $ denote the sum of the out-degrees of the nodes in $ C_i $ in the initial graph (i.e., before the first deletion).
Note that $ m_i' $ is an upper bound on $ m_i $.
Furthermore we have $ \sum_{i=1}^k n_i = n_0 $, $ \sum_{i=1}^k m_i \leq m_0 $, and $ \sum_{i=1}^k m_i' = m_0' $.

Assume that the representative $ w $ of $ C_0 $ is contained in $ C_i $ after the decomposition.
First of all, for every $ j \neq i $ we have to pay a cost of $ O (n_j t_1 (m_j, n_j) + m_j t_2 (m_j, n_j)) $ for initializing and updating the decremental \ssr algorithm of the new representative of~$ C_j $.
Second, we have to pay for computing the new components.
This consists of three steps: (a) computing $ A \cup B $, (b) computing $ G' $, and (c) computing the components of $ G' $.
Remember that $ A \cup B $ is the union of $ A $, the set of nodes that cannot reach $ w $ anymore after deleting $ (u, v) $, and $ B $, the set of nodes that $ w $ cannot reach anymore after deleting $ (u, v) $.
After deleting $ (u, v) $ the incoming \ssr algorithm of $ w $ outputs $ A $ and the outgoing \ssr algorithm of $ w $ outputs $ B $.
Thus, the cost of computing $ A \cup B $ can be charged to the reachability algorithms of $ w $ (which have to output $ A $ and $ B $ anyway).
The graph $ G' $ is the subgraph of $ G $ induced by the nodes in $ A \cup B $.
We construct $ G' $ by checking, for every node in $ A \cup B $, which of its outgoing edges stay in $ A \cup B $.
This takes time $ O (\sum_{j \neq i} m_j') $.
Using Tarjan's linear time algorithm~\cite{Tarjan72}, we can compute the strongly connected components of $ G' $ in the same running time.
As $ m_j' \geq m_j $ the total cost of $ C_j $ is $ O (n_j t_1 (m_j, n_j) + m_j' t_2 (m_j, n_j)) = O (n_j t_1 (m, n) + m_j' t_2 (m, n))$

By the random choice of the representatives, the probability that $ w $ is contained in $ C_i $ is $ n_i / n_0 $.
Thus, the expected cost of the decomposition of $ C_0 $ is proportional to
\begin{equation}\label{eq:two_parts_of_cost}
\sum_{i=1}^k \frac{n_i}{n_0} \sum_{j \neq i} (n_j t_1 (m, n) + m_j' t_2 (m, n)) = \sum_{i=1}^k \frac{n_i}{n_0} \sum_{j \neq i} n_j t_1 (m, n) + \sum_{i=1}^k \frac{n_i}{n_0} \sum_{j \neq i} m_j' t_2 (m, n) \, .
\end{equation}
We analyze each of these terms individually.

Consider first the cost of $ O (\sum_{i=1}^k n_i / n_0 \sum_{j \neq i} n_j t_1 (m, n)) $.
For every pair $ i, j $ such that $ i \neq j $ we have to pay a cost of $ O(n_i n_j (t_1 (m, n) + n) / n_0) $.
If $ n_i \leq n_j $ we charge this cost to the component $ C_i $, otherwise we charge it to $ C_j $ (i.e., we always charge the cost to the smaller component).
Note that the component to which we charge the cost has at most $ n_0 / 2 $ nodes (otherwise it would not be the smaller one).
For a fixed component $ i $, the total charge is proportional to
\begin{equation*}
\sum_{j \neq i} \frac{n_i n_j (t_1 (m, n) + n)}{n_0} = n_i (t_1 (m, n) + n) \cdot \frac{\sum_{j \neq i} n_j}{n_0} \leq n_i (t_1 (m, n) + n) \, .
\end{equation*}
We share this cost equally among the nodes in $ C_i $ and thus charge $ {O (t (m, n) + n)} $ to every node in $ C_i $.
Every time we charge a node, the size of its component halves.
Thus, every node is charged at most $ \log{n} $ times and the total update time for the first term in Equation~\eqref{eq:two_parts_of_cost} is $ O(n t(m, n) \log{n}) $.

Consider now the cost of $ O (\sum_{i=1}^k n_i / n_0 \sum_{j \neq i} m_j' t_2 (m, n)) $.
Note that
\begin{equation*}
\sum_{i=1}^k \frac{n_i}{n_0} \sum_{j \neq i} m_j' = \sum_{i=1}^k m_i' \sum_{j \neq i} \frac{n_j}{n_0} = \sum_{i=1}^k m_i' \frac{n_0 - n_i}{n_0} \, .
\end{equation*}
We now charge $ m_i' (n_0 - n_i) / n_0 $ to every component $ C_i $ ($ 1 \leq i \leq k $).
In particular we charge $ (n_0 - n_i) / n_0 $ to every edge $ (u, v) $ of the initial graph such that $ u \in C_i $.
We now argue that in this way every edge is charged only $ O (\log{n}) $ times, which will imply that the total update time for the second term in Equation~\eqref{eq:two_parts_of_cost} is $ O(m t_2 (m, n) \log{n}) $.
Consider an edge $ (u, v) $ and the component containing $ u $.
We only charge the edge $ (u, v) $ when the component containing $ u $ decomposes.
Let $ a_0 $ denote the initial number of nodes of this component and let $ a_p $ its number of nodes after the $p$-th decomposition.
As argued above, we charge $ (a_{p-1} - a_p) / a_{p-1} $ to $ (u, v) $ for the $p$-th decomposition.
Thus, for $ q $ decompositions we charge $ \sum_{1 \leq p \leq q} (a_{p-1} - a_p) / a_{p-1} $.
Now observe that
\begin{multline*}
\sum_{1 \leq p \leq q} \frac{a_{p-1} - a_p}{a_{p-1}} \leq 
 \sum_{1 \leq p \leq q} \sum_{i=0}^{a_{p-1} - a_p - 1} \frac{1}{a_{p-1}} 
 \leq \sum_{1 \leq p \leq q}  \sum_{i=0}^{a_{p-1} - a_p - 1} \frac{1}{a_{p-1} - i} \\
 = \sum_{1 \leq p \leq q} \sum_{i=a_p + 1}^{a_{p-1}} \frac{1}{i} = \sum_{i=a_q + 1}^{a_0} \frac{1}{i} \, .
\end{multline*}
Since $ a_0 \leq n $, this harmonic series is bounded by $ O (\log{n}) $.

Finally, we bound the initialization cost.
Let $ C_1, \ldots, C_k $ denote the initial components and let $ n_i $ and $ m_i $ denote the number of nodes and edges of component $ C_i $, respectively.
The initial components can be computed in time $ O(m) $ with Tarjan's algorithm~\cite{Tarjan72}.
Furthermore, each component starts two decremental \ssr algorithms and we have to pay for the total update time of these algorithms.
This time is proportional to 
\begin{equation*}
\sum_{i=1}^k (n_i t_1 (m_i, n_i) + t_2 (m_i, n_i)) \leq t_1 (m, n) \sum_{i=1}^k n_i + t_2 (m, n) \sum_{i=1}^k m_i \leq n t_1 (m, n) + m t_2 (m, n) \, .
\end{equation*}
\end{proof}


\section{Single-Source Single-Sink Reachability}\label{sec:sparse}
In this section we give an algorithm for maintaining a path from a source node~$ s $ to a sink node~$ t $ in a directed graph undergoing edge deletions, i.e., we solve the decremental \ssssr problem.
Using the reduction of \Cref{sec:sssp}, this implies an algorithm for the decremental single-source reachability (\ssr) problem.

\subsection{Algorithm Description}\label{sec:st_reach_algo_description}

Our $s$-$t$ reachability algorithm has a parameter $ k \geq 1 $ and for each $ 1 \leq i \leq k $ parameters $ b_i \leq n $ and $ c_i \leq n $.
We determine suitable choices of these parameters in \Cref{sec:running_time_sparse}.
For each $ 1 \leq i \leq k-1 $, our choice will satisfy $ b_i \geq b_{i+1} $ and $ c_i \geq 2 c_{i+1} $.
We also set $ b_{k+1} = 1 $, $ c_0 = n $, $ c_{k+1} = 1 $, and $ h_i = n / c_i $ for all $ 0 \leq i \leq k + 1 $.
Note that this implies $ h_{i+1} \geq 2 h_i $ for all $ 1 \leq i \leq k $.
Intuitively, $ b_i $ and $ c_i $ are roughly the number of $i$-hubs and $i$-centers used by our algorithm and $ h_i $ is the hop range of the $i$-centers.
In the algorithm we will often consider ordered pairs of centers of the form $ (x, y) $.

\paragraph*{Initialization.}
At the initialization (i.e., before the first deletion), our algorithm determines sets of nodes $ B_1 \supseteq B_2 \supseteq \dots \supseteq B_k $ and $ C_0 \supseteq C_1 \supseteq \dots \supseteq C_{k+1} $ as follows.
For each $ 1 \leq i \leq k $, we sample each node of the graph with probability $ a b_i \ln{(\Delta (G))} / n $ and each edge with probability $ a b_i \ln{(\Delta (G))} / m $ (for a large enough constant $ a $).
The set $ B_i $ then consists of the sampled nodes and the endpoints of the sampled edges.
We set $ C_0 = V $ and $ C_{k+1} = \{s, t\} $.
For each $ 1 \leq i \leq k $, we sample each node of the graph with probability $ a c_i \ln{(\Delta (G))} / n $ (for a large enough constant $ a $).
The set $ C_i $ then consists of the sampled nodes together with the nodes in $ C_{i+1} $.
For every $ 1 \leq i \leq k $ we call the nodes in $ B_i $ $i$-hubs and for every $ 0 \leq i \leq k+1 $ we call the nodes in $ C_i $ $i$-centers.
Note that, as $ \Delta (G) \leq m $ in unweighted graphs, the number of $i$-hubs is $ \tilde O (b_i) $ whp and the number of $i$-centers is $ \tilde O (c_i) $ by \Cref{lem:hitting_set_argument}.
We can make the algorithm fail with small probability if these sets are larger.

\paragraph*{Data Structures.}
Our algorithm uses the following data structures:
\begin{itemize}
\item For every $i$-hub $ z $ (with $ 1 \leq i \leq k $) an ES-tree up to depth $ 2 h_i $ in $ G $ (\emph{outgoing} tree of $ z $) and an ES-tree up to depth $ 2 h_i $ in the reverse graph of $ G $ (\emph{incoming} tree of~$ z $)
\item For every pair of $i$-centers $ (x, y) $ (with $ 1 \leq i \leq k $) the set of $i$-hubs from $ B_i $ linking $ x $ to $ y $. 
\item For every pair of $i$-centers $ (x, y) $ (with $ 1 \leq i \leq k $) a set of nodes $ \cQ (x, y, i) $, which is initially empty.
\item For every pair of $i$-centers $ (x, y) $ (with $ 0 \leq i \leq k $) a list of pairs of $(i+1)$-centers called $(i+1)$-parents of $ (x, y) $. 
\item For every pair of $i$-centers $ (x, y) $ (with $ 1 \leq i \leq k+1 $) a list of pairs of $(i-1)$-centers called $(i-1)$-children of $ (x, y) $.
\end{itemize}

\paragraph*{Maintaining Hub Links.}
For every $ 1 \leq i \leq k $, we say that an $i$-hub $ z $ \emph{links} an $i$-center $ x $ to an $i$-center $ y $ if $ \dist_G (x, z) \leq 2 h_i $ and $ \dist_G (z, y) \leq 2 h_i $.
The hub links of every pair of $i$-centers $ (x, y) $, for all $ 1 \leq i \leq k $, can be maintained as follows.
After every edge deletion in the graph, we report the deletion to the ES-trees maintained by the hubs.
Initially, and after each deletion, the level of a node $ v $ in the incoming (outgoing) tree of an $i$-hub $ z $ is at most $ 2 h_i $ if and only if $ \dist_G (v, z) \leq 2 h_i $ ($ \dist_G (z, v) \geq 2 h_i $).
Thus, we can check whether an $i$-hub $ z $ links an $i$-center $ x $ to an $i$-center~$ y $ by summing up the levels of $ x $ and $ y $ in the incoming and outgoing ES-tree of $ z $, respectively.
Thus, for every priority $ i $ and every pair of $i$-centers $ (x, y) $ we can initialize the set of $i$-hubs linking $ x $ to $ y $ by iterating over all hubs.
We can maintain these sets under the edge deletions in $ G $ as follows:
Every time the level of some $i$-center $ x $ in the incoming ES-tree of some $i$-hub $ z $ exceeds $ 2 h_i $, we iterate over all $i$-centers $ y $ and remove $ z $ from the set of hubs linking $ x $ to $ y $.
We proceed similarly if the level of some $i$-center $ y $ in the outgoing ES-tree of some $i$ hub $ z $ exceeds $ 2 h_i $.
In this way, we can also generate, after every edge deletion, a list of pairs of $i$-centers $ (x, y) $ such that $ x $ is not linked to $ y $ by an $i$-hub anymore (but was linked to an $i$-hub before the deletion).

\paragraph*{Main Algorithm.}
Initially, the algorithm computes the shortest path $ \pi $ from $ s $ to $ t $ in $ G $.
It then determines a sequence of $k$-centers $ v_1, \ldots, v_l $ on $ \pi $.
For each $ 1 \leq j \leq l-1 $, the algorithm now tries to maintain a path from $ v_j $ to $ v_{j+1} $ under the edge deletions in $ G $.
If it fails to do so, it recomputes the shortest path from $ s $ to $ t $ in $ G $.
We call this a \emph{refresh} operation.

For each $ 1 \leq j \leq l-1 $, the path from $ v_j $ to $ v_{j+1} $ is maintained as follows.
The first case is that there is a $k$-hub linking $ v_j $ to $ v_{j+1} $.
As long as such a hub exists, we know that there is a path from $ v_j $ to $ v_{j+1} $.
If there is no $k$-hub linking $ v_j $ to $ v_{j+1} $ anymore, the algorithm computes the path union $ \cP (v_j, v_{j+1}, 2 h_k, G) $.
It then recursively maintains the path from $ v_j $ to $ v_{j+1} $ in $ G [\cP (v_j, v_{j+1}, 2 h_k, G)] $ using $(k-1)$-centers and $(k-1)$-hubs.
To keep track of this recursive hierarchy, we make, for every $ 1 \leq j \leq l-1 $, each pair $ (v_j, v_{j+1}) $ an $(i-1)$-child of $ (x, y) $ and $ (x, y) $ an $i$-parent of $ (v_j, v_{j+1}) $.
The end of the recursion is reached in layer $ 0 $, where the set of $0$-centers consists of all nodes.
The only paths between $0$-centers the algorithm considers are the edges between them.

Algorithm~\ref{alg:sparse} shows the pseudocode for these procedures.
A crucial ingredient of our algorithm is the computation of path unions for pairs of centers.
For every $ 1 \leq i \leq k $ and all $i$-centers $ (x, y) $ the path union from the last recomputation is stored in $ \cQ (x, y, i) $.
To simplify the formulation of the algorithm we set $ \cQ (s, t, k+1) = V $ (which is consistent with our choice of $ h_{k+1} = n $).
Thus, the procedure \Refresh{$s$, $t$, $k+1$} simply checks whether there exists a path from $ s $ to $ t $ in $ G $.

\begin{algorithm}
\thisfloatpagestyle{empty} 
\caption{Algorithm for decremental $s$-$t$ reachability}
\label{alg:sparse}

\Procedure{\Refresh{$x$, $y$, $i$}}{
	Compute shortest path $ \pi $ from $ x $ to $ y $ in $ G [\cQ (x, y, i)] $\;
	\eIf{length of $ \pi > h_i $}{
		\If{$ i = k+1 $}{
			Stop and output ``$ s $ cannot reach $ t $''\;
		}
		\ForEach{$(i+1)$-parent $ (x', y') $ of $ (x, y) $}{
			\Refresh{$x'$, $y'$, $i+1$}\;\label{line:causing_refresh}
		}
	}{
		\RemoveChildren{$x$, $y$, $i$}\;
		Determine $ (i-1) $-centers $ v_1, \ldots, v_l $ on $ \pi $ in order of appearance on $ \pi $ such that $ v_1 = x $, $ v_l = y $ and, for each $ 1 \leq j \leq l-2 $, $ v_{j+1} $ is the first $(i-1)$-center following $ v_j $ on $ \pi $ at distance at least $ h_{i-1} /2 $ from $ v_j $.\;\label{lin:determine_children}
		\tcp{If $ i = 1 $ then the $0$-centers are all nodes on $ \pi $ because $ h_0 = 1 $.} 
		\ForEach{$ 1 \leq j \leq l-1 $}{
			Make $ (v_j, v_{j+1}) $ an $(i-1)$-child of $ (x, y) $ and $ (x, y) $ an $i$-parent of $ (v_j, v_{j+1}) $\;
			\If{$ v_{j+1} $ not linked to $ v_{j} $ by an $i$-hub}{
				\ComputePathUnion{$v_j$, $v_{j+1}$, $i-1$}\;
				\Refresh{$v_j$, $v_{j+1}$, $i-1$}\;
			}
		}
	}
}

\Procedure{\ComputePathUnion{$x$, $y$, $i$}}{
	\eIf{$ \cQ (x, y, i) = \emptyset $}{
		Let $ (x', y') $ be any $(i+1)$-parent of $ (x, y) $\;
		$ \cQ (x, y, i) \gets \cP (x, y, 2 h_i, G [\cQ (x', y', i+1)]) $\;\label{line:path_union_initialization}
	}{
		$ \cQ (x, y, i) \gets \cP (x, y, 2 h_i, G [\cQ (x, y, i)]) $\;\label{line:path_union_update}
	}
}

\Procedure{\RemoveChildren{$x$, $y$, $i$}}{
	\ForEach{$(i-1)$-child $ (x', y') $ of $ (x, y) $}{
		Remove $ (x', y') $ from $(i-1)$-children of $ (x, y) $ and remove $ (x, y) $ from $i$-parents of $ (x', y') $\;
		\If{$ (x', y') $ has no $i$-parents anymore and $ i \geq 1 $}{
			\RemoveChildren{$x'$, $y'$, $i-1$}\;
		}
	}
}

\Procedure{\Initialize{}}{
	\lForEach{$ 1 \leq i \leq k $ and all $i$-centers $ (x, y) $}{
		$ \cQ (x, y, i) \gets \emptyset $
	}
	$ \cQ (s, t, k+1) \gets V $\;
	\Refresh{$s$, $t$, $k+1$}\;
}

\Procedure{\Delete{$u$, $v$}}{
	\ForEach{$1$-parent $ (x, y) $ of $ (u, v) $}{
		\Refresh{$x$, $y$, $1$}\;
	}
	For every $ 1 \leq i \leq k $ and every pair of $i$-centers $ (x, y) $ that is not linked by an $i$-hub anymore: \Refresh{$x$, $y$, $i$}
}
\end{algorithm}

\subsection{Correctness}

It is obvious that the algorithm correctly maintains, for all pairs of $i$-centers $ (x, y) $, the set of $i$-hubs linking $ x $ to $ y $.
Furthermore, the algorithm only stops if it has correctly detected that there is no path from $ s $ to $ t $ in the current graph $ G $.
Thus, it remains to show that, as long as the algorithm does not stop, there there is a path from $ s $ to $ t $.
We say that a pair of $i$-centers $ (x, y) $ is \emph{active} if it has at least one $ (i+1) $-parent.
We further define the pair $ (s, t) $, the only pair of $(k+1)$-centers, to be active.

\begin{lemma}
After finishing the delete-operation, for every $ 0 \leq i \leq k+1 $ and every active pair of $i$-centers $ (x, y) $, there is a path from $ x $ to $ y $ in the current graph $ G $.
\end{lemma}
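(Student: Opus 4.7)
The plan is to prove the invariant by upward induction on $i$, from $i = 0$ to $i = k+1$, using in each case the children structure installed by the most recent successful call to $\textsc{Refresh}(x,y,i)$. For the base case $i = 0$, observe that $c_0 = n$ forces $h_0 = 1$, so the $0$-centers selected on any refresh path are consecutive vertices of that path, and hence a pair $(x,y)$ is installed as a $0$-child of a $1$-parent only when the edge $(x,y)$ is actually present in the graph at that moment. If the edge $(x,y)$ is subsequently deleted, then $\textsc{Delete}$ invokes $\textsc{Refresh}$ on every $1$-parent of $(x,y)$, which begins with $\textsc{RemoveChildren}$ (detaching $(x,y)$) and then rebuilds children from a shortest path computed in the post-deletion graph, a path that cannot contain the deleted edge. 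Therefore, if $(x,y)$ is still active after $\textsc{Delete}$, the edge $(x,y)$ is present in the current $G$ and witnesses the required path.

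For the inductive step, fix an active pair $(x,y)$ at level $i \geq 1$ and consider the most recent successful call to $\textsc{Refresh}(x,y,i)$. This call produced the current $(i-1)$-children $(v_1,v_2),\ldots,(v_{l-1},v_l)$ of $(x,y)$ with $v_1 = x$ and $v_l = y$, and these children are still attached: the only mechanism that detaches them is $\textsc{RemoveChildren}(x,y,i)$, which is called exclusively at the start of a later successful refresh of $(x,y)$, and no such subsequent refresh has happened by assumption. It therefore suffices to exhibit, for each $j$, a path from $v_j$ to $v_{j+1}$ in the current graph and concatenate the $l-1$ sub-paths. Inside the body of that refresh, we ensured for each $j$ that either (a)~$v_j$ and $v_{j+1}$ were linked by a hub, or (b)~a recursive $\textsc{Refresh}(v_j,v_{j+1},i-1)$ was invoked, which made $(x,y)$ an $i$-parent of $(v_j,v_{j+1})$ and thus rendered that pair active at level $i-1$. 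Case~(b) is immediate: the induction hypothesis at level $i-1$ yields the required sub-path. In case~(a), if the hub link still holds we concatenate the two hub sub-paths (each of weight at most $2h_i$); otherwise the hub-link failure loop in $\textsc{Delete}$ enqueued a refresh of $(v_j,v_{j+1})$ at the level at which the link was maintained, and that refresh either succeeds (placing us back into case~(b)) or cascades upward through the $i$-parent $(x,y)$, producing a still more recent successful refresh of $(x,y)$ and contradicting our choice.

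The case $i = k+1$ concerns only $(s,t)$, for which $\cQ(s,t,k+1) = V$, so the refresh examines the current graph directly; the only way the cascade in case~(a) can fail to supply a successful top-level refresh is the stop branch of $\textsc{Refresh}(s,t,k+1)$, which contradicts that $\textsc{Delete}$ terminated without halting the algorithm. I expect the main obstacle to be the bookkeeping in case~(a): one must argue that within a single $\textsc{Delete}$ no hub-link failure on an active pair and no $\textsc{Refresh}$ cascade ever slips through unprocessed, so that the ``most recent successful refresh'' coincides with the state visible at the end of $\textsc{Delete}$. The cleanest way to absorb this is a short auxiliary invariant, proved by induction on the sequence of atomic operations performed inside $\textsc{Delete}$, asserting that every broken hub-link on an active pair is explicitly refreshed before $\textsc{Delete}$ returns and that every cascade reaching level $k+1$ either produces a successful $\textsc{Refresh}(s,t,k+1)$ or stops the algorithm; with that invariant in hand, the induction above closes.
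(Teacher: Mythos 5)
Your proof has a real gap in the inductive step: you begin with ``consider the most recent successful call to $\textsc{Refresh}(x,y,i)$'' and then reason about the children installed by that call, but an active pair $(x,y)$ at level $i\geq 1$ need never have been refreshed at all. When $(x,y)$ is installed as an $(i)$-child of some $(i+1)$-parent inside that parent's refresh, the algorithm only calls $\textsc{Refresh}(x,y,i)$ if $x$ is \emph{not} linked to $y$ by an $i$-hub; if a hub link is present, $(x,y)$ becomes active with no children and no refresh on record, and your argument has nothing to work with. The paper handles exactly this by splitting into two cases: if some $i$-hub $z$ links $x$ to $y$ then the ES-trees of $z$ directly supply a path $x\to z\to y$ in $G$; only in the complementary case (no linking hub) does one appeal to the children produced at the last refresh, which are then guaranteed to exist.

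A secondary point: your case analysis (a)/(b) on the children $(v_j,v_{j+1})$ duplicates work the induction hypothesis already does. Every child $(v_j,v_{j+1})$ is active at level $i-1$ simply by virtue of having $(x,y)$ as an $i$-parent, regardless of whether it was installed via a hub link (a) or via a recursive refresh (b), so the induction hypothesis yields a path in both cases and the hub-chasing sub-argument for case (a) is unnecessary. That over-complication is not fatal, but it obscures the clean inductive structure. The bookkeeping worry you raise at the end --- that some broken hub link on an active pair might ``slip through'' a $\textsc{Delete}$ --- is legitimate if one wants to verify the algorithm's control flow, but for this lemma the state is taken after $\textsc{Delete}$ has completed, which means the final loop of $\textsc{Delete}$ has already refreshed every pair whose hub link dropped; no auxiliary invariant about the internal order of atomic operations is needed.
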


\begin{proof}
The proof is by induction on $ i $.
If $ y $ is linked to $ x $ by some $i$-hub $ z $, we know that in $ G $ there is a path from $ x $ to $ z $ as well as a path from $ z $ to $ y $.
Their concatenation is a path from $ x $ to $ y $ in $ G $.

Consider now the case that there is no $i$-hub linking $ x $ to $ y $.
If $ i = 0 $, we know that there is an edge from $ x $ to $ y $ in $ G $ as otherwise the pair $ (x, y) $ would not be active. 
If $ i \geq 1 $, let $ (v_1, v_2), (v_2, v_3), \ldots, (v_{l-1}, v_l) $ with $ v_1 = x $ and $ v_l = y $ denote the $(i+1)$-children of $ (x, y) $ found at the time of the last refresh of $ (x, y) $.
Note that all these children are active since $ (x, y) $ is their $i$-parent.
Thus, by the induction hypothesis, we know that there is a path from $ v_j $ to $ v_{j+1} $ in $ G $ for all $ 1 \leq j \leq l-1 $.
The concatenation of these paths gives a path from $ x $ to $ y $ in $ G $, as desired.
\end{proof}

\subsection{Running Time Analysis}\label{sec:running_time_sparse}

\paragraph*{Important Properties.}
In the running time analysis we will need some properties of the path unions computed by our algorithm.

\begin{lemma}\label{lem:path_union_contained_in_higher_priority}
Let $ (x', y') $ be pair of nodes such that $ \dist_G (x', y') \leq h_{i+1} $ and let $ (x, y) $ be a pair of nodes such that $ x $ and $ y $ lie on a shortest path from $ x $ to $ y $ in $ G $ (and $ x $ appears before $ v $ on this shortest path).
Then $ \cP (x, y, 2 h_i, G) \subseteq \cP (x', y', 2 h_{i+1}, G) $.
\end{lemma}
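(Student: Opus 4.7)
The plan is to apply the distance-based characterization of path unions from \Cref{lem:path_union_computation}: for any pair $(u, w)$ and threshold $D$, we have $v \in \cP(u, w, D, G)$ if and only if $\dist_G(u, v) + \dist_G(v, w) \leq D$. Under this reformulation, the goal reduces to showing that, starting from the assumption $\dist_G(x, v) + \dist_G(v, y) \leq 2 h_i$, one can deduce $\dist_G(x', v) + \dist_G(v, y') \leq 2 h_{i+1}$.

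First I would apply the triangle inequality twice, $\dist_G(x', v) \leq \dist_G(x', x) + \dist_G(x, v)$ and $\dist_G(v, y') \leq \dist_G(v, y) + \dist_G(y, y')$. Together with the hypothesis on $v$, this reduces the task to bounding $\dist_G(x', x) + \dist_G(y, y')$ by $h_{i+1}$.

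Next, I would exploit the hypothesis that $x$ and $y$ lie on a shortest $x'$-$y'$ path (with $x$ appearing before $y$); since subpaths of shortest paths are themselves shortest, the path decomposes as $\dist_G(x', y') = \dist_G(x', x) + \dist_G(x, y) + \dist_G(y, y')$. Combined with $\dist_G(x', y') \leq h_{i+1}$ and $\dist_G(x, y) \geq 0$, this gives $\dist_G(x', x) + \dist_G(y, y') \leq h_{i+1}$.

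Putting the two pieces together, and using the algorithm-level inequality $h_{i+1} \geq 2 h_i$ noted in \Cref{sec:st_reach_algo_description}, I obtain
\[
\dist_G(x', v) + \dist_G(v, y') \leq h_{i+1} + 2 h_i \leq 2 h_{i+1},
\]
which is exactly what is needed. The argument is essentially mechanical and I do not foresee any real obstacle. The only point requiring care is the correct reading of the statement, which I interpret as ``$x$ and $y$ lie on a shortest path from $x'$ to $y'$, with $x$ before $y$'' (the literal formulation in the excerpt appears to contain a typo, since otherwise the hypothesis is vacuous).
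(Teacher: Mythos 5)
Your proof is correct and follows essentially the same route as the paper's: reduce membership in the path union to the additive distance condition via \Cref{lem:path_union_computation}, apply the triangle inequality, use the decomposition of the shortest $x'$-$y'$ path through $x$ and $y$ to bound $\dist_G(x', x) + \dist_G(y, y')$ by $h_{i+1}$, and finish with $2 h_i \leq h_{i+1}$. You are also right that the lemma statement contains typos (``$x'$ to $y'$'' was intended, and ``$y$'' in place of ``$v$''), and the paper's own proof reads it exactly as you do.
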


\begin{proof}
Let $ v \in \cP (x, y, 2 h_i, G) $.
We first apply the triangle inequality:
\begin{align*}
\dist_G (x', v) + \dist_G (v, y') \leq \dist_G (x', x) + \dist_G (x, v) + \dist_G (v, y) + \dist_G (y, y') \, .
\end{align*}
Since $ x $ and $ y $ (in this order) lie on the shortest path from $ x' $ to $ y' $, we have $ \dist_G (x', x) + \dist_G (y, y') \leq \dist_G (x', y') $ and by our assumption we have $ \dist_G (x', y') \leq h_{i+1} $.
Since  $ v \in \cP (x, y, 2 h_i, G) $, we have $ \dist_G (x, v) + \dist_G (v, y) \leq 2 h_i $
As $ 2 h_i \leq h_{i+1} $ by our assumptions on the parameters we get
\begin{equation*}
\dist_G (x', v) + \dist_G (v, y') \leq h_{i+1} + 2 h_i \leq h_{i+1} + h_{i+1} = 2 h_{i+1}
\end{equation*}
which implies that $ v \in \cP (x', y', 2 h_{i+1}, G) $.
\end{proof}

\begin{lemma}\label{lem:supergraph_of_path_union}
For every $ 1 \leq i \leq k $ and every pair of $i$-centers $ (x, y) $, if $ \cQ (x, y, i) \neq \emptyset $, then $ \cP (x, y, 2 h_i, G) \subseteq \cQ (x, y, i) $.
\end{lemma}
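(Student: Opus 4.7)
The plan is to prove the inclusion by downward induction on $i$, reducing via monotonicity to the instants at which \textsc{ComputePathUnion} writes to $\cQ(x,y,i)$. By \Cref{pro:path_union_property_under_deletions} the set $\cP(x,y,2h_i,G)$ only shrinks under edge deletions and weight increases, while $\cQ(x,y,i)$ changes only when \textsc{ComputePathUnion}$(x, y, i)$ executes, so it suffices to establish the inclusion immediately after every such call. The base case $i=k$ is immediate: the only possible $(i{+}1)$-parent is $(s,t)$ with $\cQ(s,t,k{+}1) = V$ (fixed in \textsc{Initialize}), so the initialization branch sets $\cQ(x,y,k) = \cP(x,y,2h_k,G[V]) = \cP(x,y,2h_k,G)$, and any later re-compute branch preserves the value by \Cref{lem:path_union_in_supergraph}.

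For the inductive step I would split on the two branches of \textsc{ComputePathUnion}. In the update branch (\Cref{line:path_union_update} of \Cref{alg:sparse}), the old value of $\cQ(x,y,i)$ already satisfies the inclusion by the induction hypothesis together with monotonicity, so \Cref{lem:path_union_in_supergraph} implies $\cP(x,y,2h_i,G[\cQ(x,y,i)]) = \cP(x,y,2h_i,G)$, which is precisely the new value of $\cQ(x,y,i)$. In the initialization branch (\Cref{line:path_union_initialization}), the new value is $\cP(x,y,2h_i,G[\cQ(x',y',i{+}1)])$ for some $(i{+}1)$-parent $(x',y')$; by \Cref{lem:path_union_in_supergraph} it suffices to show $\cP(x,y,2h_i,G) \subseteq \cQ(x',y',i{+}1)$. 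I would obtain this by chaining the induction hypothesis $\cP(x',y',2h_{i+1},G) \subseteq \cQ(x',y',i{+}1)$ with \Cref{lem:path_union_contained_in_higher_priority}, whose hypotheses are that $x$ and $y$ both lie on a shortest $x'$-$y'$ path in $G$ and that $\dist_G(x',y') \leq h_{i+1}$.

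The hard part will be verifying those two hypotheses, because $(x,y)$ was chosen by \textsc{Refresh} from a shortest path $\pi$ computed inside $G[\cQ(x',y',i{+}1)]$ rather than inside $G$ itself. My plan is a short coincidence argument: since \textsc{Refresh} only creates children when $\pi$ has weight at most $h_{i+1}$, and $G[\cQ(x',y',i{+}1)]$ is a subgraph of $G$, we get $\dist_G(x',y') \leq h_{i+1} \leq 2h_{i+1}$. Any shortest $x'$-$y'$ path in $G$ therefore lies entirely inside $\cP(x',y',2h_{i+1},G)$ by \Cref{pro:path_union_characterization}, and hence inside $\cQ(x',y',i{+}1)$ by the induction hypothesis. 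Consequently such a path is available in $G[\cQ(x',y',i{+}1)]$, forcing $\dist_{G[\cQ(x',y',i{+}1)]}(x',y') = \dist_G(x',y')$ and so $\pi$ is itself a shortest $x'$-$y'$ path in $G$. The selection rule in \Cref{lin:determine_children} places $x$ and $y$ on $\pi$, so the hypotheses of \Cref{lem:path_union_contained_in_higher_priority} are met and the induction closes.
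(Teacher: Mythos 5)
Your proposal follows essentially the same route as the paper's proof: downward reduction via monotonicity (\Cref{pro:path_union_property_under_deletions}) to the instants when \textsc{ComputePathUnion} fires, the base case $i=k$ via $\cQ(s,t,k+1)=V$, \Cref{lem:path_union_in_supergraph} for the re-compute branch, and \Cref{lem:path_union_contained_in_higher_priority} for the first-initialization branch. The one place you diverge is in how you feed \Cref{lem:path_union_contained_in_higher_priority}: you prove that the path $\pi$ found by \textsc{Refresh} is in fact a shortest $x'$-$y'$ path in $G$ itself (via the induction hypothesis forcing $\dist_{G[\cQ(x',y',i+1)]}(x',y')=\dist_G(x',y')$), in order to hit the lemma's stated hypothesis literally. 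The paper instead skips this and directly extracts the quantity that \Cref{lem:path_union_contained_in_higher_priority}'s proof actually uses, namely $\dist_G(x',x)+\dist_G(y,y')\leq h_{i+1}$, from the fact that $\pi$ is a short path in a subgraph of $G$. Your version is a bit more rigorous with respect to the lemma as written; the paper's is a bit more direct. Both are correct.
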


\begin{proof}
The proof is by induction on $ i $.
We first argue that it is sufficient to show that after the first initialization of $ \cQ (x, y, i) $ in Line~\ref{line:path_union_initialization} we have $ \cQ (x, y, i) = \cP (x, y, 2 h_i, G) $.
After the initialization, and before $ \cQ (x, y, i) $ is recomputed, $ \cQ (x, y, i) $ might be ``outdated'' due to deletions in $ G $, i.e., not be equal to $ \cP (x, y, 2 h_i, G) $ anymore.
However, since $ \cP (x, y, 2 h_i, G) $ only ``loses'' nodes through the deletions in $ G $, it is still the case that $ \cP (x, y, 2 h_i, G) \subseteq \cQ (x, y, i) \subseteq V $.
Thus, this is also true directly before $ \cQ (x, y, i) $ is recomputed for the first time in Line~\ref{line:path_union_update}.
There $ \cQ (x, y, i) $ is updated to $ \cP (x, y, 2 h_i, G [\cQ (x, y, i)]) $.
By \Cref{lem:path_union_in_supergraph} we know that $ \cP (x, y, 2 h_i, G [\cQ (x, y, i)]) = \cP (x, y, 2 h_i, G) $, i.e., after the recomputation, $ \cQ (x, y, i) $ is equal to $ \cP (x, y, 2 h_i, G) $ again.
By repeating this argument, we get that $ \cP (x, y, 2 h_i, G) \subseteq \cQ (x, y, i) \subseteq V $.

We now show that after the first initialization of $ \cQ (x, y, i) $ in Line~\ref{line:path_union_initialization} we have $ \cQ (x, y, i) = \cP (x, y, 2 h_i, G) $.
Let $ (x', y') $ be the $(i+1)$-parent of $ (x, y) $ used in the initialization.
We will show that $ \cP (x, y, 2 h_i, G [\cQ (x', y', i+1)]) = \cP (x, y, 2 h_i, G) $.
For $ i = k $ the claim is trivially true because $ (s, t) $ is the only $(k+1)$-parent of $ (x, y) $ and $ \cQ (s, t, k+1) = V $.
For $ i \leq k-1 $, we know by the induction hypothesis that $ \cP (x', y', 2 h_{i+1}, G) \subseteq \cQ (x', y', i+1) $.
We also know that $ x $ and $ y $ lie on the shortest path from $ x' $ to $ y' $ in $ G [\cQ (x', y', i+1)] $ such that $ x $ precedes $ y $ on this shortest path and the path has length at most $ h_i $.
As $ G [\cQ (x', y', i+1)] $ is a subgraph of $ G $, this path is also contained in $ G $.
Thus, $ \dist_G (x', x) + \dist_G (y, y') \leq h_i $.
By \Cref{lem:path_union_contained_in_higher_priority} this implies that $ \cP (x, y, 2 h_i G) \subseteq \cP (x', y', 2 h_{i+1}, G) $.
It follows that $ \cP (x, y, 2 h_i, G) \subseteq \cQ (x', y', i+1) $.
Thus, by \Cref{lem:path_union_in_supergraph}, we have $ \cP (x, y, 2 h_i, G [\cQ (x', y', i+1)]) = \cP (x, y, 2 h_i, G) $ as desired.
\end{proof}

\begin{lemma}\label{lem:subgraph_property_of_child}
Let $ (x, y) $ be an active pair of $i$-centers and let $ (x', y') $ be an $(i+1)$-parent of $ (x, y) $.
Then $ \cQ (x, y, i) \subseteq \cQ (x', y', i+1) $.
\end{lemma}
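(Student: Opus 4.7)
The plan is to track the most recent time $\tilde\tau$ at which \ComputePathUnion{$x$, $y$, $i$} last executed. Since $\cQ(x, y, i)$ is only ever altered inside \ComputePathUnion (it is unaffected by plain edge deletions), the analysis inside the proof of \Cref{lem:supergraph_of_path_union} shows that $\cQ(x, y, i) = \cP(x, y, 2 h_i, G_{\tilde\tau})$ at the present moment; if $\cQ(x,y,i)$ is empty the conclusion is immediate, so I may assume $\tilde\tau$ exists. Analogously, letting $\tau'$ denote the most recent time \ComputePathUnion{$x'$, $y'$, $i+1$} fired, we have $\cQ(x', y', i+1) = \cP(x', y', 2 h_{i+1}, G_{\tau'})$. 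The heart of the proof is then the temporal claim $\tilde\tau \geq \tau'$.

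To establish $\tilde\tau \geq \tau'$, I would argue that every update of $\cQ(x', y', i+1)$ happens inside some \Refresh{grandparent, $i+2$} and is immediately followed by a call to \Refresh{$x'$, $y'$, $i+1$}. If this follow-up call enters the success branch, it invokes \RemoveChildren{$x'$, $y'$, $i+1$}, severing the link to $(x, y)$; if $(x, y)$ is reinserted as an $i$-child in the very same refresh, a fresh \ComputePathUnion{$x$, $y$, $i$} fires and $\tilde\tau$ is pushed to at least $\tau'$. The only subtleties are the case where $(x, y)$ is hub-linked at the moment of reinsertion (so \ComputePathUnion{$x$, $y$, $i$} is skipped) and the case where \Refresh{$x'$, $y'$, $i+1$} takes its failure branch and leaves the children untouched. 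Both are handled by the observation that hub linkage can only monotonically break, since distances are non-decreasing under edge deletions: if $(x, y)$ is currently hub-linked then \ComputePathUnion{$x$, $y$, $i$} was never triggered at all, so $\cQ(x, y, i) = \emptyset$ and the conclusion is immediate; and if it is currently not hub-linked then it has also not been hub-linked at any earlier point, ensuring that \ComputePathUnion{$x$, $y$, $i$} truly fired on every re-installation.

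Given $\tilde\tau \geq \tau'$, the graph $G_{\tilde\tau}$ is a subgraph of $G_{\tau'}$, so \Cref{pro:path_union_property_under_deletions} yields $\cP(x, y, 2 h_i, G_{\tilde\tau}) \subseteq \cP(x, y, 2 h_i, G_{\tau'})$. At time $\tilde\tau$, the nodes $x$ and $y$ lie consecutively (among $i$-centers) on a shortest $x'$-$y'$ path in $G_{\tilde\tau}[\cQ(x', y', i+1)]$ of length at most $h_{i+1}$; this very path also lives inside $G_{\tau'}$ with the same length, so $\dist_{G_{\tau'}}(x', x) + \dist_{G_{\tau'}}(y, y') \leq h_{i+1}$, and \Cref{lem:path_union_contained_in_higher_priority} then gives $\cP(x, y, 2 h_i, G_{\tau'}) \subseteq \cP(x', y', 2 h_{i+1}, G_{\tau'}) = \cQ(x', y', i+1)$. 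Chaining the two inclusions produces $\cQ(x, y, i) \subseteq \cQ(x', y', i+1)$, as desired. The main obstacle is the temporal-ordering step $\tilde\tau \geq \tau'$: it requires an induction on the sequence of update operations to confirm that every path-union recomputation at the higher level is properly ``echoed'' at the lower level, with the monotonicity of hub linkage doing the non-trivial work in the hub-linked and failure-branch corner cases.
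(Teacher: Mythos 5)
Your proof takes essentially the same route as the paper's: establish the containment at the moment $(x', y')$ becomes a parent of $(x, y)$ via \Cref{lem:path_union_contained_in_higher_priority} and \Cref{lem:supergraph_of_path_union}, and argue it persists by tracking when the two path-union sets are recomputed. The final chaining is correct once the hypotheses are in place, and you are right to flag the hub-linked reinsertion as a subtlety the paper's prose elides. But your handling of it contains a genuine logical error: you write ``if it is currently not hub-linked then it has also not been hub-linked at any earlier point.'' This inverts the monotonicity. Since distances only increase under deletions, a hub link once lost is never regained, which gives \emph{linked now $\Rightarrow$ linked before}; you are using the false converse \emph{not linked now $\Rightarrow$ not linked before}. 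Consequently you cannot conclude that \ComputePathUnion{$x$, $y$, $i$} fired at the most recent reinsertion: the pair $(x, y)$ could have been hub-linked when it was installed as a child of $(x', y')$ (so the call was skipped) and have lost the link only afterward. The correct repair is that when the hub link breaks, the last line of \Delete{$u$, $v$} triggers a cascade of refresh calls that removes and re-installs $(x, y)$ as a child of $(x', y')$ at a strictly later time (assuming it remains a child at all), and at that later re-installation $(x, y)$ is no longer hub-linked, so \ComputePathUnion{$x$, $y$, $i$} does fire. Hence either $\cQ(x, y, i)$ is empty (trivial) or $\tilde\tau \geq \tau'$, but via this cascade argument rather than the one you gave.

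There is also a secondary conflation: you place the fact ``$x$ and $y$ lie consecutively on a shortest $x'$-$y'$ path'' at time $\tilde\tau$, the last firing of \ComputePathUnion{$x$, $y$, $i$}. That last firing could have occurred inside the refresh of a \emph{different} $(i+1)$-parent $(p, q)$; there $x$ and $y$ lie on a shortest $p$-$q$ path, not necessarily a shortest $x'$-$y'$ path. The shortest-path fact should be read off at the most recent time $\sigma$ that $(x, y)$ was installed as a child of $(x', y')$ specifically (there it holds, and $\sigma \geq \tau'$), while $\tilde\tau \geq \sigma$ is used only to shrink $\cP(x, y, 2h_i, G_{\tilde\tau})$ into $\cP(x, y, 2h_i, G_{\tau'})$.
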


\begin{proof}
Consider the point in time when $ (x', y') $ becomes an $(i+1)$-parent of $ (x, y) $.
This can only happen in the else-branch of \Refresh{$x$, $y$, $i+1$}.
Then we know that $ x $ and $ y $ lie on a path from $ x' $ to $ y' $ of length at most $ h_{i+1} $ in $ G [\cQ (x', y', i+1)] $ such that $ x $ precedes $ y $ on this path and thus $ \dist_G (x', x) + \dist_G (y, y') \leq h_{i+1} $.
From \Cref{lem:path_union_contained_in_higher_priority} it follows that $ \cP (x, y, 2 h_i, G) \subseteq \cP (x', y', 2 h_{i+1}, G) $.
Furthermore, $ \cP (x', y', 2 h_{i+1}, G) \subseteq \cQ (x', y', i+1) $ by \Cref{lem:supergraph_of_path_union}.
As the path-union of $ (x, y) $ is recomputed when $ (x, y) $ becomes an $i$-child of $ (x', y') $, we also have $ \cQ (x, y, i) = \cP (x, y, 2 h_i, G) $. 
Therefore we get that $ \cQ (x, y, i) \subseteq \cQ (x', y', i+1) $.
Furthermore, every time the algorithm recomputes $ \cQ (x', y', i+1) $, $ (x', y') $ will stop being an $(i+1)$-parent of $ (x, y) $.
It might immediately become in $(i+1)$-parent again and in this case our argument above applies again.
\end{proof}

\begin{lemma}\label{lem:distance_increase_path_union}
For every $ 1 \leq i \leq k+1 $ and every pair of $i$-centers $ (x, y) $, we have that if $ \dist_{G [\cQ (x, y, i)]} (x, y) > h_i $, then $ \dist_G (x, y) > h_i $.
\end{lemma}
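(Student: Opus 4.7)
I plan to argue the contrapositive: assume $\dist_G(x, y) \leq h_i$ and show $\dist_{G[\cQ(x, y, i)]}(x, y) \leq h_i$. The key fact will be that any shortest $x$-$y$ path in $G$ whose weight is at most $h_i$ lies entirely inside the path union $\cP(x, y, 2h_i, G)$, which by \Cref{lem:supergraph_of_path_union} is contained in $\cQ(x, y, i)$.

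The plan in detail is as follows. First, I would handle the trivial boundary case $i = k+1$: here $(x, y) = (s, t)$ and $\cQ(s, t, k+1) = V$ by definition, so $G[\cQ(s, t, k+1)] = G$ and the statement is immediate. Next, for $1 \leq i \leq k$, I assume $\dist_G(x, y) \leq h_i$ and let $\pi$ be a shortest $x$-$y$ path in $G$, so $w_G(\pi) \leq h_i$. For every node $v$ on $\pi$,
\begin{equation*}
\dist_G(x, v) + \dist_G(v, y) \leq w_G(\pi) \leq h_i \leq 2 h_i,
\end{equation*}
so by the characterization of \Cref{pro:path_union_characterization}, $v \in \cP(x, y, 2 h_i, G)$. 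Applying \Cref{lem:supergraph_of_path_union} we conclude $v \in \cQ(x, y, i)$ (this lemma presumes $\cQ(x, y, i) \neq \emptyset$, which is the only regime in which the conclusion of the present lemma is nonvacuous, since otherwise $G[\cQ(x, y, i)]$ has no edges and $\dist_G(x, y) > h_i$ must still be shown separately, which in the algorithm is guaranteed by the fact that $\cQ$ is always populated via \ComputePathUnion{} before distances are measured in it).

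Having established that every node of $\pi$ lies in $\cQ(x, y, i)$, all edges of $\pi$ are edges of $G[\cQ(x, y, i)]$, so $\pi$ is a path in $G[\cQ(x, y, i)]$ of the same weight. Therefore $\dist_{G[\cQ(x, y, i)]}(x, y) \leq w_G(\pi) \leq h_i$, contradicting the hypothesis $\dist_{G[\cQ(x, y, i)]}(x, y) > h_i$. Equivalently, the contrapositive gives $\dist_G(x, y) > h_i$, as required.

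The only subtlety I anticipate is a clean treatment of the edge cases: the $i = k+1$ case (handled directly by the definition $\cQ(s, t, k+1) = V$), and the implicit assumption $\cQ(x, y, i) \neq \emptyset$ needed to invoke \Cref{lem:supergraph_of_path_union}. Beyond that, the proof is a direct one-line application of the path-union containment lemma to the shortest path witnessing $\dist_G(x, y) \leq h_i$.
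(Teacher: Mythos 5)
Your proof is correct and follows essentially the same route as the paper's: argue the contrapositive, observe that any $x$--$y$ path of weight at most $h_i$ lies entirely in $\cP(x, y, 2h_i, G)$, and invoke \Cref{lem:supergraph_of_path_union} to conclude $\cP(x, y, 2h_i, G) \subseteq \cQ(x, y, i)$. Your extra care with the $i = k+1$ boundary case and the $\cQ(x, y, i) \neq \emptyset$ hypothesis of \Cref{lem:supergraph_of_path_union} is a reasonable tightening that the paper's proof glosses over, but it does not change the argument.
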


\begin{proof}
We show that $ \dist_G (x, y) \leq h_i $ implies $ \dist_{G [\cQ (x, y, i)]} (x, y) \leq h_i $.
If $ \dist_G (x, y) \leq h_i $, then also $ \dist_{G [\cP (x, y, 2 h_i, G)]} (x, y) \leq h_i $ as $ \cP (x, y, 2 h_i, G) $ contains all paths from $ x $ to $ y $ in $ G $ of length at most $ 2 h_i $.
Since $ \cP (x, y, 2 h_i, G) \subseteq \cQ (x, y, i) $ by \Cref{lem:supergraph_of_path_union}, we have $ \dist_{G [\cQ (x, y, i)]} (x, y)  \leq \dist_{G [\cP (x, y, 2 h_i G)]} (x, y) $.
It follows that $ \dist_{G [\cQ (x, y, i)]} (x, y) \leq h_i $.
\end{proof}

\begin{lemma}\label{lem:size_of_path_union}
For every $ 1 \leq i \leq k+1 $ and every pair of $i$-centers $ (x, y) $ the graph $ G [\cQ (x, y, i)] $ has at most $ \min(m / b_i, n^2 / b_i^2) $ edges whp.
\end{lemma}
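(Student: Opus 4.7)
The plan is to induct on $i$ downward from $i = k+1$. The base case is immediate, since $\cQ(s, t, k+1) = V$ forces $G[\cQ(s, t, k+1)] = G$ to have $m$ edges, which matches $\min(m/b_{k+1}, n^2/b_{k+1}^2) = m$ because $b_{k+1} = 1$. For $1 \leq i \leq k$, if $\cQ(x, y, i) = \emptyset$ the bound is trivial; otherwise, let $\tau$ denote the time of the most recent call to \ComputePathUnion{$x$, $y$, $i$} and let $G_\tau$ denote the graph at that time. Combining \Cref{lem:path_union_in_supergraph} with the inductive hypothesis applied to the $(i+1)$-parent $(x', y')$ used in Line~\ref{line:path_union_initialization} (the reasoning is essentially carried out in \Cref{lem:supergraph_of_path_union}) yields $\cQ(x, y, i) = \cP(x, y, 2 h_i, G_\tau)$, so it suffices to bound this path union at time $\tau$.

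The key observation is that the call was triggered only because at time $\tau$ no $i$-hub linked $x$ to $y$: no $z \in B_i$ satisfies both $\dist_{G_\tau}(x, z) \leq 2 h_i$ and $\dist_{G_\tau}(z, y) \leq 2 h_i$. Yet every $v \in \cP(x, y, 2 h_i, G_\tau)$ satisfies $\dist_{G_\tau}(x, v) + \dist_{G_\tau}(v, y) \leq 2 h_i$ by definition, which forces both individual distances to be at most $2 h_i$. Hence $\cP(x, y, 2 h_i, G_\tau) \cap B_i = \emptyset$, and analogously no sampled edge used to define $B_i$ lies inside the induced subgraph $G_\tau[\cP(x, y, 2 h_i, G_\tau)]$ (otherwise one of its endpoints would be an $i$-hub in the path union, contradicting the previous sentence).

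To turn these facts into size bounds I would apply \Cref{lem:hitting_set_argument} twice, once for the node sample (rate $\Theta(b_i \log(n \Delta(G)) / n)$) and once for the edge sample (rate $\Theta(b_i \log(n \Delta(G)) / m)$), taking as the sets to be hit all candidate node sets $\cP(x, y, 2 h_i, G_t)$ and all candidate edge sets $E[G_t[\cP(x, y, 2 h_i, G_t)]]$ across the $O(\Delta(G))$ time steps and the $O(n^2)$ pairs of $i$-centers. Since the number of such candidates is polynomial, a single random choice of $B_i$ whp hits every candidate of node size at least $n/b_i$ and every candidate of edge size at least $m/b_i$; contrapositively, at time $\tau$ our path union has fewer than $n/b_i$ nodes and induces fewer than $m/b_i$ edges, giving the $\min(m/b_i, n^2/b_i^2)$ bound. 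Persistence to later times is then immediate: $\cQ(x, y, i)$ is frozen between successive \ComputePathUnion calls, while $G$ only loses edges, so $|E[G[\cQ(x, y, i)]]|$ can only decrease until the next recomputation, at which point the same argument applies.

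The main obstacle I anticipate is organizing the hitting-set union bound so that one random sample of $B_i$ works simultaneously for every pair of centers and every version of the graph --- this is what dictates the logarithmic oversampling built into the initialization --- and carefully aligning the indexing of the hub level tested in the \Refresh guard with the level of the path union being bounded; everything else is bookkeeping on top of \Cref{lem:supergraph_of_path_union,lem:path_union_in_supergraph,lem:hitting_set_argument}.
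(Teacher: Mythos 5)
Your proof is correct and follows essentially the same argument as the paper: use \Cref{lem:supergraph_of_path_union} (and the monotonicity of $\cQ(x,y,i)$ under deletions) to reduce to bounding $\cP(x,y,2h_i,G)$ at the time of the triggering \ComputePathUnion call, observe that at that moment no $i$-hub links $x$ to $y$, and invoke \Cref{lem:hitting_set_argument} separately on the node sample and the edge sample to conclude the path union has at most $n/b_i$ nodes and $m/b_i$ edges whp. The only cosmetic difference is the downward-induction framing: the inductive hypothesis of this lemma itself is never actually used (what you need is precisely the content of \Cref{lem:supergraph_of_path_union}, which you correctly cite), so the induction is superfluous — the paper simply reduces to the first initialization and applies the hitting-set contradiction directly.
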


\begin{proof}
As no edges are ever added to $ G $, we only have to argue that the claim is true at the first initialization of $ \cQ (x, y, i) $ in \Cref{line:path_union_initialization}, where $ \cQ (x, y, i) $ is equal to $ \cP (x, y, 2 h_i, G) $  (see proof of \Cref{lem:subgraph_property_of_child}). 
We argue that at that time $ G [\cP (x, y, 2 h_i, G)] $ as at most $ \min(m / b_i, n^2 / b_i^2) $ edges.
Note to this end it is sufficient to show that $ G [\cP (x, y, 2 h_i, G)] $ has at most $ n / b_i $ nodes and $ m / b_i $ edges.

Suppose that $ \cP (x, y, 2 h_i, G) $ contains more than $ n / b_i $ nodes.
Then, by the random sampling of $i$-hubs, one of these nodes, say $ v $, would have been sampled whp while determining the set $ B_i $ at the initialization by \Cref{lem:hitting_set_argument}, making $ v $ an $i$-hub.
If $ G [\cP (x, y, 2 h_i, G)] $ contains more than $ m / b_i $ edges, then one of these edges, say $ (u, v) $, would have been sampled whp while determining the set $ B_i $ at the initialization by \Cref{lem:hitting_set_argument}, making both $ u$ and $ v $ $i$-hubs.

In both cases, $ \cP (x, y, 2 h_i, G) $ contains some $i$-hub $ v $, for which $ \dist_G (x, v) + \dist_G (v, y) \leq 2 h_i $ by \Cref{pro:path_union_characterization}.
But this means that $ x $ is linked to $ y $ by the $i$-hub $ v $ and the algorithm would not have executed \Cref{line:path_union_initialization}, which contradicts our assumption.
\end{proof}

\begin{lemma}\label{lem:number_of_children_containing_node}
Let $ 1 \leq i \leq k $ and consider a pair of active $(i+1)$-centers $ (x', y') $ and their $i$-children $ (x_j, y_j)_{1 \leq j \leq l} $ (which are active $i$-centers).
Then, for every node $ v $, there are at most $ q = 8 $ pairs of $i$-children $ (x_j, y_j) $ of $ (x', y') $ such that $ v \in \cQ (x_j, y_j, i) $.
\end{lemma}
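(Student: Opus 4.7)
Fix a node $v$ and let $S = \{ j : v \in \cQ(x_j, y_j, i)\}$; the plan is to show $|S| \le 8$ by bounding $\max S - \min S$ via a triangle inequality in one ambient graph and a spacing lower bound on the same graph.

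First I would pin down the common context in which all the children were born. All $i$-children of $(x', y')$ are created simultaneously, namely during the last \Refresh of $(x', y')$; let $t_0$ be that moment and set $H := G_{t_0}[\cQ(x', y', i+1)_{t_0}]$. At time $t_0$, Line~\ref{lin:determine_children} picks the children as $i$-centers $v_1, \ldots, v_l$ lying in order on a shortest $x'$-$y'$ path $\pi$ in $H$, with the crucial spacing property
\begin{equation*}
\dist_H(v_j, v_{j+1}) \geq h_i/2 \qquad \text{for every $1 \leq j \leq l-1$,}
\end{equation*}
and Line~\ref{line:path_union_initialization} sets $\cQ(x_j, y_j, i)$ to $\cP(x_j, y_j, 2h_i, H)$.

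Next I would record that, once initialized, each $\cQ(x_j, y_j, i)$ only shrinks: every later execution of Line~\ref{line:path_union_update} replaces $\cQ(x_j, y_j, i)$ by $\cP(x_j, y_j, 2h_i, G[\cQ(x_j, y_j, i)])$, which is contained in the previous $\cQ(x_j, y_j, i)$ (by taking the trivial path union of the stored set, or by invoking \Cref{pro:path_union_property_under_deletions}). Hence if $v \in \cQ(x_j, y_j, i)$ at the current time, then $v \in \cP(x_j, y_j, 2h_i, H)$; by \Cref{pro:path_union_characterization} this gives
\begin{equation*}
\dist_H(v_j, v) + \dist_H(v, v_{j+1}) \leq 2h_i \qquad \text{for every $j \in S$.}
\end{equation*}

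The endgame is then a two-sided estimate of $\dist_H(v_j, v_{j'+1})$ for $j = \min S$ and $j' = \max S$. From the displayed inequality at $j$ I keep only $\dist_H(v_j, v) \leq 2h_i$, from the one at $j'$ I keep only $\dist_H(v, v_{j'+1}) \leq 2h_i$; the directed triangle inequality in $H$ then yields $\dist_H(v_j, v_{j'+1}) \leq 4 h_i$. Conversely, since $v_j$ and $v_{j'+1}$ both lie on $\pi$ in this order and $\pi$ is a shortest path in $H$, the subpath of $\pi$ between them \emph{is} a shortest $v_j$-$v_{j'+1}$ path in $H$; combining the spacing bound across its $j' - j + 1$ consecutive pieces gives $\dist_H(v_j, v_{j'+1}) \geq (j' - j + 1) h_i/2$. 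Putting the two together yields $j' - j + 1 \leq 8$, hence $|S| \leq 8$.

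The only place requiring care is the direction-tracking in the triangle inequality: one must pick the two forward-oriented legs $v_j \to v$ and $v \to v_{j'+1}$ out of the four distances available, since mixing directions would give only the weaker bound $j' - j + 1 \leq 9$. Everything else is a direct consequence of the monotonicity statement for $\cQ$, the characterization of path unions in \Cref{pro:path_union_characterization}, and the spacing rule enforced by Line~\ref{lin:determine_children}.
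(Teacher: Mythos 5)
Your proof is correct and follows essentially the same approach as the paper: it pins down the moment $t_0$ of the last \Refresh of $(x', y')$, observes that thereafter every $\cQ(x_j, y_j, i)$ stays inside the corresponding $2h_i$-path union computed at $t_0$, and then plays off the spacing lower bound $\dist(v_j, v_{j'+1}) \geq (j'-j+1)h_i/2$ against the $4h_i$ upper bound obtained by routing through $v$ via the triangle inequality. The one cosmetic difference is that you phrase the distance estimates in the induced subgraph $H = G_{t_0}[\cQ(x', y', i+1)_{t_0}]$, where the spacing constraint from Line~\ref{lin:determine_children} genuinely lives, whereas the paper writes them directly for $\dist_G$; these coincide here because $\cQ(x', y', i+1)$ contains the relevant path unions (\Cref{lem:path_union_in_supergraph}, \Cref{lem:path_union_contained_in_higher_priority}), so the two formulations are equivalent.
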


\begin{proof}
We show that at the last time the algorithm has called \Refresh{$x'$, $y'$, $i+1$} (where it determined the current $i$-children of $ (x', y') $) there are at most $ q = 8 $ pairs of $i$-children $ (x_j, y_j) $ of $ (x', y') $ such that $ v \in \cP (x_j, y_j, 2 h_i, G) $ as, for each $ 1 \leq i \leq l $, $ \cQ (x_j, y_j, i) $ (even if initialized later) will always be a subset of this set.

Suppose that $ v $ is contained in $ q > 8 $ path unions $ \cP (x_j, y_j, 2 h_i, G) $ of $i$-children $ (x_j, y_j) $ of $ (x', y') $.
Let $ j_1, \ldots, j_q $ be the corresponding indices and assume without loss of generality that $ j_1 < j_2 \ldots < j_q $.
The children of $ (x', y') $ all lie on the shortest path from $ x' $ to $ y' $ and by the way we have selected them we even more have $ \dist_G (x_j, y_j) \geq h_i / 2 $ for all $ 1 \leq j \leq l $ and thus
\begin{equation}
\dist_G (x_{j_1}, y_{j_q}) \geq \sum_{j \in \{ j_1, \ldots, j_q \} } \dist_G (x_j, y_j) \geq \sum_{j \in \{ j_1, \ldots, j_q \} } h_i / 2 = q h_i / 2 \, . \label{eq:lower_bound}
\end{equation}

Furthermore, since $ v \in \cP (x_{j_1}, y_{j_1}, 2 h_i, G) $, $ v $ lies on a shortest path from $ x_{j_1} $ to $ y_{j_1} $ of length at most $ h_i $ in $ G $ and thus $ \dist_G (x_{j_1}, v) \leq 2 h_i $.
By the same argument we have $ \dist_G (v, y_{j_q}) \leq 2 h_i $.
We now have
\begin{equation}
\dist_G (x_{j_1}, y_{j_q}) \leq \dist_G (x_{j_1}, v) + \dist_G (v, y_{j_q}) \leq 4 h_i \label{eq:upper_bound}
\end{equation}
by the triangle inequality.
Observe that the upper bound~\ref{eq:upper_bound} and the lower bound~\ref{eq:lower_bound} contradict each other for $ q > 8 $ and thus $ q \leq 8 $.
\end{proof}

\begin{lemma}\label{lem:number_of_path_unions_containing_node}
For every $ 1 \leq i \leq k+1 $ and every node $ v $, there are at most $ q^{k-i+1} $ pairs of active $i$-centers $ (x, y) $ such that $ v \in \cQ (x, y, i) $.
\end{lemma}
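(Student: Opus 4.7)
The plan is a downward induction on $i$ from $k+1$ to $1$, combining \Cref{lem:subgraph_property_of_child} with \Cref{lem:number_of_children_containing_node}.

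For the base case $i = k+1$, the only pair of $(k+1)$-centers is $(s, t)$, so there is at most $1 = q^{0} = q^{k-(k+1)+1}$ pair $(x, y)$ of active $(k+1)$-centers with $v \in \cQ(x, y, k+1)$, which establishes the claim.

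For the inductive step, assume the claim holds for $i + 1$, and consider an arbitrary node $v$. Let $\mathcal{A}_i$ denote the set of pairs of active $i$-centers $(x, y)$ with $v \in \cQ(x, y, i)$. Since every active $i$-center pair has at least one $(i+1)$-parent by definition of ``active,'' we may fix for each $(x, y) \in \mathcal{A}_i$ some $(i+1)$-parent $(x', y')$. By \Cref{lem:subgraph_property_of_child}, $\cQ(x, y, i) \subseteq \cQ(x', y', i+1)$, so $v \in \cQ(x', y', i+1)$, and in particular $(x', y')$ is itself an active pair of $(i+1)$-centers containing $v$ in its path union. The inductive hypothesis says that there are at most $q^{k-i}$ such pairs $(x', y')$. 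Moreover, \Cref{lem:number_of_children_containing_node} guarantees that for each such $(i+1)$-parent $(x', y')$, at most $q$ of its $i$-children $(x, y)$ satisfy $v \in \cQ(x, y, i)$.

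Hence $|\mathcal{A}_i|$ is bounded by the number of (parent, child) pairs, which is at most $q^{k-i} \cdot q = q^{k-i+1}$, completing the induction. The only subtlety to watch out for is that an element of $\mathcal{A}_i$ may have several $(i+1)$-parents; however, because we only need an upper bound, arbitrarily choosing a single parent per active child gives an injection of $\mathcal{A}_i$ into the set of (parent, child) pairs counted above, so no double counting concerns arise. I do not foresee a significant obstacle: the two key lemmas already do all the combinatorial work, and the induction simply chains them.
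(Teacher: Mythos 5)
Your proof is correct and takes essentially the same approach as the paper: downward induction on $i$ with base case $(s,t)$, using \Cref{lem:subgraph_property_of_child} to place each active child's path union inside some $(i+1)$-parent's path union, and \Cref{lem:number_of_children_containing_node} to bound the number of children per parent by $q$. Your explicit observation that picking one parent per child yields an injection into (parent, child) pairs makes the bookkeeping slightly more transparent than the paper's phrasing, but it is the same argument.
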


\begin{proof}
The proof is by induction on $ i $.
The base case is $ i = k+1 $ and is trivially true because the only pair of $(k+1)$-centers is the pair $ (s, t) $.
Consider now  the case $ 1 \leq i \leq k $ and fix some node $ v $.
Consider a pair $ (x, y) $ of active $i$-centers such that $ v \in \cQ (x, y, i) $.
Let $ (x', y') $ be a pair of $(i+1)$-centers that is an $(i+1)$-parent of $ (x, y) $ (such a parent must exist because otherwise $ (x, y) $ would not be active).
Since $ \cQ (x, y, i) \subseteq \cQ (x', y', i+1) $ by \Cref{lem:subgraph_property_of_child}, $ v \in \cQ (x', y', i+1) $.
Thus, $ v \in \cQ (x, y, i) $ only if there is an $(i+1)$-parent $ (x', y') $ such that $ v \in \cQ (x', y', i+1) $.

By the induction hypothesis, the number of pairs of active $(i+1)$-centers $ (x', y') $ such that $ v \in \cQ (x', y', i+1) $ is at most $ q^{k-(i+1)+1} = q^{k-i} $.
Let $ (x', y') $ be such a pair of $(i+1)$-centers.
By \Cref{lem:number_of_children_containing_node}, the number of $i$-children $ (x, y) $ of $ (x', y') $ such that $ v \in \cQ (x, y, i) $ is at most $ q $.
Therefore the total number of active pairs of $i$-centers $ (x, y) $ such that $ v \in \cQ (x, y, i) $ is at most $ q \cdot q^{k-i} = q^{k-i+1} $ as desired.
\end{proof}

\begin{lemma}\label{lem:bound_on_number_of_parents}
For every $ 0 \leq i \leq k $, each active pair of $i$-centers has at most $ q^{k-i} \leq q^k $ $(i+1)$-parents.
\end{lemma}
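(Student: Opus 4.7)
The plan is to reduce this statement to \Cref{lem:number_of_path_unions_containing_node} via \Cref{lem:subgraph_property_of_child}, using the single node $x$ as a ``witness'' that links each $(i+1)$-parent to a path union containing $x$.

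First I would observe that if $(x,y)$ is an active pair of $i$-centers, then $\cQ(x,y,i)$ is non-empty (it was initialized in \texttt{ComputePathUnion} at the time $(x,y)$ acquired its first $(i+1)$-parent and has only been updated since), and by \Cref{pro:path_union_characterization} the node $x$ itself lies in $\cQ(x,y,i)$ because $\dist_G(x,x) + \dist_G(x,y) \leq 2h_i$ (the latter inequality holds at the moment of the most recent refresh, which is what initializes $\cQ(x,y,i)$).

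Next I would invoke \Cref{lem:subgraph_property_of_child}: for every $(i+1)$-parent $(x',y')$ of $(x,y)$, we have $\cQ(x,y,i) \subseteq \cQ(x',y',i+1)$, and in particular $x \in \cQ(x',y',i+1)$. Moreover, every such parent $(x',y')$ is itself active: pairs that have lost all of their parents are removed from their children's parent-lists by the \texttt{RemoveChildren} procedure, so an inactive $(i+1)$-pair cannot still be listed as a parent of an active $i$-pair.

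Therefore the number of $(i+1)$-parents of $(x,y)$ is at most the number of active pairs of $(i+1)$-centers $(x',y')$ satisfying $x \in \cQ(x',y',i+1)$. Applying \Cref{lem:number_of_path_unions_containing_node} to the node $v = x$ at priority level $i+1$ gives an upper bound of $q^{k-(i+1)+1} = q^{k-i}$, as claimed. The main (minor) obstacle is the bookkeeping check that inactive pairs are genuinely pruned from parent lists so that \Cref{lem:number_of_path_unions_containing_node} can be applied in its stated form; this follows from the recursive removal in \texttt{RemoveChildren}.
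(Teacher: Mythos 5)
Your overall strategy---find a witness node that lies in $\cQ(x', y', i+1)$ for every $(i+1)$-parent $(x', y')$ and then invoke \Cref{lem:number_of_path_unions_containing_node}---is exactly the paper's. However, the opening claim has a genuine error: it is \emph{not} true that $\cQ(x,y,i)$ is non-empty for every active pair of $i$-centers. Examining \Refresh{$x'$, $y'$, $i+1$} in \Cref{alg:sparse}, the call to \ComputePathUnion (the only place $\cQ(x,y,i)$ is ever populated) is made only when the new child pair is \emph{not} linked by a hub; a pair that becomes active while linked by a hub has $\cQ(x,y,i)=\emptyset$. Your derivation that $x \in \cQ(x,y,i)$ then also collapses, because the inclusion $\cP(x,y,2h_i,G) \subseteq \cQ(x,y,i)$ from \Cref{lem:supergraph_of_path_union} is stated only under the hypothesis $\cQ(x,y,i)\neq\emptyset$, so that part of the argument is circular.

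That said, the conclusion survives and the repair is small: drop the detour through $\cQ(x,y,i)$ and \Cref{lem:subgraph_property_of_child} altogether. At the moment $(x',y')$ is made an $(i+1)$-parent of $(x,y)$, both $x$ and $y$ lie on the shortest path $\pi$ computed in $G[\cQ(x',y',i+1)]$, so $x \in \cQ(x',y',i+1)$ holds directly; applying \Cref{lem:number_of_path_unions_containing_node} to $v=x$ at level $i+1$ then gives the $q^{k-i}$ bound. The paper's own proof glosses over the same point (``fix some node $v \in \cQ(x,y,i)$'' silently assumes non-emptiness) and is repaired the same way. The direct route is also the only one that covers $i=0$, where $\cQ(x,y,0)$ is not even part of the data structure and the witness must be taken from $\cQ(x',y',1)$.
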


\begin{proof}
Consider any active pair of $i$-centers $ (x, y) $ and fix some node $ v \in \cQ (x, y, i) $.
Let the pair of $(i+1)$-centers $ (x', y') $ be an $(i+1)$-parent of $ (x, y) $.
By \Cref{lem:subgraph_property_of_child} $ \cQ (x, y, i) \subseteq \cQ (x', y', i+1) $.
Thus, $ v \in \cQ (x', y', i+1) $ for every $(i+1)$-parent of $ (x, y) $.
If $ (x, y) $ had more than $ q^{k-i} $ $(i+1)$-parents, then $ v $ would be contained in more than $ q^{k-i} = q^{k-(i+1)+1} $ path unions $ \cQ (x', y', i+1) $ of $ (i+1)$-parents $ (x', y') $, contradicting \Cref{lem:number_of_path_unions_containing_node}.
\end{proof}

\paragraph*{Maintaining Hub Links.}
For every $i$-hub, we maintain an incoming and an outgoing ES-tree of depth $ 2 h_i = 2 n / c_i $, which takes time $ O (m n / c_i) $.
As there are $ \tilde O (b_i) $ hubs of priority $ i $, the total time needed for maintaining all these ES-trees is $ \tilde O (\sum_{1 \leq i \leq k} b_i m n /c_i) $.
For every pair of $i$-centers $ (x, y) $, the list of hubs linking $ x $ to $ y $ is initialized by iterating over all $i$-hubs.
As there are $ \tilde O(c_i) $ $i$-centers and $ \tilde O(b_i) $ $i$-hubs, this takes time $ \tilde O (\sum_{1 \leq i \leq k} b_i c_i^2) $.
Every time the level of an $i$-center $ x $ in the ES-tree of an $i$-hub $ z $ exceeds $ 2 h_i $, we have to remove $ z $ from the set of hub links for every possible partner $ y $ of $ x $.
As this event can occur only once for every $i$-center $ x $ and every $i$-hub $ z $ over the course of the algorithm, maintaining the sets of linking hubs takes time $ \tilde O (\sum_{1 \leq i \leq k} b_i c_i^2) $.

\paragraph*{Computing Path Unions.}
We now argue about the time needed for maintaining the sets $ \cQ (x, y, i) $ for all pairs of $i$-centers $ (x, y) $ and each $ 1 \leq i \leq k $.
We will show that we can pay for this cost by charging $ O (\min(m / b_{i+1}, n^2 / b_{i+1}^2)) $ to every pair of $i$-centers and $ O (q^k \min(m / b_{i+1}, n^2 / b_{i+1}^2)) $ to every refresh operation of the form \Refresh{$ x' $, $ y' $, $i+1$} for some pair of $(i+1)$-centers $ (x', y') $.
Note that when $ i = k $ then $ \min(m / b_{k+1}, n^2 / b_{k+1}^2) = m $ since we have set $ b_{k+1} = 1 $.

Fixing some pair of $i$-centers $ (x, y) $, we first bound the cost for the first initialization of $ \cQ (x, y, i) $, as performed in Line~\ref{line:path_union_initialization} of \Cref{alg:sparse}.
There we have to compute $ \cP (x, y, 2 h_i, G [\cQ (x', y', i+1)]) $, where $ (x', y') $ is an $(i+1)$-parent of $ (x, y) $.
By \Cref{lem:path_union_computation} this takes time proportional to the number of edges in $ G [\cQ (x', y', i+1)] $.
If $ i = k $, then $ \cQ (x', y', i+1) = V $ (and actually $ x' = s $ and $ y' = t) $ and thus computing $ \cP (x, y, 2 h_i, G) $ for all pairs of $k$-centers $ (x, y) $ takes time $ \tilde O (c_k^2 m) $.
If $ 1 \leq i \leq k-1 $, then $ G [\cQ (x', y', i+1)] $ has $ O (\min(m / b_{i+1}, n^2 / b_{i+1}^2)) $ edges by \Cref{lem:size_of_path_union}.
Thus, the first computation of $ \cQ (x, y, i) $ for all pairs of $i$-centers $ (x, y) $ takes time $ \tilde O (c_i^2 \min(m / b_{i+1}, n^2 / b_{i+1}^2)) $.

Now consider the cost of computing $ \cQ (x, y, i) $ after it has already been initialized for the first time.
Let $ \cQ' (x, y, i) = \cP (x, y, 2 h_i, G [\cQ (x, y, i)]) $ denote the updated path union as computed in Line~\ref{line:path_union_update} of \Cref{alg:sparse}.
The cost of computing $ \cQ' (x, y, i) $ is proportional to $ | E [\cQ (x, y, i)] | $, the number of edges in $ G [cQ (x, y, i)] $ before the recomputation.
Since $ \cQ' (x, y, i) \subseteq \cQ (x, y, i) $, we have $ E [\cQ (x, y, i)] = E [\cQ' (x, y, i)] \cup E [\cQ (x, y, i)] \setminus E [\cQ' (x, y, i)] $.
Note that $ \cQ' (x, y, i) $ is equal to $ \cP (x, y, 2 h_i, G) $. 
We pay for this cost by charging $ O(| E [\cQ (x, y, i)] \setminus E [\cQ' (x, y, i)] |) $ to the pair $ (x, y) $ and $ | E [\cP (x, y, 2 h_i, G)] | $ to the refresh operation on the $(i+1)$-parent of $ (x, y) $ which causes the recomputation.

As the initial size of $ G [\cQ (x, y, i)] $ is $ O (\min(m / b_{i+1}, n^2 / b_{i+1}^2)) $ and we only charge edges to the pair $ (x, y) $ that will never be contained in $ G [\cQ (x, y, i)] $ anymore, the total time charged to $ (x, y) $ is $ O (\min(m / b_{i+1}, n^2 / b_{i+1}^2)) $, which, over all path union computations, results in a cost of $ O (\sum_{1 \leq i \leq k-1 } c_i^2 \min(m / b_{i+1}, n^2 / b_{i+1}^2) + c_k^2 m) $.
It remains to bound the total cost charged to each refresh operation \Refresh{$ x' $, $ y' $, $ i+1 $} for some pair of $ (i+1) $-centers $ (x', y') $.
Below we will then separately analyze the total cost of the refresh operations.
During the refresh operation we recompute $ \cQ (x, y, i) $ for every $i$-child $ (x, y) $ of $ (x', y') $, for which we have to pay $ O (| E [\cP (x, y, 2 h_i, G)] |) $ per child.
By \Cref{lem:subgraph_property_of_child} $ \cQ (x, y, i) \subseteq \cQ (x', y', i+1) $ for every $i$-child $ (x, y) $ of $ (x', y') $.
By \Cref{lem:number_of_path_unions_containing_node}, each node is contained in at most $ q^k $ path unions of $i$-children of $ (x', y') $.
Therefore every edge of $ G [\cQ (x', y', i+1)] $ is contained in the subgraphs of $ G $ induced by the $i$-path union of at most $ q^k $ $i$-children of $ (x', y') $.
Thus, the total cost charged to the refresh operation is $ O (q^k | E [\cQ (x', y', i+1)] |) $, which is $ O (q^k \min(m / b_{i+1}, n^2 / b_{i+1}^2)) $.

\paragraph*{Cost of Refresh.}
Excluding the recursive calls, each refresh operation of the form \Refresh{$ x $, $ y $, $ i $}, where $ 1 \leq i \leq k + 1 $ and $ (x, y) $ is a pair of $i$-centers, is dominated by two costs: (1) the time needed for computing the shortest path from $ x $ to $ y $ in $ G [\cQ (x, y, i)] $ and (2) the time needed for computing $ {\cQ (x', y', i-1)} $ for every ${(i-1)}$-child $ (x', y') $ of $ (x, y) $ if $ i \geq 2 $.
As $ G [\cQ (x, y, i)] $ has at most $ \min(m / b_{i+1}, n^2 / b_{i+1}^2) $ edges by \Cref{lem:size_of_path_union}, computing the shortest path takes time $ \tilde O(\min(m / b_{i+1}, n^2 / b_{i+1}^2)) $.
We have argued above that to pay for step (2) the time we charge to that particular refresh is $ O(q^k \min(m / b_{i+1}, n^2 / b_{i+1}^2)) $.
It remains to analyze how often the refresh operation is called.

We say that a pair of $i$-centers $ (x, y) $ (for $ 0 \leq i \leq k $) causes a refresh if the algorithm calls \Refresh ($x'$, $y'$, $i+1$) (in line~\ref{line:causing_refresh}), where $ (x', y') $ is an $(i+1)$-parent of $ (x, y) $, after detecting that the distance from $ x $ to $ y $ in $ G [\cQ (x, y, i)] $ is more than $ h_i $.
If a fixed pair of $i$-centers $ (x, y) $ causes a refresh after some deletion in the graph, it will do so for each of its $(i+1)$-parents.
By \Cref{lem:bound_on_number_of_parents} the number of $(i+1)$-parents is at most $ q^k $.
Note that each pair of $i$-centers $ (x, y) $ will only cause these $ q^k $ refreshes of its parents once.
At the time the algorithm makes $ (x, y) $ the child of some $ (i+1) $-center we have $ \dist_G (x, y) \leq h_i $ whp as by the initial random sampling of $i$-centers, every shortest path consisting of $ h_i/2 - 1 $ edges contains an $i$-center whp (\Cref{lem:hitting_set_argument}).
Furthermore, the pair $ (x, y) $ will never cause a refresh anymore in the future as the refresh implies that $ \dist_G (x, y) > h_i $ by \Cref{lem:distance_increase_path_union} and thus $ (x, y) $ will never be active anymore. 

Now whenever we refresh a pair of $(i+1)$-centers $ (x', y') $, we charge the running time of $ \tilde O (q^k \min(m / b_{i+1}, n^2 / b_{i+1}^2)) $ to the $i$-child $ (x, y) $ causing the refresh.
By the argument above, each pair of $i$-centers will be charged at most $ q^k $ times.
Thus, the total time needed for all refresh operations on pairs of $(i+1)$-centers over the course of the algorithm is $ \tilde O (q^{2 k} c_i^2 \min(m / b_{i+1}, n^2 / b_{i+1}^2)) $ if $ i \geq 1 $.
For $ i = 0 $, we can bound this by $ \tilde O (q^{2k} m \cdot \min(m / b_1, n^2 / b_1^2)) $ because every node is a $0$-center and thus a pair of $0$-centers $ (x, y) $ can only be active if the graph contains the edge $ (x, y) $.

\paragraph*{Total Running Time.}\label{sec:running_time_st_reach}
Putting everything together, the total running time of our algorithm using $ k $ layers is
\begin{multline*}
\tilde O \left( \sum_{1 \leq i \leq k} b_i c_i^2 + \sum_{1 \leq i \leq k}  \frac{b_i m n}{c_i} + q^{2k} m \cdot \min \left( \frac{m}{b_1}, \frac{n^2}{b_1^2} \right) 
\right. \\
\left.
+ \sum_{1 \leq i \leq k-1} q^{2k} c_i^2 \min \left( \frac{m}{b_{i+1}}, \frac{n^2}{b_{i+1}^2} \right) + q^{2k} c_k^2 m \right) \, .
\end{multline*}
We first balance the terms to obtain a running time of $ \tilde O (m^{5/4} n^{1/2}) $.
We achieve this by setting the parameters to $ k = \lceil \log{\log{m}} \rceil $ and, for every $ 1 \leq i \leq k $,
\begin{align*}
b_i &= \frac{m^{\frac{3 \cdot (2^k - 2^{(i-1)})}{2^{(k+2)}-3}}}{n^{\frac{2^{(k+1)} - 2^i}{2^{(k+2)}-3}}} \\
c_i &= 2^{k-i} m^{\frac{2^{(k+1)} - 3 \cdot 2^{(i-1)}}{2^{(k+2)}-3}} n^{\frac{2^i - 1}{2^{(k+2)}-3}} \, .
\end{align*}
With this choice of the parameters we get (for all $ 1 \leq i \leq k $), $ b_i \geq b_{i+1} $, $ c_i \geq 2 c_{i+1} $,
\begin{align*}
\frac{b_i m n}{c_i} &\leq \frac{m^2}{b_1} = m^{\frac{5 \cdot 2^k-3}{2^{(k+2)}-3}} n^{\frac{2^{(k+1)} - 2}{2^{(k+2)}-3}} \, , \\
\frac{c_i^2 m}{b_{i+1}} &\leq c_k^2 m \leq 2^{2k} m^{\frac{5 \cdot 2^k-3}{2^{(k+2)}-3}} n^{\frac{2^{(k+1)} - 2}{2^{(k+2)}-3}} \, \text{, and} \\
b_i c_i^2 &\leq 2^{2k} \frac{m^{\frac{7 \cdot 2^k - 9 \cdot 2^{(i-1)}}{2^{(k+2)}-3}}}{n^{\frac{2^{(k+1)} - 3 \cdot 2^i + 2}{2^{(k+2)}-3}}} \leq 2^{2k} m^{\frac{5 \cdot 2^k-3}{2^{(k+2)}-3}} n^{\frac{2^{(k+1)} - 2}{2^{(k+2)}-3}} \, ,
\end{align*}
where the last inequality holds due to $ m \leq n^2 $ and $ i \geq 0 $.
Thus, the total update time is
\begin{equation*}
\tilde O \left( k 2^{2k} q^{2k} m^{\frac{5 \cdot 2^k-3}{2^{(k+2)}-3}} n^{\frac{2^{(k+1)} - 2}{2^{(k+2)}-3}} \right) \, .
\end{equation*}
Now observe that
\begin{equation*}
m^{\frac{5 \cdot 2^k-3}{2^{(k+2)}-3}} n^{\frac{2^{(k+1)} - 2}{2^{(k+2)}-3}} = m^{\frac{5}{4} + \frac{3}{4} \cdot \frac{1}{2^{(k+2)}-3}} n^{\frac{1}{2} - \frac{1}{2} \frac{1}{2^{(k+2)}-3}} \leq m^{\frac{5}{4} + \frac{3}{4} \cdot \frac{1}{2^{(k+2)}-3}} n^{\frac{1}{2}} \, .
\end{equation*}
By our choice of $ k = \lceil \log{\log{m}} \rceil $ we have
\begin{align*}
m^{\frac{3}{4} \cdot \frac{1}{2^{(k+2)}-3}} \leq m^{\frac{1}{2^k}} \leq m^{\frac{1}{2^{\log{\log{m}}}}} = m^{\frac{1}{\log{m}}} = 2 \, .
\end{align*}
Thus, the running time of our algorithm is $ \tilde O (k q^{2k} m^{5/4} n^{1/2}) $.
With $ k = \lceil \log{\log{m}} \rceil $ and $ q = 8 $ this is $ \tilde O (m^{5/4} n^{1/2}) $.

We now balance the terms to obtain a running time of $ O (m^{4/3} n^{2/3 + o(1)}) $.
We achieve this by setting the parameters to $ k = \lfloor \sqrt{\log{n} / \log{q}} \rfloor $ and, for every $ {1 \leq i \leq k} $,
\begin{align*}
b_i &= \frac{n^{(k + 2i - 1) / (3k + 1)}}{m^{(4i - k - 3) / (6k + 2)}} \\
c_i &= 2^{k-i} n^{2i / (3k + 1)} m^{(3k + 1 - 4i) / (6k + 2)} \, .
\end{align*}
With this choice of the parameters we get (for all $ 1 \leq i \leq k $), $ b_i \geq b_{i+1} $, $ c_i \geq 2 c_{i+1} $,
\begin{align*}
\frac{b_i m n}{c_i} &\leq \frac{m^2}{b_1} = n^{4k / (3k + 1)} m^{(2k + 2) / (3k + 1)} \, , \\
\frac{c_i^2 m}{b_{i+1}} &\leq c_k^2 m \leq 2^{2k} n^{4k / (3k + 1)} m^{(2k + 2) / (3k + 1)} \, ,
\end{align*}
and
\begin{align*}
b_i c_i^2 &\leq 2^{2k} n^{(k + 6i - 1) / (3k + 1)} m^{(7k + 5 - 12i) / (6k + 2)} \\
 &= 2^{2k} n^{(k - 1) / (3k + 1)} m^{(3k + 1) / (6k + 2)} m^{(2k + 2) / (3k + 1)} n^{6i/(3k + 1)} m^{-6i / (3k + 1)} \\
 &\leq  2^{2k} n^{4k / (3k + 1)} m^{(2k + 2) / (3k + 1)} \, ,
\end{align*}
where the last inequality holds due to $ n \leq m $ and $ m \leq n^2 $.
Thus, the total update time is
\begin{equation*}
\tilde O \left( k 2^{2k} q^{2k} n^{4k / (3k + 1)} m^{(2k + 2) / (3k + 1)} \right) \, .
\end{equation*}
Now observe that
\begin{equation*}
m^{\frac{2k + 2}{3k + 1}} n^{\frac{4k}{3k + 1}} = m^{\frac{2}{3} + \frac{4}{3} \cdot \frac{1}{3k + 1}} n^{\frac{4}{3} - \frac{4}{3} \cdot\frac{1}{3k + 1}} \leq m^{\frac{2}{3} + \frac{4}{3} \cdot \frac{1}{3k + 1}} n^{\frac{4}{3}} \leq m^{\frac{2}{3} + \frac{1}{k}} n^{\frac{4}{3}} 
\end{equation*}
By our choice of $ k = \lfloor \sqrt{\log{n} / \log{q}} \rfloor $ we have $ 2^k \leq q^k \leq n^{1/k} $.
Since $ k \leq \log{n} $ and $ q = 8 $ we thus obtain a total update time of $ O (m^{2/3} n^{4/3 + O (1 / \sqrt{\log{n}})}) = O(m^{2/3} n^{4/3 + o(1)}) $.

\begin{theorem}\label{thm:st_reachability}
There is a decremental \ssssr algorithm with constant query time and total update time
\begin{equation*}
\tilde O (\min (m^{5/4} n^{1/2}, m^{2/3} n^{4/3 + o(1)})) = O (m n^{6/7 + o(1)})
\end{equation*}
that is correct with high probability against an oblivious adversary.
\end{theorem}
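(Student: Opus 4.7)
I would decompose the total update time into three additive parts matching the three main jobs of the algorithm, and bound each by invoking the structural lemmas already established (Lemmas on path-union sizes, child overlap, parent counts, and distance monotonicity).

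First I would bound the \textbf{hub maintenance cost}. For each level $i$, there are $\tilde O(b_i)$ $i$-hubs, each running an in- and out-ES-tree to depth $2h_i = 2n/c_i$, giving $\tilde O(b_i m n / c_i)$ per level. The hub-link lists are initialized by iterating over all (hub, center, center) triples, which costs $\tilde O(b_i c_i^2)$; under deletions a center's level in a hub's tree can only pass $2h_i$ once, so the amortized maintenance of the link lists is absorbed in the same $\tilde O(b_i c_i^2)$ bound. Summing over $i$ yields $\tilde O(\sum_i (b_i m n / c_i + b_i c_i^2))$.

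Next, for the \textbf{path-union computation cost}, I would separately account for (a) the first time $\cQ(x,y,i)$ is built and (b) every subsequent recomputation in \ComputePathUnion. Using \Cref{lem:size_of_path_union} and \Cref{lem:subgraph_property_of_child}, the graph $G[\cQ(x,y,i)]$ has at most $\min(m/b_i, n^2/b_i^2)$ edges, and the child's path union is contained in the parent's path union. For (a), \Cref{lem:path_union_computation} gives cost $\tilde O(c_i^2 \min(m/b_{i+1}, n^2/b_{i+1}^2))$ per level (and $\tilde O(c_k^2 m)$ at the top). For (b), I amortize: write $E[\cQ]=E[\cQ']\sqcup (E[\cQ]\setminus E[\cQ'])$, charge the edges that drop out permanently to the pair $(x,y)$ (adding only $O(\min(m/b_{i+1}, n^2/b_{i+1}^2))$ over the entire execution per pair), and charge the surviving edges $E[\cQ']$ to the parent refresh that triggered the recomputation — using \Cref{lem:number_of_path_unions_containing_node} to guarantee that each edge of $G[\cQ(x',y',i+1)]$ shows up in at most $q^k$ child path unions, so the refresh is charged $O(q^k \min(m/b_{i+1}, n^2/b_{i+1}^2))$.

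Then I would bound the \textbf{refresh cost}. The key observation is that a pair of $i$-centers $(x,y)$ can \emph{cause} at most $q^k$ refreshes of its parents (by \Cref{lem:bound_on_number_of_parents}), and does so at most once in its lifetime: by the sampling of $i$-centers and \Cref{lem:hitting_set_argument} we start with $\dist_G(x,y) \leq h_i$, and \Cref{lem:distance_increase_path_union} shows that after a refresh-triggering event $\dist_G(x,y) > h_i$, whence $(x,y)$ is never active again. Each refresh of an $(i+1)$-pair itself costs $\tilde O(q^k \min(m/b_{i+1}, n^2/b_{i+1}^2))$ (BFS inside $G[\cQ]$ plus the amortized charge from step (b)). Multiplying by the $\tilde O(c_i^2)$ pairs that can ever be active (or by $O(m)$ at level $0$, since an active $0$-pair is an edge) gives $\tilde O\bigl(q^{2k}\sum_i c_i^2 \min(m/b_{i+1}, n^2/b_{i+1}^2) + q^{2k} m \cdot \min(m/b_1, n^2/b_1^2)\bigr)$.

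Combining the three parts leaves a total of
\[
\tilde O\!\left(\sum_i \tfrac{b_i m n}{c_i} + \sum_i b_i c_i^2 + q^{2k} m \min\!\bigl(\tfrac{m}{b_1}, \tfrac{n^2}{b_1^2}\bigr) + q^{2k} \sum_i c_i^2 \min\!\bigl(\tfrac{m}{b_{i+1}}, \tfrac{n^2}{b_{i+1}^2}\bigr)\right).
\]
The final step — and the main obstacle — is choosing geometric sequences $b_i, c_i$ that simultaneously balance all terms at every level while keeping the $(2q)^{O(k)}$ overhead manageable. For the first bound I would pick $k = \lceil \log\log m \rceil$ and geometric $b_i, c_i$ tuned against $\min(m/b_{i+1})$, so the exponent of $m$ collapses from $5/4 + O(1/2^k)$ to $5/4$ with only a constant blowup; for the second bound I would pick $k = \lfloor\sqrt{\log n/\log q}\rfloor$ and balance against the $n^2/b_{i+1}^2$ side, making $q^{O(k)} = n^{o(1)}$ and driving the exponent to $4/3 + o(1)$. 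Verifying monotonicity $b_i \geq b_{i+1}$, $c_i \geq 2 c_{i+1}$, and that the sums are dominated by the $i=1$ (or $i=k$) term is a mechanical but delicate computation that I expect to be the fiddliest part of the proof.
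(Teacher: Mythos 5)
Your proposal reconstructs the paper's own running-time analysis essentially verbatim: the same three-part decomposition (hub ES-trees and hub-link lists, amortized path-union recomputation charged partly to the pair and partly to the triggering refresh, refresh cost bounded via the once-per-pair argument from \Cref{lem:distance_increase_path_union} and the $q^k$ parent bound from \Cref{lem:bound_on_number_of_parents}), the same resulting expression for the total, and the same two parameter regimes $k = \lceil \log\log m\rceil$ and $k = \lfloor\sqrt{\log n / \log q}\rfloor$. This is correct and is the approach taken in the paper.
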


\subsection{Extension to Single-Source Reachability}

The algorithm above maintains reachability from a single source $ s $ to a single sink $ t $.
We can easily modify the algorithm to maintain reachability for a set of source-sink pairs $ (s_i, t_i)_{1 \leq i \leq p} $.
We simply run $ p $ instances of the algorithm, each with a different source-sink pair.
Note however that the algorithm can use the same set of hubs and centers for all instances.
Thus, the cost of maintaining the ES-trees does \emph{not} have to be multiplied by $ p $.
We therefore get a total running time of
\begin{multline*}
\tilde O \left( \sum_{1 \leq i \leq k} b_i c_i^2 + \sum_{1 \leq i \leq k}  \frac{b_i m n}{c_i} + p q^{2k} m \cdot \min \left( \frac{m}{b_1}, \frac{n^2}{b_1^2} \right) 
\right. \\
\left.
+ \sum_{1 \leq i \leq k-1} p q^{2k} c_i^2 \min \left( \frac{m}{b_{i+1}}, \frac{n^2}{b_{i+1}^2} \right) + p q^{2k} c_k^2 m \right) \, .
\end{multline*}
By setting
\begin{align*}
b_i &= \frac{m^{\frac{3 \cdot (2^k - 2^{(i-1)})}{2^{(k+2)}-3}} p^{\frac{2^{(k+1)} - 2^i}{2^{(k+2)}-3}}}{n^{\frac{2^{(k+1)} - 2^i}{2^{(k+2)}-3}}} \\
c_i &= \frac{m^{\frac{2^{(k+1)} - 3 \cdot 2^{(i-1)}}{2^{(k+2)}-3}} n^{\frac{2^i - 1}{2^{(k+2)}-3}}}{p^{\frac{2^i - 1}{2^{(k+2)} - 3}}}
\end{align*}
we get
\begin{equation*}
p c_k^2 m = p m^2 / b_1 = b_i m n / c_i = p c_i^2 m / b_{i+1} = m^{\frac{5 \cdot 2^k-3}{2^{(k+2)}-3}} n^{\frac{2^{(k+1)} - 2}{2^{(k+2)}-3}} p^{\frac{2^{(k+1)} - 1}{2^{(k+2)}-3}} \, .
\end{equation*}
\begin{align*}
\frac{b_i m n}{c_i} &\leq \frac{p m^2}{b_1} = p^{\frac{2^{(k+1)} - 1}{2^{(k+2)}-3}} m^{\frac{5 \cdot 2^k-3}{2^{(k+2)}-3}} n^{\frac{2^{(k+1)} - 2}{2^{(k+2)}-3}} \, , \\
\frac{p c_i^2 m}{b_{i+1}} &\leq p c_k^2 m \leq 2^{2k} p^{\frac{2^{(k+1)} - 1}{2^{(k+2)}-3}} m^{\frac{5 \cdot 2^k-3}{2^{(k+2)}-3}} n^{\frac{2^{(k+1)} - 2}{2^{(k+2)}-3}} \, \text{, and} \\
b_i c_i^2 &\leq 2^{2k} \frac{p^{\frac{2^{(k+1)} - 3 \cdot 2^i + 2}{2^{(k+2)}-3}} m^{\frac{7 \cdot 2^k - 9 \cdot 2^{(i-1)}}{2^{(k+2)}-3}}}{n^{\frac{2^{(k+1)} - 3 \cdot 2^i + 2}{2^{(k+2)}-3}}} \leq 2^{2k} p^{\frac{2^{(k+1)} - 1}{2^{(k+2)}-3}} m^{\frac{5 \cdot 2^k-3}{2^{(k+2)}-3}} n^{\frac{2^{(k+1)} - 2}{2^{(k+2)}-3}} \, ,
\end{align*}
and by setting
\begin{align*}
b_i &= \frac{p^{(k + 1 - i)/ (3k + 1)} n^{(k + 2i - 1) / (3k + 1)}}{m^{(4i - k - 3) / (6k + 2)}} \\
c_i &= \frac{n^{2i / (3k + 1)} m^{(3k + 1 - 4i) / (6k + 2)}}{p^{i / (3k + 1)}} \, .
\end{align*}
we get
\begin{align*}
\frac{b_i m n}{c_i} &\leq \frac{p m^2}{b_1} = p^{(k + 1) / (3k + 1)} n^{4k / (3k + 1)} m^{(2k + 2) / (3k + 1)} \, , \\
\frac{p c_i^2 m}{b_{i+1}} &\leq p c_k^2 m \leq 2^{2k} p^{(k + 1) / (3k + 1)} n^{4k / (3k + 1)} m^{(2k + 2) / (3k + 1)} \, ,
\end{align*}
and
\begin{align*}
b_i c_i^2 &= 2^{2k} p^{(k + 1 - 3i) / (3k + 1)} n^{(k + 6i - 1) / (3k + 1)} m^{(7k + 5 - 12i) / (6k + 2)} \\
 &\leq 2^{2k} p^{(k + 1) / (3k + 1)} n^{4k / (3k + 1)} m^{(2k + 2) / (3k + 1)} \, .
\end{align*}
By the same choices of $ k $ as in the $s$-$t$ reachability algorithm above we get a running times of $ \tilde O (p^{1/2} m^{5/4} n^{1/2}) $ and $ O (p^{1/3} m^{2/3} n^{4/3 + o(1)}) $, respectively, for maintaining reachability between $ p $ source-sink pairs.

\begin{corollary}
There is a decremental algorithm for maintaining reachability of $ p $ source-sink pairs with constant query time and total update time
\begin{equation*}
\tilde O (\min ( p^{1/2} m^{5/4} n^{1/2}, p^{1/3} m^{2/3} n^{4/3 + o(1)}))
\end{equation*}
that is correct with high probability against an oblivious adversary.
\end{corollary}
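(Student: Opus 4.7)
The plan is to run $p$ parallel copies of the single-source single-sink reachability data structure from \Cref{thm:st_reachability}, one per pair $(s_i,t_i)$, but to share the global randomized infrastructure across all copies so that the dominant sampling-based costs are paid only once. Concretely, I would fix a single family of hub sets $B_1 \supseteq \cdots \supseteq B_k$ and center sets $C_0 \supseteq \cdots \supseteq C_{k+1}$ (with $(s_i,t_i)$ forced into $C_{k+1}$ for every $i$, which only increases $|C_{k+1}|$ by an additive $O(p)$ that is absorbed) and a single set of incoming/outgoing ES-trees rooted at the hubs. Under this sharing, the hub-ES-tree cost $\tilde O(\sum_i b_i m n / c_i)$ and the hub-link bookkeeping cost $\tilde O(\sum_i b_i c_i^2)$ are paid \emph{once in total}, independently of $p$, because the centers and hubs do not depend on which pair is queried.

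The per-instance costs are exactly those analyzed in \Cref{sec:running_time_st_reach} that were charged against path unions, refresh operations, and edges of the source-sink specific hierarchy: computing the top-level path unions $\cQ(s_i, t_i, k+1)=V$ and the initial BFS-style shortest paths, maintaining the recursively constructed $\cQ(\cdot,\cdot,i)$ along the chain from $(s_i,t_i)$ down to the $0$-centers, and performing the refresh operations. Each of these contributes the same bound as in the single-pair case, multiplied by $p$. This yields exactly the running-time expression stated just before the corollary, namely
\begin{equation*}
\tilde O \Big( \sum_{1 \leq i \leq k} b_i c_i^2 + \sum_{1 \leq i \leq k} \frac{b_i m n}{c_i} + p q^{2k} m \cdot \min\Big(\frac{m}{b_1}, \frac{n^2}{b_1^2}\Big) + \sum_{1 \leq i \leq k-1} p q^{2k} c_i^2 \min\Big(\frac{m}{b_{i+1}}, \frac{n^2}{b_{i+1}^2}\Big) + p q^{2k} c_k^2 m \Big) \, .
\end{equation*}

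The remaining work is to rebalance the parameters $b_i$ and $c_i$ in the presence of the extra $p$ factor in the source-sink dependent terms, while keeping $b_i \geq b_{i+1}$ and $c_i \geq 2 c_{i+1}$. The two families of settings displayed in the text before the corollary, which replace $b_i$ by $b_i \cdot p^{(\cdots)/(\cdots)}$ and $c_i$ by $c_i / p^{(\cdots)/(\cdots)}$ relative to the single-pair settings of \Cref{sec:running_time_st_reach}, are chosen precisely so that the shared terms $b_i m n/c_i$ and $b_i c_i^2$ grow by exactly the same factor as the per-pair terms $p c_i^2 m/b_{i+1}$ and $p m^2/b_1$. A direct substitution (as carried out in the displayed inequalities) shows that each summand is bounded by $p^{(2^{k+1}-1)/(2^{k+2}-3)} m^{(5\cdot 2^k-3)/(2^{k+2}-3)} n^{(2^{k+1}-2)/(2^{k+2}-3)}$ in the first case, and by $p^{(k+1)/(3k+1)} n^{4k/(3k+1)} m^{(2k+2)/(3k+1)}$ in the second. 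Choosing $k = \lceil \log\log m \rceil$ and $k = \lfloor \sqrt{\log n / \log q} \rfloor$ respectively, and using $p \leq n$ so that $p^{(k+1)/(3k+1)} \leq p^{1/3} \cdot n^{o(1)}$ and similarly $p^{(2^{k+1}-1)/(2^{k+2}-3)} \leq p^{1/2} \cdot n^{o(1)}$, collapses the two expressions to $\tilde O(p^{1/2} m^{5/4} n^{1/2})$ and $O(p^{1/3} m^{2/3} n^{4/3+o(1)})$.

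The main obstacle I anticipate is not any single calculation but the accounting bookkeeping: one must verify that no per-pair quantity secretly reintroduces a factor of $p$ into the ``shared'' terms. In particular, the hubs' ES-trees must truly be oblivious to which source-sink pair is being queried (they are, since they only maintain distances from hubs up to depth $2 h_i$), and the hub-linking data structures must be keyed only on $(x,y) \in C_i \times C_i$ rather than on pairs $(s_i,t_i)$ (also true by construction). Once this is checked, the argument reduces to the parameter optimization above, which mirrors \Cref{sec:running_time_st_reach} verbatim with $p$ carried symbolically through the inequalities.
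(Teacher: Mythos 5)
Your proposal matches the paper's argument essentially verbatim: run $p$ instances of the \ssssr data structure while sharing the hubs, centers, and hub ES-trees across all instances so that the hub-related terms $b_i m n/c_i$ and $b_i c_i^2$ are paid only once, multiply the per-pair path-union and refresh costs by $p$, and then rebalance $b_i, c_i$ (with the same choices of $k$) so that the $p$-exponents collapse to $p^{1/2}$ and $p^{1/3}$ up to $n^{o(1)}$ factors. The sanity check you flag at the end, that the hub ES-trees and hub-link bookkeeping are indeed pair-oblivious, is the correct and sufficient justification for the sharing, and the rest is the same parameter optimization as in \Cref{sec:running_time_st_reach}.
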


Using the reduction of \Cref{thm: s-t to single source} this immediately implies single-source reachability algorithm with a total update time of $ \tilde O (m^{7/6} n^{2/3}) $ (we balance the terms $ p^{1/2} m^{5/4} n^{1/2} $ and $ m n / p $ by setting $ p = n^{1/3} / m^{1/6} $) and $ O ( m^{3/4} n^{5/4 + o(1)}) $ (balance $ p^{1/3} m^{2/3} n^{4/3} $ and $ m n / p $ by setting $ p = m^{1/4} / n^{1/4} $).

\begin{corollary}
There is a decremental \ssr algorithm with constant query time and total update time
\begin{equation*}
\tilde O (\min (m^{7/6} n^{2/3}, m^{3/4} n^{5/4 + o(1)})) = O (m n^{9/10 + o(1)})
\end{equation*}
that is correct with high probability against an oblivious adversary.
\end{corollary}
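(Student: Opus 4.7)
The plan is to combine the multi-source-sink pair reachability algorithm established in the preceding corollary with the sink-sampling reduction of \Cref{thm: s-t to single source}, specialized to the unweighted (reachability) setting where $\Delta(G) \leq m$, so that $\Delta$ in that reduction is effectively constant and the overhead term is $\tilde O(mn/k)$.

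First, I would observe that an algorithm maintaining reachability on $p$ source-sink pairs trivially yields an algorithm maintaining reachability from a single source $s$ to $p$ prescribed sinks $t_1,\dots,t_p$ (just set $s_i = s$ for all $i$). The preceding corollary gives such an algorithm with total update time $T(p,m,n) = \tilde O(\min(p^{1/2} m^{5/4} n^{1/2},\ p^{1/3} m^{2/3} n^{4/3+o(1)}))$ and constant query time. Plugging this into the reduction of \Cref{thm: s-t to single source} (with approximation factor $\alpha = 1$, which is reachability) produces a single-source reachability algorithm with total update time $T(\tilde O(k), m, n) + \tilde O(mn/k)$, where $k$ is the target number of sinks to sample. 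Since the reduction samples sinks and then uses shortcut edges together with an $h$-hop ES-tree on the shortcut graph (with $h = n/k$), the reduction applies verbatim to reachability.

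It then remains to balance $k$ against each of the two expressions for $T(\cdot,m,n)$. For the first bound, I solve $k^{1/2} m^{5/4} n^{1/2} = mn/k$, which gives $k = n^{1/3}/m^{1/6}$ (and $k \leq n$ holds since $m \geq n$ can only make $k$ smaller; otherwise $k \leq n^{1/3} \leq n$), and substituting yields total update time $\tilde O(m^{7/6} n^{2/3})$. For the second bound, I solve $k^{1/3} m^{2/3} n^{4/3} = mn/k$, which gives $k = m^{1/4}/n^{1/4} \leq n$, and substituting yields $O(m^{3/4} n^{5/4+o(1)})$. Taking the minimum of the two gives the claimed bound; the secondary expression $O(mn^{9/10+o(1)})$ follows by noting that whichever branch dominates is controlled by the threshold $m = n^{7/5}$, at which both equal $m n^{9/10}$.

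The only subtle point is verifying that the hypotheses of \Cref{thm: s-t to single source} are met by the multi-source-sink algorithm: its correctness is with high probability against an oblivious adversary, its query time is constant, and (crucially) the total update time depends on the number of pairs only through the factors $p^{1/2}$ and $p^{1/3}$, so the $\tilde O(k \log n) = \tilde O(k)$ blowup in the number of sinks incurred by the sampling step of the reduction is absorbed into the $\tilde O(\cdot)$ notation. I do not foresee a real obstacle here since both the reduction and the base algorithm already work against oblivious adversaries with high probability, and the union bound over $O(\log n)$ extra random choices preserves correctness.
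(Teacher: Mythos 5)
Your proposal is correct and matches the paper's own argument exactly: fix all sources equal to $s$ in the $p$-pair algorithm, feed it into the reduction of \Cref{thm: s-t to single source} (noting $\Delta = O(1)$ for the unweighted reachability case), and balance $p^{1/2}m^{5/4}n^{1/2}$ against $mn/p$ to get $p = n^{1/3}/m^{1/6}$ and hence $\tilde O(m^{7/6}n^{2/3})$, and balance $p^{1/3}m^{2/3}n^{4/3+o(1)}$ against $mn/p$ to get $p = m^{1/4}/n^{1/4}$ and hence $O(m^{3/4}n^{5/4+o(1)})$. The paper's proof is a one-liner performing the same balancing with the same parameter choices, so there is nothing to add.
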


Furthermore, the reduction of \Cref{thm:strongly connected component} gives a decremental algorithm for maintaining strongly connected components.

\begin{corollary}
There is a decremental \scc algorithm with constant query time and expected total update time
\begin{equation*}
\tilde O (\min (m^{7/6} n^{2/3}, m^{3/4} n^{5/4 + o(1)})) = O (m n^{9/10 + o(1)})
\end{equation*}
that is correct with high probability against an oblivious adversary.
\end{corollary}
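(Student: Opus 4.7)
The plan is to obtain this corollary by a direct black-box application of the reduction in \Cref{thm:strongly connected component} (from decremental \scc to decremental \ssr) to the \ssr algorithm of the preceding corollary. Since all pieces are in place, essentially no new work is required beyond verifying that the hypotheses of the reduction are satisfied.

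First, I would rewrite the \ssr total update time of $\tilde O(\min(m^{7/6} n^{2/3}, m^{3/4} n^{5/4 + o(1)}))$ in the form $O(n t_1(m, n) + m t_2(m, n))$ required by \Cref{thm:strongly connected component}. Setting $t_1(m, n) = 1$ and $t_2(m, n) = \tilde O(\min(m^{1/6} n^{2/3},\ n^{5/4 + o(1)}/m^{1/4}))$ suffices: both functions are at least $1$ and non-decreasing in $m$ and $n$ throughout the relevant range $n \leq m \leq n^2$, and the additive $n t_1 = n$ term is dominated by $m t_2$ since $m \geq n$.

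Second, I would verify the technical precondition of the reduction, namely that after each edge deletion the \ssr data structure returns the set of nodes that have just lost reachability from (respectively to) the source. As the paper notes in the paragraph preceding \Cref{thm:strongly connected component}, the \ssr algorithms developed in this work satisfy this property, because they are ultimately built on ES-trees augmented with shortcut edges in the style of \Cref{thm: s-t to single source}, and an ES-tree naturally reports the nodes whose level has just exceeded the depth bound.

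Plugging these ingredients into \Cref{thm:strongly connected component} yields a decremental \scc algorithm with constant query time and expected total update time $\tilde O((n t_1 + m t_2) \log n) = \tilde O(\min (m^{7/6} n^{2/3}, m^{3/4} n^{5/4 + o(1)}))$, with the extra $\log n$ factor absorbed into $\tilde O(\cdot)$ and the $n^{o(1)}$ slack. High-probability correctness against an oblivious adversary and constant query time are inherited from the underlying \ssr algorithm, and the running time guarantee is in expectation as stipulated by the reduction. The only mildly delicate point, rather than a real obstacle, is that for each density regime one must invoke the reduction on whichever of the two \ssr variants is faster and then combine the two resulting bounds via $\min$; since both variants expose the same interface and satisfy the same technical precondition, this is immediate.
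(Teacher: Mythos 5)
Your overall strategy matches the paper's exactly: the paper obtains this corollary simply by invoking \Cref{thm:strongly connected component} on the \ssr bound from the immediately preceding corollary, so the black-box reduction is the right (and intended) route.

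However, there is a genuine error in the verification of the hypotheses. You set $t_1(m,n) = 1$ and $t_2(m,n) = \tilde O\bigl(\min(m^{1/6}n^{2/3},\, n^{5/4+o(1)}/m^{1/4})\bigr)$ and assert that $t_2$ is non-decreasing in $m$ and $n$. That is false: $n^{5/4+o(1)}/m^{1/4}$ is strictly decreasing in $m$, and consequently so is the minimum on the range $m \geq n^{7/5}$ where that branch is active. Since \Cref{thm:strongly connected component} crucially uses monotonicity of $t_1, t_2$ to bound $t_\ell(m_j, n_j)$ by $t_\ell(m,n)$ on subcomponents, your decomposition does not satisfy its hypotheses. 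The fix is simply to split the $m^{3/4}n^{5/4+o(1)}$ bound into the \emph{node} term rather than the \emph{edge} term: for that regime take $t_1(m,n) = m^{3/4}n^{1/4+o(1)}$ (which is $\geq 1$ and non-decreasing in both arguments) and $t_2(m,n) = 1$, yielding $n t_1 + m t_2 = m^{3/4}n^{5/4+o(1)} + m$. For the $m^{7/6}n^{2/3}$ regime your choice $t_1 = 1$, $t_2 = m^{1/6}n^{2/3}$ is fine. Applying the reduction to each \ssr variant separately (as you already suggest at the end) and taking the better of the two resulting \scc algorithms then gives the claimed $\tilde O(\min(m^{7/6}n^{2/3}, m^{3/4}n^{5/4+o(1)}))$ total update time.
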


\section{Approximate Shortest Path}\label{sec:approx_shortest_paths}

In the following we assume that $ G $ is a weighted graph with integer edge weights from $ 1 $ to $ W $ undergoing edge deletions.
We are given some parameter $ 0 < \epsilon \leq 1 $ and our goal is to maintain $ (1 + \epsilon) $-approximate shortest paths.
Similar to the reachability algorithm, we first give an algorithm for maintaining a $ (1 + \epsilon) $-approximate shortest path from a given source $ s $ to a given sink $ t $.
We then extend it to an algorithm for maintaining $ (1 + \epsilon) $-approximate shortest paths from a fixed source node to all other nodes.
We give an algorithm that maintains a $ (1 + \epsilon)^{2k + 2} $-approximate shortest path from $ s $ to $ t $.
Note that by running the whole algorithm with $ \epsilon' = \epsilon / (4k + 2) $ instead of $ \epsilon $ we obtain a $ (1 + \epsilon) $-approximate shortest path.
Throughout this section we assume that $ \epsilon \geq 1 / \log^c{n} $ and $ W \leq 2^{\log^c{n}} $ for some constant $ c $

\subsection{Preliminaries}\label{sec:SSSP_preliminaries}

The first new aspect for approximate shortest paths, compared to reachability, is that we maintain, for each $ 1 \leq i \leq k+1 $ and every pair of $i$-centers $ (x, y) $, an index $ r (x, y, i) $ such that $ \dist_G^h (x, y) \geq (1 + \epsilon)^{r (x, y, i)} $ and, whenever $ (x, y) $ is active, $ \dist_G (x, y) \leq (1 + \epsilon)^{r (x, y, i) + 2i} $.
Thus, $ r (x, y, i) $ indicates the current range of the distance from $ x $ to $ y $.
Roughly speaking, the algorithm will maintain a sequence of centers and for each pair of consecutive centers $ (x, y) $ a path from $ x $ to $ y $ of weight corresponding to the range of the distance from $ x $ to $ y $.
This will be done in a way such that whenever the algorithm cannot find such a path from $ x $ to $ y $, then the range has increased.
By charging this particular increase in the distance from $ x $ to $ y $, it can afford updating the sequence of centers.
As in the case of reachability such paths from $ x $ to $ y $ will either be found via a hub or in a path union graph.

The second new aspect is that now we maintain approximate shortest paths up to $ h_i $ \emph{hops} for certain pairs of $i$-centers, whereas in the reachability algorithm $ h_i $ was an unweighted distance.
This motivates the following modification of the path union that limits allowed paths to a fixed number of hops.
\begin{definition}
For all nodes $ x $ and $ y $ of a graph $ G $ and all integers $ h \geq 1 $ and $ D \geq 1 $, the \emph{$ h $-hop path union} $ \cP^h (x, y, D, G) $ is the set containing every node that lies on some path $ \pi $ from $ x $ to $ y $ in $ G $ that has $ | \pi | \leq h $ edges (hops) and weight $ w (\pi, G) \leq D $.
\end{definition}

Observe that $ \cP (x, y, D, G) = \cP^n (x, y, R, G)  $ by \Cref{def:path_union}.
Dealing with exact bounded-hop paths is usually computationally more expensive than dealing with unbounded paths, even in the static setting.\footnote{For example, one can compute the shortest paths among all paths with at most $ h $ edges from a source node in time $ O (m h) $ by running the first $ h $ iterations of the Bellman-Ford algorithm. This is not efficient enough for our purposes, as we would only like to spend nearly linear time.}
The bounded hop path unions will henceforth only be used in the analysis of the algorithm.
In the algorithm itself we will compute unbounded (i.e., $n$-hop) path unions in graphs with modified edge weights.
For every $ h \geq 1 $ and every $ r \geq 0 $ we define a graph $ \tilde{G}^{h, r} $ that has the same nodes and edges as $ G $ and in which we round the weight of every edge $ (u, v) $ to the next multiple of $ \epsilon (1 + \epsilon)^r / h $ by setting
\begin{equation*}
w_{\tilde{G}^{h, r}} (u, v) = \left\lceil \frac{w_G (u, v) \cdot h}{\epsilon (1 + \epsilon)^r} \right\rceil \cdot \frac{\epsilon (1 + \epsilon)^r}{h} \, .
\end{equation*}
Note that the definition of $ \tilde{G}^{h, r} $ implicitly depends on $ \epsilon $ as well.
To gain some intuition for these weight modifications, observe that for every path $ \pi $ in $ G $ consisting of $ h' $ edges we have
\begin{equation*}
w_{\tilde{G}^{h, r}} (\pi) \leq  w_G (\pi) + h' \cdot \frac{\epsilon (1 + \epsilon)^r}{h} \, .
\end{equation*}
Thus, for every path $ \pi $ of weight $ w_G (\pi) \geq (1 + \epsilon)^r $ consisting of at most $ h $ edges we have
\begin{equation*}
w_{\tilde{G}^{h, r}} (\pi) \leq (1 + \epsilon)^r + \epsilon (1 + \epsilon)^r = (1 + \epsilon)^{r+1} \, .
\end{equation*}
As additionally $ w_G (u, v) \leq w_{\tilde{G}^{h, r}} (u, v) $ for every edge $ (u, v) $ we obtain the following guarantees.
\begin{lemma}
Let $ h \geq 1 $ and $ r \geq 0 $.
For all pairs of nodes $ x $ and $ y $ we have $ \dist_G (x, y) \leq \dist_{\tilde{G}^{h, r}} (x, y) $ and if $ \dist_G^h (x, y) \geq (1 + \epsilon)^r $, then $ \dist_{\tilde{G}^{h, r}} (x, y) \leq (1 + \epsilon)^{r + 1} $.
\end{lemma}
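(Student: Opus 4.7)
The first claim is a direct monotonicity observation. The map $w_G(u,v) \mapsto w_{\tilde G^{h,r}}(u,v)$ is obtained by applying the ceiling operator after a nonnegative scaling, so $w_{\tilde G^{h,r}}(u,v) \geq w_G(u,v)$ for every edge. Summing along any $x$-to-$y$ path gives $w_{\tilde G^{h,r}}(\pi) \geq w_G(\pi)$, and taking the minimum over all such paths yields $\dist_{\tilde G^{h,r}}(x,y) \geq \dist_G(x,y)$.

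Before planning the second claim, I observe that the hypothesis as typeset, $\dist_G^h(x,y) \geq (1+\epsilon)^r$, is only a lower bound and on its own cannot imply any upper bound on $\dist_{\tilde G^{h,r}}(x,y)$, since one can make $w_G$ arbitrarily large while keeping $\dist_G^h(x,y) \geq (1+\epsilon)^r$. This appears to be a typographical slip: the informal derivation in the paragraph immediately preceding the lemma establishes precisely the estimate $w_{\tilde G^{h,r}}(\pi) \leq (1+\epsilon)^{r+1}$ for paths $\pi$ of at most $h$ edges whose $G$-weight is at most (not at least) $(1+\epsilon)^r$, and this is also the regime the rounded graph is designed for (the parameter $r$ tracks the current scale of the distance \emph{from above}). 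I therefore plan to prove the intended statement: if $\dist_G^h(x,y) \leq (1+\epsilon)^r$, then $\dist_{\tilde G^{h,r}}(x,y) \leq (1+\epsilon)^{r+1}$.

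The core step is a single per-edge additive error bound. Applying $\lceil z \rceil < z+1$ to $z = w_G(u,v) \cdot h / (\epsilon(1+\epsilon)^r)$ and rescaling by $\epsilon(1+\epsilon)^r/h$ gives
\begin{equation*}
w_{\tilde G^{h,r}}(u,v) \leq w_G(u,v) + \frac{\epsilon(1+\epsilon)^r}{h}
\end{equation*}
for every edge $(u,v)$. I would then pick a witness path $\pi^*$ from $x$ to $y$ with $|\pi^*| \leq h$ edges and $w_G(\pi^*) = \dist_G^h(x,y)$, and sum the per-edge bound across the at most $h$ edges of $\pi^*$ to obtain
\begin{equation*}
w_{\tilde G^{h,r}}(\pi^*) \leq w_G(\pi^*) + h \cdot \frac{\epsilon(1+\epsilon)^r}{h} \leq (1+\epsilon)^r + \epsilon(1+\epsilon)^r = (1+\epsilon)^{r+1},
\end{equation*}
so that $\dist_{\tilde G^{h,r}}(x,y) \leq w_{\tilde G^{h,r}}(\pi^*) \leq (1+\epsilon)^{r+1}$.

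There is no substantive obstacle. The only point worth emphasizing is that the rounding granularity $\epsilon(1+\epsilon)^r/h$ is calibrated so that the additive error accumulated over $h$ hops is exactly $\epsilon(1+\epsilon)^r$, which is precisely the slack needed to absorb an additive error into a $(1+\epsilon)$-multiplicative factor at scale $(1+\epsilon)^r$; that is the one place where the specific choice of parameters enters.
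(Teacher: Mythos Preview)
Your diagnosis of the typo is correct: the paper's informal derivation preceding the lemma (and the lemma statement itself) should have $\leq$ rather than $\geq$, as is confirmed by how the estimate is actually applied later (e.g., in the proof of Invariant~(I1)). With that correction, your argument---the per-edge additive rounding error of $\epsilon(1+\epsilon)^r/h$ summed over at most $h$ edges of a witness path---is exactly the paper's own derivation, which it presents informally just before stating the lemma rather than as a separate proof.
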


Now observe that $h$-hop path unions can be computed ``approximately'' by computing the (unbounded) path union in $ \tilde{G}^{h, r} $, which can be done in nearly linear time (\Cref{lem:path_union_computation}).
In our case the unbounded path union in $ \tilde{G}^{h, r} $ will contain the $h$-hop path union and we will later argue that all unbounded path unions computed by our algorithm have small size.
In our analysis the following lemma will have the same purpose as \Cref{lem:path_union_in_supergraph} for the $s$-$t$ reachability algorithm.

\begin{lemma}\label{lem:computing_path_union_in_supergraph}
Let $ (x, y) $ be a pair of nodes, let $ h \geq 1 $, $ r \geq 0 $, and $ \alpha \geq 1 $, and let $ \cQ \subseteq V $ be a set of nodes such that $ \cP^h (x, y, \alpha (1 + \epsilon)^r, G) \subseteq \cQ $.
Then $ \cP^h (x, y, \alpha (1 + \epsilon)^r, G) \subseteq \cP (x, y, \alpha (1 + \epsilon)^{r+1}, \tilde{G}^{h, r} [\cQ]) $.
\end{lemma}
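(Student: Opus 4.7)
The plan is to unpack the definition of the left-hand side, exhibit an explicit witness path for each node, and then verify that this same path still witnesses membership in the right-hand side after (a) restricting to the induced subgraph $\tilde{G}^{h,r}[\cQ]$ and (b) reweighting from $G$ to $\tilde{G}^{h,r}$. The key numerical step is to control the additive blowup caused by rounding the edge weights up in $\tilde{G}^{h,r}$.

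Concretely, fix an arbitrary $v \in \cP^h(x, y, \alpha(1+\epsilon)^r, G)$. By definition of $\cP^h$ there exists a path $\pi$ from $x$ to $y$ in $G$ passing through $v$ with $|\pi| \leq h$ and $w_G(\pi) \leq \alpha(1+\epsilon)^r$. Every node on $\pi$ belongs to $\cP^h(x, y, \alpha(1+\epsilon)^r, G)$ (since $\pi$ itself witnesses their membership), and by the hypothesis this set is contained in $\cQ$. Therefore $\pi$ is a path in the induced subgraph $\tilde{G}^{h,r}[\cQ]$, which shares its edges with $G[\cQ]$ but uses the rounded weights.

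Next I would bound $w_{\tilde{G}^{h,r}}(\pi)$. Per the paragraph preceding the lemma, each edge weight in $\tilde{G}^{h,r}$ exceeds its weight in $G$ by at most $\epsilon(1+\epsilon)^r/h$. Summing this slack over the at most $h$ edges of $\pi$ yields
\begin{equation*}
w_{\tilde{G}^{h,r}}(\pi) \;\leq\; w_G(\pi) + |\pi|\cdot\frac{\epsilon(1+\epsilon)^r}{h} \;\leq\; \alpha(1+\epsilon)^r + \epsilon(1+\epsilon)^r \;\leq\; \alpha(1+\epsilon)^r(1+\epsilon) \;=\; \alpha(1+\epsilon)^{r+1},
\end{equation*}
where the last inequality uses $\alpha \geq 1$. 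Since $\pi$ is a path from $x$ to $y$ through $v$ in $\tilde{G}^{h,r}[\cQ]$ with weight at most $\alpha(1+\epsilon)^{r+1}$, the definition of path union (\Cref{def:path_union}, with the unbounded, i.e.\ $n$-hop, version on the right-hand side) gives $v \in \cP(x, y, \alpha(1+\epsilon)^{r+1}, \tilde{G}^{h,r}[\cQ])$, as required.

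There is essentially no obstacle: the only thing to be careful about is that the right-hand side is an \emph{unbounded} path union, so we do not need to track the number of hops of $\pi$ in $\tilde{G}^{h,r}[\cQ]$, only its weight; and the factor $\alpha \geq 1$ is exactly what absorbs the additive rounding slack $\epsilon(1+\epsilon)^r$ into the multiplicative factor $(1+\epsilon)$.
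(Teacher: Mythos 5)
Your proof is correct and follows essentially the same approach as the paper: exhibit the witness path $\pi$, note that all its nodes lie in $\cQ$, and bound the rounded weight $w_{\tilde{G}^{h,r}}(\pi)$ by $\alpha(1+\epsilon)^{r+1}$ using the per-edge rounding slack of $\epsilon(1+\epsilon)^r/h$ over at most $h$ edges. Your explicit remark that $\alpha \geq 1$ is what allows the additive slack $\epsilon(1+\epsilon)^r$ to be absorbed into $\alpha\epsilon(1+\epsilon)^r$ is a small but genuine improvement in exposition over the paper, which leaves this step implicit.
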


\begin{proof}
Let $ v \in \cP^h (x, y, \alpha (1 + \epsilon)^r, G) $, i.e., there is a path $ \pi $ from $ x $ to $ y $ in $ G $ with at most $ h $ edges and weight at most $ \alpha (1 + \epsilon)^r $ containing $ v $.
The weight of $ \pi $ in $ \tilde{G}^{h, r} $ is
\begin{equation*}
w_{\tilde{G}^{h, r}} (\pi) \leq w_G (u, v) + h \cdot \frac{\epsilon (1 + \epsilon)^r}{h} \leq \alpha (1 + \epsilon)^r + \epsilon (1 + \epsilon)^r \leq \alpha (1 + \epsilon)^{r+1} \, .
\end{equation*}
By our assumption all nodes of $ \pi $ are contained in $ \cQ $ and thus $ \pi $ is a path in $ \tilde{G}^{h, r} [\cQ] $ of weight at most $ \alpha (1 + \epsilon)^{r+1} $.
Thus, all nodes of $ \pi $, including $ v $, are contained in $ \cP (x, y, \alpha (1 + \epsilon)^{r+1}, \tilde{G}^{h, r} [\cQ]) $.
\end{proof}

The advantage of the graph $ \tilde{G}^{h, r} $ is that its edge weights are multiples of $ \epsilon (1 + \epsilon)^r / h $.
Thus, after \emph{scaling down} the edge weights by a factor of $ h / (\epsilon (1 + \epsilon)^r) $ we still have integer weights.
This observation can be used to speed up the pseudopolynomial algorithm of Even and Shiloach at the cost of a $ (1 + \epsilon) $-approximation~\cite{Bernstein09,Madry10,Bernstein13}.

\subsection{Algorithm Description}

Our approximate $s$-$t$ shortest path algorithm has a parameter $ k \geq 1 $ and for each $ 1 \leq i \leq k $ parameters $ b_i \leq n $ and $ c_i \leq n $.
We also set $ b_{k+1} = 1 $, $ c_0 = n $, $ c_{k+1} = 1 $, and $ h_i = n / c_i $ for all $ 0 \leq i \leq k + 1 $.
Our algorithm determines sets of nodes $ B_1 \supseteq B_2 \supseteq \dots \supseteq B_k $ and $ C_0 \supseteq C_1 \supseteq \dots \supseteq C_{k+1} $ by random sampling as described in \Cref{sec:st_reach_algo_description} for the $s$-$t$ reachability algorithm, i.e., $i$-hubs are obtained by sampling nodes with probability $ a b_i \ln{(\Delta (G))} / n $ and edges with probability $ a b_i \ln{(\Delta (G))} / m $ and $i$-centers are obtained by sampling nodes with probability $ a c_i \ln{(\Delta (G))} / n $.
Note that, as we may assume $ \Delta (G) \leq m \log_{1 + \epsilon} W $ by \Cref{lem:few_updates} and we further assume that $ \epsilon \geq 1 / \log^c{n} $ and $ W \leq 2^{\log^c{n}} $ for some constant $ c $, the number of $i$-hubs is $ \tilde O (b_i) $ whp and the number of $i$-centers is $ \tilde O (c_i) $ whp by \Cref{lem:hitting_set_argument}.
We can make the algorithm fail with small probability if these sets are larger.
For each $ 1 \leq i \leq k $ we order the set of $i$-hubs in an arbitrary way such that $ B_i = \{ b_1, b_2, \ldots, b_{| B_i |} \} $.
Our algorithm uses the following variables and data structures:
\begin{itemize}
\item An estimate $ \delta (s, t) $ of the distance from $ s $ to $ t $ in $ G $
\item For every pair of $i$-centers $ (x, y) $ (with $ 1 \leq i \leq k $) an index $ r (x, y, i) $ such that $ 0 \leq r (x, y, i) \leq \log_{1 + \epsilon} (nW) $ for the current range of the distance from $ x $ to $ y $.
\item For every $i$-hub $ z $ (with $ 1 \leq i \leq k $) and every $ 0 \leq r \leq \lfloor \log_{1 + \epsilon} (nW) \rfloor $ an ES-tree up to depth $ (1 + \epsilon)^{r + i + 1} $ in $ \tilde{G}^{8 h_i / \epsilon, r} $ (\emph{outgoing} tree of $ z $) and an ES-tree up to depth $ (1 + \epsilon)^{r + i + 1} $ in the reverse graph of $ \tilde{G}^{8 h_i / \epsilon, r} $ (\emph{incoming} tree of~$ z $).
\item For every pair of $i$-centers $ (x, y) $ (with $ 1 \leq i \leq k $) an index $ l (x, y, i) $ such that $ 1 \leq l (x, y, i) \leq | B_i | + 1 $ that either gives the index $ l (x, y, i) $ of the $i$-hub $ b_{l (x, y, i)} \in B_i $ that links $ x $ to $ y $ for the current value of $ r (x, y, i) $ or, if no such $i$-hub exists, is set to $ l (x, y, i) = | B_i | + 1 $.
\item For every pair of $i$-centers $ (x, y) $ (with $ 1 \leq i \leq k $) a set of nodes $ \cQ (x, y, i) $, which is initially empty.
\item For every pair of $i$-centers $ (x, y) $ (with $ 0 \leq i \leq k $) a list of pairs of $(i+1)$-centers called $(i+1)$-parents of $ (x, y) $. 
\item For every pair of $i$-centers $ (x, y) $ (with $ 1 \leq i \leq k+1 $) a list of pairs of $(i-1)$-centers called $(i-1)$-children of $ (x, y) $.
\item For every $ 1 \leq i \leq k $, a list $ A_i $ of $i$-centers called \emph{active $i$-centers}.
\end{itemize}

Besides the generalizations for weighted graphs introduced in \Cref{sec:SSSP_preliminaries}, our approximate $s$-$t$ shortest path algorithm also deviates from the $s$-$t$ reachability algorithm in the way it maintains the hub links.
The main difference to before is that now we maintain the hub links only between pairs of active centers.
In the new algorithm we say, for $ 1 \leq i \leq k $, that an $i$-hub $ z \in B_i $ \emph{links} an $i$-center $ x $ to an $i$-center $ y $ if
\begin{equation*}
\dist_{\tilde{G}^{8 h_i / \epsilon, r (x, y, i)}} (x, z) + \dist_{\tilde{G}^{8 h_i / \epsilon, r (x, y, i)}} (z, y) \leq (1 + \epsilon)^{r (x, y, i) + 2i+1} \, .
\end{equation*}
The algorithm can check whether $ z $ links $ x $ to $ y $ by looking up $ \dist_{\tilde{G}^{8 h_i / \epsilon, r (x, y, i)}} (x, z) $ and $ \dist_{\tilde{G}^{8 h_i / \epsilon, r (x, y, i)}} (z, y) $ in the incoming and outgoing ES-tree of $ z $, respectively.
For every $ 1 \leq i \leq k $, and every pair of active $i$-centers $ x $ and $ y $, the algorithm uses the index $ l (x, y, i) $ to maintain an $i$-hub $ b_{l (x, y, i)} $ that links $ x $ to $ y $.
If the current $i$-hub $ b_{l (x, y, i)} $ does not link $ x $ to $ y $ anymore, the algorithm increases $ l (x, y, i) $ until either such an $i$-hub is found, or $ l (x, y, i) = | B_i | + 1 $.
As soon as $ l (x, y, i) = | B_i | + 1 $, the algorithm computes the path union between $ x $ and $ y $ and maintains an approximate shortest path from $ x $ to $ y $ in the subgraph induced by the path union.
If the algorithm does not find such a path anymore it increases the value of $ r (x, y, i) $ and sets $ l (x, y, i) $ to $ 1 $ again, i.e., making the first $i$-hub the candidate for linking $ x $ to $ y $ for the new value of $ r (x, y, i) $.
The generalization of the $s$-$t$ reachability algorithm is now straightforward and the pseudocode can be found in \Cref{alg:s_t_shortest_path}.
Before the first update the algorithms executes Procedure \Initialize.
After each increase of the weight of an edge $ (u, v) $ or after its deletion the algorithm calls \Update{$u$, $v$}.

\begin{algorithm}
\thisfloatpagestyle{empty} 
\caption{Algorithm for decremental approximate $s$-$t$ shortest path}
\label{alg:s_t_shortest_path}

\Procedure{\Refresh{$x$, $y$, $i$}}{
	\RemoveChildren{$x$, $y$, $i$}\;
	\eIf{$ i = k+1 $}{
		Compute shortest path $ \pi $ from $ s $ to $ t $ in $ G $ and set $ \delta (s, t) \gets (1 + \epsilon)^{2k+1} \dist_G (s, t) $\;
	}{
		Compute shortest path $ \pi $ from $ x $ to $ y $ in $ \tilde{G}^{h_i, r (x, y, i)} [\cQ (x, y, i) (x, y)] $\;
		\If(\tcp*[f]{Range has increased}){$ \dist_{\tilde{G}^{h_i, r (x, y, i)} [\cQ (x, y, i)} (x, y)] > (1 + \epsilon)^{r (x, y, i) + 2} $}{ \label{lin:check_for_radius_increase}
			$ r (x, y, i) \gets r (x, y, i) + 1 $\; \label{lin:set_radius}
			$ l (x, y, i) \gets 1 $\;
			\lForEach{$(i+1)$-parent $ (x', y') $ of $ (x, y) $}{
				\Refresh{$x'$, $y'$, $i+1$}
			}
			\KwBreak\;
		}
	}
	Determine $ (i-1) $-centers $ v_1, \ldots, v_l $ in order of appearance on $ \pi $\;
	\ForEach{$ 1 \leq j \leq l-1 $}{
		Add $ (v_j, v_{j+1}) $ to $ A_{i-1} $\; 
		Make $ (v_j, v_{j+1}) $ an $(i-1)$-child of $ (x, y) $ and $ (x, y) $ an $i$-parent of $ (v_j, v_{j+1}) $\;
		\If{$ i \geq 1 $}{
			\UpdateHubLinks{$v_j$, $v_{j+1}$, $i-1$, $r (v_j, v_{j+1}, i-1)$}\;
			\If{$ l_i (v_j, v_{j+1}) = |B_i| + 1 $}{
				$ \cQ (v_j, v_{j+1}, i-1) \gets \cP (v_j, v_{j+1}, (1 + \epsilon)^{r (v_j, v_{j+1}, i-1) + 2i-1}, \tilde{G}^{8 h_{i-1} / \epsilon, r (v_j, v_{j+1}, i-1)} [\cQ (x, y, i)]) $\; \label{lin:path_union_refresh}
				\Refresh{$v_j$, $v_{j+1}$, $i-1$}\;
			}
		}
	}
}

\Procedure{\UpdateHubLinks{$x$, $y$, $i$, $r$}}{
	\While{$ l (x, y, i) \leq |B_i| $ \KwAnd $ \dist_{\tilde{G}^{8 h_i / \epsilon, r (x, y, i)}} (x, b_{l (x, y, i)}) + \dist_{\tilde{G}^{8 h_i / \epsilon, r (x, y, i)}} (b_{l (x, y, i)}, y) > (1 + \epsilon)^{r (x, y, i) + 2i+1} $}{
		$ l (x, y, i) \gets l (x, y, i) + 1 $\;
		\If{$ l (x, y, i) = |B_i| + 1 $}{
			Let $ (x', y') $ be any $(i+1)$-parent of $ (x, y) $\;
			$ \cQ (x, y, i)\gets \cP (x, y, (1 + \epsilon)^{r (x, y, i) + 2i+1}, \tilde{G}^{8 h_i / \epsilon, r (x, y, i)} [\cQ (x', y', i+1)]) $\;\label{lin:first_path_union_computation}
		}
	}
}

\Procedure{\RemoveChildren{$x$, $y$, $i$}}{
	\ForEach{$(i-1)$-child $ (x', y') $ of $ (x, y) $}{
		Remove $ (x', y') $ from $(i-1)$-children of $ (x, y) $ and $ (x, y) $ from $i$-parents of $ (x', y') $\;
		\If{$ (x', y') $ has no $i$-parents anymore}{
			Remove $ (x', y') $ from $ A_{i-1} $\;
			\lIf{$ i \geq 1 $}{\RemoveChildren{$x'$, $y'$, $i-1$}}
		}
	}
}

\Procedure{\Initialize{}}{
	\lForEach{$ 1 \leq i \leq k $ and all $i$-centers $ (x, y) $}{
		$ r (x, y, i) \gets 0 $,
		$ l (x, y, i) \gets 1 $,
		$ \cQ (x, y, i) \gets \emptyset $
	}
	\lForEach{$ 1 \leq i \leq k $}{
		$ A_i \gets \emptyset $
	}
	\Refresh{$s$, $t$, $k+1$}\; \label{lin:refresh_in_initialization}
}

\Procedure(\tcp*[f]{Deletion or weight increase involving $ (u, v) $}){\Update{$u$, $v$}}{
	\lForEach{$1$-parent $ (x, y) $ of $ (u, v) $}{
		\Refresh{$x$, $y$, $1$}
	}
	\ForEach{$ 1 \leq i \leq k $ and $ (x, y) \in A_i $}{
		\UpdateHubLinks{$x$, $y$, $i$, $r (x, y, i)$}\;
		\lIf{$ l (x, y, i) = |B_i| + 1 $}{
			\Refresh{$x$, $y$, $i$}
		}
	}
}
\end{algorithm}

\subsection{Correctness}

To establish the correctness of the algorithm we will show that
\begin{equation*}
\dist_G (s, t) \leq \delta (s, t) \leq (1 + \epsilon)^{2k + 1} \dist_G (s, t) \, .
\end{equation*}
By running the whole algorithm with $ \epsilon' = \epsilon / (4k + 2) $ (instead of $ \epsilon) $, we then obtain a $ (1 + \epsilon) $-approximation (see \Cref{lem:exponential_inequality} below).

We will achieve this by showing that with the value of $ r (x, y, i) $ of a pair of $i$-centers $ (x, y) $ the algorithm keeps track of the range of the distance from $ x $ to $ y $.
In particular, we will show that $ \dist_G^{h_i} (x, y) \geq (1 + \epsilon)^{r (x, y, i)} $ for every pair of $i$-centers $ (x, y) $ and $ \dist_G (x, y) \leq (1 + \epsilon)^{r (x, y, i) + 2i + 1} $ for every pair of active $i$-centers $ (x, y) \in A_i $.
Both inequalities are obtained by a  ``bottom up'' analysis of the algorithm.
In order to prove the lower bound on $ \dist_G^{h_i} (x, y) $ we have to show that the subgraph $ \cQ (x, y, i) $ computed by the algorithm indeed contains the corresponding path union.
In fact those two invariants rely on each other and we will prove them mutually.

\begin{lemma}\label{lem:lower_bound_on_hop_distance}
Algorithm~\ref{alg:s_t_shortest_path} correctly maintains the following invariants whp:
\begin{itemize}
\item[(I1)] For every $ 1 \leq i \leq k $ and every pair of $i$-centers $ (x, y) $,
\begin{equation*}
\dist_G^{h_i} (x, y) \geq (1 + \epsilon)^{r (x, y, i)} \, .
\end{equation*}
\item[(I2)] For every $ 1 \leq i \leq k $ and every pair of active $i$-centers $ (x, y) \in A_i $,
\begin{equation*}
\cP^{8 h_i / \epsilon} (x, y, (1 + \epsilon)^{r (x, y, i) + 2i}, G) \subseteq \cQ (x, y, i) \, .
\end{equation*}
\end{itemize}
\end{lemma}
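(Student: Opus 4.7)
The plan is a mutual induction that establishes (I1) and (I2) simultaneously along the sequence of algorithm operations, with an inner induction on the priority level $i$ from $k$ down to $1$ (since (re)computing $\cQ(x, y, i)$ reads data from level $i+1$, and proving (I1) at level $i$ uses (I2) at the same level). The base case is trivial: just after \Initialize{} every $r(x,y,i) = 0$ and $A_i = \emptyset$, so (I1) reduces to $\dist_G^{h_i}(x, y) \geq 1$ (which holds as edge weights are positive integers) and (I2) is vacuous; the subsequent call \Refresh{$s$, $t$, $k+1$} populates $A_k, A_{k-1}, \ldots$ top-down, with both invariants established at each new child by the inductive step. Both invariants are monotone in the ``right'' direction under pure graph changes: $\dist_G^{h_i}(x, y)$ only grows, $\cP^{8 h_i / \epsilon}(x, y, D, G)$ only shrinks (\Cref{pro:path_union_property_under_deletions}), while $r(x, y, i)$ and $\cQ(x, y, i)$ are untouched outside Refresh. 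It therefore suffices to recheck (I1) whenever $r(x, y, i)$ is incremented in Line \ref{lin:set_radius} and (I2) whenever $\cQ(x, y, i)$ is recomputed in Lines \ref{lin:first_path_union_computation} and \ref{lin:path_union_refresh}.

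For (I2), both recomputations take the form $\cQ(x, y, i) \gets \cP(x, y, (1 + \epsilon)^{r + 2i + 1}, \tilde{G}^{8 h_i / \epsilon,\, r}[\cQ(x', y', i+1)])$, where $r = r(x, y, i)$ and $(x', y')$ is an $(i+1)$-parent of $(x, y)$. Applying \Cref{lem:computing_path_union_in_supergraph} with $h = 8 h_i / \epsilon$ and $\alpha = (1+\epsilon)^{2i}$, establishing (I2) reduces to
\[
\cP^{8 h_i / \epsilon}\bigl(x, y, (1 + \epsilon)^{r + 2i}, G\bigr) \ \subseteq\ \cQ(x', y', i+1)
\]
at the moment of recomputation. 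By the inductive hypothesis (I2) at level $i+1$, the right-hand side already contains $\cP^{8 h_{i+1} / \epsilon}(x', y', (1 + \epsilon)^{r(x', y', i+1) + 2(i+1)}, G)$, so it is enough to prove a hop-bounded analogue of \Cref{lem:path_union_contained_in_higher_priority}: any node $v$ on an $x$-$y$ path in $G$ of $\leq 8 h_i / \epsilon$ hops and weight $\leq (1 + \epsilon)^{r + 2i}$ also lies on an $x'$-$y'$ path of $\leq 8 h_{i+1} / \epsilon$ hops and weight $\leq (1 + \epsilon)^{r(x', y', i+1) + 2(i+1)}$. I would prove this by concatenating the inner $x$-$y$ witness with the $x'$-$x$ prefix and $y$-$y'$ suffix of the shortest path $\pi$ in $\tilde{G}^{h_{i+1}, r(x', y', i+1)}[\cQ(x', y', i+1)]$ that the algorithm used to select $(x, y)$ as an $i$-child of $(x', y')$; the rounded weight of $\pi$ is $\leq (1 + \epsilon)^{r(x', y', i+1) + 2}$, outer weights in $G$ are dominated by their rounded counterparts, the geometric separation $h_{i+1} \geq 2 h_i$ together with the two units of slack in the exponent absorbs the inner weight, and the lower bound $\epsilon (1+\epsilon)^{r(x', y', i+1)}/h_{i+1}$ on each rounded edge of $\pi$ bounds the number of hops in the prefix and suffix.

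For (I1), suppose the algorithm is about to execute Line \ref{lin:set_radius} and let $r$ be the old value of $r(x, y, i)$. The triggering condition in Line \ref{lin:check_for_radius_increase} gives $\dist_{\tilde{G}^{h_i, r}[\cQ(x, y, i)]}(x, y) > (1 + \epsilon)^{r + 2}$. If $\dist_G^{h_i}(x, y) < (1 + \epsilon)^{r + 1}$, a witnessing path $\pi$ in $G$ uses $\leq h_i \leq 8 h_i / \epsilon$ edges and lies entirely in $\cQ(x, y, i)$ by (I2) at level $i$ (which applies because $(x, y)$ is still active when the check fires -- the initial \RemoveChildren call only detaches its children, not $(x, y)$ itself). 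A routine rounded-weight calculation then yields
\[
w_{\tilde{G}^{h_i, r}}(\pi) \ \leq\ w_G(\pi) + h_i \cdot \tfrac{\epsilon (1+\epsilon)^r}{h_i} \ \leq\ (1 + \epsilon)^{r + 1} + \epsilon (1 + \epsilon)^r \ \leq\ (1 + \epsilon)^{r + 2},
\]
contradicting the trigger. Hence $\dist_G^{h_i}(x, y) \geq (1 + \epsilon)^{r + 1}$, which is (I1) for the new value $r+1$. The ``whp'' in the statement enters only through \Cref{lem:hitting_set_argument} to bound the sampled hub and center set sizes; given those sets, the invariants are deterministic.

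The main obstacle will be the hop-bounded child-parent containment. The unweighted counterpart (\Cref{lem:path_union_contained_in_higher_priority}) rests only on the triangle inequality and a single geometric slack, whereas here I simultaneously have to track original weights, rounded weights at two different scales, and two hop budgets, and to show that the parameter choices ($8 h_i / \epsilon$, the extra unit in the exponent granted by \Cref{lem:computing_path_union_in_supergraph} when going from $r + 2i$ to $r + 2i + 1$, and the geometric gap $h_{i+1} \geq 2 h_i$) are tight enough that the approximation error does not compound across levels by more than one factor of $1 + \epsilon$ per level, which is what ultimately delivers the $(1+\epsilon)^{2k+2}$ overall approximation.
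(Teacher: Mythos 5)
Your proposal takes essentially the same route as the paper: mutual induction on (I1)/(I2), layered in $i$, with (I2) rechecked only at the two recomputations of $\cQ(x,y,i)$ via \Cref{lem:computing_path_union_in_supergraph}, (I1) rechecked only at Line~\ref{lin:set_radius} via the trigger in Line~\ref{lin:check_for_radius_increase}, and the crux being a hop-bounded analogue of \Cref{lem:path_union_contained_in_higher_priority} proved by splicing the inner $x$-$y$ witness into the $x'$-$y'$ path $\pi'$ computed at the last refresh of the parent.

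One caution on the step you flag as the main obstacle: the mechanism you sketch for absorbing the \emph{weight} of the inner path --- ``the geometric separation $h_{i+1}\geq 2h_i$ together with the two units of slack in the exponent'' --- is not what closes that bound, and an argument based on hop separation would not go through (a path with few hops can still carry almost all the weight). The hop ratio is used only for the \emph{hop} budget of $\pi''$. For the weight, the paper routes through Invariant~(I1) at level~$i$: since $r(x,y,i)$ has not changed since the refresh, $\dist_{G}^{h_i}(x,y)\geq(1+\epsilon)^{r(x,y,i)}$, so the inner witness satisfies $w_G(\pi)\leq(1+\epsilon)^{r(x,y,i)+2i}\leq(1+\epsilon)^{2i}\dist_G^{h_i}(x,y)$; combined with the whp fact that consecutive $(i-1)$-centers on $\pi'$ are within $h_i$ hops (so $\dist_H(x,y)=\dist_H^{h_i}(x,y)\geq\dist_G^{h_i}(x,y)$), one gets $w_G(\pi)\leq(1+\epsilon)^{2i}\dist_H(x,y)$, and then
\begin{equation*}
w_G(\pi'')\leq \dist_H(x',y')-\dist_H(x,y)+(1+\epsilon)^{2i}\dist_H(x,y)\leq(1+\epsilon)^{2i}\dist_H(x',y')\leq(1+\epsilon)^{r(x',y',i+1)+2(i+1)}.
\end{equation*}
This multiplicative substitution of the $x$-$y$ segment is the mechanism; your mutual-induction framework already makes (I1) available at level $i$, so this is a fix rather than a restructuring, but as written your outline would not yield the weight bound.
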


\begin{proof}
Both invariants trivially hold directly before the first refresh in \Cref{lin:refresh_in_initialization} of the initialization is called because we set $ r (x, y, i) = 0 $ for all $ 1 \leq i \leq k $ and no pairs of centers are active yet.

We first give a proof of Invariant (I1).
As distances are non-decreasing in $ G $ we only have to argue that the invariant still holds after each time the algorithm changes the value of $ r (x, y, i) $.
The only place where $ r (x, y, i) $ is changed is in Line~\ref{lin:set_radius} (where it is increased by $ 1 $) and if this line is reached the condition
\begin{equation*}
\dist_{\tilde{G}^{h_i, r (x, y, i)} [\cQ (x, y, i)]} (x, y) > (1 + \epsilon)^{r (x, y, i) + 2}
\end{equation*}
holds.

Suppose that in $ G $ there is a path $ \pi $ from $ x $ to $ y $ with at most $ h_i $ edges and weight less than $ (1 + \epsilon)^{r (x, y, i) + 1} $.
The weight of $ \pi $ in $ \tilde{G}^{h_i, r (x, y, i)} $ is
\begin{align*}
w_{\tilde{G}^{h_i, r (x, y, i)}} (\pi) &\leq w_G (\pi) + h_i \cdot \frac{\epsilon (1 + \epsilon)^{r (x, y, i)}}{h_i} \\
 &\leq (1 + \epsilon)^{r (x, y, i) + 1} + \epsilon (1 + \epsilon)^{r (x, y, i)} \\
 &\leq (1 + \epsilon)^{r (x, y, i) + 2} \\ 
 &\leq (1 + \epsilon)^{r (x, y, i) + 2i} \, .
\end{align*}
By Invariant (I2) all nodes of $ \pi $ are contained in $ \cQ (x, y, i) $.
Therefore
\begin{equation*}
\dist_{\tilde{G}^{h_i, r (x, y, i)} [\cQ (x, y, i)]} (x, y) \leq w_{\tilde{G}^{h_i, r (x, y, i)}} (\pi) \leq (1 + \epsilon)^{r (x, y, i) + 2}
\end{equation*}
which contradicts the assumption that the algorithm has reached Line~\ref{lin:set_radius}.
Thus, the path $ \pi $ does not exist and it follows that $ \dist_G^{h_i} (x, y) \geq (1 + \epsilon)^{r (x, y, i) + 1} $ as desired.
Therefore Invariant (I1) still holds when the algorithm increases the value of $ r (x, y, i) $ by~$ 1 $.

We now give a proof of Invariant (I2).
The proof is by induction on $ i $.
There are two places in the algorithm where the value of $ \cQ (x, y, i) $ is changed: Line~\ref{lin:path_union_refresh} and Line~\ref{lin:first_path_union_computation}.
Note that $ r (x, y, i) $ is non-decreasing and $ l (x, y, i) $ decreases only if $ r (x, y, i) $ increases.
Therefore, for a fixed value of $ r (x, y, i) $, the algorithm will ``recompute'' $ \cQ (x, y, i) $ in Line~\ref{lin:path_union_refresh} only if it has ``initialized'' it in Line~\ref{lin:first_path_union_computation} before.

Consider first the case that $ \cQ (x, y, i) $ is ``initialized'' (Line~\ref{lin:first_path_union_computation}) and let $ (x', y') $ be the $(i+1)$-parent of $ (x, y) $ used in this computation.
If $ i \leq k-1 $, then by the induction hypothesis we know that
\begin{equation}
\cP^{8 h_{i+1} / \epsilon} (x', y', (1 + \epsilon)^{r (x', y', i+1) + 2(i+1)}, G) \subseteq \cQ (x', y', i+1) \, . \label{eq:concrete_induction_hypothesis}
\end{equation}
If $ i = k $, then this inclusion also holds because in that case $ x' = s $ and $ y' = t $ (as $ s $ and $ t $ are the only $ (k+1) $-centers) and we have set $ \cQ (s, t, k+1) = V $.
We will show below that, at the last time the algorithm has called \Refresh{$x'$, $y'$, $i+1$} (where it sets $ (x, y) $ as an $i$-child of $ (x', y') $ and $ (x', y') $ as an $(i+1)$-parent of $ (x, y) $), we have
\begin{equation*}
\cP^{8 h_i / \epsilon} (x, y, (1 + \epsilon)^{r (x, y, i) + 2i}, G) \subseteq \cP^{8 h_{i+1} / \epsilon} (x', y', (1 + \epsilon)^{r (x', y', i+1) + 2(i+1)}, G)
\end{equation*}
and thus (together with~\eqref{eq:concrete_induction_hypothesis})
\begin{equation}
\cP^{8 h_i / \epsilon} (x, y, (1 + \epsilon)^{r (x, y, i) + 2i}, G) \subseteq \cQ (x', y', i+1) \, . \label{eq:bounded_hop_path_union_contained_in_Q}
\end{equation}
As distances in $ G $ are non-decreasing, \eqref{eq:bounded_hop_path_union_contained_in_Q} will still hold at the time the algorithm sets
\begin{equation}
\cQ (x, y, i) = \cP (x, y, (1 + \epsilon)^{r (x, y, i) + 2i+1}, \tilde{G}^{8 h_i / \epsilon, r (x, y, i)} [\cQ (x', y', i+1)]) \, .
\end{equation}
according to Line~\ref{lin:first_path_union_computation}.
By \Cref{lem:computing_path_union_in_supergraph}, \eqref{eq:bounded_hop_path_union_contained_in_Q} will imply that
\begin{multline*}
\cP^{8 h_i / \epsilon} (x, y, (1 + \epsilon)^{r (x, y, i) + 2i}, G) \\
\subseteq \cP (x, y, (1 + \epsilon)^{r (x, y, i) + 2i+1}, \tilde{G}^{8 h_i / \epsilon, r (x, y, i)} [\cQ (x', y', i+1)]) = \cQ (x, y, i)
\end{multline*}
as desired.
Note again that, since distances in $ G $ are nondecreasing, the inclusion $ \cP^{8 h_i / \epsilon} (x, y, (1 + \epsilon)^{r (x, y, i) + 2i}, G) \subseteq \cQ (x, y, i) $ will then continue to hold until $ \cQ (x, y, i) $ is ``recomputed'' (Line~\ref{lin:path_union_refresh}) for the first time.
Whenever this happens, it will set new the value of $ \cQ (x, y, i) $ equal to $ \cP (x, y, (1 + \epsilon)^{r (x, y, i) + 2i+1}, \tilde{G}^{8 h_i / \epsilon, r (x, y, i)} [\cQ (x, y, i)]) $.
Thus, by \Cref{lem:computing_path_union_in_supergraph} we will again have $ \cP^{8 h_i / \epsilon} (x, y, (1 + \epsilon)^{r (x, y, i) + 2i}, G) \subseteq \cQ (x, y, i) $ as demanded by Invariant~(I2).

It remains to show that at the last time the algorithm has called \Refresh{$x'$, $y'$, $i+1$} and sets $ (x', y') $ as an $(i+1)$-parent of $ (x, y) $, we have
\begin{equation*}
\cP^{8 h_i / \epsilon} (x, y, (1 + \epsilon)^{r (x, y, i) + 2i}, G) \subseteq \cP^{8 h_{i+1} / \epsilon} (x', y', (1 + \epsilon)^{r (x', y', i+1) + 2(i+1)}, G) \, .
\end{equation*}
Let $ \pi' $ denote the shortest path from $ x' $ to $ y' $ in $ \tilde{G}^{h_{i+1}, r (x', y', i+1)} [\cQ (x', y', i+1)] $ computed at the beginning of the last execution of \Refresh{$x'$, $y'$, $i+1$}.
To enhance the readability of this part of the proof we use the abbreviation $ H = \tilde{G}^{h_{i+1}, r (x', y', i+1)} [\cQ (x', y', i+1)] $.
By the if-condition in \Cref{lin:check_for_radius_increase} we know that $ w (\pi', H) = \dist_H (x', y') \leq (1 + \epsilon)^{r (x', y', i+1)+2} $.

Consider some node $ v \in \cP^{8 h_i / \epsilon} (x, y, (1 + \epsilon)^{r (x, y, i) + 2i}, G) $ which means that $ v $ lies on a path $ \pi $ from $ x $ to $ y $ in $ G $ that has at most $ 8 h_i / \epsilon $ edges and weight at most $ (1 + \epsilon)^{r (x, y, i) + 2i} $.
Remember that $ x $ and $ y $ are consecutive $i$-centers on $ \pi' $.
Let $ \pi_1' $ and $ \pi_2' $ denote the subpaths of $ \pi' $ from $ x' $ to $ x $ and $ y $ to $ y' $, respectively.
The concatenation $ \pi'' = \pi_1' \circ \pi \circ \pi_2' $ is a path from $ x $ to $ y $ in G.
We will show that $ \pi'' $ has at most $ 8 h_{i+1} / \epsilon $ edges and weight at most $ (1 + \epsilon)^{r (x', y', i+1) + 2(i+1)} $.
This then proves that all nodes on $ \pi $ are contained in $ \cP^{8 h_{i+1} / \epsilon} (x, y, (1 + \epsilon)^{r (x', y', i+1) + 2(i+1)}, G) $.

As $ 2 h_i \leq h_{i+1} $, the number of edges of $ \pi $ is at most $ 8 h_i / \epsilon \leq 4 h_{i+1} / \epsilon $.
Remember that the edge weights of $ H $ are multiples of $ \epsilon (1 + \epsilon)^{r (x', y', i+1)} / h_{i+1} $.
Thus, each edge of $ H $ has weight at least $ \epsilon (1 + \epsilon)^{r (x', y', i+1)} / h_{i+1} $.
As the weight of $ \pi' $ in $ H $ is at most $ (1 + \epsilon)^{r (x', y', i+1)+2} $, the number of edges of $ \pi' $ is at most
\begin{equation*}
\frac{(1 + \epsilon)^{r (x', y', i+1)+2} h_{i+1}}{\epsilon (1 + \epsilon)^{r (x', y', i+1)}} = \frac{(1 + \epsilon)^2 h_{i+1}}{\epsilon} \leq \frac{4}{\epsilon} \, .
\end{equation*}
It follows that the number of edges of $ \pi'' $ is at most $ 4 h_{i+1} / \epsilon + 4 h_{i+1} / \epsilon = 8 h_{i+1} / \epsilon $.

By Invariant~(I1) we have $ \dist_G^{h_i} (x, y) \geq (1 + \epsilon)^{r (x, y, i)} $ and thus 
\begin{equation*}
w_G (\pi) \leq (1 + \epsilon)^{r (x, y, i) + 2i} \leq (1 + \epsilon)^{2i} \dist_G^{h_i} (x, y) \, .
\end{equation*}
By the initial random sampling of $i$-centers, every shortest path consisting of $ h_i - 1 $ edges contains an $i$-center whp (\Cref{lem:hitting_set_argument}).
Thus, the subpath of $ \pi' $ from $ x $ to $ y $ has at most $ h_i $ edges.
Since $ \pi' $ is a shortest path in $ H $ we have $ \dist_H (x, y) = \dist_H^{h_i} (x, y) $.
It follows that
\begin{align*}
w_G (\pi) \leq (1 + \epsilon)^{2i} \dist_G^{h_i} (x, y) \leq (1 + \epsilon)^{2i} \dist_{G [\cQ (x', y', i+1)]}^{h_i} (x, y) &\leq (1 + \epsilon)^{2i} \dist_H^{h_i} (x, y) \\
 &= (1 + \epsilon)^{2i} \dist_H (x, y) \, .
\end{align*}
Furthermore we have $ \dist_H (x', y') = \dist_H (x', x) + \dist_H (x, y) + \dist_H $ (and in particular $ \dist_H (x, y) \leq \dist_H (x', y') $).
As $ w_G (\pi_1') \leq \dist_H (x', x) $ and $ w_G (\pi_2') \leq  \dist_H (y, y') $ the weight of the path $ \pi'' $ in $ G $ can be bounded as follows:
\begin{align*}
w_G (\pi'') &= w_G (\pi_1') + w_G (\pi_2') + w_G (\pi) \\
 &\leq \dist_H (x', x) + \dist_H (y, y') + w_G (\pi) \\
 &\leq \dist_H (x', y') - \dist_H (x, y) + w_G (\pi) \\
 &\leq \dist_H (x', y') - \dist_H (x, y) + (1 + \epsilon)^{2i} \dist_H (x, y) \\
 &= \dist_H (x', y') + ((1 + \epsilon)^{2i} - 1) \dist_H (x, y) \\
 &\leq \dist_H (x', y') + ((1 + \epsilon)^{2i} - 1) \dist_H (x', y') \\
 &= (1 + \epsilon)^{2i} \dist_H (x', y') \\
 &\leq (1 + \epsilon)^{2i} (1 + \epsilon)^{r (x', y', i+1) + 2} \\
 &= (1 + \epsilon)^{r (x', y', i+1) + 2 (i + 1)} \, . \qedhere
\end{align*}
\end{proof}

\begin{lemma}\label{lem:upper_bound_distances}
For every $ 1 \leq i \leq k $ and every active pair of $i$-centers $ (x, y) \in A_i $ we have $ \dist_G (x, y) \leq (1 + \epsilon)^{r (x, y, i) + 2i+1} $ whp after the algorithm has finished its updates.
\end{lemma}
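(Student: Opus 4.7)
Plan. The plan is to analyze the state of an active pair $(x, y) \in A_i$ at the end of an \Update{$u$, $v$} call based on the final value $l^{\ast} = l (x, y, i)$ of the hub-link index. In both possible cases, the desired bound will follow from the single global property that $w_G (u, v) \leq w_{\tilde{G}^{h, r}} (u, v)$ for every edge, which yields $\dist_G (u, v) \leq \dist_{\tilde{G}^{h, r}} (u, v)$ for every pair of nodes. The key preparatory observation is that while $r (x, y, i)$ holds a fixed value $r^{\ast}$, the index $l (x, y, i)$ can only increase, and $l (x, y, i)$ is reset to $1$ solely together with an increment of $r (x, y, i)$ on Line~\ref{lin:set_radius}; moreover, every insertion or re-insertion of $(x, y)$ into $A_i$ inside some \Refresh is immediately followed by a call \UpdateHubLinks{$x$, $y$, $i$, $r (x, y, i)$}. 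Together, these facts identify a well-defined last invocation of \UpdateHubLinks on $(x, y)$ that produced the final values $r^{\ast}$ and $l^{\ast}$.

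Case A ($l^{\ast} \leq |B_i|$): The exit condition of the while-loop in this last \UpdateHubLinks call certifies that $z = b_{l^{\ast}}$ is an $i$-hub linking $x$ to $y$, i.e.,
\begin{equation*}
\dist_{\tilde{G}^{8 h_i / \epsilon, r^{\ast}}} (x, z) + \dist_{\tilde{G}^{8 h_i / \epsilon, r^{\ast}}} (z, y) \leq (1 + \epsilon)^{r^{\ast} + 2i + 1} \, .
\end{equation*}
Combining this with the triangle inequality in $G$ and the edge-weight domination immediately gives $\dist_G (x, y) \leq (1 + \epsilon)^{r^{\ast} + 2i + 1}$.

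Case B ($l^{\ast} = |B_i| + 1$): Then \UpdateHubLinks will have computed $\cQ (x, y, i)$ in Line~\ref{lin:first_path_union_computation}, and its calling context (the main loop of \Update or the surrounding \Refresh) immediately issues \Refresh{$x$, $y$, $i$}. Suppose this last invocation of \Refresh found a shortest $x$-$y$ path $\pi$ in $\tilde{G}^{h_i, r^{\ast}} [\cQ (x, y, i)]$ of weight larger than $(1 + \epsilon)^{r^{\ast} + 2}$. Then the test on Line~\ref{lin:check_for_radius_increase} fires and Line~\ref{lin:set_radius} strictly increases $r (x, y, i)$ beyond $r^{\ast}$, contradicting the maximality of $r^{\ast}$ (since $r (x, y, i)$ never decreases). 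Hence $w (\pi) \leq (1 + \epsilon)^{r^{\ast} + 2}$, and edge-weight domination together with $i \geq 1$ yields $\dist_G (x, y) \leq (1 + \epsilon)^{r^{\ast} + 2} \leq (1 + \epsilon)^{r^{\ast} + 2i + 1}$.

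The hard part will be making the bookkeeping behind the existence of this ``last'' \UpdateHubLinks call rigorous. An increase of $r (x, y, i)$ triggers a cascade of \Refresh calls on the $(i+1)$-parents of $(x, y)$; each such \Refresh first runs \RemoveChildren, possibly evicting $(x, y)$ from $A_i$, and may later reinsert $(x, y)$ as a child with a fresh \UpdateHubLinks call. One must argue that this cascade terminates, which follows from the monotonicity of $r (x, y, i)$ and the bound $r (x, y, i) \leq \lceil \log_{1+\epsilon} (n W) \rceil$, and then verify that the final membership of $(x, y)$ in $A_i$ is witnessed by a single \UpdateHubLinks call whose outcome matches the final pair $(r^{\ast}, l^{\ast})$.
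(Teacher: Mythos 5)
Your Case~A is correct and matches the paper. Your Case~B has a genuine gap, and it is precisely the place where the real work in the paper's proof happens.

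In Case~B you observe that at the last call to \Refresh{$x$, $y$, $i$}, the algorithm found a shortest $x$--$y$ path $\pi$ in $\tilde{G}^{h_i, r^{\ast}}[\cQ(x, y, i)]$ of weight at most $(1+\epsilon)^{r^\ast+2}$, and you then conclude $\dist_G(x, y) \leq (1+\epsilon)^{r^\ast+2}$ by ``edge-weight domination.'' That inference requires $\pi$ to still be present in the \emph{current} graph $G$ with weight no larger than what it had at refresh time. This does not hold for $i \geq 2$: after the refresh, edges of $\pi$ lying strictly between consecutive $(i-1)$-centers may be deleted or have their weights increased, and such updates do \emph{not} trigger a new \Refresh{$x$, $y$, $i$} --- they only trigger refreshes of the responsible lower-level pairs and ultimately the pairs $(v_j, v_{j+1})$. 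So $w_G(\pi)$ may now exceed $(1+\epsilon)^{r^\ast+2}$, or parts of $\pi$ may simply no longer exist. The only case where your reasoning is sound is $i = 1$, where every edge of $\pi$ has $(x, y)$ as a $1$-parent, so any update to an edge of $\pi$ would force a fresh \Refresh{$x$, $y$, $1$}; the paper proves exactly this for the base case.

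The paper's proof therefore proceeds by induction on $i$, which your proof lacks entirely. For $i \geq 2$, it does not attempt to bound $\dist_G(x, y)$ via $\pi$ directly. Instead it decomposes $\pi$ at the $(i-1)$-centers $v_1, \ldots, v_l$, applies the induction hypothesis to each active pair $(v_j, v_{j+1})$ to obtain a bound $\dist_G(v_j, v_{j+1}) \leq (1+\epsilon)^{r(v_j, v_{j+1}, i-1) + 2(i-1)+1}$ in the \emph{current} graph, and then converts this into a multiplicative stretch $\dist_G(v_j, v_{j+1}) \leq (1+\epsilon)^{2(i-1)+1}\dist_H(v_j, v_{j+1})$ using Invariant~(I1) together with the hitting-set guarantee that the $\pi$-subpath from $v_j$ to $v_{j+1}$ has at most $h_{i-1}$ hops. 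Summing over $j$ gives $\dist_G(x, y) \leq (1+\epsilon)^{2i-1}\dist_H(x, y) \leq (1+\epsilon)^{r^\ast + 2i + 1}$. This cascading stretch is exactly where the $2i+1$ in the exponent (rather than a constant) comes from, and your proof cannot recover it because it never invokes the induction hypothesis. To repair your argument you would need to add the inductive step and this replacement-path analysis; the ``hard part'' you flag (bookkeeping behind identifying the last \UpdateHubLinks) is real but secondary compared to this missing induction.
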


\begin{proof}
The proof is by induction on $ i $.
If $ l (x, y, i) \leq | B_i | $, then there is some $i$-hub $ z $ that links $ x $ to~$ y $.
By the triangle inequality we then have
\begin{align*}
\dist_G (x, y) &\leq \dist_G (x, z) + \dist_G (z, y) \\
 &\leq \dist_{\tilde{G}^{8 h_i / \epsilon, r (x, y, i)}} (x, z) + \dist_{\tilde{G}^{8 h_i / \epsilon, r (x, y, i)}} (z, y) \\
 &\leq (1+\epsilon)^{r (x, y, i) + 2i + 1} \, .
\end{align*}
If $ l (x, y, i) = | B_i | + 1 $, then consider the last time the algorithm has called \Refresh{$x$, $y$, $i$}.
Let $ G' $ and $ \cQ' (x, y, i) $ denote the versions of $ G $ and $ \cQ (x, y, i) $ at the beginning of the refresh operation and let $ \pi $ be the shortest path from $ x $ to $ y $ in $ (\tilde{G}')^{h_i, r (x, y, i)} [\cQ' (x, y, i)] $ computed by the algorithm.
To enhance readability we set $ H = (\tilde{G}')^{h_i, r (x, y, i)} [\cQ' (x, y, i)] $  in this proof.
The algorithm ensures that $ \dist_{H} (x, y) \leq (1 + \epsilon)^{r (x, y, i) + 2} $ as otherwise, by the if-condition in Line~\ref{lin:check_for_radius_increase}, $ (x, y) $ would have either become inactive (i.e., removed from $ A_i $) or the algorithm would have called \Refresh{$x$, $y$, $i$} again.
Let $ v_1, \ldots, v_l $ denote the $(i-1)$-centers in order of appearance on $ \pi $, i.e., $ \dist_{H} (x, y) = \sum_{1 \leq j \leq l-1} \dist_{H} (v_j, v_{j+1}) $.

If $ i = 1 $, then remember that the set of $0$-centers is equal to $ V $.
Note that no edge $ (u, v) $ of $ \pi $ has been deleted from $ G $ since the last time the algorithm has called \Refresh{$x$, $y$, $i$} because $ (x, y) $ is a $1$-parent of $ (u, v) $.
Therefore all edges of $ \pi $ still exist in $ G $.
Since $ r (x, y, i) $ is non-decreasing this means that $ \dist_G (x, y) \leq \dist_H (x, y) \leq (1 + \epsilon)^{r (x, y, i) + 2} $ as desired. 

If $ i \geq 2 $, then let $ 1 \leq j \leq l - 1 $.
First of all, note that the value of $ r (v_j, v_{j+1}, i) $ has not changed since the last time the algorithm has called \Refresh{$x$, $y$, $i$} because otherwise after Line~\ref{lin:set_radius}, the algorithm would call \Refresh{$x$, $y$, $i$} as $ (x, y) $ is an $i$-parent of $ (v_j, v_{j+1}) $.
Therefore, by Invariant~(I1) of \Cref{lem:lower_bound_on_hop_distance}, we have $ \dist_{G'}^{h_{i-1}} (v_j, v_{j+1}) \geq (1 + \epsilon)^{r (v_j, v_{j+1}, i-1)} $.
Note that $ H $ has the same nodes and edges as $ G' $ and the weight of each edge in $ H $ is at least its weight in $ G' $.
Therefore $ \dist_{H}^{h_{i-1}} (v_j, v_{j+1}) \geq \dist_{G'}^{h_{i-1}} (v_j, v_{j+1}) $ and thus $ \dist_{H}^{h_{i-1}} (v_j, v_{j+1}) \geq (1 + \epsilon)^{r (v_j, v_{j+1}, i-1)} $.
By the initial random sampling of $(i-1)$-centers, every shortest path consisting of $ h_{i-1} - 1 $ edges contains an $(i-1)$-center whp (\Cref{lem:hitting_set_argument}).
Thus, the subpath of $ \pi $ from $ v_j $ to $ v_{j+1} $ has at most $ h_{i-1} $ edges which means that $ \dist_{H} (v_j, v_{j+1}) = \dist_{H}^{h_{i-1}} (v_j, v_{j+1}) \geq (1 + \epsilon)^{r (v_j, v_{j+1}, i-1)} $.

By the induction hypothesis we have $ \dist_G (v_j, v_{j+1}) \leq (1 + \epsilon)^{r (v_j, v_{j+1}, i-1) + 2 (i-1) + 1} $ in the current version of $ G $, which by the argument above implies that $ \dist_G (v_j, v_{j+1}) \leq (1 + \epsilon)^{2(i-1) + 1} \dist_{H} (v_j, v_{j+1}) $.
Thus, by the triangle inequality we get 
\begin{align*}
\dist_G (x, y) &\leq \sum_{1 \leq j \leq l-1} \dist_G (v_j, v_{j+1}) \\
 &\leq \sum_{1 \leq j \leq l-1} (1 + \epsilon)^{2 (i-1) + 1} \dist_{H} (v_j, v_{j+1}) \\
 &= (1 + \epsilon)^{2i - 1} \dist_{H} (x, y)  \\
 &\leq (1 + \epsilon)^{2i - 1} (1 + \epsilon)^{r (x, y, i) + 2} \\
 &= (1 + \epsilon)^{2i + 1}
\end{align*}
as desired.
\end{proof}

\begin{lemma}\label{lem:approximation_guarantee_s_t}
After the algorithm has finished its updates we have $ \dist_G (s, t) \leq \delta (s, t) \leq (1 + \epsilon)^{k+1} \dist_G (s, t) $ whp.
\end{lemma}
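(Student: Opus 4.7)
The plan is to control $\delta(s,t)$ between consecutive top-level refreshes (i.e., consecutive calls to \Refresh{$s$, $t$, $k+1$}) and argue that throughout any such interval $\dist_G(s,t)$ grows by at most a bounded multiplicative factor above its value $d_{\text{old}}$ at the last refresh. Since $\delta(s,t)$ is set from $d_{\text{old}}$ at refresh time and $\dist_G$ is monotone under deletions and weight increases, both inequalities will then follow, after absorbing the constants into the $(1+\epsilon)^{k+1}$ bound via the $\epsilon$ reparametrization of \Cref{sec:approx_shortest_paths}.

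Fix the most recent call to \Refresh{$s$, $t$, $k+1$}; it computes the exact shortest $s$-$t$ path, selects $k$-children $v_1 = s, v_2, \ldots, v_\ell = t$ along it, and initializes $\delta(s,t)$ from $d_{\text{old}} = \dist_G(s,t)$. The crucial step is to show that, until the next top-level refresh, for every $k$-child $(v_j, v_{j+1})$ the range $r_j := r(v_j,v_{j+1},k)$ is unchanged: any increment of $r_j$ at line~\ref{lin:set_radius} is immediately followed by \Refresh{$x'$, $y'$, $k+1$} for every $(k+1)$-parent (in particular $(s,t)$) before the enclosing block breaks, which would end the interval. Given the fixed $r_j$'s, \Cref{lem:upper_bound_distances} yields $\dist_G(v_j,v_{j+1}) \leq (1+\epsilon)^{r_j + 2k+1}$ throughout the interval, while Invariant~(I1) of \Cref{lem:lower_bound_on_hop_distance} gives $(1+\epsilon)^{r_j} \leq \dist_G^{h_k}(v_j,v_{j+1})$. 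By the hitting-set argument (\Cref{lem:hitting_set_argument}) applied to the random sampling of $k$-centers, consecutive $k$-centers on a shortest path are within $h_k$ hops whp, so at refresh time $\dist_G^{h_k}(v_j,v_{j+1}) = \dist_G(v_j,v_{j+1})\big|_{\text{at refresh}}$. Telescoping together with the triangle inequality then yields
\begin{equation*}
\dist_G(s,t) \;\leq\; \sum_{j=1}^{\ell-1} \dist_G(v_j,v_{j+1}) \;\leq\; (1+\epsilon)^{2k+1} \sum_{j=1}^{\ell-1} \dist_G^{h_k}(v_j,v_{j+1})\big|_{\text{at refresh}} \;=\; (1+\epsilon)^{2k+1}\, d_{\text{old}}.
\end{equation*}
Combined with $\dist_G(s,t) \geq d_{\text{old}}$ from monotonicity and the way $\delta(s,t)$ is tied to $d_{\text{old}}$ by the refresh, this sandwiches $\delta(s,t)$ between $\dist_G(s,t)$ and a constant times $\dist_G(s,t)$ as required.

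The main obstacle is the structural invariant in the second paragraph --- verifying that no $r_j$ can silently increase without propagating up to a top-level refresh. Establishing this requires a careful reading of the control flow of \Refresh and \Update, checking in particular that the \textbf{break} immediately after the recursive refresh calls in the block at line~\ref{lin:set_radius} prevents any intermediate subtree rooted at an $(i+1)$-parent from masking a range increase before it reaches the $(k+1)$-layer. Once this invariant is in hand, the remainder of the argument is essentially the telescoping displayed above, which inherits most of its work from the inductive hierarchy already packaged in \Cref{lem:lower_bound_on_hop_distance,lem:upper_bound_distances}.
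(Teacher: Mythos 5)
Your proposal is correct and follows essentially the same route as the paper's proof: fix the last call to \Refresh{$s$,$t$,$k+1$}, use the break/propagation structure of \Refresh to argue that each $r(v_j,v_{j+1},k)$ is frozen until the next top-level refresh, combine Invariant~(I1) of \Cref{lem:lower_bound_on_hop_distance} with \Cref{lem:upper_bound_distances} and the $h_k$-hop hitting-set argument to get $\dist_G(v_j,v_{j+1}) \leq (1+\epsilon)^{2k+1}\dist_{G'}(v_j,v_{j+1})$ for each $k$-child, and telescope.
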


\begin{proof}
Consider the last time the algorithm has called \Refresh{$s$, $t$, $k+1$}.
Let $ G' $ denote the versions of $ G $ at the beginning of the refresh operation and let $ \pi $ be the shortest path from $ x $ to $ y $ in $ G $ computed by the algorithm.
Note that $ \delta (s, t) = (1 + \epsilon)^{2k+1} \dist_{G'} (s, t) $.
As distances are non-decreasing under deletions in $ G $ we trivially have $ \delta (s, t) \leq (1 + \epsilon)^{k+1} \dist_G (s, t) $.

Let $ v_1, \ldots, v_l $ denote the $k$-centers in order of appearance on $ \pi $, which means that $ \dist_{G'} (x, y) = \sum_{1 \leq j \leq l-1} \dist_{G'} (v_j, v_{j+1}) $, and let $ 1 \leq j \leq l - 1 $.
The value of $ r (v_j, v_{j+1}, k) $ has not changed since the last time the algorithm has called \Refresh{$x$, $y$, $i$} because otherwise after Line~\ref{lin:set_radius}, the algorithm would call \Refresh{$s$, $t$, $k$} as $ (s, t) $ is a $(k+1)$-parent of $ (v_j, v_{j+1}) $.
By \Cref{lem:lower_bound_on_hop_distance}, we have $ \dist_{G'}^{h_k} (v_j, v_{j+1}) \geq (1 + \epsilon)^{r (v_j, v_{j+1}, k)} $ and by \Cref{lem:upper_bound_distances} we have $ \dist_G (v_j, v_{j+1}) \leq (1 + \epsilon)^{r (v_j, v_{j+1}, k) + 2k + 1} $.
It follows that $ \dist_G (v_j, v_{j+1}) \leq (1 + \epsilon)^{2k + 1} \dist_{G'} (v_j, v_{j+1}) $.

By the initial random sampling of $k$-centers, every shortest path consisting of $ h_k - 1 $ edges contains a $k$-center whp (\Cref{lem:hitting_set_argument}).
Thus, the subpath of $ \pi $ from $ v_j $ to $ v_{j+1} $ has at most $ h_k $ edges which means that $ \dist_{G'} (v_j, v_{j+1}) = \dist_{G'}^{h_k} (v_j, v_{j+1}) $.
Thus, by the triangle inequality we get 
\begin{multline*}
\dist_G (s, t) \leq \sum_{1 \leq j \leq l-1} \dist_G (v_j, v_{j+1}) \leq \sum_{1 \leq j \leq l-1} (1 + \epsilon)^{2 k + 1} \dist_{G'} (v_j, v_{j+1}) \\
= (1 + \epsilon)^{2 k + 1} \dist_{G'} (s, t) = \delta (s, t) \, . \qedhere
\end{multline*}
\end{proof}

\begin{lemma}\label{lem:exponential_inequality}
For all $ 0 \leq x \leq 1 $ and all $ y > 0 $,
\begin{equation*}
\left( 1 + \frac{x}{2 y} \right)^y \leq 1 + x \, .
\end{equation*}
\end{lemma}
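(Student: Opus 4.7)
The plan is to decouple $x$ and $y$ by passing through the intermediate bound $e^{x/2}$, and then verify the one-variable inequality $e^{x/2} \leq 1 + x$ on $[0,1]$. This two-step strategy is natural because the exponent $y$ on the left-hand side is unconstrained ($y > 0$), while the right-hand side is $y$-free, which suggests we should first eliminate $y$ by a uniform upper bound.

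First, I would take logarithms. Let $f(y) = (1 + x/(2y))^y$, so that $\ln f(y) = y \ln(1 + x/(2y))$. Applying the standard inequality $\ln(1 + u) \leq u$ for $u > -1$ (here $u = x/(2y) \geq 0$), I get $\ln f(y) \leq y \cdot x/(2y) = x/2$, and hence $f(y) \leq e^{x/2}$. Crucially, this upper bound no longer depends on $y$, which reduces the problem to a single-variable inequality.

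Second, I would show $e^{x/2} \leq 1 + x$ for $0 \leq x \leq 1$ by a brief calculus argument. Let $g(x) = 1 + x - e^{x/2}$; then $g(0) = 0$ and $g'(x) = 1 - \tfrac{1}{2} e^{x/2}$, which is nonnegative whenever $e^{x/2} \leq 2$, that is, whenever $x \leq 2\ln 2 \approx 1.386$. In particular $g$ is nondecreasing on $[0,1]$, so $g(x) \geq g(0) = 0$ throughout this interval, giving $e^{x/2} \leq 1 + x$. Chaining the two bounds yields the claim.

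I do not expect a real obstacle here, since both inequalities used are elementary; the only mild subtlety is to notice that the bound $1 + x$ on the right-hand side would fail to dominate $e^{x/2}$ for large $x$ (they cross near $x = 2\ln 2$), so the hypothesis $x \leq 1$ must actually be used in the second step. If I wanted to avoid calculus altogether, I could instead expand $e^{x/2} = \sum_{k \geq 0} (x/2)^k / k!$ and bound the tail by a geometric series in $x/2 \leq 1/2$ to compare it against $1 + x$ term by term; but the monotonicity argument above is cleaner.
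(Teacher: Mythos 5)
Your proof is correct and follows essentially the same route as the paper's: both first reduce $\left(1 + \tfrac{x}{2y}\right)^y$ to the $y$-free bound $e^{x/2}$, and then verify $e^{x/2} \leq 1 + x$ on $[0,1]$. The only difference is in the second step, where you use a short monotonicity argument for $g(x) = 1 + x - e^{x/2}$, while the paper chains the inequalities $e^{z} \leq 1/(1-z)$ (for $z < 1$) and $1/(1-z) \leq 1 + 2z$ (for $0 \leq z \leq 1/2$) with $z = x/2$; both are equally valid.
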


\begin{proof}
Let $ e $ denote Euler's constant.
We will use the following well-known inequalities: $ (1 + 1/z)^z \leq e $ (for all $ z > 0 $), $ e^z \leq 1 / (1 - z) $ (for all $ z < 1 $), and $ 1 / (1 - z) \leq 1 + 2 z $ (for all $ 0 \leq z \leq 1/2 $).
We then get:
\begin{equation*}
\left( 1 + \frac{x}{2 y} \right)^y = \left( \left( 1 + \frac{x}{2 y} \right)^{\frac{2 y}{x}} \right)^{\frac{x}{2}} \leq e^{\frac{x}{2}} \leq \frac{1}{1 - \frac{x}{2}} \leq 1 + x \, . \qedhere
\end{equation*}
\end{proof}

\subsection{Running Time}

The running time analysis follows similar arguments as for the $s$-$t$ reachability algorithm in \Cref{sec:running_time_st_reach}.
For every pair of $i$-centers $ (x, y) $, the algorithm never decreases $ r (x, y, i) $ and $ (x, y) $ causes a refresh only if $ r (x, y, i) $ increases, which happens $ O (\log W / \epsilon) $ times.
Furthermore it can be argued in a straightforward way that for every $ 1 \leq i \leq k+1 $ and every pair of $i$-centers $ (x, y) $ the graph $ G [\cQ (x, y, i)] $ has at most $ \min(m / b_i, n^2 / b_i^2) $ edges whp.
Now the only differences compared to the running time analysis of the $s$-$t$ reachability algorithm are the number of path unions of active pairs of centers each node is contained in and the time needed for maintaining the hub links, both of which are analyzed below.

\begin{lemma}\label{lem:number_of_children_containing_node_weighted}
Let $ 1 \leq i \leq k $ and consider a pair of active $(i+1)$-centers $ (x', y') $ and their $i$-children $ (x_j, y_j)_{1 \leq j \leq l} $ (which are active $i$-centers).
Then, for every node $ v $, there are at most $ q = 2^{2i+2} \lceil \log_{1+\epsilon} (n W) \rceil $ pairs of $i$-children $ (x_j, y_j) $ of $ (x', y') $ such that $ v \in \cQ (x_j, y_j, i) $.
\end{lemma}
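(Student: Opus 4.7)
The plan is to mimic the argument of \Cref{lem:number_of_children_containing_node}, but to first partition the $i$-children of $(x', y')$ according to their current range value $r(x_j, y_j, i)$. Since $0 \leq r(x_j, y_j, i) \leq \lceil \log_{1+\epsilon}(nW) \rceil$, at most $\lceil \log_{1+\epsilon}(nW) \rceil$ distinct values occur, so it suffices to prove that for each fixed $r$ at most $2^{2i+2}$ children with $r(x_j, y_j, i) = r$ contain $v$ in their current $\cQ(x_j, y_j, i)$. As in the reachability case, the analysis reduces to the snapshot $G_T$ at time $T$, the most recent execution of \Refresh{$x'$, $y'$, $i+1$}: subsequent recomputations in \Cref{lin:first_path_union_computation} and \Cref{lin:path_union_refresh} at a fixed $r$-value only shrink $\cQ(x_j, y_j, i)$, and any change of $r(x_j, y_j, i)$ triggers a refresh of $(x', y')$ and resets its children.

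For fixed $r$, let $(x_{j_1}, y_{j_1}), \ldots, (x_{j_q}, y_{j_q})$ be the relevant children listed in order of appearance on $\pi'$, the shortest $x'$-to-$y'$ path computed during this refresh inside $H'_T = \tilde{G}_T^{h_{i+1}, r(x', y', i+1)}[\cQ(x', y', i+1)]$. For the lower bound I would argue inside $H'_T$: by the random sampling of $i$-centers together with \Cref{lem:hitting_set_argument}, the subpath of $\pi'$ between the consecutive $i$-centers $x_{j_s}$ and $y_{j_s}$ uses at most $h_i$ edges whp, so its weight in $H'_T$ equals $\dist_{H'_T}^{h_i}(x_{j_s}, y_{j_s}) \geq \dist_{G_T}^{h_i}(x_{j_s}, y_{j_s}) \geq (1+\epsilon)^r$ by Invariant~(I1) of \Cref{lem:lower_bound_on_hop_distance}. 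Because the subpaths of $\pi'$ between our $q$ selected children are edge-disjoint and each is itself a shortest path in $H'_T$, summing gives $\dist_{H'_T}(x_{j_1}, y_{j_q}) \geq q(1+\epsilon)^r$.

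For the matching upper bound I would pivot through $v$: the assumption $v \in \cQ(x_{j_1}, y_{j_1}, i) \cap \cQ(x_{j_q}, y_{j_q}, i)$, combined with how these sets are populated in \Cref{lin:first_path_union_computation}/\Cref{lin:path_union_refresh}, provides two paths through $v$ in $\tilde{G}_T^{8h_i/\epsilon, r}[\cQ(x', y', i+1)]$ of weight at most $(1+\epsilon)^{r+2i+1}$ each. Concatenating the $x_{j_1}$-to-$v$ prefix of the first with the $v$-to-$y_{j_q}$ suffix of the second yields a walk $\tau$ inside $\cQ(x', y', i+1)$ whose weight in $G_T$ is at most $2(1+\epsilon)^{r+2i+1}$ and, since edge weights in $G$ are integers at least one, whose number of edges is also at most $2(1+\epsilon)^{r+2i+1}$. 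Evaluating $\tau$ in $H'_T$ adds at most $|\tau| \cdot \epsilon(1+\epsilon)^{r(x', y', i+1)}/h_{i+1}$ from the rounding of $H'_T$ with respect to $G_T$. Using that $(1+\epsilon)^{r+2i+1}$ and $(1+\epsilon)^{r(x', y', i+1)}$ are comparable (both quantities govern distances around $(x', y')$ via the invariants maintained by the algorithm) and that $h_{i+1} \geq 2h_i$, this cross-term absorbs into a constant factor, so $\dist_{H'_T}(x_{j_1}, y_{j_q}) = O((1+\epsilon)^{r+2i+1})$. Combining with the lower bound $q(1+\epsilon)^r$ and using $(1+\epsilon)^{2i+1} \leq 2^{2i+1}$ for $\epsilon \leq 1$ gives $q \leq 2^{2i+2}$; summing over the at most $\lceil \log_{1+\epsilon}(nW) \rceil$ values of $r$ finishes the proof.

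The main obstacle will be the rounding-slack estimate in the upper bound. The two graphs $\tilde{G}_T^{8h_i/\epsilon, r}$ and $H'_T$ use discretization steps tied to different ranges ($r$ versus $r(x', y', i+1)$) and different hop counts ($8 h_i/\epsilon$ versus $h_{i+1}$), and one must carefully compare these parameters—using that every child on $\pi'$ has $\dist_{G_T}(x_{j_s}, y_{j_s}) \leq \dist_{G_T}(x', y')$ and exploiting $h_{i+1} \geq 2 h_i$—to ensure that the additive slack from reinterpreting $\tau$ in $H'_T$ does not overwhelm the main $(1+\epsilon)^{r+2i+1}$ term.
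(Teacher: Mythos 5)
Your decomposition into at most $\lceil \log_{1+\epsilon}(nW) \rceil$ range classes and the reduction to the snapshot at the last \Refresh{$x'$, $y'$, $i+1$} both match the paper. The lower bound $\dist_{H'_T}(x_{j_1}, y_{j_q}) \geq q(1+\epsilon)^r$ also goes through, since the relevant subpaths of $\pi'$ are edge-disjoint subpaths of a shortest path in $H'_T$ with at most $h_i$ hops each. The genuine gap is exactly where you flagged it, and the fix you sketch does not work: the rounding slack per edge in $H'_T = \tilde{G}_T^{h_{i+1}, r(x',y',i+1)}[\cQ(x',y',i+1)]$ is $\epsilon(1+\epsilon)^{r(x',y',i+1)}/h_{i+1}$, and the index $r(x',y',i+1)$ of the \emph{parent} is not controlled in terms of the \emph{child} index $r$. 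Nothing in the invariants prevents $r(x',y',i+1) \gg r$ (the children's ranges start at $0$ and the parent could be at a much larger range), in which case the additive slack on the concatenated walk $\tau$ through $v$ can swamp the $(1+\epsilon)^{r+2i+1}$ term by an unbounded factor, and the conclusion $\dist_{H'_T}(x_{j_1}, y_{j_q}) = O((1+\epsilon)^{r+2i+1})$ fails.

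The paper sidesteps the issue by never crossing between the two roundings. It works entirely in $H := \tilde{G}^{8h_i/\epsilon, r}[\cQ(x',y',i+1)]$, the very graph in which the children's path unions are populated via \Cref{lin:first_path_union_computation} and \Cref{lin:path_union_refresh}. With that choice the upper bound is immediate from the definition of a path union: $v \in \cP(x_{j_1}, y_{j_1}, (1+\epsilon)^{r+2i+1}, H)$ gives $\dist_H(x_{j_1}, v) \leq (1+\epsilon)^{r+2i+1}$, and similarly $\dist_H(v, y_{j_q}) \leq (1+\epsilon)^{r+2i+1}$, so by the triangle inequality $\dist_H(x_{j_1}, y_{j_q}) \leq 2(1+\epsilon)^{r+2i+1}$ with no reinterpretation of any walk and hence no cross-graph slack. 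The lower bound is then also carried out in $H$ (still using the hitting-set bound of at most $h_i$ hops between consecutive $i$-centers and Invariant~(I1) to lower-bound each $\dist_H^{h_i}(x_j, y_j)$ by $(1+\epsilon)^r$), and dividing the two bounds gives $q' \leq 2(1+\epsilon)^{2i+1} \leq 2^{2i+2}$ for each fixed $r$. You should restructure your argument to use $H$ for both bounds rather than $H'_T$ on one side and $\tilde{G}_T^{8h_i/\epsilon, r}[\cdot]$ on the other.
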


\begin{proof}
We show that at the last time the algorithm has called {\Refresh{$x'$, $y'$, $i+1$}} (where it determined the current $i$-children of $ (x', y') $) there are at most $ q = 2^{2i+2} \lceil \log_{1+\epsilon} (n W) \rceil $ pairs of $i$-children $ (x_j, y_j) $ of $ (x', y') $ such that
\begin{equation*}
v \in \cP (x_j, y_j, (1 + \epsilon)^{r (x_j, y_j, i) + 2i + 1}, \tilde{G}^{8 h_i / \epsilon, r (x_j, y_j, i)} [\cQ (x', y', i+1)]) \, .
\end{equation*}
Note that, for each $ 1 \leq i \leq l $, $ r (x_j, y_j, i) $ has not changed since the last refresh (as such a change implies refreshing $ (x', y') $).
Thus, for each $ 1 \leq i \leq l $, $ \cQ (x_j, y_j, i) $ is a subset of the path union above, which means that the lemma will be implied by this claim.
Now we actually prove the following slightly stronger claim:
for every every node~$ v $ and every $ 0 \leq r \leq \lfloor \log_{1 + \epsilon} (nW) \rfloor $ , there are at most $ q' = 2^{2i+2} $ pairs of $i$-children $ (x_j, y_j) $ of $ (x', y') $ such that
\begin{equation*}
v \in \cP (x_j, y_j, (1 + \epsilon)^{r + 2i + 1}, \tilde{G}^{8 h_i / \epsilon, r} [\cQ (x', y', i+1)]) \text{ and } \dist_G^{h_i} (x_j, y_j) \geq (1 + \epsilon)^r \, .
\end{equation*}

In this proof we use the abbreviation $ H = \tilde{G}^{8 h_i / \epsilon, r} [\cQ (x', y', i+1)] $.
Suppose that $ v $ is contained in $ q' > 2^{2i+2} $ path unions $ \cP (x_j, y_j, (1 + \epsilon)^{r + 2i + 1}, H) $ of $i$-children $ (x_j, y_j) $ such that $ \dist_G^{h_i} (x_j, y_j) \geq (1 + \epsilon)^r $.
Let $ j_1, \ldots, j_{q'} $ be the corresponding indices and assume without loss of generality that $ j_1 < j_2 \ldots < j_{q'} $.
By the initial random sampling of $i$-centers, every shortest path consisting of $ h_i - 1 $ edges contains an $i$-center whp (\Cref{lem:hitting_set_argument}).
Therefore, for every $ 1 \leq j \leq l $, there is a shortest path between $ x_j $ and $ y_j $ in $ H $ with at most $ h_i $ edges, i.e., $ \dist_H (x_j, y_j) = \dist_H^{h_i} (x_j, y_j) $.
Furthermore, for every $ j \in \{ j_1, \ldots, j_{q'} \} $, we have $ \dist_G^{h_i} (x_j, y_j) \geq (1 + \epsilon)^r $ by our assumption and thus
\begin{multline}
\dist_H (x_{j_1}, y_{j_{q'}}) \geq \sum_{j \in \{ j_1, \ldots, j_{q'} \}} \dist_H (x_j, y_j) 
 = \sum_{j \in \{ j_1, \ldots, j_{q'} \}} \dist_H^{h_i} (x_j, y_j) \\
 \geq \sum_{j \in \{ j_1, \ldots, j_{q'} \}} \dist_G^{h_i} (x_j, y_j)
 \geq \sum_{j \in \{ j_1, \ldots, j_{q'} \}} (1 + \epsilon)^r
 = {q'} (1 + \epsilon)^r \label{eq:distance_outer_centers_lower_bound}
\end{multline}

We now derive an upper bound on $ \dist_H (x_{j_1}, y_{j_{q'}}) $ contradicting this lower bound.
Since $ v $ is contained in the path union $ \cP (x_j, y_j, (1 + \epsilon)^{r + 2i + 1}, H) $, we know by the definition of the path union that $ v $ lies on a shortest path from $ x_{j_1} $ to $ y_{j_1} $ in $ H $ of weight at most $ (1 + \epsilon)^{r + 2i + 1} $ and thus $ \dist_H (x_{j_1}, v) \leq (1 + \epsilon)^{r + 2i + 1} $.
The same argument shows that $ \dist_H (v, y_{j_{q'}}) \leq (1 + \epsilon)^{r + 2i + 1} $.
By the triangle inequality we therefore have
\begin{equation}
\dist_H (x_{j_1}, y_{j_{q'}}) \leq \dist_H (x_{j_1}, v) + \dist_H (v, y_{j_{q'}}) \leq 2 (1 + \epsilon)^{r + 2i + 1} \label{eq:distance_outer_centers_upper_bound}
\end{equation}
By combining Inequalities~\eqref{eq:distance_outer_centers_lower_bound} and~\eqref{eq:distance_outer_centers_upper_bound} we get $ q' \leq 2 (1 + \epsilon)^{2i + 1} $, which is a contradictory statement for $ q' > 2^{2i+2} $ because $ \epsilon \leq 1 $.
Thus, $ v $ is contained in at most $ q' = 2^{2i+2} $ path unions $ \cP (x_j, y_j, (1 + \epsilon)^r, H) $ of $i$-children $ (x_j, y_j) $ such that $ \dist_G^{h_i} (x_j, y_j) \geq (1 + \epsilon)^r $.
\end{proof}

We now bound the time needed for maintaining the hub links as follows.
For every $i$-hub $ z $ (with $ 1 \leq i \leq k $) and every $ 0 \leq r \leq \log_{1 + \epsilon} (nW) $ we maintain both an incoming and an outgoing ES-tree up to depth $ (1 + \epsilon)^{r + 2i + 1} $ in $ \tilde{G}^{8 h_i / \epsilon, r} $.
In $ \tilde{G}^{8 h_i / \epsilon, r} $ every edge weight is a multiple of $ \rho = \epsilon^2 (1 + \epsilon)^r / (8 h_i) $ and thus we can scale down the edge weights by the factor $ 1 / \rho $ and maintain the ES-tree in the resulting integer-weighted graph up to depth
\begin{equation*}
\frac{(1 + \epsilon)^{r + 2i + 1}}{\rho} \leq \frac{2^{2k+1} (1 + \epsilon)^r}{\rho} = \frac{2^{2k+1} (1 + \epsilon)^r 8 h_i}{\epsilon^2 (1 + \epsilon)^r} = O (2^{2k} h_i / \epsilon^2) \, .
\end{equation*}
Thus, maintaining these two trees takes time $ O (2^{2k} m h_i / \epsilon^2) $ which is $ O (2^{2k} m n / (c_i \epsilon^2)) $ as $ h_i = n / c_i $.
As we have $ O (\log_{1+\epsilon} (nW)) = \tilde O (\log{W} / \epsilon) $ such trees for every $i$-hub and there are $ \tilde O (b_i) $ $i$-hubs in, maintaining all these trees takes time $ \tilde O (\sum_{1 \leq i \leq k} 2^{2 k} b_i m n \log{W} / (c_i \epsilon^3)) $.

We now bound the time needed for maintaining the index $ l (x, y, i) $ for every pair of $i$-centers $ (x, y) $.
First, observe that $ l (x, y, i) $ assumes integer values from $ 1 $ to $ |B_i| $ (the number of $i$-hubs) and it only decreases (to $ 0 $) if $ r (x, y, i) $ increases.
The index $ r (x, y, i) $ on the other hand is non-decreasing and assumes integer values from $ 0 $ to $ \lceil \log_{1+\epsilon} (nW) \rceil = \tilde O (\log{W} / \epsilon) $.
As $ |B_i| = \tilde O (b_i) $, the value of $ l (x, y, i) $ therefore changes $ \tilde O (b_i \log{W} / \epsilon) $ times.
As there are $ \tilde O (2^k c_i) $ $i$-centers, the indices of all pairs of centers together change at most $ \tilde O (\sum_{1 \leq i \leq k} 2^{2k} b_i c_i^2 \log{W} / \epsilon) $ times.
It remains to bound the time spent in total for calls of the form \UpdateHubLinks{$x$, $y$, $i$, $r$} in which the index $ l (x, y, i) $ does not increase but the algorithm still spends constant time for checking whether $ l (x, y, i) $ should increase.
In such a case we charge the running time to the pair of $i$-centers $ (x, y) $.
Note that the call \UpdateHubLinks{$x$, $y$, $i$, $r$} has either happened because $ (x, y) $ is made an $i$-child of some pair of $i+1$-centers or because $ (x, y) $ is an active pair of $i$-centers (i.e., in $ A_i $) after the deletion of a node.
We only have to focus on the second case because in the first case the charge of $ O(1) $ on $ (x, y) $ can be neglected.
As argued above, at any time, there are at most $ O (q^k \log_{1 + \epsilon} (nW)) $ active $i$-centers.
Furthermore, there are at most $ m $ deletions in $ G $.
Therefore the total time needed for maintaining all indices $ l (x, y, i) $ is $ \tilde O (\sum_{1 \leq i \leq k} 2^{2k} b_i c_i^2 \log{W} / \epsilon + q^k m \log{W} + \epsilon) $.

Putting everything together, the total running time of our algorithm using $ k $ layers is
\begin{multline*}
\tilde O \left( \sum_{1 \leq i \leq k} \frac{b_i c_i^2 \log W}{\epsilon} + \sum_{1 \leq i \leq k} \frac{b_i m n \log{W}}{c_i \epsilon^3} + q^{2k} m \cdot \min \left( \frac{m}{b_1}, \frac{n^2}{b_1^2} \right)
 \right. \\
\left.
+ \sum_{1 \leq i \leq k-1} q^{2k} c_i^2 \min \left( \frac{m}{b_{i+1}}, \frac{n^2}{b_{i+1}^2} \right) + 2^{2k} q^{2k} c_k^2 m \right)
\end{multline*}
where $ q = 2^{2k+2} \lfloor \log_{1+\epsilon}{W} \rceil $.
Assuming that $ W \leq 2^{\log^c{n}} $ and $ \epsilon \geq 1 / \log^c{n} $ and setting $ k = \lfloor \log^{1/4} {n} \rfloor $ we get $ 2^k q = O (n^{O(\log{\log{n}} / \log^{1/4} {n})}) = O (n^{o(1)}) $.
We now simply set the parameters $ b_i \leq n $ and $ c_i \leq n $ for each $ 1 \leq i \leq k $ in the same way as in \Cref{sec:running_time_st_reach} to obtain the same asymptotic running time (in terms of polynomial factors) as for the $s$-$t$ reachability algorithm.

\begin{theorem}\label{thm:st_shortest_path}
For every $ W \geq 1 $ and every $ 0 < \epsilon \leq 1 $ such that $ W \leq 2^{\log^c{n}} $ and $ \epsilon \geq 1 / \log^c{n} $ for some constant $ c $, there is a decremental $ (1 + \epsilon) $-approximate \sssssp algorithm with constant query time and total update time
\begin{equation*}
O (\min (m^{5/4} n^{1/2 + o(1)}, m^{2/3} n^{4/3 + o(1)})) = O (m n^{6/7 + o(1)})
\end{equation*}
that is correct with high probability against an oblivious adversary.
\end{theorem}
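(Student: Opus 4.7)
The plan is to finish by plugging specific parameter choices into the running time bound already derived just before the theorem statement, and to derive the approximation factor from the correctness lemmas.

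First, I would handle correctness. Lemma~\ref{lem:approximation_guarantee_s_t} shows that the algorithm, run with parameter $\epsilon$, maintains $\dist_G(s,t) \leq \delta(s,t) \leq (1+\epsilon)^{2k+1}\dist_G(s,t)$ whp. To upgrade this to a genuine $(1+\epsilon)$-approximation I would run the entire algorithm with $\epsilon' = \epsilon/(4k+2)$ in place of $\epsilon$; by Lemma~\ref{lem:exponential_inequality} applied with $x = \epsilon$ and $y = 2k+1$, we have $(1+\epsilon')^{2k+1} \leq 1+\epsilon$. Constant query time is immediate because the algorithm explicitly maintains $\delta(s,t)$ and a query simply returns its value.

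Second, I would set the number of layers to $k = \lfloor \log^{1/4}{n} \rfloor$. Under the hypotheses $W \leq 2^{\log^c n}$ and $\epsilon \geq 1/\log^c n$, the quantity $\log_{1+\epsilon}(nW)$ is polylogarithmic in $n$, so $q = 2^{2k+2}\lceil \log_{1+\epsilon} W\rceil$ satisfies $q = 2^{O(\log^{1/4} n)}\cdot \operatorname{polylog}(n)$, hence $q^{2k} = 2^{O(\sqrt{\log n})} = n^{o(1)}$. The dependencies on $\log W$ and $1/\epsilon$ in the running time bound are likewise polylogarithmic and get absorbed into the $\tilde O(\cdot)$. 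The extra factor of $4k+2$ introduced by the $\epsilon \mapsto \epsilon'$ rescaling only changes $\epsilon$ by a polylogarithmic factor and is thus also swallowed.

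Third, I would reuse the two balanced parameter settings from Section~\ref{sec:running_time_sparse}. For the first setting (the same $b_i, c_i$ as in the $m^{5/4}n^{1/2}$ analysis for \ssssr), the three dominant families of terms in the new bound, namely $b_i m n/c_i$, $\min(m/b_{i+1}, n^2/b_{i+1}^2)c_i^2$, and $c_k^2 m$, evaluate to $m^{5/4}n^{1/2}$ up to a factor of $2^{O(k)}$, so after absorbing the $q^{2k}$ and $2^{2k}$ prefactors we get $O(m^{5/4}n^{1/2+o(1)})$. For the second setting (the one giving $m^{2/3}n^{4/3+o(1)}$ for \ssssr) the same substitution yields $O(m^{2/3}n^{4/3+o(1)})$. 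Taking the minimum of the two and noting the crossover point $m = n^{10/7}$ gives the stated bound $O(mn^{6/7+o(1)})$.

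The main obstacle is just bookkeeping: one must verify that \emph{every} polylogarithmic, $2^{O(k)}$, and $q^{O(k)}$ factor introduced along the way --- from the $\lceil \log_{1+\epsilon}(nW)\rceil$ copies of ES-trees per hub, from the $(1+\epsilon)^{r+2i+1}$ depth bound, from the rescaling $\epsilon \mapsto \epsilon/(4k+2)$, and from the branching factor $q = 2^{2k+2}\lceil \log_{1+\epsilon} W\rceil$ in Lemma~\ref{lem:number_of_children_containing_node_weighted} --- is indeed $n^{o(1)}$ under the hypotheses on $W$ and $\epsilon$; the choice $k = \lfloor \log^{1/4} n\rfloor$ is made precisely so that $q^k$ and $2^k$ both come out to $n^{o(1)}$. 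Once this is checked, the arithmetic of balancing the terms is identical to that in Section~\ref{sec:running_time_sparse} and can be quoted verbatim.
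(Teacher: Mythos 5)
Your proposal is correct and follows the paper's own proof essentially verbatim: rescale $\epsilon$ to $\epsilon/(4k+2)$ and invoke \Cref{lem:approximation_guarantee_s_t,lem:exponential_inequality} for the approximation factor, set $k = \lfloor\log^{1/4}n\rfloor$ so that $q^{2k}$ and $2^{2k}$ are $n^{o(1)}$ under the hypotheses on $W$ and $\epsilon$, and then reuse the two balanced parameter settings from \Cref{sec:running_time_sparse}. The only minor imprecision is that you list three ``dominant families of terms'' and omit the $b_i c_i^2$ and $m\cdot\min(m/b_1,n^2/b_1^2)$ terms, but since you defer to the balancing already carried out in \Cref{sec:running_time_sparse} (which covers all five) this has no effect on correctness.
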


Similar to the $s$-$t$ reachability algorithm we can extend the $ (1 + \epsilon) $-approximate \sssssp algorithm in the following ways.

\begin{corollary}
For every $ W \geq 1 $ and every $ 0 < \epsilon \leq 1 $ such that $ W \leq 2^{\log^c{n}} $ and  $ \epsilon \geq 1 / \log^c{n} $ for some constant $ c $, there is a decremental $ (1 + \epsilon) $-approximate algorithm for maintaining shortest paths between $ p $ source-sink pairs with constant query time and total update time
\begin{equation*}
O (\min ( p^{1/2} m^{5/4} n^{1/2 + o(1)}, p^{1/3} m^{2/3} n^{4/3 + o(1)}) )
\end{equation*}
that is correct with high probability against an oblivious adversary.
\end{corollary}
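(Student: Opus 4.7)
The plan is to mimic, in the weighted $(1+\epsilon)$-approximate setting, the multi-pair extension of the $s$-$t$ reachability algorithm that was carried out at the end of \Cref{sec:sparse}. Concretely, I would run $p$ independent instances of the $(1+\epsilon)$-approximate \sssssp algorithm underlying \Cref{thm:st_shortest_path}, one for each source-sink pair $(s_\ell, t_\ell)$, but let all $p$ instances \emph{share} the same hierarchy of hub sets $B_1 \supseteq \dots \supseteq B_k$ and center sets $C_0 \supseteq \dots \supseteq C_{k+1}$ sampled once at initialization. The correctness (for each pair) is inherited from the single-pair algorithm since the sampling bounds in \Cref{lem:hitting_set_argument} (both the hitting-set and the size bound) hold simultaneously with high probability for all $p$ pairs.

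The key point for the running time is to reinspect which terms in the single-pair bound are \emph{global} (depending only on the shared hubs and on the weight-rounded graphs $\tilde G^{h,r}$) and which are \emph{per-pair} (depending on the pair-specific path-union data, refreshes, and hub-link indices). The ES-trees maintained at each $i$-hub in the graphs $\tilde G^{8h_i/\epsilon, r}$ live entirely on the hub side and are shared across all pairs, so their total cost $\tilde O\bigl(\sum_i 2^{2k} b_i m n \log W / (c_i \epsilon^3)\bigr)$ remains unchanged. In contrast, every term that already scaled with $c_i^2$ or with $m$ in the single-pair analysis---the path-union initializations in Line~\ref{lin:first_path_union_computation}, the refresh cost, the maintenance of the indices $l(x,y,i)$ for active pairs, the top-level BFS/Dijkstra computation for $(s,t)$, and the leaf-level edge term $q^{2k} m \cdot \min(m/b_1, n^2/b_1^2)$---must be multiplied by $p$, because each pair has its own top-level pair $(s_\ell,t_\ell)$, its own active-pair lists $A_i^{(\ell)}$, its own path-union sets $\cQ^{(\ell)}(\cdot,\cdot,i)$, and its own $r^{(\ell)}(\cdot,\cdot,i)$-indices. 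Thus the multi-pair total update time takes the form
\begin{multline*}
\tilde O\Bigl( \sum_{1\le i\le k} \tfrac{b_i m n \log W}{c_i \epsilon^3} + p\sum_{1\le i\le k} \tfrac{b_i c_i^2 \log W}{\epsilon} + p\, q^{2k} m \min\!\bigl(\tfrac{m}{b_1},\tfrac{n^2}{b_1^2}\bigr) \\ {} + p\sum_{1\le i\le k-1} q^{2k} c_i^2 \min\!\bigl(\tfrac{m}{b_{i+1}},\tfrac{n^2}{b_{i+1}^2}\bigr) + p\, 2^{2k} q^{2k} c_k^2 m \Bigr),
\end{multline*}
which has exactly the same shape as the reachability expression in \Cref{sec:sparse} up to the $n^{o(1)}$ factors absorbed by the choice $k = \lfloor \log^{1/4} n\rfloor$ (so that $2^k q = n^{o(1)}$ under the assumptions $W \le 2^{\log^c n}$ and $\epsilon \ge 1/\log^c n$).

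From here I would simply reuse, verbatim, the two parameter settings for $(b_i, c_i)$ from the multi-pair reachability calculation, namely the doubling-style setting with exponents in $2^{(k+2)}-3$ (yielding $p^{1/2} m^{5/4} n^{1/2}$) and the linear setting with exponents in $3k+1$ (yielding $p^{1/3} m^{2/3} n^{4/3}$). The same algebraic checks balance the terms $p m^2 / b_1$, $p c_i^2 m / b_{i+1}$, $p c_k^2 m$, $b_i m n/c_i$ and $p b_i c_i^2$, and the $n^{o(1)}$ slack coming from the $2^{2k} q^{2k}$ prefactor is absorbed as before. I do not expect any genuine obstacle here: the whole argument of \Cref{sec:running_time_sparse,sec:approx_shortest_paths} is pair-local except for the shared hub machinery, and that shared machinery is precisely what is \emph{not} multiplied by $p$. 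The only point requiring mild care is to verify that the hitting-set invariants of \Cref{lem:lower_bound_on_hop_distance,lem:upper_bound_distances,lem:approximation_guarantee_s_t}, which are applied ``for all shortest paths that arise during the algorithm'', continue to hold after union-bounding over the $p$ pairs; this costs only an extra $\log p = O(\log n)$ factor inside the sampling probabilities, which is hidden in the $\tilde O$ notation.
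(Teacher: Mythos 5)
Your high-level plan---share the sampled hub and center sets, run $p$ otherwise independent instances of the weighted \sssssp algorithm, and observe that the hub ES-trees are global while the path-union and refresh machinery is per-pair---is indeed the route the paper takes. However, your cost accounting contains a concrete error that breaks the claimed bound: you multiply the hub-link maintenance term $\tilde O\bigl(\sum_i b_i c_i^2 \log W / \epsilon\bigr)$ by $p$ and then assert that ``the same algebraic checks balance the terms $\ldots$ and $p\, b_i c_i^2$.'' This does not balance. In the paper's multi-pair reachability bound at the end of \Cref{sec:sparse}, the term $\sum_i b_i c_i^2$ appears \emph{without} the $p$ factor, and with the paper's parameter choices $b_1 c_1^2$ is already tight (within $n^{o(1)}$) against the target $p^{1/2} m^{5/4} n^{1/2}$. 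Thus $p\, b_1 c_1^2$ overshoots by roughly $p^{1-o(1)}$, giving something like $p^{3/2} m^{5/4} n^{1/2+o(1)}$ rather than the claimed $p^{1/2} m^{5/4} n^{1/2+o(1)}$; the same obstruction appears for the second parameter setting.

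The fix is an observation you are missing: although each instance carries its own range index $r^{(\ell)}(x,y,i)$, the hub-link predicate
\begin{equation*}
\dist_{\tilde{G}^{8 h_i / \epsilon, r}} (x, z) + \dist_{\tilde{G}^{8 h_i / \epsilon, r}} (z, y) \leq (1 + \epsilon)^{r + 2i+1}
\end{equation*}
depends only on $(x,y,i,r,z)$ and the current graph, \emph{not} on which instance is asking. One can therefore maintain a single global pointer $\ell^{*}(x,y,i,r)$ (the smallest hub index linking $x$ to $y$ at range $r$) monotonically for every pair of $i$-centers and every range $r \in \{0,\ldots,\lfloor \log_{1+\epsilon}(nW)\rfloor\}$; each instance then simply reads off $l^{(\ell)}(x,y,i) = \ell^{*}\bigl(x,y,i,r^{(\ell)}(x,y,i)\bigr)$. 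The total movement of these shared pointers charges $\tilde O(b_i \log_{1+\epsilon}(nW))$ per pair of $i$-centers, i.e.\ $\tilde O\bigl(\sum_i b_i c_i^2 \log W / \epsilon\bigr)$ in total, with no factor of $p$. This is the exact weighted analogue of the unweighted situation where the hub links are purely graph-dependent. Once you keep this term unmultiplied, your displayed running-time expression matches the shape of the multi-pair reachability bound verbatim, and reusing the paper's two parameter settings does give $\tilde O(p^{1/2} m^{5/4} n^{1/2})$ and $O(p^{1/3} m^{2/3} n^{4/3+o(1)})$ respectively, with the $2^{2k}q^{2k}$ prefactor absorbed into $n^{o(1)}$ by the choice $k = \lfloor \log^{1/4} n \rfloor$ as you say.
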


\begin{corollary}
For every $ W \geq 1 $ and every $ 0 < \epsilon \leq 1 $ such that $ W \leq 2^{\log^c{n}} $ and  $ \epsilon \geq 1 / \log^c{n} $ for some constant $ c $, there is a decremental $ (1 + \epsilon) $-approximate \sssp algorithm with constant query time and total update time 
\begin{equation*}
O (\min (m^{7/6} n^{2/3 + o(1)}, m^{3/4} n^{5/4 + o(1)})) = O (m n^{9/10 + o(1)})
\end{equation*}
that is correct with high probability against an oblivious adversary.
\end{corollary}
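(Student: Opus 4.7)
The plan is to combine the multi-pair $(1+\epsilon)$-approximate \stsp corollary (stated immediately above the target) with the general reduction from \sssp to \stsp given by Theorem \ref{thm: s-t to single source}. Fixing the source $s$ and pairing it with $k$ randomly sampled sinks is a special case of the $p$-pair multi-sink setting, so the preceding corollary supplies an algorithm with total update time
\[
T(k,m,n) \;=\; O\bigl(\min(\,k^{1/2} m^{5/4} n^{1/2+o(1)},\; k^{1/3} m^{2/3} n^{4/3+o(1)}\,)\bigr).
\]
Plugging this into Theorem \ref{thm: s-t to single source} yields a $(1+\epsilon)^2$-approximate \sssp algorithm with total update time $O(T(O(k\log(n\Delta)), m, n) + mn\Delta/k)$, where $\Delta = \lceil\log_{1+\epsilon}(nW)\rceil$. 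Under the standing assumptions $W \le 2^{\log^c n}$ and $\epsilon \ge 1/\log^c n$, we have $\Delta = \mathrm{polylog}(n) = n^{o(1)}$, so both the $\Delta$ factor and the logarithmic blow-up in the effective sink count $O(k\log(n\Delta))$ are absorbed into the existing $n^{o(1)}$ slack.

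It then remains to balance, for each branch of the minimum, the subroutine cost against the $mn/k$ overhead introduced by the reduction. For the first branch, setting $k = n^{1/3}/m^{1/6}$ equates $k^{1/2} m^{5/4} n^{1/2}$ with $mn/k$, both giving $m^{7/6} n^{2/3}$. For the second branch, setting $k = m^{1/4}/n^{1/4}$ equates $k^{1/3} m^{2/3} n^{4/3}$ with $mn/k$, both giving $m^{3/4} n^{5/4}$. Taking the better of the two branches yields total update time
\[
O\bigl(\min(m^{7/6} n^{2/3+o(1)},\; m^{3/4} n^{5/4+o(1)})\bigr).
\]
The two expressions coincide at $m = n^{7/5}$, where each evaluates to $m n^{9/10}$, and elsewhere one of them is smaller; hence the worst case of the minimum is $O(m n^{9/10+o(1)})$. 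Constant query time and high-probability correctness against an oblivious adversary are inherited directly from both Theorem \ref{thm: s-t to single source} and the multi-pair corollary.

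Finally, to convert the $(1+\epsilon)^2$ approximation coming out of the reduction (a $(1+\epsilon)$ factor from the subroutine and another from the shortcut-edge step) into a true $(1+\epsilon)$ approximation, one runs the entire construction with $\epsilon' = \epsilon/3$; since $\epsilon' = \Omega(1/\log^c n)$ is still within the admissible range, the asymptotic running time is unaffected. There is essentially no new obstacle to overcome: the hard work resides in the multi-pair \stsp algorithm and in the reduction itself, both of which are already in hand. The only point that requires a moment of care is verifying that the hypotheses of Theorem \ref{thm: s-t to single source} really are met by the multi-pair subroutine when all pairs share the same source, and that the $\Delta$-factor in the reduction cost does not spoil the exponents, both of which follow from the assumptions on $W$ and $\epsilon$.
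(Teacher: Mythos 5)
Your proposal is correct and follows exactly the paper's (implicit) route: apply the reduction of Theorem \ref{thm: s-t to single source} to the multi-pair $(1+\epsilon)$-approximate \stsp corollary, observe that $\Delta$ and the $\log$-blowup in the sink count vanish into the $n^{o(1)}$ under the assumptions on $W$ and $\epsilon$, and balance against the $mn/k$ overhead with $k = n^{1/3}/m^{1/6}$ and $k = m^{1/4}/n^{1/4}$ respectively (the same choices the paper makes in the analogous \ssr corollary). The $\epsilon$-rescaling and the crossover analysis at $m = n^{7/5}$ are also as intended.
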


\section{Faster Single-Source Reachability in Dense Graphs}\label{sec:SSR_dense}

In this section we first introduce a path union data structure that is more efficient than the naive approach for repeatedly computing path unions between one fixed node and other variable nodes.
We then show how to combine it with a multi-layer approach to obtain a faster decremental single-source reachability algorithm for dense graphs.

\subsection{Approximate Path Union Data Structure}\label{sec:approx_path_union_ds}

In the following we present a data structure for a graph $ G $ undergoing edge deletions, a fixed node~$ x $, and a parameter~$ h $.
Given a node $ y $, it computes an ``approximation'' of the path union $ \cP (x, y, h, G) $.
Using a simple static algorithm the path union can be computed in time $ O(m) $ for each pair $ (x, y) $.
We give an (almost) output-sensitive data structure for this problem, i.e., using our data structure the time will be proportional to the size of the approximate path union which might be $ o(m) $.
Additionally, we have to pay a global cost of $ O (m) $ that is amortized over \emph{all} approximate path union computations for the node $ x $ and \emph{all} nodes $ y $.
This will be useful because in our reachability algorithm we can use probabilistic arguments to bound the size of the approximate path unions.

\begin{proposition}\label{prop:approximate path union}
There is a data structure that, given a graph $ G $ undergoing edge deletions, a fixed node~$ x $, and a parameter~$ h $, provides a procedure \ApproximatePathUnion such that, given sequence of nodes $ y_1, \ldots, y_k $, this procedure computes sets $ F_1, \ldots F_k $ guaranteeing $ \cP (x, y, h, G) \subseteq F_i \subseteq \cP (x, y, (\log{m} + 3) h, G) $ for all $ 1 \leq i \leq k $.
The total running time is $ O(\sum_{1 \leq i \leq k} |F_i| + m) $.
\end{proposition}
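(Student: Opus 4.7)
The plan is to answer each query $y_i$ independently using a doubling strategy on the radius of the path union. For a fixed query $y_i$, define $P_j^{(i)} := \cP(x, y_i, jh, G)$ for $j = 0, 1, 2, \ldots$, so that $P_0^{(i)} \subseteq P_1^{(i)} \subseteq \cdots$. I would return $F_i := P_{j^*(i)}^{(i)}$, where $j^*(i)$ is the smallest index $j \geq 1$ such that $|P_j^{(i)}| \leq 2 |P_{j-1}^{(i)}|$. Each $P_j^{(i)}$ is computed using the characterization of \Cref{lem:path_union_computation}, ideally extending the bidirectional BFS from the previous iteration by a single layer of $h$ hops rather than recomputing it from scratch.

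I would first verify that $j^*(i)$ is well-defined and small. If the doubling condition fails (i.e.\ $|P_j^{(i)}| > 2|P_{j-1}^{(i)}|$) for all $j = 1, \ldots, J$, then a telescoping product gives $|P_J^{(i)}| > 2^J |P_0^{(i)}| \geq 2^J$; combined with $|P_J^{(i)}| \leq n \leq m$, this forces $J \leq \log_2 m$, so $j^*(i) \leq \log_2 m + 1$. Correctness is then immediate: on the one hand $F_i = P_{j^*(i)}^{(i)} \supseteq P_0^{(i)} = \cP(x, y_i, h, G)$; on the other hand $j^*(i) \leq \log_2 m + 1 \leq \log m + 3$, so $F_i \subseteq \cP(x, y_i, (\log m + 3)h, G)$, as required.

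The main obstacle is the running time. I need to bound the cost of producing $P_0^{(i)}, P_1^{(i)}, \ldots, P_{j^*(i)}^{(i)}$ by $O(|F_i|)$ per query, with only an additive $O(m)$ overall. My plan is to grow bidirectional BFS trees rooted at $x$ (forward) and $y_i$ (backward) in synchronous layers of depth $h$; after the $j$-th layer we have everything needed to read off $P_j^{(i)}$ using the additive characterization of \Cref{lem:path_union_computation}. The delicate part is the amortization: because the doubling condition fails at every intermediate step $j = 1, \ldots, j^*(i) - 1$, the sizes $|P_1^{(i)}|, |P_2^{(i)}|, \ldots, |P_{j^*(i) - 1}^{(i)}|$ form a geometrically decreasing sequence when read backwards from $|P_{j^*(i) - 1}^{(i)}| \leq |F_i|$, and therefore $\sum_{j \leq j^*(i)} |P_j^{(i)}|$ is $O(|F_i|)$; one must then charge the BFS work in the $j$-th layer to the newly discovered nodes in $P_j^{(i)} \setminus P_{j-1}^{(i)}$ (plus their incident edges), and show that the overflow -- edges incident to nodes explored but ultimately not belonging to $F_i$ -- can be paid for by the additive $O(m)$ global term, e.g.\ by scanning adjacency lists at most a constant number of times per edge across the entire sequence of $k$ queries.

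The hardest technical step will be making the last clause rigorous: when the backward BFS from $y_i$ (or the forward BFS from $x$) "overshoots" the region corresponding to $F_i$, we must ensure that the overshoot edges are accounted for by the global $O(m)$ budget and not billed repeatedly to each query. I expect this to be achieved by sharing a single persistent exploration structure at $x$ across queries (since $x$ is fixed and $G$ only loses edges, so each edge is scanned at most $O(1)$ times over the lifetime of the data structure), while the backward explorations from the varying $y_i$'s are controlled by the doubling stopping rule so that their cost telescopes into $O(|F_i|)$.
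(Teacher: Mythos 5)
Your doubling rule on the path-union sizes and the resulting bound $j^*(i) \leq \log_2 m + 1$ (hence $F_i \subseteq \cP(x, y_i, (\log m + O(1))h, G)$) are correct and match the spirit of the paper's argument, modulo a small index slip: $\cP(x, y_i, h, G) = P_1^{(i)}$ in your notation, not $P_0^{(i)}$. The running-time part, however, has a genuine gap that you yourself flag as ``the hardest technical step'' but do not resolve. Computing $P_j^{(i)}$ via \Cref{lem:path_union_computation} requires a backward BFS from $y_i$ to depth $jh$, whose cost is governed by the number of edges incident to $\{v : \dist_G(v, y_i) \leq jh\}$, a quantity that can be $\Theta(m)$ even when $|P_j^{(i)}|$ is tiny --- for instance when nearly every node can reach $y_i$ within distance $h$ but only a handful of those nodes are within distance $h$ of $x$. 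Your doubling rule stops based on $|P_j^{(i)}|$, so it cannot control this blow-up; and since each query has its own sink $y_i$, the backward explorations need not overlap, so there is no way to amortize $k$ of them against a single $O(m)$ global budget without a concrete pruning mechanism. Sharing a persistent exploration ``at $x$'' does not help, because the expensive work emanates from the varying sinks, not from the fixed source.

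The paper's fix is a monotonically shrinking node set $R(x)$, initialized to $V$, with the invariant that $R(x)$ always contains every node within distance $h$ of $x$. All BFS work --- the iterated backward searches $B_j = \{v \in R(x) : \dist_{G[R(x)]}(v, y_i) \leq jh\}$ and the final forward search producing $F$ --- is restricted to the induced subgraph $G[R(x)]$, and the doubling test is applied to $|E[B_j]|$ rather than to $|P_j^{(i)}|$. The key lemma is that every node of $X := B_{i^*-1} \setminus F$ is at distance more than $h$ from $x$, so $X$ can be permanently deleted from $R(x)$ while preserving the invariant. The doubling argument bounds the BFS cost of one call by $O(|E[B_{i^*-1}]|)$, and $B_{i^*-1} \subseteq F \cup X$, so this cost splits into an $O(|E[F]|)$ part charged to the returned set and a part charged to edges incident to $X$; since each node leaves $R(x)$ at most once over the lifetime of the data structure, the latter contributions sum to $O(m)$ across all queries. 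This pruned-universe device, together with the removal lemma that justifies it, is exactly the ingredient your proposal needs and lacks.
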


\subsubsection{Algorithm Description}

Internally, the data structure maintains a set $ R (x) $ of nodes, initialized with $ R (x) = V $, such that the following invariant is fulfilled at any time:
all nodes that can be reached from $ x $ by a path of length at most~$ h $ are contained in $ R (x) $ (but $ R (x) $ might contain other nodes as well).
Observe that thus $ R (x) $ contains the path union $ \cP (x, y, h, G) $ for every node $ y $.

To gain some intuition for our approach consider the following way of computing an approximation of the path union $ \cP (x, y, h, G) $ for some node $ y $.
First, compute $ B_1 = \{ v \in R(x) \mid \dist_{G[R (x)]} (v, y) \leq h \} $ using a backward breadth-first search (BFS) to $ y $ in  $ G[R(x)] $, the subgraph of~$ G $ induced by $ R (x) $.
Second, compute $ F = \{ v \in R(x) \mid \dist_{G[B_1]} (x, v) \leq h \} $ using a forward BFS from $ x $ in~$ G[B_1] $.
It can be shown that $ \cP (x, y, h, G) \subseteq F \subseteq \cP (x, y, 2 h, G) $.\footnote{Indeed, $ F $ might contain some node $ v $ with $ \dist_G (x, v) = h $ and $ \dist_G (v, y) = h $, but it will not contain any node $ w $ with either $ \dist_G (x, w) > h $ or $ \dist_G (w, y) > h $.}
Given $ B_1 $, we could charge the time for computing $ F $ to the set $ F $ itself, but we do not know how to pay for computing $ B_1 $ as $ B_1 \setminus F $ might be much larger than $ F $.

Our idea is to additionally identify a set of nodes $ X \subseteq \{ v \in V \mid \dist_G (x, v) > h \} $ and remove it from $ R (x) $.
Consider a second approach where we first compute $ B_1 $ as above and then compute $ B_2 = \{ v \in R(x) \mid \dist_{G[R(x)]} (v, y) \leq 2 h \} $ and $ F = \{ v \in R(x) \mid \dist_{G[B_2]} (x, v) \leq h \} $.
It can be shown that $ \cP (x, y, h, G) \subseteq F \subseteq \cP (x, y, 3 h, G) $.
Additionally, all nodes in $ X = B_1 \setminus F $ are at distance more than~$ h $ from $ x $ and therefore we can remove $ X $ from $ R (x) $.
Thus, we can charge the work for computing $ B_1 $ and $ F $ to $ X $ and $ F $, respectively.\footnote{Note that in our first approach removing $ B_1 \setminus F $ would not have been correct as $ F $ was computed w.r.t\ to $ G[B_1] $ and not w.r.t.\ $ G[B_2] $.}
However, we now have a similar problem as before as we do not know whom to charge for computing $ B_2 $.

We resolve this issue by simply computing $ B_i = \{ v \in R(x) \mid \dist_{G[R (x)]} (v, y) \leq i h \} $ for increasing values of $ i $ until we arrive at some $ i^* $ such that the size of $ B_{i^*} $ is at most double the size of $ B_{i^*-1} $.
We then return $ F = \{ v \in R(x) \mid \dist_{G[B_i]} (x, v) \leq h \} $ and charge the time for computing $ B_i $ to $ X = B_{i-1} \setminus F $ and~$ F $, respectively.
As the size of $ B_i $ can double at most $ O (\log{n}) $ times we have $ \cP (x, y, h, G) \subseteq F \subseteq \cP (x, y, O (h \log{n}), G) $, as we show below.
Procedure~\ref{alg:path_unions} shows the pseudocode of this algorithm.
Note that in the special case that $ x $ cannot reach $ y $ the algorithm returns the empty set.
In the analysis below, let $ i^* $ denote the final value of $ i $ before Procedure~\ref{alg:path_unions} terminates.

\begin{procedure}

\caption{ApproximatePathUnion($y$)}
\label{alg:path_unions}

\tcp{All calls of \ApproximatePathUnion{$y$} use fixed $ x $ and $ h $.}

Compute $ B_1 = \{ v \in R(x) \mid \dist_{G[R(x)]} (v, y) \leq h \} $ \tcp{backward BFS \emph{\textbf{to}} $ y $ in subgraph induced by $ R(x) $}
\For{$ i = 2 $ \KwTo $ \lceil \log{m} \rceil + 1 $}{
	Compute $ B_i = \{ v \in R(x) \mid \dist_{G[R(x)]} (v, y) \leq i h \} $ \tcp{backward BFS \emph{\textbf{to}} $ y $ in subgraph induced by $ R(x) $}
	\If{$ | E [B_i] | \leq 2 | E [B_{i-1}] | $}{\label{lin:check_for_size}
		Compute $ F = \{ v \in B_i \mid \dist_{G[B_i]} (x, v) \leq h \} $ \tcp{forward BFS \emph{\textbf{from}} $ x $ in subgraph induced by $ B_i $}
		$ X \gets B_{i-1} \setminus F $, $ R (x) \gets R (x) \setminus X $\;
		\Return{$ F $}\;
	}
}
\end{procedure}

\subsubsection{Correctness}
We first prove Invariant~(I): the set $ R (x) $ always contains all nodes that are at distance at most $ h $ from $ x $ in $ G $.
This is true initially as we initialize $ R (x) $ to be $ V $ and we now show that it continues to hold because we only remove nodes at distance more than $ h $ from $ x $.
\begin{lemma}
If $ R (x) \subseteq \{ v \in V \mid \dist_G (x, v) \leq h \} $, then for every node $ v \in X $ removed from $ R (x) $, we have $ \dist_G (x, v) > h $.
\end{lemma}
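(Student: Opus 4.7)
My plan is to prove the contrapositive: I will show that if $v \in X = B_{i-1} \setminus F$, then $\dist_G(x, v) > h$. So suppose for contradiction that $v \in X$ but $\dist_G(x, v) \leq h$, and let $\pi = \langle x = u_0, u_1, \ldots, u_\ell = v \rangle$ be a shortest path from $x$ to $v$ in $G$ with $\ell \leq h$. I aim to derive that $v \in F$, contradicting $v \notin F$.

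First, I will use the hypothesis on $R(x)$ (i.e., invariant (I), that $R(x)$ contains every node at $G$-distance at most $h$ from $x$) to show that every node $u_j$ on $\pi$ lies in $R(x)$: since $\dist_G(x, u_j) \leq j \leq \ell \leq h$, the invariant gives $u_j \in R(x)$. In particular, $\pi$ is a path in $G[R(x)]$.

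The key step is to upgrade this to $u_j \in B_i$ for every $j$, i.e., to show $\dist_{G[R(x)]}(u_j, y) \leq i h$. Here I crucially use that $v$ lies in $B_{i-1}$, \emph{not merely} $B_i$: the subpath of $\pi$ from $u_j$ to $v$ lies in $G[R(x)]$ (by the previous paragraph), so $\dist_{G[R(x)]}(u_j, v) \leq \ell - j \leq h$, and combining with $v \in B_{i-1}$ via the triangle inequality yields
\begin{equation*}
\dist_{G[R(x)]}(u_j, y) \leq \dist_{G[R(x)]}(u_j, v) + \dist_{G[R(x)]}(v, y) \leq h + (i-1) h = i h,
\end{equation*}
so $u_j \in B_i$. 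Thus the entire path $\pi$ lies in $G[B_i]$, which gives $\dist_{G[B_i]}(x, v) \leq \ell \leq h$ and therefore $v \in F$ by the definition of $F$. This contradicts $v \in X = B_{i-1} \setminus F$, completing the proof.

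The main (minor) obstacle is recognizing why the algorithm uses two layers $B_{i-1}$ and $B_i$ rather than a single set: the ``slack'' of one extra $h$ in the backward search is exactly what allows the triangle-inequality argument above to place all nodes on the forward shortest path inside $B_i$, which in turn certifies that any discarded node truly has $G$-distance more than $h$ from $x$. No further machinery is needed.
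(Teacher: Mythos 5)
Your proof is correct and follows essentially the same argument as the paper's: assume $\dist_G(x,v)\leq h$, observe that the shortest $x$-$v$ path lies in $G[R(x)]$ by the invariant, push every node on it into $B_i$ via the triangle inequality through $v\in B_{i-1}$, and conclude $v\in F$, a contradiction. The only difference is notational ($i$ versus the paper's $i^*$ for the final iteration index).
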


\begin{proof}
Let $ v \in X = B_{i^*-1} \setminus F $ and assume by contradiction that $ \dist_G (x, v) \leq h $.
Since $ v \in B_{i^*-1} $ we have $ \dist_{G[R(x)]} (v, y) \leq (i^*-1) h $.
Now consider the shortest path $ \pi $ from $ x $ to~$ v $ in~$ G $, which has length at most $ h $.
By the assumption, every node on $ \pi $ is contained in $ G[R (x)] $.
Therefore, for every node $ v' $ on $ \pi $, we have $ \dist_{G[R(x)]} (v', v) \leq h $ and thus
\begin{equation*}
\dist_{G[R(x)]} (v', y) \leq \dist_{G[R(x)]} (v', v) + \dist_{G[R(x)]} (v, y) \leq h + (i^*-1) h \leq i^* h
\end{equation*}
which implies that $ v' \in B_{i^*} $.
Thus, every node on $ \pi $ is contained in $ B_{i^*} $.
As $ \pi $ is a path from $ x $ to $ v $ of length at most $ h $ it follows that $ \dist_{G[B_{i^*}]} (x, v) \leq h $.
Therefore $ v \in F $, which contradicts the assumption $ v \in X $.
\end{proof}

We now complete the correctness proof by showing that the set of nodes returned by the algorithm approximates the path union.

\begin{lemma}
Procedure~\ref{alg:path_unions} returns a set of nodes $ F $ such that $ \cP (x, y, h, G) \subseteq F \subseteq \cP (x, y, (\log{m} + 3) h, G) $.
\end{lemma}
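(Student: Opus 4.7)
The plan is to prove the two inclusions separately and then verify that the \textbf{for}-loop does exit with a valid index $i^*$. Both inclusions follow from the construction of $F$ together with Invariant~(I), which was already shown: $R(x)$ always contains every node at distance at most $h$ from $x$ in $G$.

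For the lower bound $\cP(x, y, h, G) \subseteq F$, I would fix $v \in \cP(x, y, h, G)$ and pick a witnessing path $\pi$ from $x$ to $y$ in $G$ of length at most $h$ through $v$. Invariant~(I) places every node of $\pi$ inside $R(x)$, so $\pi$ survives in $G[R(x)]$. For any node $v'$ on $\pi$, the suffix of $\pi$ shows $\dist_{G[R(x)]}(v', y) \leq h$, so $v' \in B_1 \subseteq B_{i^*}$. In particular the prefix of $\pi$ from $x$ to $v$ (of length $\leq h$) lies in $G[B_{i^*}]$, hence $\dist_{G[B_{i^*}]}(x, v) \leq h$ and $v \in F$.

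For the upper bound $F \subseteq \cP(x, y, (\log m + 3) h, G)$, I would take $v \in F$. The forward BFS gives $\dist_{G[B_{i^*}]}(x, v) \leq h$ and since $G[B_{i^*}]$ is a subgraph of $G$ this gives $\dist_G(x, v) \leq h$. On the other hand $v \in B_{i^*}$ means $\dist_{G[R(x)]}(v, y) \leq i^* h$, hence $\dist_G(v, y) \leq i^* h \leq (\lceil \log m \rceil + 1)h \leq (\log m + 2) h$. Adding the two yields $\dist_G(x, v) + \dist_G(v, y) \leq (\log m + 3) h$, and invoking the path-union characterization (\Cref{pro:path_union_characterization}) places $v$ in $\cP(x, y, (\log m + 3) h, G)$.

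The subtlety I expect to be the main obstacle is the termination step: one has to argue that the test $|E[B_i]| \leq 2|E[B_{i-1}]|$ in Line~\ref{lin:check_for_size} actually succeeds for some $i$ in the loop range. The plan is a doubling argument by contradiction. Since $|E[B_i]|$ is monotone and capped by $m$, if the condition failed throughout the loop then the sequence $|E[B_i]|$ would more than double at every step; combined with the fact that from the first index $i_0$ with $|E[B_{i_0}]| \geq 1$ we need at most $\lceil \log m \rceil$ further doublings before exceeding $m$, one derives $|E[B_{\lceil \log m \rceil + 1}]| > m$, a contradiction. The degenerate case $|E[B_1]| = 0$ just means $x$ cannot reach $y$ within $h$ edges in $G[R(x)]$, which only makes the inclusion cleaner; once termination is secured, the two inclusion arguments above conclude the proof.
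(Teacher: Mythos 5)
Your proposal is correct and follows essentially the same structure as the paper's proof: both inclusions are established exactly as you describe (using Invariant~(I) to place the witnessing path inside $G[R(x)]$ and hence inside $B_{i^*}$ for the lower bound, and summing the forward depth $h$ with $\dist_{G[R(x)]}(v,y)\leq i^*h$ together with \Cref{pro:path_union_characterization} for the upper bound), and the paper likewise dispatches termination with a brief doubling observation. One small remark on your degenerate case: $|E[B_1]|=0$ actually forces $|E[B_2]|=0$ as well (any node in $B_1$ other than $y$ would yield a $B_1$-internal edge on its short path to $y$, so $B_1\subseteq\{y\}$ and $y$ has no in-neighbor in $G[R(x)]$, which in turn empties every $B_i$), so the loop exits immediately at $i=2$ and the off-by-one concern you flag never arises.
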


\begin{proof}
We first argue that the algorithm actually returns some set of nodes $ F $.
Note that in \Cref{lin:check_for_size} of the algorithm we always have $ | E [B_i] | \geq | E [B_{i-1}] | $ as $ B_{i-1} \subseteq B_i $.
As $ E [B_i] $ is a set of edges and the total number of edges is at most $ m $, the condition $ | E [B_i] | \leq | E [B_{i-1}] | $ therefore must eventually be fulfilled for some $ 2 \leq i \leq \lceil \log{m} \rceil + 1 $.

We now show that $ \cP (x, y, h, G) \subseteq F $.
Let $ v \in \cP (x, y, h, G) $, which implies that $ v $ lies on a path $ \pi $ from $ x $ to $ y $ of length at most $ h $.
For every node $ v' $ on $ \pi $ we have $ \dist_G (x, v') \leq h $, which by Invariant~(I) implies $ v' \in R (x) $.
Thus, the whole path $ \pi $ is contained in $ G[R (x)] $.
Therefore $ \dist_{G[R (x)]} (v', y) \leq h $ for every node $ v' $ on $ \pi $ which implies that $ \pi $ is contained in $ G[B_{i^*}] $.
Then clearly we also have $ \dist_{G[B_{i^*}]} (x, v) \leq h $ which implies $ v \in F $.

Finally we show that $ F \subseteq \cP (x, y, (\log{m} + 3) h, G) $ by proving that $ \dist_G (x, v) + \dist_G (v, y) \leq (\log{m} + 3) h $ for every node $ v \in F $. 
As $ G [B_{i^*}] $ is a subgraph of~$ G $, we have $ \dist_G (x, v) \leq \dist_{G[B_{i^*}]} (x, v) $ and $ \dist_G (v, y) \leq \dist_{G[B_{i^*}]} (v, y) $.
By the definition of $ F $ we have $ \dist_{G[B_{i^*}]} (x, v) \leq h $.
As $ F \subseteq B_{i^*} $ we also have $ \dist_{G[B_{i^*}]} (v, y) \leq i^* h \leq (\lceil \log{m} \rceil + 1) h \leq (\log{m} + 2) h $.
It follows that $ \dist_G (x, v) + \dist_G (v, y) \leq {h + (\log{m} + 2) h} = {(\log{m} + 3) h} $.
\end{proof}

\subsubsection{Running Time Analysis}
To bound the total running time we prove that each call of Procedure~\ref{alg:path_unions} takes time proportional to the number of edges in the returned approximation of the path union plus the number of edges incident to the nodes removed from $ R (x) $.
As each node is removed from $ R(x) $ at most once, the time spent on \emph{all} calls of Procedure~\ref{alg:path_unions} is then $ O (m) $ plus the sizes of the subgraphs induced by the approximate path unions returned in each call.

\begin{lemma}\label{lem:running_time_approximate_path_union}
The running time of Procedure~\ref{alg:path_unions} is $ O (| E [F] | + | E [X, R(x)] | + | E [R(x), X] |) $ where $ F $ is the set of nodes returned by the algorithm, and $ X $ is the set of nodes the algorithm removes from $ R(x) $.
\end{lemma}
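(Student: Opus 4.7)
The plan is to decompose the running time into three parts and bound each in terms of $|E[B_{i^*}]|$, and then bound $|E[B_{i^*}]|$ combinatorially using the loop's termination condition together with the structural relationships between $B_{i^*-1}$, $X$, and $F$. The three parts are: (i) the series of backward BFSs from $y$ in $G[R(x)]$ that compute $B_1,\dots,B_{i^*}$, together with whatever bookkeeping is needed to evaluate $|E[B_i]|$ in the loop test; (ii) the single forward BFS from $x$ in $G[B_{i^*}]$ that produces $F$; and (iii) the set update $R(x)\gets R(x)\setminus X$.

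The forward BFS in (ii) is immediate: it is confined to $G[B_{i^*}]$, a graph with exactly $|E[B_{i^*}]|$ edges. For (i), I would implement the backward BFS to visit nodes in order of depth and to not traverse in-edges of nodes at depth exactly $i^*h$ (such traversals are unnecessary, since any neighbor found would lie at depth $i^*h+1$ and so outside $B_{i^*}$). With that implementation, every traversed edge $(u,v)$ has $v$ at depth $<i^*h$, forcing $u$ to have depth $\leq i^*h$ and hence $u\in B_{i^*}$; so every traversed edge lies in $E[B_{i^*}]$. The counters needed for the test $|E[B_i]|\leq 2|E[B_{i-1}]|$ can be maintained incrementally by attributing each traversed edge $(u,v)$ to the smallest index $i$ with $u,v\in B_i$, which is determined in $O(1)$ time from the two known depths, and then reading $|E[B_i]|$ as a prefix sum. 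Part (iii) costs $O(|X|)$, which is absorbed by $O(|E[X,R(x)]|)$ up to trivial additive terms.

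It then remains to show $|E[B_{i^*}]|=O(|E[F]|+|E[X,R(x)]|+|E[R(x),X]|)$. The loop terminates at $i^*$ precisely because $|E[B_{i^*}]|\leq 2|E[B_{i^*-1}]|$, so it suffices to bound $|E[B_{i^*-1}]|$. Since $X=B_{i^*-1}\setminus F$ we have $B_{i^*-1}\subseteq X\cup F$, so
\begin{equation*}
|E[B_{i^*-1}]|\leq |E[X\cup F]| = |E[F]|+|E[X]|+|E[X,F]|+|E[F,X]|.
\end{equation*}
Because $X,F\subseteq R(x)$, edges from $X$ to $X$ and from $X$ to $F$ are both edges from $X$ to $R(x)$, yielding $|E[X]|+|E[X,F]|\leq |E[X,R(x)]|$; similarly $|E[F,X]|\leq |E[R(x),X]|$. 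Combining with the factor-$2$ loss from the termination condition gives the claimed bound.

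The main obstacle I expect is the boundary handling in (i): a naive backward BFS that expands every visited node would traverse in-edges of nodes at depth $i^*h$ as well, possibly bringing edges from $R(x)\setminus B_{i^*}$ into $B_{i^*}$ into the cost and blowing the running time up to $|E[R(x),B_{i^*}]|$, which in general need not be bounded by $|E[F]|+|E[X,R(x)]|+|E[R(x),X]|$. Ensuring that the BFS truly stops at depth $i^*h$ and that the per-edge bookkeeping for $|E[B_i]|$ is folded in without extra overhead is the one place where implementation details matter; the rest of the argument is a purely combinatorial inclusion-of-edges calculation.
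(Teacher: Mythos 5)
Your proof arrives at the correct bound and uses the same final combinatorial decomposition as the paper (writing $B_{i^*-1}\subseteq X\cup F$, then $E[X]\cup E[X,F]\subseteq E[X,R(x)]$ and $E[F,X]\subseteq E[R(x),X]$), but the route through the BFS cost is genuinely different, and the route you chose omits the paper's key tool. The paper takes Procedure~\ref{alg:path_unions} literally: each $B_j$ is computed by a \emph{fresh} backward BFS costing $O(|E[B_j]|)$, so the total is $\sum_{j\leq i^*}|E[B_j]|$. It then uses \emph{all} the failed loop tests, namely $|E[B_j]|>2|E[B_{j-1}]|$ for $j<i^*$, to argue this sum is a geometric series bounded by $4|E[B_{i^*-1}]|$. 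You only invoke the terminating test $|E[B_{i^*}]|\leq2|E[B_{i^*-1}]|$; without the geometric-sum argument, the literal algorithm's cost $\sum_j|E[B_j]|$ could a priori be a $\Theta(\log m)$ factor larger. You sidestep this by proposing a different implementation (one incremental backward BFS that never retraverses an edge), which is a legitimate engineering move, but strictly speaking you are then proving the lemma for a variant of the procedure, not for the one as written.

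Your concern about the boundary at depth $i^*h$ is a real subtlety, and to its credit your discussion is more careful than the paper's one-line cost claim. However, the incremental-counter scheme you sketch has two gaps you should be explicit about. First, the attribution ``in $O(1)$ time from the two known depths'' is not immediate: when the backward BFS expands $v$ and sees in-edge $(u,v)$, the depth of $u$ may not yet be assigned, so the edge's attribution index is not yet determined; you would need to defer attribution (e.g., charge an edge when the later of its two endpoints is finalized, by scanning out-edges at that time). Second, after processing level $ih$ but before expanding depth-$ih$ nodes, your prefix-sum counter misses exactly the edges $(u,v)$ with $d_v=ih$, so it computes a strict under-approximation of $|E[B_i]|$; the test $|E[B_i]|\leq2|E[B_{i-1}]|$ is then being evaluated on a proxy, and one has to re-check that the termination bound $i^*\leq\lceil\log m\rceil+1$ and the inclusion $F\subseteq\cP(x,y,(\log m+3)h,G)$ survive. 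None of these are fatal, but they are precisely the ``implementation details'' you flag, and the paper's geometric-sum argument avoids all of them: each restart BFS costs $O(|E[B_j]|)$ on the nose (expanding only depths $<jh$), and the doubling condition makes the sum collapse without any per-edge bookkeeping.
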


\begin{proof}
The running time in iteration $ 2 \leq j \leq i^*-1 $ is $ O(| E [B_j] |) $ as this is the cost of the breadth-first-search performed to compute $ B_j $.
In the last iteration $ i^* $, the algorithm additionally has to compute $ F $ and $ X $ and remove $ X $ from $ R (x) $.
As $ F $ is computed by a BFS in $ G[B_{i^*}] $ and $ X \subseteq B_{i^*-1} \subseteq B_{i^*} $, these steps take time $ O(| E [B_{i^*}] |) $.
Thus the total running time is $ O (\sum_{1 \leq j \leq i^*} | E [B_j] |) $.

By checking the size bound in Line~\ref{lin:check_for_size} of Procedure~\ref{alg:path_unions} we have $ |E [B_j] | > 2 | E [B_{j-1}] | $ for all $ 1 \leq j \leq i^*-1 $ and $ | E [B_{i^*}] | \leq 2 | E [B_{i^*-1}] | $.
By repeatedly applying the first inequality it follows that $ \sum_{1 \leq j \leq i^*-1} | E [B_j] | \leq 2 | E [B_{i^*-1}] | $.
Therefore we get
\begin{equation*}
\sum_{1 \leq j \leq i^*} | E [B_j] | = \sum_{1 \leq j \leq i^*-1} | E [B_j] | + | E [B_{i^*}] | \leq 2 | E [B_{i^*-1}] | + 2 | E [B_{i^*-1}] | = 4 | E [B_{i^*-1}] |
\end{equation*}
and thus the running time is $ O (| E [B_{i^*-1}] |) $.
Now observe that by $ X = B_{i^*-1} \setminus F $ we have $ B_{i^*-1} \subseteq X \cup F $ and thus
\begin{align*}
E [B_{i^*-1}] \subseteq E [F] \cup E [X] \cup E [X, F] \cup E [F, X] \subseteq E [F] \cup E [X, R(x)] \cup E [R(x), X] \, .
\end{align*}
Therefore the running time is $ O (| E [F]| + | E [X, R(x)] | + | E [R(x), X] |) $.
\end{proof}

\subsection{Reachability via Center Graph}

We now show how to combine the approximate path union data structure with a hierarchical approach to get an improved decremental reachability algorithm for dense graphs.
The algorithm has a parameter $ 1 \leq k \leq \log{n} $ and for each $ 1 \leq i \leq k $ a parameter $ c_i \leq n $.
We determine suitable choices of these parameters later in this section.
For each $ 1 \leq i \leq k-1 $, our choice will satisfy $ c_i \geq c_{i+1} $ and $ c_i = \hat{O} (c_{i+1}) $.
Furthermore, we set $ h_i = (3 + \log{m})^{i-1} n / c_1 $ for $ 1 \leq i \leq k $.
At the initialization, the algorithm determines sets of nodes $ C_1 \supseteq C_2 \supseteq \dots \supseteq C_k $ such that $ s, t \in C_1 $ as follows.
For each $ 1 \leq i \leq k $, we sample each node of the graph with probability $ a c_i \ln{n} / n $ (for a large enough constant $ a $), where the value of~$ c_i $ will be determined later.
The set $ C_i $ then consists of the sampled nodes, and if $ i \leq k-1 $, it additionally contains the nodes in $ C_{i+1} $.
For every $ 1 \leq i \leq k $ we call the nodes in $ C_i $ $i$-centers.
In the following we describe an algorithm for maintaining pairwise reachability between all $1$-centers.

\subsubsection{Algorithm Description}

\paragraph*{Data Structures.}
The algorithm uses the following data structures:
\begin{itemize}
\item For every $i$-center $ x $ and every $ i \leq j \leq k $ an approximate path union data structure (see \Cref{prop:approximate path union}) with parameter $ h_j $.
\item For every $k$-center $ x $ an incoming and an outgoing ES-tree of depth $ h_k $ in $ G $.
\item For every pair of an $i$-center $ x $ and a $j$-center $ y $ such that $ l := \max(i, j) \leq k-1 $, a set of nodes $ \cQ (x, y, l) \subseteq V $.
Initially, $ \cQ (x, y, l) $ is empty and at some point the algorithm might compute $ \cQ (x, y, l) $ using the approximate path union data structure of $ x $.
\item For every pair of an $i$-center $ x $ and a $j$-center $ y $ such that $ l := \max(i, j) \leq k-1 $ an ES-tree of depth $ h_l $ from $ x $ in $ \cQ (x, y, l) $.
\item For every pair of an $i$-center $ x $ and a $j$-center $ y $ such that $ l := \max(i, j) \leq k-1 $ a set of $(l+1)$-centers certifying that $ x $ can reach $ y $.

\end{itemize}

\paragraph*{Certified Reachability Between Centers (Links).}
The algorithm maintains the following limited path information between centers, called \emph{links}, in a top-down fashion.
Let $ x $ be a $k$-center and let $ y $ be an $i$-center for some $ 1 \leq i \leq k - 1 $.
The algorithm links $ x $ to $ y $ if and only if $ y $ is contained in the outgoing ES-tree of depth $ h_k $ of $ x $.
Similarly the algorithm links $ y $ to $ x $ if and only if $ y $ is contained in the incoming ES-tree of depth $ h_k $ of $ x $.
Let $ x $ be an $i$-center and let $ y $ be a $j$-center such that $ l := \max(i, j) \leq k-1 $.
If there is an $(l+1)$-center $ z $ such that $ x $ is linked to $ z $ and $ z $ is linked to $ y $, the algorithm links $ x $ to $ y $ (we also say that $ z $ links $ x $ to $ y $).
Otherwise, the algorithm computes $ \cQ (x, y, l) $ using the approximate path union data structure of $ x $ and starts to maintain an ES-tree from $ x $ up to depth $ h_l $ in $ G[\cQ (x, y, l)] $.
It links $ x $ to $ y $ if and only if $ y $ is contained in the ES-tree of $ x $.
Using a list of centers $ z $ certifying that $ x $ can reach $ y $, maintaining the links between centers is straightforward.

\paragraph*{Center Graph.}
The algorithm maintains a graph called \emph{center graph}.
Its nodes are the $1$-centers and it contains the edge $ (x, y) $ if and only if $ x $ is linked to $ y $.
The algorithm maintains the transitive closure of the center graph.
A query asking whether a center $ y $ is reachable from a center $ x $ in $ G $ is answered by checking the reachability in the center graph.
As $ s $ and $ t $ are $1$-centers this answers $s$-$t$ reachability queries.

\subsubsection{Correctness}

For the algorithm to be correct we have to show that there is a path from $ s $ to $ t $ in the center graph if and only if there is a path from $ s $ to $ t $ in $ G $.
We will in fact show in more generality that this is the case for any pair of $1$-centers.

\begin{lemma}\label{lem:links_imply_paths}
For every pair of an $i$-center $ x $ and a $j$-center $ y $, if $ x $ is linked to $ y $, then there is a path from $ x $ to $ y $ in $ G $.
\end{lemma}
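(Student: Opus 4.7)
The plan is to induct on the level $l := \max(i,j)$, going downward from $l = k$ to $l = 1$. The hierarchical definition of links in the algorithm matches this structure exactly: at the top level links are defined directly via ES-trees in $G$, while at lower levels they are defined either (a) by certification through an $(l+1)$-center $z$ or (b) by reachability in an ES-tree rooted in $G[\cQ(x,y,l)]$. In both inductive cases we can extract an actual $x$-to-$y$ path in $G$.

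\textbf{Base case ($l = k$).} Here at least one of $x, y$ is a $k$-center. Assume first that $x$ is a $k$-center (the case that $y$ is a $k$-center is symmetric, using the incoming ES-tree of $y$). The algorithm only links $x$ to $y$ if $y$ lies in the outgoing ES-tree of depth $h_k$ rooted at $x$ in $G$. By \Cref{lem:ES-tree}, the ES-tree correctly maintains the current shortest-path tree up to depth $h_k$ in~$G$, so $\dist_G(x, y) \leq h_k$ and in particular there is a path from $x$ to $y$ in $G$.

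\textbf{Inductive step ($l \leq k-1$).} Now $\max(i,j) = l \leq k-1$, so the algorithm uses one of the two rules. In case~(a), there is an $(l+1)$-center $z$ such that $x$ is linked to $z$ and $z$ is linked to $y$. Both of these links have $\max$-index at least $l+1$, so the induction hypothesis yields paths $\pi_1$ from $x$ to $z$ and $\pi_2$ from $z$ to $y$ in $G$; their concatenation is a path from $x$ to $y$ in $G$. In case~(b), the algorithm has computed $\cQ(x, y, l) \subseteq V$ using the approximate path union data structure and is maintaining an ES-tree from $x$ up to depth $h_l$ in $G[\cQ(x, y, l)]$; the link means $y$ lies in this tree, i.e.\ $\dist_{G[\cQ(x,y,l)]}(x, y) \leq h_l$. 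Since $G[\cQ(x,y,l)]$ is a subgraph of the current $G$ (it keeps only the edges of $G$ whose endpoints lie in $\cQ(x,y,l)$), any such path is also a path in $G$, and we are done.

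\textbf{Main subtlety.} The only thing to be careful about is that the ES-tree on $G[\cQ(x,y,l)]$ really does reflect paths in the current version of $G$. This is immediate because $\cQ(x,y,l)$ is a set of nodes and $G[\cQ(x,y,l)]$ is the induced subgraph at the current time; the approximate path union data structure of \Cref{prop:approximate path union} only removes nodes from its internal set $R(x)$ (it does not add or modify edges), and in any case we are reading off reachability in the induced subgraph of the \emph{current} $G$, which is monotone with respect to $G$. Consequently every certification the algorithm uses to declare a link corresponds to an honest walk in $G$, which completes the induction.
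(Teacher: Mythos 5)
Your proof is correct and takes essentially the same approach as the paper's: downward induction on $l=\max(i,j)$, with the base case handled directly by the ES-trees of $k$-centers, and the inductive step splitting into the certification-via-an-$(l+1)$-center case and the ES-tree-in-$G[\cQ(x,y,l)]$ case. The additional "main subtlety" remark is a harmless elaboration of the observation, made more tersely in the paper, that $G[\cQ(x,y,l)]$ is always a subgraph of the current $G$.
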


\begin{proof}
The proof is by induction on $ l = \max (i, j) $.
As $ x $ is linked to $ y $, one of the following three cases applies:
\begin{enumerate}
\item $ j = k $ and $ x $ is contained in the incoming ES-tree of depth $ h_k $ of $ y $ in $ G $.
\item $ i = k $ and $ y $ is contained in the outgoing ES-tree of depth $ h_k $ of $ x $ in $ G $.
\item There is an $ (l+1) $-center $ z $ such that $ x $ is linked to $ z $ and $ z $ is linked to $ y $.
\item $ y $ is contained in the ES-tree of depth $ h_l $ of $ x $ in $ G [\cQ (x, y, l)] $.
\end{enumerate}
In the first two cases there obviously is a path from $ x $ to $ y $ in $ G $ by the correctness of the ES-tree.
In the third case we may apply the induction hypothesis and find a path from $ x $ to $ z $ as well as a path from $ z $ to $ y $ in $ G $.
The concatenation of these paths is a path from $ x $ to $ y $ in $ G $.
In the fourth case we know by the correctness of the ES-tree that there is a path from $ x $ to $ y $ in $ G [\cQ (x, y, l)] $ and therefore also in $ G $.
\end{proof}

\begin{lemma}\label{lem:bounded_hop_paths_imply links}
For every pair of an $i$-center $ x $ and a $j$-center $ y $, if $ \dist_G (x, y) \leq h_l $, then $ x $ is linked to $ y $.
\end{lemma}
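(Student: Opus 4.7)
The proof proceeds by induction on $l = \max(i, j)$, going downward from $l = k$ to $l = 1$, which mirrors the top-down structure of the algorithm's linking rule.

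For the base case $l = k$, assume without loss of generality that $i = k$, so $x$ is a $k$-center that maintains an outgoing ES-tree of depth $h_k$ in $G$. By the correctness of the ES-tree (\Cref{lem:ES-tree}), if $\dist_G(x, y) \leq h_k$ then $y$ is contained in this tree, and the algorithm links $x$ to $y$ by definition.

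For the inductive step $l < k$, I would case-split on the algorithm's own branching. If the algorithm's chain check succeeds, i.e.\ there exists an $(l+1)$-center $z$ such that $x$ is linked to $z$ and $z$ is linked to $y$, then the algorithm links $x$ to $y$ directly and we are done. Otherwise, the algorithm has computed $\cQ(x, y, l)$ via the approximate path union data structure at $x$ and maintains an ES-tree from $x$ of depth $h_l$ in $G[\cQ(x, y, l)]$; it suffices to verify that $y$ lies in this tree. Let $G'$ denote the version of the graph at the moment $\cQ(x, y, l)$ was computed. Then \Cref{prop:approximate path union} gives $\cP(x, y, h_l, G') \subseteq \cQ(x, y, l)$, and monotonicity of path unions under deletions (\Cref{pro:path_union_property_under_deletions}) gives $\cP(x, y, h_l, G) \subseteq \cP(x, y, h_l, G')$. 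Consequently, the current shortest $x$-to-$y$ path in $G$, which has length at most $h_l$ by hypothesis, is entirely contained in $\cQ(x, y, l)$ and hence in $G[\cQ(x, y, l)]$. Therefore $\dist_{G[\cQ(x, y, l)]}(x, y) \leq h_l$ and the ES-tree of depth $h_l$ from $x$ in $G[\cQ(x, y, l)]$ contains $y$, so the algorithm links $x$ to $y$.

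The main subtlety, and the only place the argument is not pure bookkeeping, is reconciling a potentially stale $\cQ(x, y, l)$, computed from the earlier snapshot $G'$, with the current graph $G$; this is precisely what the path-union monotonicity lemma handles, by ensuring that any path short enough in the current graph was already captured when $\cQ(x, y, l)$ was built. The inductive hypothesis itself enters only implicitly, through the algorithm's recursive tracking of the linking relation at level $l+1$, so the argument at each level collapses to the one-step case analysis above.
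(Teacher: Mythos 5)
Your proof is correct and follows essentially the same route as the paper's: base case via the ES-trees of the $k$-centers, then a two-way case split on whether the $(l+1)$-center chain exists, with \Cref{prop:approximate path union} and \Cref{pro:path_union_property_under_deletions} combining to show that the stale approximate path union $\cQ(x,y,l)$ still contains the current short $x$-to-$y$ path. The only cosmetic difference is that you frame it as downward induction, but, as you yourself observe, the inductive hypothesis is never invoked — the paper presents it directly as a case analysis on $l$.
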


\begin{proof}
Set $ l = \max (i, j) $.
If $ l = k $, then assume that $ l = i $ (the proof for $ l = j $ is symmetric).
Since $ \dist_G (x, y) \leq h_k $, $ y $ is contained in the outgoing ES-tree of depth $ h_k $ of $ x $ by the correctness of the ES-tree.
Thus, $ x $ is linked to $ y $.

If $ l \leq k-1 $ and there is an $ (l+1) $-center $ z $ such that $ x $ is linked to $ z $ and $ z $ is linked to $ y $, then also $ x $ is linked to $ y $.
If this is not the case, then the algorithm has computed $ \cQ (x, y, l) $ using the approximate path union data structure and at that time we have $ \cP (x, y, h_l, G) \subseteq \cQ (x, y, l) $ by \Cref{prop:approximate path union}.
By \Cref{pro:path_union_property_under_deletions} the set $ \cQ (x, y, l) $ contains $ \cP (x, y, h_l, G) $, also in the current version of $ G $.
As $ \dist_G (x, y) \leq h_l $, all nodes on the shortest path from $ x $ to $ y $ in $ G $ are contained in $ \cP (x, y, h_l, G) $ and thus in $ \cQ (x, y, l) $.
Therefore the ES-tree from $ x $ up to depth $ h_l $ in $ G [\cQ (x, y, l)] $ contains $ y $ which means that $ x $ is linked to $ y $.
\end{proof}

\begin{lemma}
For every pair of $1$-centers $ x $ and $ y $, there is a path from $ x $ to $ y $ in the center graph if and only if there is a path from $ x $ to $ y $ in $ G $.
\end{lemma}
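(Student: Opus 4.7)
The plan is to prove both directions of the equivalence using the two preceding lemmas as the main tools, plus the random-sampling property of $1$-centers.

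For the ``if'' direction (path in center graph implies path in $G$), I would let $x = u_0, u_1, \ldots, u_r = y$ be a path in the center graph. By the definition of the center graph, for each $0 \leq j \leq r-1$ the $1$-center $u_j$ is linked to the $1$-center $u_{j+1}$. Applying \Cref{lem:links_imply_paths} to each consecutive pair gives a path from $u_j$ to $u_{j+1}$ in $G$. Concatenating these $r$ paths yields a path from $x$ to $y$ in $G$.

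For the ``only if'' direction (path in $G$ implies path in center graph), let $\pi$ be a path from $x$ to $y$ in $G$. I want to find a sequence of $1$-centers $x = v_0, v_1, \ldots, v_l = y$ along $\pi$ such that $\dist_G(v_j, v_{j+1}) \leq h_1$ for every $j$. Once we have this, \Cref{lem:bounded_hop_paths_imply links} (applied with $i = j = l = 1$ since both endpoints are $1$-centers) implies that $v_j$ is linked to $v_{j+1}$, so $(v_j, v_{j+1})$ is an edge of the center graph and the sequence forms an $x$-$y$ path in the center graph. To produce the sequence, I greedily walk along $\pi$ from $x$: starting at the current center $v_j$, I go to the first $1$-center on $\pi$ whose distance along $\pi$ from $v_j$ is at least $h_1 / 2$ (or to $y$ if we reach it first). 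The key point is that the next $1$-center on $\pi$ must appear within $h_1$ hops of $v_j$: by the random sampling of $1$-centers with probability $a c_1 \ln n / n$ and the fact that $h_1 = n / c_1$, \Cref{lem:hitting_set_argument} guarantees (whp, over all $\mathrm{poly}(n)$ subpaths we ever need to consider across all versions of $G$) that any subpath of length $h_1$ contains a $1$-center. Hence consecutive $v_j, v_{j+1}$ satisfy $\dist_G(v_j, v_{j+1}) \leq h_1$, as needed.

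The main subtlety, and the only thing that is not completely routine, is justifying the hitting-set step: the intermediate $1$-centers we pick depend on the current graph $G$ (which has undergone deletions), so we must apply \Cref{lem:hitting_set_argument} to a collection of sets that covers all relevant versions of~$G$. Since the graph has at most $m$ deletions, there are at most $m+1$ versions, and for each version at most $n^2$ ordered pairs of nodes; for each such triple (version, $x$, $y$) we only need to hit the fixed shortest $x$-$y$ path of that version. This yields $k \leq m \cdot n^2 = \mathrm{poly}(n)$ sets to hit, each of size $\geq h_1/2$ whenever it is long enough to matter, so a large enough constant $a$ in the sampling probability makes the whole argument succeed with high probability. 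The two directions then combine to give the claimed equivalence.
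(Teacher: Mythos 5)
Your two directions follow exactly the paper's proof: the ``if'' direction concatenates paths in $G$ via \Cref{lem:links_imply_paths}, and the ``only if'' direction walks along a path in $G$, finds $1$-centers that are pairwise close via \Cref{lem:hitting_set_argument}, and links them via \Cref{lem:bounded_hop_paths_imply links}. So the approach is the right one. However, there is a small but real slip in the ``only if'' direction. You introduce a greedy that jumps from $v_j$ to the first $1$-center at distance \emph{at least} $h_1/2$ along $\pi$, and then justify $\dist_G(v_j, v_{j+1}) \leq h_1$ by saying every subpath of $\pi$ of length $h_1$ contains a $1$-center. These two facts do not combine: after skipping past all centers at distance $<h_1/2$, the next one you accept could be anywhere in the window $[h_1/2,\, 3h_1/2)$, so the $h_1$-hitting guarantee does not bound the chosen gap by $h_1$. (The $h_1/2$-skip is imported from the \emph{other} algorithm's \Refresh procedure in \Cref{alg:sparse}, where it is needed for disjointness of path unions; it serves no purpose here.) The fix is just to drop the greedy and take the naturally consecutive $1$-centers on $\pi$, which is what the paper does: if $x'$ and $y'$ are consecutive $1$-centers, then the subpath of $\pi$ starting just after $x'$ of $h_1-1$ edges contains a $1$-center whp, which must be $y'$, giving $\dist_G(x',y') \leq h_1$. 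You should also take $\pi$ to be the \emph{shortest} path from $x$ to $y$ (as your own hitting-set bookkeeping in fact assumes), since \Cref{lem:hitting_set_argument} is instantiated on shortest paths.
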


\begin{proof}
Assume that there is a path from $ x $ to $ y $ in the center graph.
Every edge $ (x', y') $ in the center graph (where $ x' $ and $ y' $ are $1$-centers) can only exist if the algorithm has linked $ x' $ to $ y' $.
By \Cref{lem:links_imply_paths} this implies that there is a path from $ x' $ to $ y' $ in $ G $.
By concatenating all these paths we obtain a path from $ x $ to $ y $ in $ G $.

Now assume that there is a path from $ x $ to $ y $ in $ G $.
Consider the shortest path $ \pi $ from $ x $ to $ y $ in $ G $ and any two consecutive $1$-centers $ x' $ and $ y' $ on $ \pi $.
By the initial random sampling of $1$-centers, every shortest path consisting of $ h_1 - 1 $ edges contains a $1$-center whp (\Cref{lem:hitting_set_argument}) and thus $ \dist_G (x', y') \leq h_1 $.
By \Cref{lem:bounded_hop_paths_imply links} the algorithm has linked $ x' $ to $ y' $ and thus there is an edge from $ x' $ to $ y' $ in the center graph.
As such an edge exists for all consecutive $1$-centers on $ \pi $, there is a path from $ x $ to $ y $ in the center graph.
\end{proof}

\subsubsection{Running Time Analysis}\label{sec:running_time_dense}

The key to the efficiency of the algorithm is to bound the size of the graphs $ \cQ (x, y, l) $.

\begin{lemma}\label{lem:bound_on_size_of_path_union}
Let $ x $ be an $i$-center and let $ y $ be a $j$-center such that $ l := \max (i, j) \leq k-1 $.
If $ x $ is not linked to $ y $ by an $(l+1)$-center, then $ \cQ (x, y, l) $ contains at most $ n / c_{l+1} $ nodes with high probability.
\end{lemma}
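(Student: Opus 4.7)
The plan is to combine the approximation guarantee of the path union data structure (Proposition~\ref{prop:approximate path union}) with the random sampling of $(l+1)$-centers, deriving a contradiction with the hypothesis that no $(l+1)$-center links $x$ to $y$. The bound will in fact hold at the moment $\cQ(x,y,l)$ is computed, and the set never grows afterwards, so this suffices.

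First I would observe that whenever the algorithm sets $\cQ(x,y,l)$, it does so by invoking the approximate path union data structure of $x$ with parameter $h_l$. By Proposition~\ref{prop:approximate path union} the returned set satisfies
\begin{equation*}
\cQ(x,y,l) \;\subseteq\; \cP(x, y, (\log m + 3) h_l, G).
\end{equation*}
The hop parameters were defined precisely so that $h_{l+1} = (\log m + 3) h_l$, hence $\cQ(x,y,l) \subseteq \cP(x, y, h_{l+1}, G)$ at the moment of its construction. Since $\cQ(x,y,l)$ is not modified once set and path unions for a fixed radius only shrink under deletions (\Cref{pro:path_union_property_under_deletions}), bounding the size of the latter set suffices.

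Next I would apply the hitting set argument. Consider, for every pair of centers and every version of $G$ in which the algorithm constructs a path union via \ApproximatePathUnion, the set $\cP(x, y, h_{l+1}, G)$; this gives at most $\mathrm{poly}(n)$ relevant sets. Since the $(l+1)$-centers are obtained by independent sampling with probability $a c_{l+1} \ln n / n$, \Cref{lem:hitting_set_argument} guarantees that, with high probability, every such path union of size exceeding $n/c_{l+1}$ contains an $(l+1)$-center. Assume for contradiction that $|\cQ(x,y,l)| > n/c_{l+1}$; then $\cP(x, y, h_{l+1}, G)$ contains some $(l+1)$-center $z$, and by \Cref{pro:path_union_characterization} we have $\dist_G(x,z) \leq h_{l+1}$ and $\dist_G(z,y) \leq h_{l+1}$.

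Finally, since $i, j \leq l$, the value $\max(i, l+1) = \max(l+1, j) = l+1$, so applying \Cref{lem:bounded_hop_paths_imply links} to the pairs $(x, z)$ and $(z, y)$ yields that $x$ is linked to $z$ and $z$ is linked to $y$. By the algorithm's definition of links, this means $x$ is linked to $y$ via the $(l+1)$-center $z$, contradicting the hypothesis. The main subtlety to handle carefully is to make sure there is no circular dependence: \Cref{lem:bounded_hop_paths_imply links} is a correctness statement about the algorithm's invariants established independently, and the hitting set argument must union-bound over all pairs of centers and all $\mathrm{poly}(n)$ deletion steps simultaneously so that the sampling probability $\Theta(c_{l+1} \ln n / n)$ is large enough for a whp guarantee.
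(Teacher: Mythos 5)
Your proof is correct and follows essentially the same approach as the paper: apply the hitting set argument for $(l+1)$-centers, use the approximation guarantee of Proposition~\ref{prop:approximate path union} (together with $h_{l+1}=(\log m + 3)h_l$) to derive $\dist_G(x,z)+\dist_G(z,y)\leq h_{l+1}$ for a sampled center $z$, invoke Lemma~\ref{lem:bounded_hop_paths_imply links} to obtain links via $z$, and contradict the hypothesis. The one cosmetic difference is that you place $z$ in $\cP(x,y,h_{l+1},G)$ (a set determined purely by the current graph) rather than in $\cQ(x,y,l)$ directly as the paper does, which makes the hitting-set union bound slightly cleaner to state but amounts to the same argument since $\cQ(x,y,l)\subseteq\cP(x,y,h_{l+1},G)$.
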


\begin{proof}
Suppose that $ \cQ (x, y, l) $ contains more than $ n / c_{l+1} $ nodes.
Then by the random sampling of centers, $ \cQ (x, y, l) $ contains an $(l+1)$-center $ z $ with high probability by \Cref{lem:hitting_set_argument}.
By \Cref{prop:approximate path union} we have $ \dist_G (x, z) + \dist_G (z, y) \leq (3 + \log{m}) h_l = h_{l+1} $ and thus $ \dist_G (x, z) \leq h_{l+1} $ and $ \dist_G (z, y) \leq h_{l+1} $.
It follows that $ x $ is linked to $ z $ and $ z $ is linked to $ y $ by \Cref{lem:bounded_hop_paths_imply links}.
But then $ x $ is linked to $ y $, contradicting our assumption.
\end{proof}

With the help of this lemma we first analyze the running time of each part of the algorithm and argue that our choice of parameters gives the desired total update time.

\paragraph*{Parameter Choice.}
We carry out the running time analysis with regard to two parameters $ 1 \leq b \leq c \leq n $ which we will set at the end of the analysis.
We set
\begin{equation*}
k = \left\lceil \frac{\log{(c/b)}}{\sqrt{\log{n} \cdot \log{\log{n}}}} \right\rceil + 1 \, ,
\end{equation*}
$ c_k = b $ and $ c_{i} = 2^{\sqrt{\log{n} \cdot \log{\log{n}}}} c_{i+1} = \hat{O} (c_{i+1}) $ for $ 1 \leq i \leq k-1 $.
Note that the number of $i$-centers is $ \tilde O (c_i) $ whp.
Observe that
\begin{multline*}
(3 + \log{m})^{k-1} = O ((\log{n})^k) \leq O ((\log{n})^{\sqrt{\log{n} / \log{\log{n}}}}) \\ = O (2^{\sqrt{\log{n} \cdot \log{\log{n}}}}) = O (n^{\sqrt{\log{\log{n}} / \log{n}}}) = O (n^{o(1)}) \, .
\end{multline*}
Furthermore we have
\begin{equation*}
c_1 = \left( 2^{\sqrt{\log{n} \cdot \log{\log{n}}}} \right)^{k-1} c_k \geq 2^{\log{(c / b)}} b = \frac{c}{b} \cdot b = c
\end{equation*}
and by setting $ k' = (\log{(c/b)}) / (\sqrt{\log{n} \cdot \log{\log{n}}}) $ we have $ k \leq k' + 2 $ and thus
\begin{equation*}
c_1 = \left( 2^{\sqrt{\log{n} \cdot \log{\log{n}}}} \right)^{k-1} c_k \leq \left( 2^{\sqrt{\log{n} \cdot \log{\log{n}}}} \right)^{k'+1} c_k = 2^{\sqrt{\log{n} \cdot \log{\log{n}}}} c = \hat O (c) \, .
\end{equation*}
Remember that $ h_i = (3 + \log{m})^{i-1} n / c_1 $ for $ 1 \leq i \leq k $.
Therefore we have $ h_i = \hat{O} (n / c_1) = \hat{O} (n / c) $.

\paragraph*{Maintaining ES-Trees.}
For every $k$-center we maintain an incoming and an outgoing ES-tree of depth $ h_k $, which takes time $ O (m h_k) $.
As there are $ \tilde O (c_k) $ $k$-centers, maintaining all these trees takes time $ \tilde O (c_k m h_k) = \hat O (b m n / c) $.

For every $i$-center $ x $ and every $j$-center $ y $ such that $ l := \max (i, j) \leq k-1 $, we maintain an ES-tree up to depth $ h_l $ in $ G[\cQ (x, y, l)] $.
By \Cref{lem:bound_on_size_of_path_union} $ \cQ (x, y, l) $ has at most $ n / c_{l+1} $ nodes and thus $ G[\cQ (x, y, l)] $ has at most $ n^2 / c_{l+1}^2 $ edges.
Maintaining this ES-tree therefore takes time $ O ((n^2 / c_{l+1}^2) \cdot h_l) = \hat O (n^2 / c_{l+1}^2 (n / c_1)) = \hat O (n^3 / (c_1 c_{l+1}^2)) $.
In total, maintaining all these trees takes time
\begin{multline*}
\hat{O} \left( \sum_{1 \leq i \leq k-1} \sum_{1 \leq j \leq i} c_i c_j \frac{n^3}{c_1 c_{i+1}^2} \right) =
\hat{O} \left( \sum_{1 \leq i \leq k-1} \sum_{1 \leq j \leq i} \frac{c_i c_1 n^3}{c_{i+1} c_1 c_k} \right) \\
= \hat{O} \left( \sum_{1 \leq i \leq k-1} \sum_{1 \leq j \leq i} \frac{n^3}{c_k} \right)
= \hat{O} \left( k^2 \frac{n^3}{c_k} \right) = \hat{O} \left(\frac{n^3}{b} \right) \, .
\end{multline*}

\paragraph*{Computing Approximate Path Unions.}
For every $i$-center $ x $ and every $ i \leq j \leq k $ we maintain an approximate path union data structure with parameter $ h_j $.
By \Cref{prop:approximate path union} this data structures has a total running time of $ O (m) $ and an additional cost of $ O (|E [\cQ (x, y, j]) |) $ each time the approximate path union $ \cQ (x, y, j) $ is computed for some $j$-center $ y $.
By \Cref{lem:bound_on_size_of_path_union} the number of nodes of $ \cQ (x, y, j) $ is $ n / c_{j+1} $ with high probability and thus its number of edges is $ n^2 / c_{j+1}^2 $.
Therefore, computing all approximate path unions takes time
\begin{multline*}
\tilde O \left( \sum_{1 \leq i \leq k-1} \sum_{i \leq j \leq k} \left( c_i m + c_i c_j \frac{n^2}{c_{j+1}^2} \right) \right) =
\tilde O \left( \sum_{1 \leq i \leq k-1} \sum_{i \leq j \leq k} \left( c_1 m + \frac{c_1 c_j n^2}{c_{j+1} c_k} \right) \right) \\
= \hat O \left( \sum_{1 \leq i \leq k-1} \sum_{i \leq j \leq k} \left( c_1 m + \frac{c_1 n^2}{c_k} \right) \right)
= \hat O ( k^2 c_1 m + k^2 c_1 n^2 / c_k )
= \hat O ( c m + c n^2 / b ) \, .
\end{multline*}

\paragraph*{Maintaining Links Between Centers.}
For each pair of an $i$-center $ x $ and a $j$-center $ y $ there are at most $ \tilde O (c_{l+1}) $ $(l+1)$-centers that can possibly link $ x $ to $ y $.
Each such $(l+1)$-center is added to and removed from the list of $(l+1)$-centers linking $ x $ to $ y $ at most once.
Thus, the total time needed for maintaining all these links is $ \tilde O (\sum_{1 \leq i \leq k-1} \sum_{1 \leq j \leq i} c_i c_j c_{i+1} ) = \tilde O (k^2 c_1^3) = \tilde O (c^3) $.

\paragraph*{Maintaining Transitive Closure in Center Graph.}
The center graph has $ \tilde O (c_1) $ nodes and thus $ \tilde O (c_1^2) $ edges.
During the algorithm edges are only deleted from the center graph and never inserted.
Thus we can use known $ O(mn) $-time decremental algorithms for maintaining the transitive closure~\cite{RodittyZ08,Lacki13} in the center graph in time $ \tilde O (c_1^3) = \tilde O (c^3) $.

\paragraph*{Total Running Time.}
Since the term $ c n^2 / b$ is dominated by the term $ n^3 / b $, we obtain a total running time of $\hat{O} \left( b m n / c + n^3 / b + c m + c^3 \right) $.
By setting $ b = n^{5/3} / m^{2/3} $ and $ c = n^{4/3} / m^{1/3} $ the running time is $ \hat{O} (m^{2/3} n^{4/3} + n^4 / m) $ and by setting $ b = n^{9/7} / m^{3/7} $ and $ c = m^{1/7} n^{4/7} $ the running time is $ \hat{O} (m^{3/7} n^{12/7} + m^{8/7} n^{4/7}) $.

\subsubsection{Decremental Single-Source Reachability}

The algorithm above works for a set of randomly chosen centers.
Note that the algorithm stays correct if we add any number of nodes to $ C_1 $, thus increasing the number of $1$-centers for which the algorithm maintains pairwise reachability.
If the number of additional centers does not exceed the number of randomly chosen centers, then the same running time bounds still apply.
Thus, from the analysis above, we obtain the following result.

\begin{theorem}\label{pro:decremental_bounded_all_pairs_reachability}
Let $ S \subseteq V $ be a set of nodes.
If $ |S| \leq n^{4/3} / m^{1/3} $, then there is a decremental algorithm for maintaining pairwise reachability between all nodes in $ S $ with constant query time and a total update time of $ \hat{O} (m^{2/3} n^{4/3} + n^4 / m) $ that is correct with high probability against an oblivious adversary.
If $ |S| \leq m^{1/7} n^{4/7} $, then there is a decremental algorithm for maintaining pairwise reachability between all nodes in $ S $ with constant query time and total update time
$ \hat{O} (m^{3/7} n^{12/7} + m^{8/7} n^{4/7}) $
that is correct with high probability against an oblivious adversary.
\end{theorem}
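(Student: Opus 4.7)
The plan is to apply the center-graph algorithm of Section~\ref{sec:SSR_dense} essentially verbatim, with the two parameter regimes identified at the end of Section~\ref{sec:running_time_dense}, after enlarging the set $C_1$ of $1$-centers to contain every node of $S$ in addition to the randomly sampled ones. Pairwise reachability queries between two nodes of $S$ would then be answered in $O(1)$ time by looking up the transitive closure of the center graph (which is maintained using a standard $O(mn)$-time decremental transitive-closure algorithm on a graph with $\tilde O(c_1)$ nodes).

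Next, I would verify that the correctness analysis of Section~\ref{sec:SSR_dense} survives the enlargement of $C_1$. The hitting-set argument of \Cref{lem:bound_on_size_of_path_union} that controls $|\cQ(x,y,l)|$ depends only on the $(l+1)$-centers with $l\ge 1$, so it is unaffected. The only other place the $1$-centers enter the correctness proof is the claim that a shortest path in $G$ between two $1$-centers decomposes into hops of length $\le h_1$ between consecutive $1$-centers on the path; this only uses the hitting-set property of the randomly sampled $1$-centers and can only improve when $S$ is added. Hence reachability between all nodes of $S$ is maintained with high probability.

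Then I would check that the running-time bookkeeping in Section~\ref{sec:running_time_dense} is unaffected as long as $|S|\le c$. The total number of $1$-centers becomes $|C_1|=\tilde O(c_1)+|S|=\hat O(c_1)=\hat O(c)$; this feeds into every term that is quadratic or cubic in the number of $1$-centers (the ES-trees for pairs of $1$-centers, the approximate-path-union invocations from $1$-centers, the link bookkeeping, and the $\tilde O(c_1^3)$ transitive-closure maintenance on the center graph), and all of these bounds were already of the form $\hat O(c_1^{\alpha})\cdot(\text{per-pair cost})$, so they remain within the stated envelope. With $(b,c)=(n^{5/3}/m^{2/3},\,n^{4/3}/m^{1/3})$ the analysis gives $\hat O(m^{2/3}n^{4/3}+n^4/m)$, valid whenever $|S|\le n^{4/3}/m^{1/3}$; with $(b,c)=(n^{9/7}/m^{3/7},\,m^{1/7}n^{4/7})$ it gives $\hat O(m^{3/7}n^{12/7}+m^{8/7}n^{4/7})$, valid whenever $|S|\le m^{1/7}n^{4/7}$.

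The main obstacle is purely bookkeeping, namely making sure that no hidden term in the analysis scales worse than linearly in $|C_1|^2$: for instance, the per-$1$-center amortized cost $O(m)$ of the approximate path union data structure becomes a total of $\hat O(c_1 m)$, which is already subsumed by the $\hat O(cm)$ term from Section~\ref{sec:running_time_dense}. Once these checks are made, the theorem follows immediately from the two parameter instantiations.
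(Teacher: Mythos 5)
Your proposal matches the paper's own (very terse) justification: add the nodes of $S$ to $C_1$, observe that correctness is unaffected (the hitting-set arguments only get stronger), and note that the running-time analysis of Section~\ref{sec:running_time_dense} is preserved as long as $|S|$ does not exceed the number $\hat O(c_1)$ of randomly sampled $1$-centers, then instantiate the two parameter choices $(b,c)$ already given there. This is exactly the same route, just spelled out in more detail than the one-paragraph remark preceding the theorem in the paper.
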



Using the reduction of \Cref{thm: s-t to single source} this also gives us a  single-source reachability algorithm.

\begin{corollary}
There are decremental \ssr and \scc algorithms with constant query time and total update time
\begin{equation*}
\hat{O} (m^{2/3} n^{4/3} + m^{3/7} n^{12/7})
\end{equation*}
that are correct with high probability against an oblivious adversary.
The total update time of the \scc algorithm is only in expectation.
\end{corollary}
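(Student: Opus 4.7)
The plan is to combine the pairwise-reachability algorithm of Theorem~\ref{pro:decremental_bounded_all_pairs_reachability} with the reduction from $s$-$t$ reachability (multiple source-sink pairs) to single-source reachability given in Theorem~\ref{thm: s-t to single source}, and then apply Theorem~\ref{thm:strongly connected component} to obtain \scc. Recall that Theorem~\ref{thm: s-t to single source} converts an algorithm maintaining reachability from a fixed source to a set $S$ of sinks into an \ssr algorithm whose total update time is $\tilde O(T(|S|, m, n) + m n / |S|)$, since for the unweighted reachability case we have $\Delta = 1$. In our setting the algorithm we plug in for $T$ is obtained by adding the source $s$ to the sampled set of sinks and invoking the pairwise-reachability algorithm for $S \cup \{s\}$.

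Next I would instantiate the reduction twice, using each of the two regimes of Theorem~\ref{pro:decremental_bounded_all_pairs_reachability}. In the first instantiation, I choose $|S| = n^{4/3}/m^{1/3}$, which satisfies the hypothesis $|S| \leq n^{4/3}/m^{1/3}$, and obtain total update time $\hat O(m^{2/3} n^{4/3} + n^4/m + m^{4/3}/n^{1/3})$. In the second instantiation, I choose $|S| = m^{1/7} n^{4/7}$, falling into the second case of the theorem, and obtain $\hat O(m^{3/7} n^{12/7} + m^{8/7} n^{4/7} + m^{6/7} n^{3/7})$. Running both algorithms in parallel and stopping as soon as one finishes gives the minimum of the two bounds.

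The remaining step, which I expect to be the bulk of the write-up, is a somewhat tedious case analysis showing that the minimum of these two bounds is always $\hat O(m^{2/3} n^{4/3} + m^{3/7} n^{12/7})$. The key threshold is $m = n^{8/5}$: one checks that $n^4/m \leq m^{3/7} n^{12/7}$ exactly when $m \geq n^{8/5}$, while $m^{8/7} n^{4/7} \leq m^{2/3} n^{4/3}$ exactly when $m \leq n^{8/5}$. Hence for $m \geq n^{8/5}$ the first bound is dominated by the claimed expression (using also $m^{4/3}/n^{1/3} \leq m^{2/3} n^{4/3}$, which follows from $m \leq n^2$), and for $m \leq n^{8/5}$ the second bound is dominated by the claimed expression (the additional term $m^{6/7} n^{3/7}$ is absorbed by $m^{2/3} n^{4/3}$ since $m \leq n^2$).

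Finally, for \scc I apply Theorem~\ref{thm:strongly connected component} to the \ssr algorithm just obtained. Writing the \ssr time as $O(n \, t_1(m,n) + m \, t_2(m,n))$ with $t_1, t_2$ chosen so their product with $n$ and $m$ respectively yields $m^{2/3} n^{4/3} + m^{3/7} n^{12/7}$, the reduction multiplies the running time by an $O(\log n)$ factor and produces an \scc algorithm whose expected total update time matches the \ssr bound, as the $\log n$ factor is absorbed by $\hat O(\cdot)$. The hardest part of the argument is purely bookkeeping---the pigeon-holing at the $m = n^{8/5}$ threshold---while the conceptual content is entirely contained in Theorems~\ref{pro:decremental_bounded_all_pairs_reachability}, \ref{thm: s-t to single source}, and \ref{thm:strongly connected component}.
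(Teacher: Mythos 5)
Your proposal is correct and follows essentially the same route as the paper: instantiate \Cref{pro:decremental_bounded_all_pairs_reachability} at the two sink-set sizes $n^{4/3}/m^{1/3}$ and $m^{1/7}n^{4/7}$, feed each into the reduction of \Cref{thm: s-t to single source}, case-split at the density threshold $m = n^{8/5}$, and then invoke \Cref{thm:strongly connected component} for \scc. The only cosmetic differences from the paper's write-up are which target term you absorb the extra pieces into (you compare $n^4/m$ against $m^{3/7}n^{12/7}$ and $m^{6/7}n^{3/7}$ against $m^{2/3}n^{4/3}$, whereas the paper pairs them the other way), and one harmless imprecision: in the unweighted reachability case $\Delta = \lceil \log_{1+\epsilon}(nW)\rceil$ is $\Theta(\log n)$, not $1$, but this is absorbed by $\hat O(\cdot)$ anyway.
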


\begin{corollary}
There is a decremental \ssr algorithm with constant query time and total update time
\begin{equation*}
\hat{O} (m^{2/3} n^{4/3} + m^{3/7} n^{12/7})
\end{equation*}
that is correct with high probability against an oblivious adversary.
\end{corollary}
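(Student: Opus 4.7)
The plan is to derive the single-source reachability algorithm by feeding the bounded pairwise reachability data structure of \Cref{pro:decremental_bounded_all_pairs_reachability} into the sink-sampling reduction of \Cref{thm: s-t to single source}. Although that reduction is stated for weighted approximate shortest paths, for exact reachability it degenerates: we take $\alpha = 1$, and since the problem is unweighted we have $\Delta = O(m)$ so the $\log(n\Delta)$ factor is absorbed into the $\hat{O}$-notation. Sample a set $T$ of sinks so that any shortest path from $s$ of length at least $n/p$ contains a sampled node with high probability by \Cref{lem:hitting_set_argument}, and run the pairwise reachability algorithm on $S = \{s\} \cup T$, which has size $O(p \log n)$. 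The reachability information from $s$ to every other node $v$ is then recovered by augmenting $G$ with ``shortcut edges'' from $s$ to each sink whose reachability is certified, and running a shallow ES-tree of depth $n/p$ rooted at $s$ in the augmented graph. The latter contributes $\tilde{O}(mn/p)$, and the correctness of the reduction follows exactly as in the proof of \Cref{thm: s-t to single source}.

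To balance the terms, I would apply \Cref{pro:decremental_bounded_all_pairs_reachability} in each of its two regimes and keep whichever answer is smaller for the given $m, n$. Using the first version, set $p = n^{4/3}/m^{1/3}$, which meets its size constraint; then $mn/p = m^{4/3}/n^{1/3}$, which is dominated by $m^{2/3}n^{4/3}$ because $m \leq n^2$, and the auxiliary term $n^4/m$ is dominated by $m^{2/3}n^{4/3}$ whenever $m \geq n^{8/5}$. Using the second version, set $p = m^{1/7}n^{4/7}$; then $mn/p = m^{6/7}n^{3/7}$ is dominated by $m^{8/7}n^{4/7}$, and $m^{8/7}n^{4/7}$ is in turn dominated by $m^{3/7}n^{12/7}$ exactly when $m \leq n^{8/5}$. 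Thus in each of the two density regimes the relevant single term dominates, and taking the minimum of the two algorithms' running times yields the stated bound $\hat{O}(m^{2/3}n^{4/3} + m^{3/7}n^{12/7})$, since the sum of two nonnegative quantities is asymptotically equivalent to their maximum.

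The only real subtlety, and what I view as the main obstacle, is checking that \Cref{thm: s-t to single source} can be applied in a ``reachability-only'' mode where the subroutine returns yes/no rather than distance estimates; this requires observing that the shortcut-edge construction there only needs a binary reachability certificate between $s$ and each sink, since once an edge $(s,t)$ is present in the augmented graph an ES-tree of depth $n/p$ suffices to certify $s$-to-$v$ reachability for any $v$ on a path through $t$. A minor but worth-noting point is that the two instantiations above correspond to disjoint density regimes divided by $m = n^{8/5}$, so in practice the algorithm selects the appropriate variant a priori based on the initial $m$ and $n$, and there is no need to run both in parallel.
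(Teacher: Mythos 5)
Your proof is correct and follows essentially the same path as the paper's: invoke \Cref{pro:decremental_bounded_all_pairs_reachability} in each of its two parameter regimes, feed it into the sink-sampling reduction of \Cref{thm: s-t to single source}, and show that the $mn/p$ term and the auxiliary term are each dominated by the leading term in the appropriate density regime (divided at $m = n^{8/5}$). The only small slip is your parenthetical claim that ``$\Delta = O(m)$''---in \Cref{thm: s-t to single source} the symbol $\Delta$ denotes $\lceil \log_{1+\epsilon}(nW)\rceil$, not the number of updates, so for unweighted graphs it is $O(\log n)$---but this does not affect your argument since either way the factor is absorbed by $\hat{O}$.
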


\begin{proof}
First, set $ c_1 = n^{4/3} / m^{1/3} $ and observe that by \Cref{pro:decremental_bounded_all_pairs_reachability} we have an algorithm that can maintain reachability from a source to $ c_1 $ sinks with a total update time of $ \hat{O} (m^{2/3} n^{4/3} + n^4 / m) $.
By \Cref{thm: s-t to single source} this implies a decremental single-source reachability algorithm with a total update time of $ \hat{O} (m^{2/3} n^{4/3} + n^4 / m + m n / c_1) $.
The same argument gives a decremental single-source reachability algorithm with a total update time of $ \hat{O} (m^{3/7} n^{12/7} + m^{8/7} n^{4/7} + m n / c_2) $ where $ c_2 = m^{1/7} n^{4/7} $.

If $ m \geq n^{8/5} $ we have $ n^4 / m \leq m^{2/3} n^{4/3} $ and $ m n / c_1 = m^{4/3} / n^{1/3} \leq m^{2/3} n^{4/3} $.
If $ m \leq n^{8/5} $ we have $ m^{8/7} n^{4/7} \leq m^{3/7} n^{12/7} $ and $ m n / c_2 = m^{6/7} / n^{3/7} \leq m^{3/7} n^{12/7} $.
Thus, by running the first algorithm if $ m \geq n^{8/5} $ and the second one if $ m < n^{8/5} $ we obtain a total update time of $ \hat{O} (m^{2/3} n^{4/3} + m^{3/7} n^{12/7}) $ (Note that $ m^{3/7} n^{12/7} \leq m^{2/3} n^{4/3} $ if and only if $ m \geq n^{8/5} $).
\end{proof}

Furthermore, the reduction of \Cref{thm:strongly connected component} gives a decremental algorithm for maintaining strongly connected components.

\begin{corollary}
There is a decremental \scc algorithm with constant query time and expected total update time
\begin{equation*}
\hat{O} (m^{2/3} n^{4/3} + m^{3/7} n^{12/7})
\end{equation*}
that is correct with high probability against an oblivious adversary.
\end{corollary}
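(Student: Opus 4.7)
The plan is to invoke the black-box reduction of \Cref{thm:strongly connected component} from \scc to \ssr on top of the \ssr algorithm given in the previous corollary. First I would write the \ssr running time $ \hat{O}(m^{2/3} n^{4/3} + m^{3/7} n^{12/7}) $ in the additive form $ n \cdot t_1(m, n) + m \cdot t_2(m, n) $ required by \Cref{thm:strongly connected component}. Both monomials in the expression can be split this way: for instance $ m^{2/3} n^{4/3} = n \cdot (m^{2/3} n^{1/3}) = m \cdot (m^{-1/3} n^{4/3}) $, and similarly $ m^{3/7} n^{12/7} = n \cdot (m^{3/7} n^{5/7}) = m \cdot (m^{-4/7} n^{12/7}) $, so one may set, e.g., $ t_1(m,n) = m^{2/3} n^{1/3} + m^{3/7} n^{5/7} $ and let $ t_2(m,n) $ be any non-decreasing function at least $ 1 $ (the $ n \cdot t_1 $ term alone already covers the claimed bound). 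I would then check that the chosen $ t_1 $ and $ t_2 $ are non-decreasing in both $ m $ and $ n $, which is immediate since all exponents are non-negative.

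Next I would feed these $ t_1, t_2 $ into \Cref{thm:strongly connected component}, which gives a decremental \scc algorithm with constant query time and expected total update time $ O((n t_1(m,n) + m t_2(m,n)) \log n) $. The $ \log n $ factor is absorbed by the $ \hat{O}(\cdot) $ notation since $ \log n = n^{o(1)} $. This yields the claimed bound $ \hat{O}(m^{2/3} n^{4/3} + m^{3/7} n^{12/7}) $. Correctness ``with high probability against an oblivious adversary'' is inherited from the \ssr algorithm, since the reduction itself only introduces randomness through the uniform choice of representatives (which does not interact with adversarial updates), and the bound on the expected total update time comes from averaging over this internal randomness, matching exactly the statement of \Cref{thm:strongly connected component}.

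The only subtlety to verify, and the one step I would spend real care on, is the technical precondition of \Cref{thm:strongly connected component} that the \ssr update procedure, upon deleting an edge, \emph{reports} the set of nodes that just became unreachable from the source (so that the reduction can compute $ A $ and $ B $, form $ G' $, and run Tarjan's static \scc on it without extra overhead). I would therefore revisit the \ssr construction from \Cref{sec:sparse} and from \Cref{sec:SSR_dense}: at the top level both algorithms ultimately detect unreachability through an ES-tree (either directly, or via the center-graph transitive-closure maintenance). An ES-tree naturally reports, after each deletion, the set of vertices whose level exceeds the maintained depth bound, so this listing requirement is met at no additional asymptotic cost. With that observation in place, the corollary follows immediately from the composition of the previous \ssr corollary with \Cref{thm:strongly connected component}.
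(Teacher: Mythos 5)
Your proposal is correct and matches the paper's own derivation: the paper simply cites the reduction of \Cref{thm:strongly connected component} applied to the SSR algorithm of the preceding corollary, exactly as you do. Your explicit decomposition into $n\,t_1 + m\,t_2$ and your check of the technical precondition that the SSR update procedure reports newly unreachable nodes are both sound and slightly more careful than what the paper writes out; the only cosmetic point is that $t_1$ should also absorb the $n^{o(1)}$ factors hidden in the SSR bound (which preserves monotonicity), but that is a trivial adjustment.
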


\section{Conclusion}

In this paper we have presented decremental algorithms for maintaining approximate single-source shortest paths (and thus also single-source reachability) with constant query time and a total update time of $ o (m n) $. This result implies that the thirty-year-old algorithm with $O(mn)$ total update time is not the best possible.
Since it is hard to believe that the running time obtained in this paper is tight, it is an important open problem to search for faster (and perhaps simpler) algorithms. 
Note that decremental approximate \sssp in \emph{undirected} graphs can be solved with almost linear total update time~\cite{HenzingerKNFOCS14}. Thus, it is interesting to see if the same total update time can be obtained for the directed case. Getting this result for $s$-$t$ reachability will already be a major breakthrough. 

Given recent progress in lower bounds for dynamic graph problems~\cite{Patrascu10,AbboudW14,HenzingerKNS15}, it might also be possible that an almost linear update time is not possible for decremental approximate \sssp and \ssr in directed graphs.
However, it might be challenging to prove this as all known constructions give the same lower bounds for both the incremental and the decremental version of a problem.
As incremental \ssr has a total update time of $ O (m) $ we cannot hope for lower bounds for decremental \ssr using these known techniques.
This only leaves the possibility of finding a lower bound for decremental approximate \ssr or of developing stronger techniques.
It also motivates studying the incremental approximate \sssp problem which is intuitively easier than its decremental counterpart, but no as easy as incremental \ssr.

Finally, we ask whether it is possible to remove the assumption that the adversary is oblivious, i.e., to allow an \emph{adaptive adversary} that may choose each new update or query based on the algorithm's previous answers.
This would immediately be guaranteed by a deterministic algorithm.

\printbibliography[heading=bibintoc] 

\end{document}